\newcommand{\imp}{\mathrm{imp}}
 
\documentclass[12pt]{article}
\usepackage{graphicx}
\usepackage{natbib}
\usepackage{tikz}
\usepackage{multirow}
\usepackage{dcolumn,booktabs}
\usepackage{algorithm,algorithmicx,algpseudocode,algcompatible}
\usepackage{mathtools} 
\usepackage{authblk} 
\usepackage{url}
\usepackage[utf8]{inputenc}
\usepackage{textgreek}
\usepackage{enumerate}
\usepackage{enumitem}
\usepackage[font=small]{caption}
\usepackage{amsmath,amssymb,bm,amsthm} 
\usepackage{refcount}
\newtheorem{theorem}{Theorem}
\newtheorem{proposition}{Proposition}
\newtheorem{lemma}{Lemma}

\newtheorem{corollary}{Corollary}
\newtheorem{definition}{Definition}
\algrenewcommand{\algorithmiccomment}[1]{\hfill$\blacktriangleright$ #1}

\DeclareMathOperator{\Var}{Var}
\DeclareMathOperator{\Cov}{Cov}

\DeclareMathOperator{\E}{E}

\DeclareMathOperator*{\argmin}{argmin}  
 
\DeclareMathOperator*{\argzero}{argzero} 
 
\DeclareMathOperator{\Tr}{Tr}
\DeclareMathOperator{\diag}{diag}

\DeclareMathOperator{\IFu}{IF}
\DeclareMathOperator{\vect}{vec}
\DeclareMathOperator{\vecth}{vech}

\DeclareMathOperator{\interior}{int}
\newcommand{\ind}{I}

\newcommand{\ki}{\ell}

\newcommand{\MCD}{\text{\tiny\upshape MCD}}
\newcommand{\PD}{\text{\upshape PD}}
\newcommand{\uPD}{\mbox{\tiny{\upshape PD}}} 
\newcommand{\RD}{\text{\upshape RD}}
\newcommand{\uRD}{\mbox{\tiny{\upshape RD}}} 
\newcommand{\rk}{k}

\newcommand{\rt}{t} 
\newcommand{\ximp}{x^{\mbox{\scriptsize{\upshape imp}}}}
\newcommand{\bximp}{\bx^{\mbox{\scriptsize{\upshape imp}}}}

\newcommand{\sub}{k} 
\newcommand{\zort}{\boldsymbol{z}^{\perp}}

\newcommand{\cell}{\mbox{\scriptsize{\upshape cell}}} 
\newcommand{\case}{\mbox{\scriptsize{\upshape case}}} 

\newcommand{\CD}{{\boldsymbol{\cdot}}}
\newcommand{\eps}{\varepsilon}
\newcommand{\beps}{\boldsymbol\eps}

\newcommand{\hpi}{\widehat{\pi}}
\newcommand{\hsigma}{\widehat{\sigma}}
\newcommand{\tsigma}{\widetilde{{\sigma}}}

\newcommand{\tl}{\widetilde{\ell}}

\newcommand{\tn}{\widetilde{n}}

\newcommand{\hx}{\widehat{x}}

\newcommand{\hy}{\widehat{y}}
\newcommand{\hz}{\widehat{z}}
\newcommand{\htheta}{\widehat{\theta}}

\newcommand{\bzero}{\boldsymbol 0}
\newcommand{\bone}{\boldsymbol 1}

\newcommand{\ba}{\boldsymbol a}

\newcommand{\bb}{\boldsymbol b}

\newcommand{\bhb}{\boldsymbol{\widehat{b}}}
\newcommand{\bmed}{\boldsymbol{m}}

\newcommand{\bc}{\boldsymbol c}
\newcommand{\hc}{\hat c}
\newcommand{\bE}{\boldsymbol E}

\newcommand{\be}{\boldsymbol e}

\newcommand{\bG}{\boldsymbol G}

\newcommand{\bL}{\boldsymbol L}

\newcommand{\bp}{\boldsymbol p}
\newcommand{\br}{\boldsymbol r}
\newcommand{\hr}{\widehat{r}}

\newcommand{\hq}{\hat q}
\newcommand{\bs}{\boldsymbol s}
\newcommand{\bt}{\boldsymbol t}
\newcommand{\hti}{\widehat t}
\newcommand{\bu}{\boldsymbol u}
\newcommand{\bhu}{\boldsymbol{\widehat{u}}}

\newcommand{\bv}{\boldsymbol v}

\newcommand{\bw}{\boldsymbol w}

\newcommand{\bx}{\boldsymbol x}

\newcommand{\bhx}{\boldsymbol{\widehat{x}}}
\newcommand{\by}{\boldsymbol y}

\newcommand{\bhy}{\boldsymbol{\widehat{y}}}
\newcommand{\bz}{\boldsymbol z}
\newcommand{\bhz}{\boldsymbol{\widehat{z}}}
\newcommand{\btz}{\boldsymbol{\widetilde{z}}}
\newcommand{\tz}{\widetilde{z}}
\newcommand{\bA}{\boldsymbol A}
\newcommand{\bB}{\boldsymbol B}
\newcommand{\bhB}{\boldsymbol{\widehat{B}}}

\newcommand{\bC}{\boldsymbol C}
\newcommand{\hC}{\widehat C}
\newcommand{\bD}{\boldsymbol D}
\newcommand{\bH}{\boldsymbol{H}}
\newcommand{\bI}{\boldsymbol I}
\newcommand{\bJ}{\boldsymbol J}
\newcommand{\bK}{\boldsymbol K}

\newcommand{\bM}{\boldsymbol M}
\newcommand{\bN}{\boldsymbol{ N}}

\newcommand{\bP}{\boldsymbol P}
\newcommand{\hP}{\widehat P}
\newcommand{\bPhi}{\boldsymbol{\Phi}}
\newcommand{\bQ}{\boldsymbol Q}
\newcommand{\bR}{\boldsymbol R}
\newcommand{\btr}{\boldsymbol{\widetilde{r}}}

\newcommand{\bS}{\boldsymbol S}
\newcommand{\bhS}{\boldsymbol{\widehat{S}}}
\newcommand{\bT}{\boldsymbol T}

\newcommand{\bU}{\boldsymbol U}

\newcommand{\bhU}{\boldsymbol{\widehat{U}}}

\newcommand{\bV}{\boldsymbol V}
\newcommand{\btV}{\boldsymbol{\widetilde{V}}}

\newcommand{\bhV}{\boldsymbol{\widehat{V}}}
\newcommand{\bW}{\boldsymbol W}
\newcommand{\btW}{\boldsymbol{\widetilde{W}}}
\newcommand{\bX}{\boldsymbol X}

\newcommand{\bY}{\boldsymbol Y}
\newcommand{\bZ}{\boldsymbol Z}
\newcommand{\bhZ}{\boldsymbol{\widehat{Z}}}
\newcommand{\btZ}{\boldsymbol{\widetilde{Z}}}
\newcommand{\tZ}{\widetilde{Z}}

\newcommand{\bg}{\boldsymbol g}

\newcommand{\boeta}{\boldsymbol{\eta}}
\newcommand{\boheta}{\boldsymbol{\widehat{\eta}}}

\newcommand{\btheta}{\bm \theta}
\newcommand{\bTheta}{\boldsymbol \Theta}

\newcommand{\pibar}{\overline \pi}

\newcommand{\bhtheta}{\boldsymbol{\widehat{\theta}}}

\newcommand{\bmu}{\boldsymbol \mu}

\newcommand{\bhmu}{\boldsymbol{\widehat{\mu}}}
\newcommand{\btmu}{\boldsymbol{\widetilde{\mu}}}
\newcommand{\bsigma}{\boldsymbol \sigma}
\newcommand{\bhsigma}{ \boldsymbol{\widehat{\sigma}}}
\newcommand{\bSigma}{\boldsymbol \Sigma}
\newcommand{\bhSigma}{\boldsymbol{\widehat{\Sigma}}}
\newcommand{\btSigma}{\boldsymbol{\widetilde{\Sigma}}}

\newcommand{\bLambda}{\boldsymbol \Lambda}

\newcommand{\bhD}{\boldsymbol{\widehat{D}}}

\newcolumntype{M}[1]{>{\centering\arraybackslash}m{#1}}

\definecolor{orange1}{RGB}{255,128,0}
\definecolor{purple2}{RGB}{102,0,204}
\definecolor{blue}{RGB}{0,0,255}
\definecolor{red}{RGB}{255,0,0}

\providecommand{\red}[1]{\textcolor{red}{#1}}

\begin{document}

\def\spacingset#1{\renewcommand{\baselinestretch}
{#1}\small\normalsize} \spacingset{1}


\title{\bf Cellwise and Casewise Robust\\ Multivariate 
           Regression with Inference}
\author[1]{Fabio Centofanti}
\author[1]{Mia Hubert}
\author[1]{Peter J. Rousseeuw}

\affil[1]{Section of Statistics and Data Science, Department 
          of Mathematics, KU Leuven, Belgium}

\setcounter{Maxaffil}{0}
\renewcommand\Affilfont{\itshape\small}
\date{May 8, 2026}       
\maketitle

\bigskip
\begin{abstract}
Multivariate linear regression is a fundamental 
statistical task, but classical estimators such 
as ordinary least squares are highly sensitive to 
outliers. These may occur as casewise outliers that
affect entire observations, or as outlying cells, 
that are individual contaminated entries 
in the predictor and/or response matrix.  
Moreover, modern datasets frequently contain missing 
values and are high-dimensional.
To address these challenges we propose the cellwise 
multivariate regression (cellMR) estimator, a robust 
regression method that simultaneously accommodates 
casewise and cellwise outliers, missing data, and 
high dimensionality. The approach builds on a  
cellwise robust covariance estimator and uses ridge 
regularization for numerical stability. We further 
introduce cellBoot, a novel bootstrap-based 
inference procedure tailored to the cellMR framework. 
Relying on indirect inference, cellBoot provides 
asymptotically valid confidence intervals that are 
robust to casewise and cellwise contamination. 
We derive influence functions of the regression 
estimator and prove the asymptotic validity of the 
cellBoot confidence intervals. Simulations and 
a real genomics application illustrate the strong 
finite-sample performance of the proposed methods.
\end{abstract}

\noindent {\it Keywords:}
Anomaly detection;
Casewise outliers;
Cellwise outliers; 
Confidence intervals;
\mbox{Indirect} Inference.

\newpage
\spacingset{1.5} 

\section{Introduction} \label{sec:intro}

Multivariate linear regression is a fundamental 
tool in statistics that models the relationship 
between $p$-dimensional predictors $\bx_i$ and
$q$-dimensional responses $\by_i$ from a sample 
$(\bx_i,\by_i)$ for $i=1,\dots,n$. It plays a central 
role in a wide range of scientific disciplines, as 
it provides an interpretable framework for 
understanding complex multivariate relationships and 
for making predictions. The regression model is 
typically fit by the ordinary least squares (OLS) 
method, that minimizes a squared loss criterion. 

However, nowadays people routinely collect large 
and complex datasets, that are often 
contaminated by outliers, also called anomalies.
Those are parts of the data that deviate markedly 
from the majority. 
They may arise from a variety of sources, such as 
measurement errors, data entry mistakes, sensor 
malfunctions, or rare and unexpected events. 
Because OLS is highly sensitive to such 
contamination, its performance can deteriorate 
substantially.
It can be attracted by outliers so strongly that 
its residuals hide the outliers. This 
is called the masking effect. Additionally, some
regular values might even appear to be outlying,
which is known as swamping.  
The paradigm of robust statistics \citep{huber1981, 
hampel1986, RL1987, maronna2019robust} provides
a strategy to mitigate these effects. Robust 
methods produce estimates that are only mildly 
affected by the presence of outliers. 
The outliers can then be detected by their 
large deviations from the estimated model.

Research on outliers has traditionally focused on
outlying cases, also known as casewise outliers.
These are observations $(\bx_i,\by_i)$ that were 
not generated by the same underlying mechanism as the 
majority of the data. Many robust regression methods 
have been developed to address casewise outliers 
in order to detect and downweight them, see e.g. 
\cite{RL1987} and Chapters 4--5 of 
\cite{maronna2019robust}. All casewise robust methods 
require that at least 50\% of the couples 
$(\bx_i,\by_i)$ are clean.

In recent years, increasing attention has been 
devoted to cellwise outliers \citep{alqallaf2009}. 
In the context of the linear regression model, 
these correspond to anomalous cells (entries) in
the combined matrix $[\bX\,;\,\bY]$ of predictors 
and responses. Cellwise outliers might only 
contaminate a few coordinates of a case 
$(\bx_i,\by_i)$. They are particularly common in 
high-dimensional datasets, that is, with a large 
number of predictor variables and/or response 
variables. Even a relatively small proportion of 
outlying cells can contaminate many cases. 
When random cells in the predictors and responses 
are contaminated with probability $\eps$, the 
expected fraction of contaminated cases is 
$1 - (1-\eps)^{(p+q)}$. This grows fast with $p$ 
and $q$: even if only 1\% of the cells is 
contaminated with $p=100$ and $q=2$, we can expect 
64\% of the cases to be contaminated. In such 
situations casewise robust methods become ineffective. 

\citet{Challenges} reviewed
the challenges of dealing with cellwise outliers. 
Several proposals for cellwise robust regression 
with $q = 1$ have been made. One of the earliest 
was the shooting S-estimator of 
\citet{ollerer2015robust}, that iteratively updates 
each coefficient by a regression with $p=1$.
Afterward \citet{bottmer2022sparse} constructed 
a sparse version of the shooting S-estimator.
\citet{leung2016robust} proposed an alternative 
approach that is based on robust estimation of the 
joint location and scatter of the pairs 
$(\bx_i, \by_i)$ using the 2SGS estimator of 
\citet{agostinelli2015robust}. 
The cellwise robust M-regression estimator 
of \citet{filzmoser2020cellwise} relies on an 
iteratively reweighted least squares procedure. 
More recently, \cite{CRlasso} introduced a  
regularization approach that minimizes both a 
regression loss and a cell deviation measure.

Contemporary applications increasingly involve 
situations in which the number of predictors $p$ 
is comparable to, or even 
exceeds, the sample size $n$. In such regimes, 
classical estimation procedures may become unstable 
or ill-posed, and some regularization is required. 
Several casewise robust approaches for regularized 
regression have been proposed, including an 
MM-estimator with a ridge penalty 
\citep{maronna2011robust} and robust elastic net 
methods \citep{cohenfreue2019robust}. 
See \citet{filzmoser2021robust} for a 
comprehensive review. A further challenge arises 
from missing data \citep{little1992regression}, 
that is common in real-world applications and 
may occur together with 
contamination. Most robust procedures were 
developed for fully observed data. So far there
was no multivariate regression method capable of 
jointly addressing cellwise contamination, high 
dimensionality, and missing data.

An advantage of classical multivariate linear 
regression is the availability of inferential 
procedures. Inference enables uncertainty 
quantification for regression coefficients, the 
construction of confidence regions, and 
hypothesis testing for linear contrasts or groups 
of parameters. In many applications the objective 
is not only prediction, but also reliable 
scientific interpretation, which requires an
accurate assessment of sampling variability.

Inference based on robust estimators is 
substantially more challenging. One approach is
based on the asymptotic distribution of robust
estimators. Those are typically derived under 
elliptical assumptions, and 
rely on normal approximations with estimated 
asymptotic covariance matrices. However, such 
assumptions rarely hold. The bootstrap offers an 
attractive alternative without relying heavily 
on distributional assumptions, see 
\cite{efron1994introduction}. 
It repeatedly draws samples with replacement from 
the observed cases, and recalculates the estimator 
on each resample. The empirical distribution of
these estimates then approximates the estimator's
distribution. This approach can be extended to 
robust regression settings. 
For casewise contamination, robust bootstrap 
procedures have been developed along two main 
directions. The first is the fast and robust 
bootstrap of \citet{salibian2002bootstrapping}, 
originally proposed for simple regression and 
later extended to multivariate regression 
\citep{van2005multivariate}. It provides an 
asymptotically consistent and 
computationally efficient framework for robust 
inference, reviewed in \cite{salibian2008fast}. 
A second line of work, initiated by 
\citet{amado2004robust}, modifies the resampling 
scheme by assigning lower sampling probabilities 
to potentially harmful cases.

In this paper we propose a cellwise multivariate 
regression method called cellMR that 
simultaneously accommodates cellwise and 
casewise contamination, missing values, and 
high-dimensional settings. To the best of our 
knowledge, it is the first method to do so.
It builds on a recent cellwise robust 
covariance estimator of 
\cite{centofanti2025cellwise}, and uses ridge 
regularization to enhance stability in high 
dimensions. We moreover derive the cellwise and
casewise influence functions of cellMR.

We complement cellMR with a novel inference 
procedure, termed the {\it cellwise bootstrap} 
(cellBoot). This nonparametric method
adopts indirect inference (II), a 
simulation-based bias correction approach that 
constructs a consistent estimator from an 
inconsistent but computationally efficient 
auxiliary estimator 
\citep{gourieroux1993indirect,guerrier2019}. 
The auxiliary estimator only requires a single
run of cellMR. As far as we know, cellBoot is 
the first inference procedure designed to 
handle cellwise and casewise contamination 
combined with missing values.
We were able to prove the asymptotic 
consistency of the cellBoot procedure. 
We also obtain influence functions of the 
center and the length of its confidence 
intervals. 

Section~\ref{sec:method} introduces the cellMR 
method. Section~\ref{sec:outdetect} 
constructs graphical displays  to facilitate
outlier detection. 
Section~\ref{sec:inf} describes the 
cellBoot inference procedure. The finite-sample 
performance of cellMR and cellBoot is evaluated 
through simulation in Section~\ref{sec:sim}, and 
Section~\ref{sec:realdata} illustrates them on a 
real dataset from genomics.
Section~\ref{sec:conc} concludes.

\section{Robust multivariate regression by cellMR}
\label{sec:method}

Let us consider a random sample 
$\{(\bx_i,\by_i)\}_{i=1}^n$\,, 
where $\bx_i$ denotes the predictor vector and 
$\by_i$ the response vector. The multivariate 
linear regression model is given by
\begin{equation}
\label{eq:mainmodel}
   \by_i
  = \bb+\bB^{T}\bx_i
  + \beps_i \quad \mbox{for} \quad
   i = 1,\ldots,n,
\end{equation}
where the $q$-dimensional vector $\bb$ is the 
intercept, $\bB$ is the $p \times q$ slope matrix,  
and $\beps_1,\dots\beps_n$ are $q$-dimensional 
random errors with zero mean and covariance 
$\bSigma_{\beps}$ independent of $\bx_i$\,. 
Denote by $\btz_i$ the $d$-dimensional vector 
$\left(\bx_i^T,\by_i^T\right)^T$. The sample
mean and empirical covariance matrix of the
$\btz_i$ can be partitioned as 
\begin{equation*}
\bhmu=
\begin{pmatrix}
  \bhmu_x \\[2pt]
  \bhmu_y
\end{pmatrix}
\quad\text{and}\quad
\bhSigma =
\begin{pmatrix}
  \bhSigma_{x} & \bhSigma_{xy} \\[2pt]
  \bhSigma_{yx} & \bhSigma_{y}
\end{pmatrix}.
\end{equation*}
Under model~\eqref{eq:mainmodel}, the classical
ridge regression is given by
\begin{equation} \label{eq:ridge}
  \bhB=(\bhSigma_{x}+\lambda\bI_p)^{-1}
  \bhSigma_{xy} \quad \mbox{and} \quad 
  \bhb=\bhmu_y-\bhB^T\bhmu_x\,,
\end{equation}
with estimated error covariance matrix
\begin{equation}
\label{eq:errcov}
  \bhSigma_{\beps}=\bhSigma_{y}-\bhSigma_{yx}
  (\bhSigma_{x}+\lambda\bI_p)^{-1}\bhSigma_{xy}\,.
\end{equation}
The advantage of the regularization by
$\lambda \geqslant 0$ is that it avoids the
inversion of a potentially ill-conditioned
matrix $\bhSigma_x$ that can occur due to 
multicollinearity or a high dimension $p$,
possibly even with $p > n$. The tuning parameter 
$\lambda$ is typically chosen by 
cross-validation.

However, the sample mean and covariance matrix 
can be much affected by outliers, leading 
to unreliable coefficients.

\subsection{The cellMR estimator}
\label{sec:estimator}

Formulas~\eqref{eq:ridge} and~\eqref{eq:errcov} 
allow to obtain other regression estimators by 
plugging in suitable estimators of the location $\bmu$ 
and scatter $\bSigma$ of the $\btz_i$. For this we
turn to the cellwise robust covariance estimator 
of \cite{centofanti2025cellwise} that performs well 
under both casewise and cellwise outliers, even in 
high dimensions. We will focus on its unregularized 
version that we denote by cellCov, since we will
regularize the regression 
by~\eqref{eq:ridge}--\eqref{eq:errcov}
afterward anyway.

To obtain the cellCov estimator, the combined 
$n \times d$ data matrix 
$\btZ=\lbrace \tz_{ij}\rbrace=\left[
\btz_1,\dots,\btz_n\right]^T$  is first
standardized to $\bZ=\lbrace z_{ij}\rbrace=
(\bz_1,\dots,\bz_n)^T=\btZ\bhD^{-1}$, where 
$\bhD=\diag(\hsigma^{\tZ}_1,\dots,
\hsigma^{\tZ}_{d})$. Here 
$\hsigma^{\tZ}_j:=
\sigma_M(\{\tz_{ij}-m_j\}_{i=1}^n)$, where  
$m_j$ is the median of the $j$-th variable and
$\sigma_M$ is the robust scale M-estimator 
defined in Section~\ref{app:adddet} of the 
Supplementary Material.

Next, cellCov applies the cellPCA method 
\citep{cellPCA}, a robust principal component 
analysis (PCA) method designed to handle both 
cellwise and casewise outliers as well as 
missing values. We model the matrix 
$\bZ$ as
\begin{equation}\label{eq:model}
    \bZ=\bone_n \bmu_z^T + \bU\bV^T + \bE,
\end{equation}
where $\bone_n$ is a column vector with all 
$n$ components equal to $1$, the scores
matrix $\bU  =\lbrace u_{i\ell} \rbrace = \left[\bu_1,\dots,\bu_n\right]^T$ is 
$n \times \rk$,  the loadings matrix 
$\bV=\lbrace v_{j\ell}\rbrace =
\left[\bv_1,\dots,\bv_{d}\right]^T$ is $
d\times k$, and the matrix 
$\bE=\left[\zort_1,\dots,\zort_n\right]^T$ 
is  the noise term. The cellPCA method obtains 
estimates $\bhV$, $\bhU$, and  $\bhmu_z$ by 
minimizing the loss function
\begin{equation} \label{eq:objP}
  L_{\rho_1,\rho_2}(\bZ,\bV,\bU,\bmu_z) := 
  \frac{\hsigma_2^2}{m}\sum_{i=1}^{n}m_i \rho_2\! 
  \left(\frac{1}{\hsigma_2}\sqrt{\frac{1}{m_i}
  \sum_{j=1}^{d} m_{ij}\, \hsigma_{1,j}^2\,
  \rho_1\!\left(\frac{r_{i j}}
  {\hsigma_{1,j}}\right)}\, \right),
\end{equation}
where the $r_{i j} := z_{i j}-\mu_{z,j}-
\sum_{\ell=1}^k u_{i \ell}v_{j \ell}$\,, the 
missing value indicator $m_{ij}$ is 0 if
$x_{ij}$ is missing and 1 otherwise, 
$m_i=\sum_{j=1}^{d} m_{ij}$\,,
and $m=\sum_{i=1}^{n} m_i$\,. 
The scale $\hsigma_{1,j}:=
\sigma_M(\{r_{ij}\}_{i=1}^n)$ 
standardizes the {\it cellwise PCA residual} 
$r_{ij}$\,, and 
$\hsigma_2:=\sigma_M(\{\rt_i\}_{i=1}^n)$ 
divides the {\it casewise total deviation} 
\begin{equation}\label{eq:rt_i}
  \rt_i := \sqrt{\frac{1}{m_i} \sum_{j=1}^{d}
  m_{ij}\, \hsigma_{1,j}^2\, \rho_1\!\left(
  \frac{r_{ij}}{\hsigma_{1,j}} \right)}.
\end{equation} 

For $\rho_1(z) = \rho_2(z) = z^2$ the 
objective~\eqref{eq:objP} becomes 
the objective of classical PCA.
But instead cellPCA uses functions
$\rho_1$ and $\rho_2$ that are valid 
in our framework.

\begin{definition}\label{def:validrho}
A function 
$\rho:\mathbb{R}\rightarrow\mathbb{R}$ 
is called a valid $\rho$-function if it is 
continuous and differentiable, even, bounded,
nondecreasing in $|z|$, has $\rho(0)=0$, 
and is such that the mapping 
$z \mapsto \rho(\sqrt{z})$ is concave for 
$z \geqslant 0$.
\end{definition}

The cellPCA method uses the valid hyperbolic 
tangent $\rho$-function \citep{hampel1986}, that 
is described in the Supplementary 
Material~\ref{app:adddet}. This makes
$\bhmu_z$, $\bhU$, and $\bhV$ robust against 
both cellwise and casewise outliers. 
Indeed, a cellwise outlier in the cell $(i,j)$ 
yields a cellwise PCA residual $r_{ij}$ with a 
large absolute value, but the boundedness of
$\rho_1$ reduces its effect on the estimates.
Similarly, a casewise outlier results in a large 
casewise total deviation $\rt_i$ but its effect
is reduced by $\rho_2$\,.
Note that in the computation of $\rt_i$ the effect 
of cellwise outliers is tempered by the presence 
of $\rho_1$\,. This avoids that a single very
outlying cell could give its case a large $\rt_i$\,.

Define the matrix  $\bhZ  =\lbrace \hz_{ij}\rbrace=
\left[\bhz_1,\dots,\bhz_n\right]^T$, where
$\bhz_i =\bhmu_z + \bhV\bhu_i$ are the fitted points
in the $k$-dimensional principal subspace. Then 
estimate their location and scatter by
\begin{equation}\label{eq:SigmaMCD}
  \btmu_{\bz^\sub}=
  \bhmu_z+\bhV\bhmu_{\text{\tiny MCD}}(\bhu_i) 
  \quad \mbox{and} \quad \btSigma_{\bz^\sub}=\bhV
  \bhSigma_{\text{\tiny MCD}}(\bhu_i)\bhV^T,
\end{equation} 
where 
$\bhmu_{\text{\tiny MCD}}(\bhu_i)$ and 
$\bhSigma_{\MCD}(\bhu_i)$ are the MCD estimates of 
$\bhu_1,\dots,\bhu_n$ \citep{rousseeuw1984least,
hubert2012deterministic}. This avoids undue 
influence of outlying $\bhz_i$\,. 

Then we define $\btW=\bW^{\cell} \odot \bM$, where 
the Hadamard product $\odot$ multiplies matrices
entry by entry, and where the $n \times d$ matrix 
$\bW^{\cell}=\lbrace w_{ij}^{\cell}\rbrace$ 
contains the {\it cellwise PCA weights}
\begin{equation}\label{eq:cellweight}
   w_{ij}^{\cell}=w^{\cell}\left( 
   \frac{\hr_{ij}}{\hsigma_{1,j}}\right)=
   \psi_1\!\left(
   \frac{\hr_{ij}}{\hsigma_{1,j}}\right)
   \Big/ \frac{\hr_{ij}}{\hsigma_{1,j}} , 
   \quad i=1,\dots,n,\quad j=1\dots,d,
\end{equation}
where $\hr_{ij} = z_{ij} - \hz_{ij}$ and 
$\psi_1=\rho_1'$ with the convention 
$w_{ij}^{\cell}(0) = 1$. 
The $n \times d$ matrix $\bM$ contains the 
missingness indicators $m_{ij}$\,.
We then estimate the scatter in the
orthogonal complement of the principal
subspace as
\begin{equation}\label{eq:Sigmay}
   \btSigma_{\zort}=
   \lbrace \tsigma_{(\zort)j\ell}
   \rbrace = \frac{1}{b}\sum_{i=1}^n
   w_i^{\case}\btW_i(\bz_i - \bhz_i)
   (\bz_i - \bhz_i)^T\btW_i,
\end{equation}
\sloppy where $\btW_i$  is a diagonal matrix 
whose diagonal is the $i$-th row of $\btW$, 
and $b$ is given by
$b=\sum_{i=1}^n\sum_{j=1}^{d}
\sum_{\ell=1}^{d} m_{ij}m_{i\ell}w_i^{\case}
w_{ij}^{\cell} w_{i\ell}^{\cell}/d^2$.
The {\it casewise PCA weights} $w_i^{\case}$ 
are defined as
\begin{equation}\label{eq:caseweight}
   w_{i}^{\case}=w^{\case}\left(
   \frac{\hti_i}{\hsigma_2}\right)=
   \psi_2\!\left(
   \frac{\hti_i}{\hsigma_2}\right)
   \Big/ \frac{\hti_i}{\hsigma_2} , 
   \quad i=1,\dots,n\,,
\end{equation}
with $\psi_2=\rho_2'$\,, and $\hti_i$ is 
obtained from \eqref{eq:rt_i} with 
$\hr_{ij}$ in place of $r_{ij}$.

By undoing the original
standardization by the diagonal matrix $\bhD$,
the cellCov estimates $\bhmu$ and $\bhSigma$ 
of the overall $\bmu$ and $\bSigma$ in
$d$ dimensions are 
\begin{equation} \label{eq:cellCov}
 \bhmu:=\bhD\btmu_{\bz^\sub} \quad 
 \mbox{and} \quad \bhSigma :=
  \bhD\big(\btSigma_{\bz^\sub}+
  \btSigma_{\zort}\big)\bhD\,.
\end{equation}
The standardization step in the beginning ensures 
that cellCov is scale equivariant, meaning that
for any diagonal matrix 
$\bA = \operatorname{diag}(a_1, \dots, a_d)$ 
with $a_j > 0$, the cellCov location and scatter 
estimators of $\btZ\bA$ are $ \bA \bhmu $ and
$ \bA \bhSigma \bA$. Therefore, $\bhmu$ and 
$\bhSigma$ react in the usual way to changes of 
variable units. We then define the cellMR 
estimates of $\bB$, $\bb$, and $\bSigma_\eps$ as 
in~\eqref{eq:ridge}--\eqref{eq:errcov}.
We will discuss the tuning of the ridge
parameter $\lambda$ later.

\subsection{Out-of-sample prediction}
\label{sec:oos}

When a new datapoint 
$\bx_*=\left(x_{*1},\dots,x_{*p}\right)^T$ arrives,
we wish to predict the unknown response $\by^*$.
If $\bx_*$ were clean, we could simply put
$\bhy_* = \bhb+\bhB^{T}\bx_*$\,. However, the task 
becomes nontrivial since $\bx_*$ can also contain
NA's and/or cellwise outliers. Fortunately  
cellPCA can produce an imputed version of 
$\bx_*$ in which suspicious cells are cleaned and 
missing cells are filled in, whereas the other 
cells are kept as they were. The imputed point is
given by  
\begin{equation} \label{eq:impxi}
   \ximp_{*j} := \hx_{*j} + 
   w_{*j}^{\cell} m_{*j}(x_{*j} - \hx_{*j})
\end{equation}
where $\bhx_* = \btV_X \bu^X_* + \btmu_X$. 
Here $\btV_X$ and $\btmu_X$ are robust estimates 
obtained by a separate cellPCA run on 
$\{\bx_1, \dots, \bx_n\}$. Also
the scores $\bu^X_*$ are provided by cellPCA, as 
detailed in \cite{cellPCA}. 
The cellMR prediction $\bhy_*$ of the response 
of $\bx_*$ is then given by 
\begin{equation*}
    \bhy_* := \bhb+\bhB^{T}\bximp_*.
\end{equation*}

\subsection{Selecting tuning parameters}
\label{sec:tuning}
The cellMR method contains two knobs: the 
dimension $\rk$ of the principal subspace in 
the PCA model~\eqref{eq:model}, and the 
parameter $\lambda$ in the ridge 
regularization~\eqref{eq:ridge}. 
Should the condition number of $\bhSigma_x$
given by $\lambda_{\max}(\bhSigma_x)/
\lambda_{\min}(\bhSigma_x)$ be small,
say under $10^5$, we could just 
put $\lambda=0$.

The most common method to select tuning 
parameters in regression is cross-validation 
(CV), see e.g. \cite{hastie2009elements}. 
$K$-fold CV randomly splits the dataset into 
$K$ folds of size $n_h$ for $h=1,\dots,K$. 
For each $h$, the model is trained on the 
union of the other $K-1$ folds, and the
predictions are computed on fold $h$. The 
resulting squared regression residuals are 
then averaged to estimate the out-of-sample
mean squared error (MSE). The computation is 
repeated on a grid of parameter values,
and the parameter with the smallest MSE 
is selected.

We need to modify this approach for selecting 
the couple $(\rk,\lambda)$ because each 
fold can contain outliers and NA's. We will 
estimate the prediction error by
\begin{equation}\label{eq:cv}
   \text{CV}(\rk,\lambda) = \frac{1}{q} \sum_{j=1}^{q}
   \frac{1}{K}\sum_{h=1}^K
   \mbox{WMSE}_{(h)j}\;,
\end{equation}
where $\mbox{WMSE}_{(h)j}:=\sum_{i=1}^{n_h}
 w_{(h)ij}(y_{(h)ij}-\hy_{(h)ij})^2
 /\big(\sum_{i=1}^{n_h} w_{(h)ij}\big)$ 
is a weighted mean of squared residuals in 
fold $h$, that downweights cellwise and casewise 
outliers. The weights are specified in 
Supplementary Material~\ref{app:cellMR}.
We then select the couple ($\rk, \lambda)$  
that minimizes~\eqref{eq:cv}.

\subsection{Robustness properties}
\label{sec:robustness_cellMR}
We now study the robustness properties of the cellMR 
estimators $\bhb$ and $\bhB$. 
For this we will use the influence function (IF),
a standard robustness tool, which reveals how an 
estimator changes as a function of the position of
the contamination.

Let $Z=(X^T,Y^T)^T$ denote the $d$-variate random
variable obtained by concatenating the $p$-variate 
predictor $X$ and the $q$-variate response $Y$ 
(so $d = p + q$), and let $H_0$ be the distribution 
of $Z$ without contamination. 
We now add both casewise and cellwise contamination.
Let $C\in\mathbb R^{d}$ be a random vector with 
unspecified outlier-generating distribution $H_C$. 
The \emph{mixed contamination model} says that
we observe
\begin{equation}
\label{eq:cont_reg_mixed}
  Z_{\eps} =
  A\odot Z + (\bone_d-A)\odot C,
\end{equation}
where $A=A^{\case}\odot A^{\cell}$. The casewise 
contamination factor $A^{\case}$ has Bernoulli 
marginals with $\Pr(A^{\case}_j=1)=1-\eps^{\case}$ 
for $j=1,\ldots,d$, 
and its entries are fully dependent in the sense that
$\Pr(A^{\case}_1=\cdots=A^{\case}_d)=1$.
The cellwise contamination factor $A^{\cell}$
has independent entries $A^{\cell}_j$ for 
$j=1,\ldots,d$, that are Bernoulli random variables
with success probabilities
$\Pr(A^{\cell}_j=1)=1-\eps^{\cell}_j$.
Therefore the mixed model~\eqref{eq:cont_reg_mixed} 
captures the simultaneous presence of casewise and 
cellwise outliers in the regression setting.

For computing the IF we let the outlier distribution
$H_C$ be a point mass $\Delta_{\bc}$\,, which is a 
distribution that assigns all its mass to a point 
$\bc=(c_1,\ldots,c_d)^T$.
The classical casewise IF of \citet{hampel1986}  
replaces entire cases by outliers. It puts 
$A = A^{\case}$ with $A^{\case}$ independent of $Z$. 
In that situation, the distribution of 
$Z_{\eps}$ simplifies to $(1-\eps^{\case})H_0 + 
\eps^{\case}\Delta_{\bc}$\,. We denote the 
distribution of $A^{\case}$ as $G_\eps^D$ and the 
distribution of $Z_\eps$ as $H(G_\eps^D,\bc)$. 
Writing an estimator as a functional $T$ defined on
distributions on $\mathbb R^{d}$, the casewise 
influence function at the contamination point $\bc$ 
is defined as
\begin{equation}
\label{eq:caseIF}
  \IFu_{\case}(\bc,T,H_0) :=
  \left.\frac{\partial}{\partial\eps}
  T\bigl(H(G_\eps^{D},\bc)\bigr)\right|_{\eps=0}
  = \; \lim_{\eps\downarrow 0}
  \frac{T\bigl(H(G_\eps^{D},\bc)\bigr)-T(H_0)}{\eps}.
\end{equation}

To capture the effect of outlying cells we adopt the 
cellwise IF introduced by \citet{alqallaf2009}. 
Now $A=A^{\cell}$, where the entries $A^{\cell}_j$ 
are mutually independent and independent of $Z$.
The distribution of $A^{\cell}$ is denoted as
$G_\eps^{I}$\,. The resulting distribution of 
$Z_\eps$ with $H_C = \Delta_{\bc}$ is denoted by 
$H(G_\eps^{I},\bc)$. The cellwise influence function 
of $T$ at $\bc$ is then defined as
\begin{equation}
\label{eq:cellIF}
  \IFu_{\cell}(\bc,T,H_0) :=
  \left.\frac{\partial}{\partial\eps}
  T\bigl(H(G_\eps^{I},\bc)\bigr)\right|_{\eps=0}
  = \; \lim_{\eps\downarrow 0}
  \frac{T\bigl(H(G_\eps^{I},\bc)\bigr)-T(H_0)}{\eps}.
\end{equation}

We denote the functionals corresponding to the 
cellMR estimators $\bhb$ and $\bhB$ by 
$\bb(H)$ and $\bB(H)$.
We will derive the IFs of $\bb(H)$ and of the 
$pq \times 1$ column vector $\vect(\bB(H))$, 
where $\vect(\CD)$ converts a matrix to a vector 
by stacking its columns on top of each other.
The derivation of these IFs relies on the IFs of 
the cellPCA and cellCov estimators and on the
various components that are used for their 
construction, such as $\bmu_z(H)$, $\bV(H)$,  
$\bmu^{\bu}_{\text{\tiny MCD}}(H)$, 
$\bSigma^{\bu}_{\MCD}(H)$, and $\bSigma_{\zort}(H)$, 
which are the functionals corresponding to 
$\bhmu_z$ and $\bhV$ from~\eqref{eq:objP}, 
$\bhmu_{\text{\tiny MCD}}(\bhu_i)$ and 
$\bhSigma_{\text{\tiny MCD}}(\bhu_i)$ 
in~\eqref{eq:SigmaMCD}, and $\btSigma_{\zort}$ 
in~\eqref{eq:Sigmay}.

\begin{proposition}\label{IFBb}
The casewise and cellwise influence functions of 
$\vect(\bB)$ and $\bb$ are
\begin{align}\label{eq:IFB_case}
  \IFu_{\case}\!\left(\bc,\vect(\bB),H_0\right)
  &= \bR_{B}\Big[\bR_{\Sigma,1}\IFu_{\case}\!\left(
  \bc,\vect(\bV),H_0\right)\nonumber\\&\hspace{1cm}
  +\bR_{\Sigma,2}\IFu_{\case}\!\left(
  \bc,\vect(\bSigma^{\bu}_{\MCD}),H_0\right)
  +\IFu_{\case}\!\left(\bc,\vect(
  \btSigma_{\zort}),H_0\right) \Big],\\
\label{eq:IFb_case}
  \IFu_{\case}\!\left(\bc,\bb,H_0\right)
  &= \bR_{b,1}\Big[\bR_{\mu}\IFu_{\case}\!\left(
  \bc,\vect(\bV),H_0\right)\nonumber\\&\hspace{1cm}
  +\bV_0\IFu_{\case}\!\left(
  \bc,\bmu^{\bu}_{\text{\tiny MCD}},H_0\right)
  +\IFu_{\case}\!\left(\bc,\bmu_z,H_0\right)
  \Big]\nonumber\\ &\;\;\;\;\; -\bR_{b,2}
  \IFu_{\case}\!\left(\bc,\vect(\bB),H_0\right),
\end{align}
\begin{align}\label{eq:IFB_cell}
  \IFu_{\cell}\!\left(\bc,\vect(\bB),H_0\right)
  &= \bR_{B}\Big[\bR_{\Sigma,1}\IFu_{\cell}\!\left(
  \bc,\vect(\bV),H_0\right)\nonumber\\&\hspace{1cm}
  +\bR_{\Sigma,2}\IFu_{\cell}\!\left(
  \bc,\vect(\bSigma^{\bu}_{\MCD}),H_0\right)
  +\IFu_{\cell}\!\left(\bc,\vect(
  \btSigma_{\zort}),H_0\right) \Big],\\
  \label{eq:IFb_cell}
  \IFu_{\cell}\!\left(\bc,\bb,H_0\right)
  &= \bR_{b,1}\Big[\bR_{\mu}\IFu_{\cell}\!\left(
  \bc,\vect(\bV),H_0\right)\nonumber\\& \hspace{1cm}
  +\bV_0\IFu_{\cell}\!\left(
  \bc,\bmu^{\bu}_{\text{\tiny MCD}},H_0\right)
  +\IFu_{\cell}\!\left(\bc,\bmu_z,H_0\right)
  \Big]\nonumber\\ &\;\;\;\;\; -\bR_{b,2}
  \IFu_{\cell}\!\left(\bc,\vect(\bB),H_0\right),
\end{align}
where $\bV_0=\bV(H_0)$. The matrices $\bR_{\Sigma,1}$, 
$\bR_{\Sigma,2}$, $\bR_{B}$, $\bR_{b,1}$, $\bR_{b,2}$, 
and $\bR_{\mu}$ and the proofs are provided in
Supplementary Material~\ref{app:cellMR}.
\end{proposition}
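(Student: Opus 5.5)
The plan is to prove Proposition~\ref{IFBb} by the chain rule, applied to the composition of the cellMR functionals with the cellCov functionals. Both the casewise IF~\eqref{eq:caseIF} and the cellwise IF~\eqref{eq:cellIF} are directional derivatives in $\eps$ at $\eps=0$. Every step below only uses linearity of differentiation and the product rule, so the same argument gives both pairs of formulas: \eqref{eq:IFB_case}--\eqref{eq:IFb_case} and \eqref{eq:IFB_cell}--\eqref{eq:IFb_cell}. I will therefore write a generic $\IFu$ and treat the two cases simultaneously. Throughout, I assume that the component IFs of $\bmu_z$, $\bV$, $\bmu^{\bu}_{\MCD}$, $\bSigma^{\bu}_{\MCD}$ and $\btSigma_{\zort}$ exist; these come from the cellPCA and cellCov theory.

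The first step deals with the standardization $\bhD$. By the scale equivariance of cellCov noted after~\eqref{eq:cellCov}, I would take the functional $\bD(H)$ at $H_0$ as fixed. Then, in the standardized coordinates,
\[
\bmu(H)=\bD(\bmu_z+\bV\bmu^{\bu}_{\MCD}), \qquad \bSigma(H)=\bD(\bV\bSigma^{\bu}_{\MCD}\bV^T+\btSigma_{\zort})\bD .
\]
Differentiating gives
\[
\IFu(\bSigma)=\bD\big(\IFu(\bV)\bSigma^{\bu}_{\MCD,0}\bV_0^T+\bV_0\bSigma^{\bu}_{\MCD,0}\IFu(\bV)^T+\bV_0\IFu(\bSigma^{\bu}_{\MCD})\bV_0^T+\IFu(\btSigma_{\zort})\big)\bD .
\]
Vectorizing with $\vect(\bA\bX\bB)=(\bB^T\otimes\bA)\vect(\bX)$ and the commutation matrix $\bK_{d,k}$ (for the transpose) produces
\[
\vect\IFu(\bSigma)=(\bD\otimes\bD)\big[\bR_{\Sigma,1}\vect\IFu(\bV)+\bR_{\Sigma,2}\vect\IFu(\bSigma^{\bu}_{\MCD})+\vect\IFu(\btSigma_{\zort})\big],
\]
with
\[
\bR_{\Sigma,1}=(\bI_{d^2}+\bK_{d,d})\big((\bV_0\bSigma^{\bu}_{\MCD,0})\otimes\bI_d\big), \qquad \bR_{\Sigma,2}=\bV_0\otimes\bV_0 .
\]
In the same way,
\[
\IFu(\bmu)=\bD\big[\bR_\mu\vect\IFu(\bV)+\bV_0\IFu(\bmu^{\bu}_{\MCD})+\IFu(\bmu_z)\big], \qquad \bR_\mu=(\bmu^{\bu\,T}_{\MCD,0}\otimes\bI_d).
\]
The factor $\bD\otimes\bD$ and the selection of blocks will be absorbed into $\bR_B$, and similarly $\bD$ into $\bR_{b,1}$.

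The second step differentiates the ridge map~\eqref{eq:ridge}. Write $\bB=\bS^{-1}\bSigma_{xy}$ with $\bS=\bSigma_x+\lambda\bI_p$, and use $\IFu(\bS^{-1})=-\bS_0^{-1}\IFu(\bSigma_x)\bS_0^{-1}$. This gives
\[
\IFu(\bB)=\bS_0^{-1}\big(\IFu(\bSigma_{xy})-\IFu(\bSigma_x)\bB_0\big).
\]
Vectorizing,
\[
\vect\IFu(\bB)=(\bI_q\otimes\bS_0^{-1})\vect\IFu(\bSigma_{xy})-(\bB_0^T\otimes\bS_0^{-1})\vect\IFu(\bSigma_x).
\]
Both $\vect\IFu(\bSigma_x)$ and $\vect\IFu(\bSigma_{xy})$ are $\bE_x\vect\IFu(\bSigma)$ and $\bE_{xy}\vect\IFu(\bSigma)$ for fixed 0/1 selection matrices, namely $\bE_x=\bJ_x\otimes\bJ_x$ and $\bE_{xy}=\bJ_y\otimes\bJ_x$, where $\bJ_x=[\bI_p\;\bzero]$ and $\bJ_y=[\bzero\;\bI_q]$. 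So
\[
\bR_B=\big[(\bI_q\otimes\bS_0^{-1})\bE_{xy}-(\bB_0^T\otimes\bS_0^{-1})\bE_x\big](\bD\otimes\bD).
\]
For the intercept, $\bb=\bmu_y-\bB^T\bmu_x$ gives
\[
\IFu(\bb)=\bJ_y\IFu(\bmu)-\bB_0^T\bJ_x\IFu(\bmu)-\IFu(\bB)^T\bmu_{x,0},
\]
and the last term equals $(\bI_q\otimes\bmu_{x,0}^T)\vect\IFu(\bB)$. This yields
\[
\bR_{b,1}=(\bJ_y-\bB_0^T\bJ_x)\bD, \qquad \bR_{b,2}=\bI_q\otimes\bmu_{x,0}^T ,
\]
which is exactly the form of \eqref{eq:IFb_case}.

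I expect the main obstacle to be justifying the chain rule rather than the algebra. The rule requires that $\eps\mapsto T(H(G_\eps,\bc))$ be differentiable at $0$ for every component functional. For $\bD$ one needs either an argument that its IF contributes nothing, or an explicit inclusion of its IF. Under scale equivariance, a change $\IFu(\bD)$ enters $\bSigma$ as $\IFu(\bD)\bSigma_0+\bSigma_0\IFu(\bD)$-type terms, which do not cancel in $\bB$. The statement's formulas have no $\IFu(\bD)$ term, so I would check whether the paper evaluates at a standardized $H_0$ with the diagonal scale functional treated as fixed (consistent at the model). If not, I would add the corresponding term to the MCD and orthogonal-complement pieces. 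I would first try to show that this term vanishes by rewriting $\bB$ in standardized units: there $\bB=\bD_x^{-1}\bB_z\bD_y$, and $\IFu(\bD)$ terms then appear only multiplicatively. Their treatment would be stated as an assumption matching the supplementary construction.
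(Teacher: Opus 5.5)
Your proposal is correct and matches the paper's proof. The paper likewise writes $\bmu(H)=\bmu_z(H)+\bV(H)\bmu^{\bu}_{\MCD}(H)$ and $\bSigma(H)=\bV(H)\bSigma^{\bu}_{\MCD}(H)\bV(H)^T+\btSigma_{\zort}(H)$, differentiates with the product rule, vectorizes (using permutation matrices $\bP_V,\bP_B$ where you use commutation matrices), and applies $\partial\bA^{-1}=-\bA^{-1}(\partial\bA)\bA^{-1}$ to the ridge map, arriving at the same $\bR$ matrices up to rewriting.

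The only difference is the standardization. The paper simply leaves $\bD$ out of these functionals, which is your version with $\bD=\bI$ and no influence function of the scale functional, so your concern is legitimate but already built into the stated formulas. Your fallback rewriting $\bB=\bD_x^{-1}\bB_z\bD_y$ would not make those terms vanish. It holds only for $\lambda=0$, because the penalty $\lambda\bI_p$ is not scale equivariant, and even at $\lambda=0$ the $\IFu(\bD)$ contributions remain.
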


Let us look at a special case to get a feel for 
these results. Consider i.i.d. data 
$z_i=(x_i,y_i)^T$ for $i=1,\ldots,n$ with 
$z_i \sim N\!\left(\bzero,\;
\begin{bsmallmatrix} 1 & 0.9\\ 
 0.9 & 1 \end{bsmallmatrix}
\right)$. This obeys the simple linear model without
intercept $y_i = 0.9\,x_i + \eps_i$ with errors 
$\eps_i \sim N(0,0.19)$. Figure~\ref{fig_IFs_cellMR} 
shows the casewise and cellwise IFs of the cellMR
slope. They are bounded, indicating that the estimator 
is robust to both casewise and cellwise contamination. 
The casewise IF (left) has some flat regions, where
moving an outlier further away makes no difference
because its weight is zero there. The cellwise IF 
(right) is more smooth, with the influence largest 
near the center and gradually decreasing as the 
contamination becomes more distant. Together, these 
plots illustrate that the estimator effectively 
controls the influence of both casewise and cellwise 
perturbations.

\begin{figure}[ht]
\centering
\vspace{3mm}

\includegraphics[width=0.45\textwidth]
    {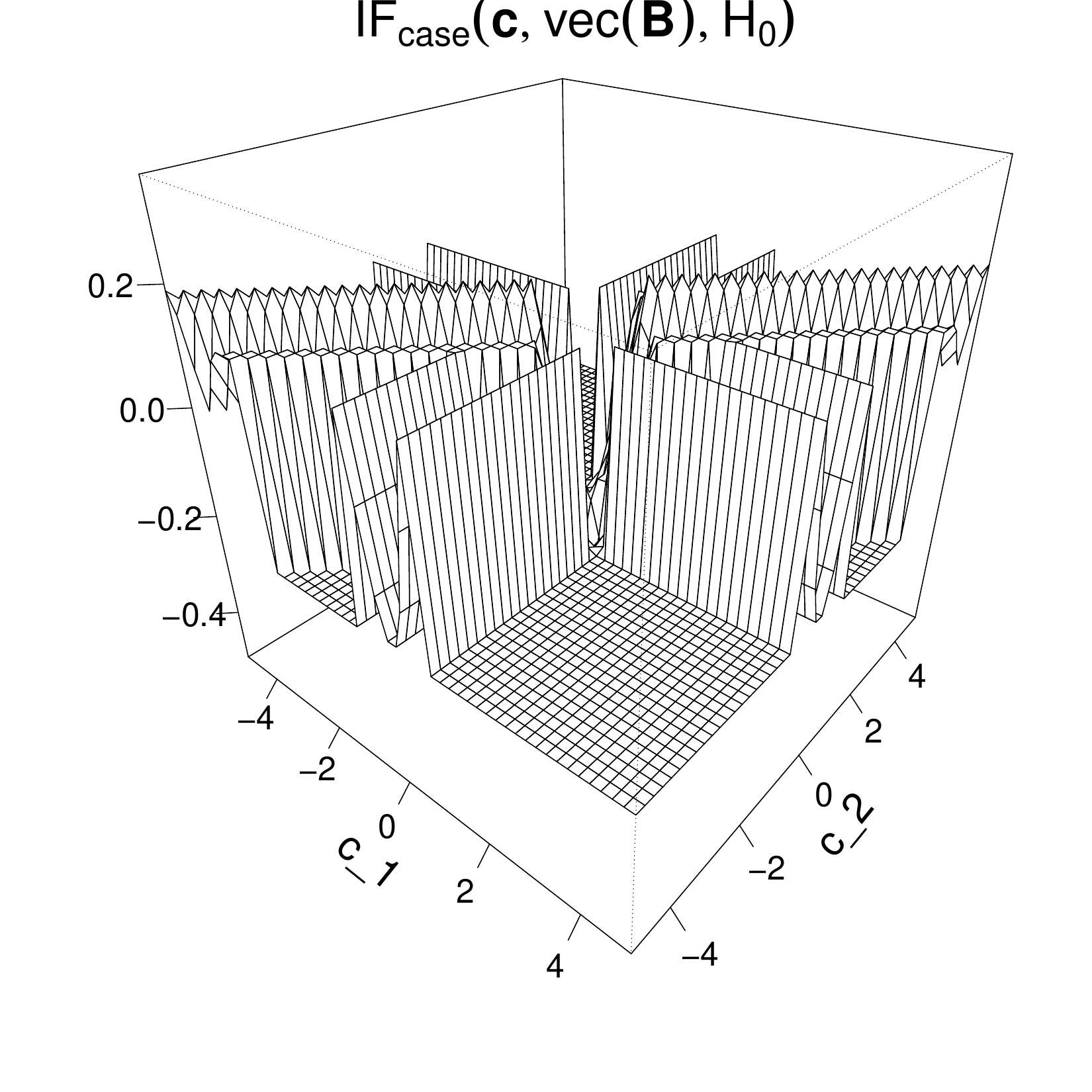}
\includegraphics[width=0.45\textwidth]
    {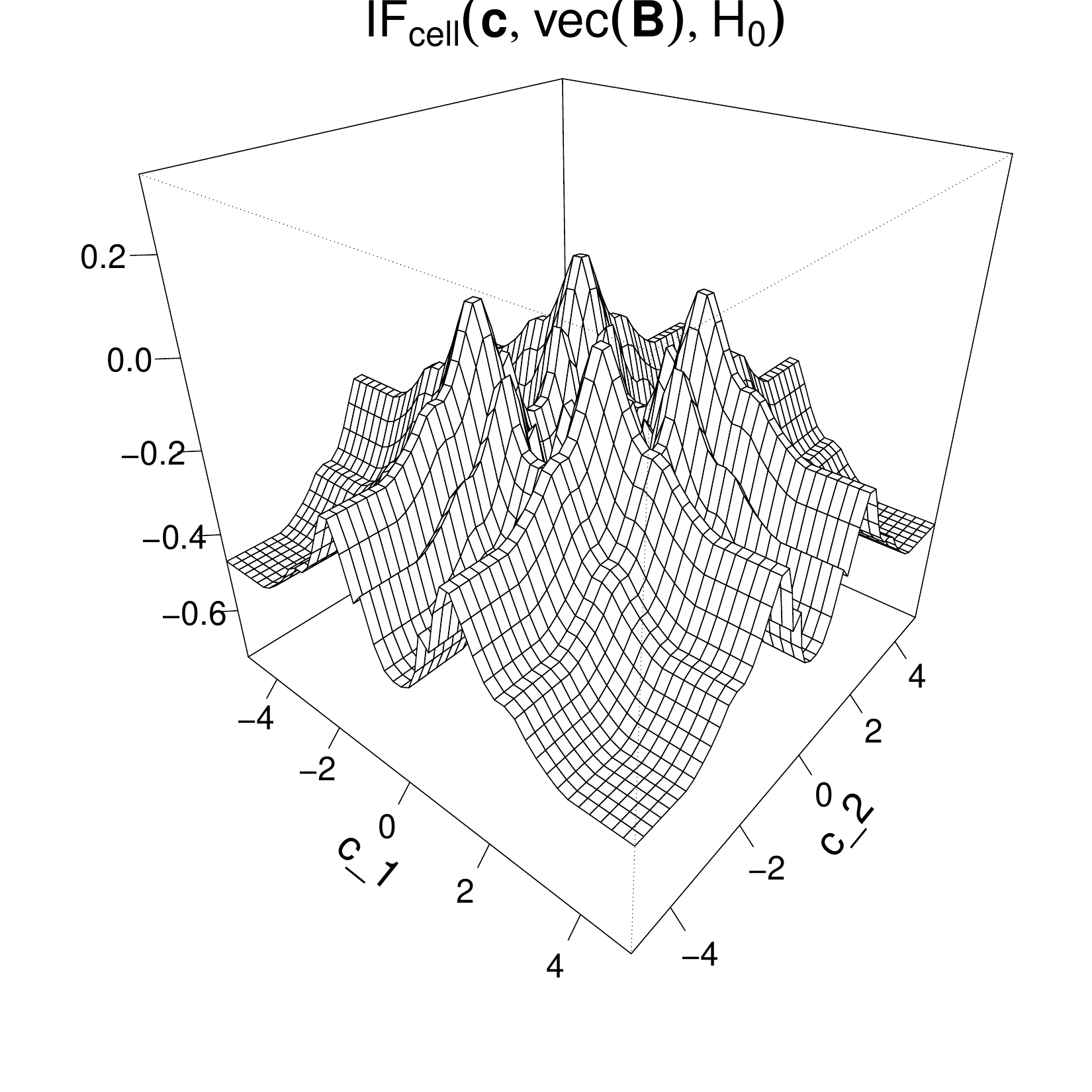}
\vspace{-4mm}
\caption{The casewise (left) and cellwise (right)
IF of $\bB$ in a simple linear model.}
\label{fig_IFs_cellMR}
\end{figure}

\section{Outlier detection}
\label{sec:outdetect}
We construct numerical and graphical diagnostics to 
gain further insight into outlying cells and cases 
in the responses and the predictors. For each pair 
$(\bx_i,\by_i)$ we define the regression residual
$\br_i:=\by_i-\bhy_i$\,, where 
$\bhy_i=\bhb+\bhB^{T}\bximp_i$ with $\bhb$ and $\bhB$ 
denoting the cellMR estimates, and where $\bximp_i$ is 
the imputed version of $\bx_i$ as in
Section~\ref{sec:oos}.  

The outlier map of the regression shows the distances 
$\RD_i=\sqrt{\br_i^{T}\bhSigma_{\beps}^{-1}\br_i}$ 
of the regression residuals versus the distances 
$\PD_i=\sqrt{(\bx_i-\btmu_x)^{T}\btSigma_{x}^{-1}
(\bx_i-\btmu_x)}$ of the predictors, where $\btmu_x$ 
and $\btSigma_x$ are robust estimates of the 
location and scatter of $\bx_i$ obtained as in 
\cite{centofanti2025cellwise}. Figure~\ref{fig:outmap} 
displays the outlier map for a dataset with $p=50$,
$q=3$ and $n=59$ that will be described in
Section~\ref{sec:realdata}.

\begin{figure}[ht]
\centering
\includegraphics[width=0.7\linewidth]
  {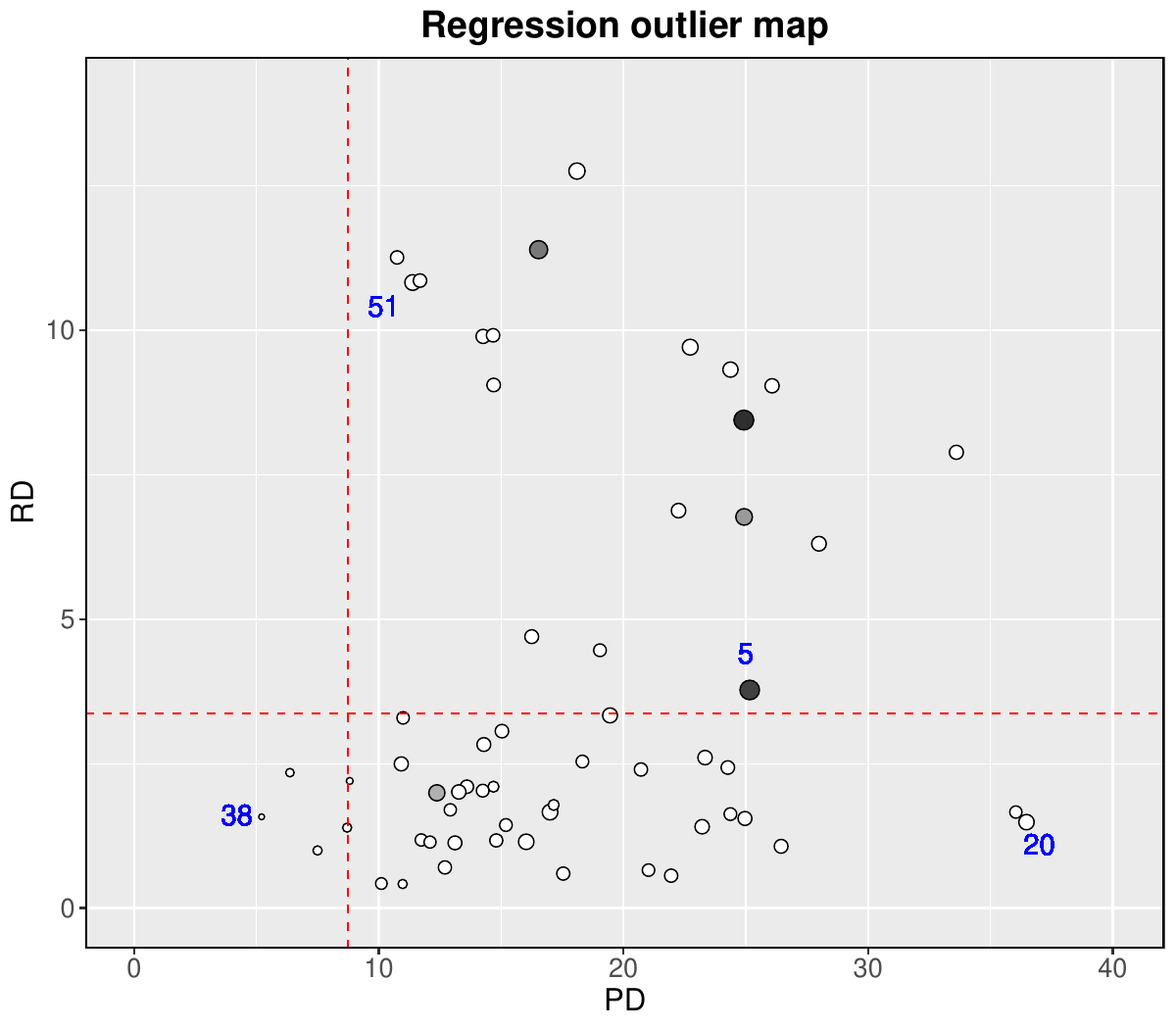}
\caption{A regression outlier map of cellMR.}
\label{fig:outmap}
\end{figure}

The vertical dashed line is at the cutoff
$c_{\uPD}=\sqrt{\chi^2_{p,0.99}}$\,, while the 
horizontal one shows the cutoff 
$c_{\uRD}=\sqrt{\chi^2_{q,0.99}}$\,. Cases with 
small $\RD_i \leqslant c_{\uRD}$ and small 
$\PD_i \leqslant c_{\uPD}$ are considered regular. 
Cases with large $\PD_i$ and small 
$\RD_i$ are referred to as good leverage points. 
Cases with large $\RD_i$ are considered vertical 
outliers when $\PD_i$ is small, and bad leverage 
points when $\PD_i$ is large. 
Figure~\ref{fig:outmap} contains two of the three 
types of atypical observations. 

The size of each point is made proportional to 
$1-\frac{1}{d}\sum_{j=1}^{d} m_{ij}w_{ij}^{\cell}$.
A large point therefore indicates a case with many 
outlying cells in the predictor and/or the response.
The casewise outlyingness is visualized by coloring 
the points according to their casewise total deviation 
$\rt_i$ of~\eqref{eq:rt_i}. The points are colored 
black when $\rt_i > 1.5\,c_{\rt,0.99}$\,, white when 
$\rt_i < c_{\rt,0.99}$\,, and use an interpolated 
grayscale in between. Here the cutoff $c_{\rt,0.99}$ 
is the $99$th percentile of the distribution of 
$\rt_i$ simulated for uncontaminated data.

We can also visualize outlying cells by a cellmap
\citep{DDC2018}. We construct the predictor 
cellmap by computing the vector 
$\br_i^X=\bD_{r^X}^{-1}(\bx_i-\bhx_i)$
for each $\bx_i$\, where $\bhx_i$ is the fitted
point $\bhx_i = \btV_X \bu^X_i + \btmu_X$ as in
Section~\ref{sec:oos}, and $\bD_{r^X}=\diag
(\hsigma^{r^X}_1,\dots,\hsigma^{r^X}_{p})$ with
$\hsigma^{r^X}_j=\sigma_M(
\{x_{ij}-\hx_{ij}\}_{i=1}^n)$.
These vectors are combined in an $n \times p$
matrix, and visualized by coloring. 
The left panel of Figure~\ref{fig:cellmap} shows
this map for the 4 cases labeled in
Figure~\ref{fig:outmap}.

\begin{figure}[ht]
\centering
\includegraphics[width=1\linewidth]
  {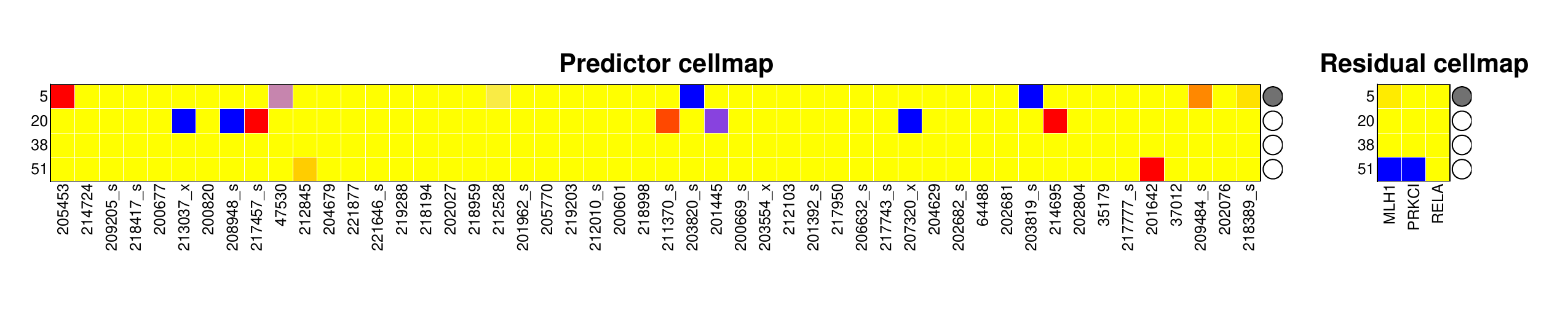} 
\caption{cellMR predictor and residual cellmaps 
  of the 4 labeled cases.}
\label{fig:cellmap}
\end{figure}

In this map, cells with $|r_{ij}^X| \leqslant
c_{\cell}:=\sqrt{\chi^2_{1,0.99}}$ are considered 
regular and colored yellow. Missing cells would be
white. The remaining cells are flagged as cellwise 
outliers. Cells with $r_{ij}^X > c_{\cell}$ range 
from light orange to red, while cells with
$r_{ij}^X < -c_{\cell}$ are colored from light 
purple to dark blue. We add information on casewise 
outlyingness by drawing a circle to the right of 
each row, using the same color scheme as in the 
regression outlier map.

The right panel of Figure~\ref{fig:cellmap} is
the residual cellmap of the same cases. It 
shows the standardized residual vectors 
$\btr_i=\bD_{\beps}^{-1}(\by_i-\bhy_i)$ of 
the regression, where $\bD_{\beps}=\diag(
\hsigma^{\beps}_1,\dots,\hsigma^{\beps}_{q})$
 with $\hsigma^{\beps}_1,\dots,\hsigma^{\beps}_q$ 
the diagonal entries of $\bhSigma_{\beps}$. The
cells are colored according to the same scheme.

\section{Robust inference via cellBoot}
\label{sec:inf}

We want a confidence interval for a parameter of 
interest $\theta_0$ of the form
$\theta_0=(\bb^T , \vect(\bB)^T)\ba$ for some 
fixed vector $\ba \in \mathbb{R}^{q+pq}$. In
particular, each $b_j$ and $B_{j\ell}$ are of
this form. Let $\htheta_n$ denote the estimator 
of $\theta_0$ computed from the sample 
$\{(\bx_i,\by_i)\}_{i=1}^n$\,. From these data
the bootstrap draws a sample of size $n$ with 
replacement, denoted as 
$\{(\bx_i^*,\by_i^*)\}_{i=1}^n$ and used
to compute a bootstrap analogue $\htheta_n^{*}$ 
of $\htheta_n$. Drawing many bootstrap samples
yields an empirical distribution of $\htheta_n^{*}$
from which one constructs a confidence interval 
$C_n\subseteq\mathbb{R}$ for $\theta_0$ with a 
certain level $1-\alpha$. An important question
is whether the coverage probability of $C_n$
is correct for $n \rightarrow \infty$.

\begin{definition}
A confidence interval $C_n\subseteq\mathbb{R}$ for 
$\theta_0$ is asymptotically exact at 
level $(1-\alpha)$ if
\begin{equation}\label{eq:reg-coverage}
  \Pr\{C_n \;\mbox{\upshape{contains}}\; \theta_0\} \;
  \rightarrow\; 1-\alpha
  \quad \text{ as }\quad n \rightarrow \infty.
\end{equation}
\end{definition}

A necessary condition for \eqref{eq:reg-coverage} 
is that $\htheta_n$ is a consistent estimator of 
$\theta_0$, that is $\htheta_n \rightarrow_p \theta_0$
where $\rightarrow_p$ denotes convergence in 
probability. Intuitively, if
$\htheta_n$ would converge to a location different
from $\theta_0$\,, the coverage probability
would go down to zero as $C_n$ shrinks for
$n \rightarrow \infty$.

\subsection{The Indirect Inference estimator}
\label{sec:II}
We would like to use the robust cellMR estimator
$(\bhb,\bhB)$ of Section~\ref{sec:method} to
estimate $\htheta_n$. However, this estimator is 
not guaranteed to be consistent due to its 
cellwise construction.

A solution comes from \emph{indirect inference} 
(II), a simulation-based method that provides a 
consistent estimator starting from an inconsistent 
one \citep{gourieroux1993indirect, guerrier2019}. 
This approach follows a two-step procedure: in the 
first step, an auxiliary estimator is obtained 
that may not be consistent; in the second step, 
simulation-based bias correction is applied, 
yielding a consistent estimator. Consistency of 
the cellMR estimator relies on the consistency of 
the estimators of the location $\bmu$ and scatter 
$\bSigma$ of the joint distribution $F$ of 
$(\bx_i,\by_i)$. We will employ II to obtain 
consistent estimators of $\bmu$ and $\bSigma$.

Let $\{F_{\btheta} : \btheta \in \bTheta\}$ 
denote a parametric family of distributions 
indexed by the parameter
$\btheta = (\bmu^T,\vecth_s(\bSigma)^T)^T$.
Here $\vecth_s(\bSigma)$ denotes the scaled 
half--vectorization, that first multiplies
the off-diagonal entries of $\bSigma$ by 
$\sqrt{2}$, and then stacks the lower 
triangular entries (including the diagonal) 
on top of each other. With this convention, the 
Euclidean norm of $\vecth_s(\bSigma)$ coincides
with the Frobenius norm of $\bSigma$.
The parameter space $\bTheta$ contains the 
admissible parameters: $\bTheta = 
\Bigl\{\,(\bmu^T,\vecth_s(\bSigma)^T)^T 
\in \mathbb R^{d + d(d+1)/2} : \bSigma \in 
\mathbb S^d,\ \|\bmu\|\leqslant M, 
c \leqslant \lambda_{\min}(\bSigma)\leqslant 
\lambda_{\max}(\bSigma)\leqslant C \Bigr\}$ 
for $M<\infty$ and $0< c \leqslant C<\infty$.
Here $\mathbb S^d$ denotes the space of 
symmetric $d\times d$ matrices. 
We assume that  $F=F_{\btheta_0}$ for some
$\btheta_0 \in \bTheta$.

Let $\hpi_n(\boheta_n)$ denote the auxiliary estimator
of $\btheta_0$ computed from the observed sample 
$\{(\bx_i,\by_i)\}_{i=n}^n$ which depends on a 
vector of tuning parameters 
$\boeta \in \bH \subseteq \mathbb{R}^r$, where 
$\boheta_n$ is an estimate of $\boeta$ computed
from the observed sample. We further denote by 
$\hpi(\btheta,\boheta_n,n)$ the auxiliary estimator 
evaluated on a generic sample of size $n$ generated 
from $F_{\btheta}$, with $\btheta\in\bTheta$, and 
computed with the tuning parameters in $\boheta_n$.
The indirect estimator is then defined as
\begin{equation}
\label{eq:indirect_obj}
  \bhtheta_n = \argzero_{\btheta \in \bTheta}
  \bigl\{\, \hpi_n(\boheta_n) - 
  \pibar(\btheta,\boheta_n,n) \bigr\},
\end{equation}
where
\begin{equation}
  \pibar(\btheta,\boheta_n,n) = \frac{1}{H} 
  \sum_{h=1}^{H} \hpi_{h}(\btheta,\boheta_n,n),
\end{equation}
and $\hpi_{h}(\btheta,\boheta_n,n)$ denotes the
value of $\hpi(\btheta,\boheta_n,n)$ computed on 
the $h$-th simulated dataset of size $n$ drawn from 
$F_{\btheta}$. The observations drawn from 
$F_{\btheta}$ are generated using the same random
number generator seed for all $\btheta$ to ensure 
the objective function is deterministic.

\begin{proposition}
\label{prop:2}
Under assumptions~A1-A3 in Supplementary 
Material~\ref{app:proofs}, any sequence 
$\bhtheta_n$ satisfying
$\bigl\|\hpi_n(\boheta_n)-\pibar(\bhtheta_n,\boheta_n,n)
\bigr\| \rightarrow_p 0$ is consistent for $\btheta_0$, 
that is, $\bhtheta_n \rightarrow_p \btheta_0$.
\end{proposition}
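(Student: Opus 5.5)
We will follow the standard consistency argument for indirect inference estimators (as in \citet{guerrier2019}), which reduces consistency to two ingredients. The first is a uniform convergence of the simulated auxiliary map to a deterministic binding function. The second is identifiability, i.e.\ injectivity of that binding function on the compact set $\bTheta$.

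\textbf{Assumptions.} Since A1--A3 are stated in the Supplementary Material, we take them to be of the following form.
\begin{enumerate}[label=(\roman*)]
\item $\bTheta$ is compact. This holds by construction, since $\|\bmu\|\leqslant M$ and the eigenvalues of $\bSigma$ lie in $[c,C]$. Moreover $\boheta_n\rightarrow_p\boeta_0$ for some $\boeta_0\in\bH$.
\item There is a continuous binding function $\pi(\btheta,\boeta)$ such that $\hpi_n(\boheta_n)\rightarrow_p\pi(\btheta_0,\boeta_0)$. Furthermore, for the fixed seed, $\sup_{\btheta\in\bTheta}\|\pibar(\btheta,\boheta_n,n)-\pi(\btheta,\boeta_0)\|\rightarrow_p 0$.
\item $\btheta\mapsto\pi(\btheta,\boeta_0)$ is injective on $\bTheta$.
\end{enumerate}

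\textbf{Key steps.} By the triangle inequality,
\begin{equation*}
\|\pi(\bhtheta_n,\boeta_0)-\pi(\btheta_0,\boeta_0)\|\leqslant \|\pi(\bhtheta_n,\boeta_0)-\pibar(\bhtheta_n,\boheta_n,n)\| + \|\pibar(\bhtheta_n,\boheta_n,n)-\hpi_n(\boheta_n)\| + \|\hpi_n(\boheta_n)-\pi(\btheta_0,\boeta_0)\|.
\end{equation*}
The first term is bounded by the supremum in (ii), so it tends to zero in probability; taking the supremum is what handles the random argument $\bhtheta_n$. The second term tends to zero in probability by the hypothesis of Proposition~\ref{prop:2}. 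The third term tends to zero in probability by (ii). Hence $\|\pi(\bhtheta_n,\boeta_0)-\pi(\btheta_0,\boeta_0)\|\rightarrow_p 0$.

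\textbf{Identifiability step.} Fix $\delta>0$. The set $\{\btheta\in\bTheta:\|\btheta-\btheta_0\|\geqslant\delta\}$ is compact. The map $\btheta\mapsto\|\pi(\btheta,\boeta_0)-\pi(\btheta_0,\boeta_0)\|$ is continuous and, by injectivity, strictly positive on that set. It therefore attains a positive minimum $\kappa(\delta)$. Consequently
\begin{equation*}
\Pr\{\|\bhtheta_n-\btheta_0\|\geqslant\delta\}\leqslant\Pr\{\|\pi(\bhtheta_n,\boeta_0)-\pi(\btheta_0,\boeta_0)\|\geqslant\kappa(\delta)\},
\end{equation*}
and the right-hand side tends to zero, which gives $\bhtheta_n\rightarrow_p\btheta_0$.

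\textbf{Main obstacle.} The hardest part is the uniform convergence in (ii) when the tuning parameter $\boheta_n$ is estimated. If A2 only gives pointwise-in-$\btheta$ convergence, uniformity must be recovered separately, by one of two routes.
\begin{itemize}
\item A stochastic equicontinuity or Lipschitz-type condition on $\hpi(\btheta,\boeta,n)$ in $(\btheta,\boeta)$. This is natural here because the simulated samples come from $F_{\btheta}$ with a common seed, so each draw can be written as $\bmu+\bSigma^{1/2}\bz_i$ with fixed $\bz_i$.
\item A finite $\delta$-net argument on the compact set $\bTheta\times$ (a neighbourhood of $\boeta_0$), combined with continuity of $\pi$.
\end{itemize}
Note that $H$ is fixed, so averaging over the $H$ replicates does not change the argument. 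Each of the $H$ simulated estimators must be handled individually, and we then average.
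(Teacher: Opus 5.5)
Your proof is correct. Its engine is the same as the paper's: uniform convergence of the simulated binding function plus identification of $\btheta_0$. You assemble it more directly, though.

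The paper forms $\widehat Q_n(\btheta)=\|\hpi_n(\boheta_n)-\pibar(\btheta,\boheta_n,n)\|^2$ and $Q(\btheta)=\|\pi(\btheta_0,\boeta_0)-\pi(\btheta,\boeta_0)\|^2$. It notes that $\bhtheta_n$ is an approximate minimizer of $\widehat Q_n$ and invokes Theorem~2.1 of Newey and McFadden. It must therefore check three things: compactness of $\bTheta$, continuity of $Q$, and uniform convergence of the \emph{squared} criterion. The last step requires expanding the square, and the cross terms are controlled only through $\sup_{\btheta\in\bTheta}\|\pi(\btheta_0,\boeta_0)-\pi(\btheta,\boeta_0)\|<\infty$, which again uses continuity and compactness.

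Your three-term triangle inequality at the random point $\bhtheta_n$ avoids all of this. In fact, the paper's A3 postulates the separation condition $\inf_{\|\btheta-\btheta_0\|\geqslant\delta}\|\pi(\btheta,\boeta_0)-\pi(\btheta_0,\boeta_0)\|>0$ outright. So you could simply take $\kappa(\delta)$ to be that infimum and need neither compactness nor continuity. Your injectivity assumption is also more than you use: only uniqueness of the preimage of $\pi(\btheta_0,\boeta_0)$ matters, and under compactness and continuity that is equivalent to A3.

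The one real difference in hypotheses is your third term. The paper's A1--A3 do not contain $\hpi_n(\boheta_n)\rightarrow_p\pi(\btheta_0,\boeta_0)$. The paper instead derives it from A2 at $\btheta=\btheta_0$, arguing that because the data come from $F_{\btheta_0}$, the observed auxiliary estimator is distributed like a simulated one. That shortcut glosses over the fact that $\boheta_n$ is computed from the same observed sample, whereas the simulated samples are independent of it. Stating the convergence as an explicit assumption, as you do, is therefore the more careful choice. What genuinely justifies it is a bound uniform in $\boeta$ as well as $\btheta$, such as the one the supplement later proves for FastCellCov.

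Your closing ``main obstacle'' discussion is not needed for this proposition, since A2 assumes uniform convergence directly.
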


The proof is presented in Section \ref{app:proofs}
of the Supplementary Material.

To obtain a solution of \eqref{eq:indirect_obj}, 
we iteratively update the estimate 
$\bhtheta_n^{(\ell)}$ as
\begin{equation}
\label{eq:thetaupdate}
  \bhtheta_n^{(\ell)}
  \;=\;
  \Pi_{\bTheta}\!\left(
  \hpi_n(\boheta_n)+
  \Big[
    \bhtheta_n^{(\ell-1)}
    -
    \pibar(\bhtheta_n^{(\ell-1)},\boheta_n, n)
  \Big]\right),
\end{equation}
starting from $\bhtheta_n^{(0)}=\hpi_n(\boheta_n)$.
Here the function $\Pi_{\bTheta}$ brings its 
argument into $\bTheta$ if it wasn't already in it.
Starting from a vector $(\ba^T,\vecth_s(\bL)^T)^T$ 
in $\mathbb R^{d + d(d+1)/2}$, 
$\Pi_{\bTheta}(\ba^T,\vecth_s(\bL)^T)^T := \bigl( 
\tilde\ba^T,\ \vecth_s(\tilde\bL))^T \bigr)^T$,
where $\tilde\ba$ is equal to $\ba$ if 
$\|\ba\|\leqslant M$, and 
$\displaystyle M(\ba/\|\ba\|)$ otherwise. This 
clips $\ba$ to the ball of radius $M$. Also, $\bL$ 
is decomposed as $\bL = \bQ \bLambda \bQ^T$ with $\bLambda=\diag(\lambda_1,\dots,\lambda_d)$, and
turned into $\tilde\bL=\bQ\bLambda_c\bQ^T$ where
$\bLambda_c=\diag(\tilde\lambda_1,\dots,
\tilde\lambda_d)$, with
$\tilde\lambda_j=\min\{\max\{\lambda_j,c\},C\}$.
This clips the eigenvalues of $\bL$ to the 
interval $[c,C]$, leaving eigenvectors unchanged.

Under appropriate conditions, the limit of the 
sequence $\bhtheta_n^{(\ell)}$ for 
$\ell \rightarrow \infty$ indeed exists and is 
the unique solution $\bhtheta_n$ 
of~\eqref{eq:indirect_obj}, as shown in the 
following proposition.

\begin{proposition}
\label{pro:proj-fp-contraction}
Under assumptions~B1-B2 in Supplementary 
Material~\ref{app:proofs}, we have that $\bhtheta_n$ is 
unique and the sequence $\bhtheta_n^{(\ell)}$ converges
in norm to $\bhtheta_n$ with linear rate for every 
$\bhtheta_n^{(0)}\in\bTheta$, that is, 
$\|\bhtheta_n^{(\ell)}-\bhtheta_n\| \leqslant L^\ell 
\|\bhtheta_n^{(0)}-\bhtheta_n\|$ for some $L<1$.
\end{proposition}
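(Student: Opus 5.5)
We treat the update~\eqref{eq:thetaupdate} as a fixed-point iteration $\bhtheta_n^{(\ell)}=\Phi(\bhtheta_n^{(\ell-1)})$ with
\[
  \Phi(\btheta):=\Pi_{\bTheta}\bigl(\hpi_n(\boheta_n)+\btheta-\pibar(\btheta,\boheta_n,n)\bigr),
\]
and apply the Banach fixed-point theorem on $\bTheta$. Since the simulation seed is fixed, $\btheta\mapsto\pibar(\btheta,\boheta_n,n)$ is a deterministic map, so $\Phi$ is a deterministic self-map of $\bTheta$. The set $\bTheta$ is closed in $\mathbb R^{d+d(d+1)/2}$: it is the product of a closed ball with the set of symmetric matrices whose eigenvalues lie in $[c,C]$, and eigenvalues depend continuously on the matrix. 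Hence $\bTheta$ is a complete metric space under the Euclidean norm.

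The first step is to show that $\Pi_{\bTheta}$ is nonexpansive. I will verify that it is exactly the Euclidean projection onto the closed convex set $\bTheta$.
\begin{itemize}
\item In the $\bmu$-block, radial clipping is the projection onto the ball $\{\|\bmu\|\leqslant M\}$.
\item In the $\bSigma$-block, the scaled half-vectorization $\vecth_s$ is an isometry with the Frobenius norm. The projection of a symmetric $\bL$ onto the convex set $\{c\bI\preceq\bSigma\preceq C\bI\}$ in Frobenius norm is obtained by clipping its eigenvalues with the eigenvectors kept. This follows from the von Neumann trace inequality, or equivalently from the fact that this set is spectral and invariant under orthogonal conjugation.
\item Convexity of both blocks is immediate, and $\bTheta$ is their product.
\end{itemize}
Projections onto closed convex sets are $1$-Lipschitz, so $\|\Phi(\btheta_1)-\Phi(\btheta_2)\|\leqslant\|g(\btheta_1)-g(\btheta_2)\|$, where $g(\btheta):=\btheta-\pibar(\btheta,\boheta_n,n)$.

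The second step is to show that $g$ is a contraction on $\bTheta$, i.e.\ $\|g(\btheta_1)-g(\btheta_2)\|\leqslant L\|\btheta_1-\btheta_2\|$ for some $L<1$. I expect assumptions B1--B2 to supply this, plausibly in one of two forms.
\begin{itemize}
\item One form is that $\pibar$ is differentiable in $\btheta$ with Jacobian $\bJ(\btheta)$ satisfying $\sup_{\btheta\in\bTheta}\|\bI-\bJ(\btheta)\|_{op}\leqslant L<1$. In that case I apply the mean value inequality along the segment joining $\btheta_1$ and $\btheta_2$, which stays in $\bTheta$ by convexity.
\item The other form is a direct Lipschitz condition on $\btheta-\pibar(\btheta)$.
\end{itemize}
This step is the main obstacle, since the auxiliary estimator involves cellPCA, MCD and reweighting. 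The argument must rely on the assumption that the auxiliary bias map is close to the identity with a Lipschitz remainder. No such property can be derived from the construction of cellMR itself, so I would state clearly that the contraction constant comes from B1--B2.

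The last step combines the two. $\Phi$ is an $L$-contraction of the complete space $\bTheta$ into itself, so Banach's theorem gives a unique fixed point $\bhtheta_n\in\bTheta$. Iterating the contraction inequality gives
\[
  \|\bhtheta_n^{(\ell)}-\bhtheta_n\|=\|\Phi(\bhtheta_n^{(\ell-1)})-\Phi(\bhtheta_n)\|\leqslant L^\ell\|\bhtheta_n^{(0)}-\bhtheta_n\|
\]
for every starting point in $\bTheta$. It remains to connect the fixed point with~\eqref{eq:indirect_obj}.
\begin{itemize}
\item Any zero of $\hpi_n(\boheta_n)-\pibar(\btheta,\boheta_n,n)$ in $\bTheta$ is a fixed point of $\Phi$, because $\Pi_{\bTheta}$ fixes points of $\bTheta$. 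Hence the zero is unique if it exists.
\item Conversely, if the unprojected update $\hpi_n(\boheta_n)+g(\bhtheta_n)$ already lies in $\bTheta$, the fixed point is a genuine zero. This requires an interior condition, presumably part of B1--B2.
\item Without that condition, the fixed-point characterization gives only a zero of the projected equation, $\bhtheta_n=\Pi_{\bTheta}(\cdots)$. I would define $\bhtheta_n$ in that sense.
\end{itemize}
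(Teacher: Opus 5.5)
Your proposal follows the paper's proof. B1 is exactly the assumption that $T_n(\btheta)=\hpi_n(\boheta_n)+\btheta-\pibar(\btheta,\boheta_n,n)$ is $L$-Lipschitz on $\bTheta$ with $L<1$. The paper likewise identifies $\Pi_{\bTheta}$ as the metric projection onto the closed convex $\bTheta$ (using eigenvalue clipping and Hoffman--Wielandt), hence nonexpansive, and applies Banach's theorem to $\Pi_{\bTheta}\circ T_n$. The last step you left open is settled by B2, which reads $\bhtheta_n\in\interior(\bTheta)$. The paper uses the projection inequality $\langle u-\Pi_{\bTheta}(u),\,z-\Pi_{\bTheta}(u)\rangle\leqslant 0$ for all $z\in\bTheta$; at an interior point this forces $u=\Pi_{\bTheta}(u)$. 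So the fixed point satisfies $T_n(\bhtheta_n)=\bhtheta_n$ and is a genuine zero, and it is the unique zero by your own observation that every zero in $\bTheta$ is a fixed point.
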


\subsection{The auxiliary FastCellCov estimator}
\label{sec:FastCellCov}

The robust cellMR regression is derived 
directly from the cellCov estimates $\bhmu$ 
and $\bhSigma$ through formula~\eqref{eq:ridge}.
In our setting, where the data may be 
contaminated by cellwise and casewise outliers, 
we would like to use cellCov as our auxiliary 
estimator. However, in the II algorithm the 
auxiliary estimator has to be recomputed many 
times, and using cellCov would be too expensive 
computationally. This motivates the construction 
of the FastCellCov estimator, which preserves 
the main robustness ideas of cellCov but avoids 
repeating the most expensive parts of the 
algorithm.

FastCellCov starts by applying cellCov to the 
original sample $\btz_1, \dots, \btz_n$. Now 
consider another sample
$\btz_1^*, \dots, \btz_n^*$\,, that may be 
a bootstrap sample or a simulated sample of 
size $n$ generated from $F_{\btheta}$. We then 
standardize it by computing
$\bz_i^*=\bhD^{-1}\btz^*_i$ where $\bhD$ is 
the diagonal matrix of scale estimators of the 
original sample, as in 
Section~\ref{sec:estimator}.
Then, FastCellCov computes a robust center 
$\btmu_F$ and a robust covariance matrix 
$\btSigma_F$ of the $\bz_i^*$. These quantities 
are computed as weighted versions of the sample 
mean and covariance matrix. The weights are 
designed to downweight both cellwise and 
casewise outliers, while also accounting for 
the possible presence of missing values, and a
reconstructed by reusing the robust structure 
learned from the original sample. This 
construction follows the same principles as 
the Detecting Deviating Cells (DDC) method of 
\cite{DDC2018} where each cell is compared 
with a robust prediction obtained from the 
variables that are sufficiently correlated 
with it, and cells that are incompatible with 
this prediction receive a smaller weight. Thus, 
FastCellCov should be viewed as a one-step 
approximation to \mbox{cellCov}. The entire 
reasoning can be found in 
Supplementary Material~\ref{app:FastCellCov}.
The final FastCellCov estimates of $\bmu$ 
and $\bSigma$ are then given by 
$\bhmu_{F} := \bhD \btmu_{F}$ and 
$\bhSigma_{F} := \bhD \btSigma_{F}\bhD$.

\subsection{The cellBoot algorithm}
The cellBoot algorithm consists of the following
steps:
\begin{enumerate}[label={Step \arabic*.}, 
  align=left, leftmargin=*, nosep]
\item Apply the \texttt{cellMR} estimator to the 
  original dataset $\{(\bx_i,\by_i)\}_{i=1}^n$ 
  to obtain estimates of $\bmu$, $\bSigma$, and 
  the tuning parameters $\rk$ and $\lambda$.
\item Generate bootstrap samples  
  $\{(\bx^*_{bi},\by^*_{bi})\}_{i=1}^n$ from the
  data, for $b=1,\dots,B$.
\item For each 
  $\{(\bx^*_{bi},\by^*_{bi})\}_{i=1}^n$\,, 
  compute the FastcellCov estimates $\bhmu_{F,b}$ 
  and $\bhSigma_{F,b}$ as in 
  \mbox{Section~\ref{sec:FastCellCov}}. Then 
  compute the corrected versions 
  $\bhmu_{II,b}^{*}$ and 
  $\bhSigma_{II,b}^{*}$ by applying II as in
  Section~\ref{sec:II}. Next, compute the 
  estimates $(\bhb^*_{b}, \bhB^*_b)$ of the 
  coefficient matrix $\bB$ and intercept $\bb$ 
  according to~\eqref{eq:ridge}.
\item Construct the confidence interval for the 
  parameter of interest at level $1-\alpha$ by 
  computing the $\alpha/2$ and $1-\alpha/2$ 
  quantiles of the bootstrap estimates 
  $\{\htheta^*_b\}_{b=1}^B$ with $\htheta^*_b=
  ((\bhb_{b}^*)^T, \vect(\bhB_b^*)^T) \ba$.
\end{enumerate}
The number of bootstrap replications $B$ determines 
the accuracy: higher $B$ values yield more stable 
confidence intervals at the cost of increased 
computation. In our implementation we set
$B = 1000$, which offers a suitable trade-off 
between precision and computation time.

The following theorem establishes the 
consistency of the bootstrap distribution of 
$\sqrt{n}(\bhtheta_n^*-\bhtheta_n)$ for 
approximating the sampling distribution of 
$\sqrt{n}(\bhtheta_n-\btheta_0)$. Here
$\bhtheta_n=\big((\bhmu_{II})^T,
\vecth_s(\bhSigma_{II})^T\big)^T$ denotes the 
II estimator using FastcellCov as auxiliary 
estimator, \mbox{applied} to the observed sample,
and $\bhtheta_n^*=\big((\bhmu_{II}^*)^T,
\vecth_s(\bhSigma_{II}^*)^T\big)^T$
is its bootstrap counterpart.

\begin{theorem} \label{the:bootstrap-theta}
Under assumptions~D1--D7 in Section~\ref{app:proofs} 
of the Supplementary Material, the bootstrap 
distribution of $\sqrt{n}(\bhtheta_n^*-\bhtheta_n)$ 
is consistent for the distribution of
$\sqrt{n}(\bhtheta_n-\btheta_0)$ in 
Kolmogorov-Smirnov distance, that is,
\begin{equation*}
  \sup_{x\in\mathbb{R}^{d+d(d+1)/2}} \Big|
  \Pr\bigl(\sqrt{n}(\bhtheta_n-\btheta_0)
  \leqslant x \bigr) - {\Pr}^* 
  \bigl(\sqrt{n}(\bhtheta_n^*-\bhtheta_n)
  \leqslant x \,\bigr)\Big|\rightarrow_p 0,
\end{equation*}
where ${\Pr}^*$ is computed under the bootstrap 
distribution, conditional on the observed data.
\end{theorem}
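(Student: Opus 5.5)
We reduce the statement to an asymptotic linear representation of the indirect-inference estimator, valid both for the original sample and for the bootstrap sample. Then we apply standard bootstrap consistency for smooth functionals of the empirical distribution.

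The assumptions D1--D7 are not visible here. We expect them to supply the following:
\begin{enumerate}[label=(\roman*)]
\item the auxiliary FastCellCov functional $\pi(\btheta,\boeta)$ is Hadamard (or Fr\'echet) differentiable at $F_{\btheta_0}$, with a bounded influence function;
\item the binding function $\btheta\mapsto\pi(\btheta,\boeta_0)$ is injective and continuously differentiable, with nonsingular Jacobian $\bJ$ at $\btheta_0$;
\item the tuning estimate satisfies $\boheta_n\rightarrow_p\boeta_0$, and $\pi$ is continuous in $\boeta$;
\item $\btheta_0$ lies in the interior of $\bTheta$;
\item the simulated averages $\pibar(\btheta,\boeta,n)$ converge uniformly in $\btheta$ to $\pi(\btheta,\boeta)$, at rate $o_p(n^{-1/2})$ for the bias part.
\end{enumerate}

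\textbf{Step 1 (original sample).} Proposition~\ref{prop:2} gives consistency $\bhtheta_n\rightarrow_p\btheta_0$. Because $\btheta_0$ is interior, the projection $\Pi_{\bTheta}$ is eventually inactive, so $\bhtheta_n$ solves \eqref{eq:indirect_obj} exactly with probability tending to one. We expand $\pibar(\bhtheta_n,\boheta_n,n)$ around $\btheta_0$ using the uniform derivative control and set it equal to $\hpi_n(\boheta_n)$. Since the simulation noise at $\bhtheta_n$ and at $\btheta_0$ cancels to first order (common random numbers), this gives
\[
\sqrt n(\bhtheta_n-\btheta_0)=\bJ^{-1}\sqrt n\bigl(\hpi_n(\boheta_n)-\pi(\btheta_0,\boeta_0)\bigr)-\bJ^{-1}\sqrt n\bigl(\pibar(\btheta_0,\boheta_n,n)-\pi(\btheta_0,\boeta_0)\bigr)+o_p(1).
\]
Differentiability of the auxiliary functional then turns each term into an average of influence functions. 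The tuning-parameter contribution cancels between the two terms, since both are evaluated at the same $\boheta_n$. The limit is a centered Gaussian, with covariance determined by $\bJ^{-1}$, the influence function, and the factor $(1+1/H)$.

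\textbf{Step 2 (bootstrap sample).} We repeat the argument conditionally on the data for $\bhtheta_n^*$, expanding now around $\bhtheta_n$. The key inputs are:
\begin{itemize}
\item the standard result that $\sqrt n(\hat F_n^*-\hat F_n)$ converges conditionally to the same Brownian-bridge limit as $\sqrt n(\hat F_n-F)$;
\item the functional delta method for the bootstrap, which applies under Hadamard differentiability (van der Vaart--Wellner, Thm.~3.9.11).
\end{itemize}
The FastCellCov auxiliary is built from the fixed original-sample quantities $\bhD$ and the cellPCA structure. Hence it is a fixed smooth weighting functional applied to the bootstrap empirical measure, and it is differentiable at $\hat F_n$ uniformly in a neighbourhood of $F$. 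The simulation part uses fresh draws from $F_{\bhtheta_n}$, with $\bhtheta_n\to\btheta_0$. Continuity of the simulated-noise covariance in $\btheta$ then yields the same conditional Gaussian limit in probability.

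\textbf{Step 3 (Kolmogorov--Smirnov distance).} Both laws converge to the same continuous Gaussian limit, the bootstrap one in probability. P\'olya's theorem upgrades weak convergence to uniform convergence of multivariate distribution functions over the orthants $\{\leqslant x\}$. The triangle inequality then yields the displayed supremum $\rightarrow_p0$. Nondegeneracy of the limit covariance, needed for continuity, is expected to be among D1--D7.

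\textbf{Main obstacle.} We expect the hardest step to be the uniform linearization of the bootstrap II map in Step 2. Two points need care:
\begin{itemize}
\item the derivative of the binding function must be controlled uniformly over a shrinking neighbourhood of $\btheta_0$ and of $\boeta_0$, so that $\bJ(\bhtheta_n^*)\to\bJ$ conditionally;
\item the FastCellCov weights, which involve DDC-type thresholding, must be shown to give a differentiable functional.
\end{itemize}
Our first attempt would be to argue that the weight functions are smooth (bounded $\psi$ divided by its argument, as with the hyperbolic tangent choice). This would give Fr\'echet differentiability with respect to a bounded-Lipschitz metric, which suffices for both the ordinary and the bootstrap delta method.
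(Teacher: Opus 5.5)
Your overall architecture matches the paper's: linear representations for $\bhtheta_n$ and $\bhtheta_n^*$, a common Gaussian limit for the auxiliary estimator and its bootstrap, then a P\'olya-type upgrade. The gap is in the Monte Carlo component, and Step~2 fails because of it.

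You keep $H$ fixed. In Step~1 the term $\sqrt n\bigl(\pibar(\btheta_0,\boheta_n,n)-\pi(\btheta_0,\boheta_n)\bigr)$ is then $O_p(H^{-1/2})$, which produces your factor $(1+1/H)$. In the paper, however, $\bhtheta_n^*=\argzero_{\btheta\in\bTheta}\{\hpi_n^*(\boheta_n)-\pibar(\btheta,\boheta_n,n)\}$ reuses the \emph{same} simulated datasets, so conditionally on the data the simulation noise is frozen. Subtracting the two expansions at $\btheta_0$ then leaves only $\bA_0^{-1}\sqrt n\bigl(\hpi_n^*(\boheta_n)-\hpi_n(\boheta_n)\bigr)$, where $\bA_0$ is your $\bJ$. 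Its conditional covariance lacks the $1/H$ part, so for fixed $H$ the two limits differ and the theorem would be false.

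Your alternative of fresh simulation draws does not repair this. Expanding around $\bhtheta_n$, with $\pibar^{*}$ the freshly simulated average,
\[
\begin{aligned}
\bA_0\sqrt n(\bhtheta_n^*-\bhtheta_n)\approx{}&\sqrt n\bigl(\hpi_n^*(\boheta_n)-\hpi_n(\boheta_n)\bigr)+\sqrt n\bigl(\pibar(\bhtheta_n,\boheta_n,n)-\pi(\bhtheta_n,\boheta_n)\bigr)\\
&-\sqrt n\bigl(\pibar^{*}(\bhtheta_n,\boheta_n,n)-\pi(\bhtheta_n,\boheta_n)\bigr).
\end{aligned}
\]
The middle term is the original simulation error at $\bhtheta_n$. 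It is fixed given the data and the original simulations, and of order $H^{-1/2}$, so it is a non-vanishing random shift of the conditional law. Continuity of the simulated-noise covariance in $\btheta$ does not address it.

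The missing idea is assumption (D6): $H=H_n\to\infty$ with $\sqrt{\log H/H}=o(n^{-1/2})$. Lemma~\ref{lem:ratio-uniform-eta} turns this into $\sup_{\btheta\in\bTheta}\|\pibar(\btheta,\boheta_n,n)-\pi(\btheta,\boheta_n)\|=o_p(n^{-1/2})$. It does so via Hoeffding's inequality on a finite net of $\bTheta\times\bH$, plus a bias bound using $\Pr(E_n^c)=O(n^{-\alpha})$ with $\alpha>1/2$. This controls the fluctuation, not only the bias as in your (v). The simulation term then drops out of both expansions, which are centred at $\pi(\btheta_0,\boheta_n)$; this is how the paper realizes your tuning-parameter cancellation. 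The common limit is $N(\bzero,\bK_0\bD_0\bSigma_{ab}\bD_0^T\bK_0^T)$, with no $(1+1/H)$ factor.

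Separately, the paper obtains the auxiliary CLT more cheaply than your Hadamard-differentiability route. FastCellCov freezes all weights at the original-sample $\boeta$, so it is a clipped ratio $\bar a_{F,n}\oslash\bar b_{F,n}$ of means of bounded, Lipschitz per-observation functions (D3--D5). The paper then needs only:
\begin{itemize}
\item the ordinary CLT and the Bickel--Freedman bootstrap CLT;
\item a Lipschitz Donsker class with Lemma~19.24 of van der Vaart, to handle the plug-in $\boheta_n$;
\item the delta method for $u\oslash v$ on the event $E_n$.
\end{itemize}
The regularity of the DDC-type weights that you flag as the main obstacle is not proved there but imposed: through the Lipschitz condition D4 on $a_F,b_F$, and differentiability of the population map $P_F$ in D7. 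Your caveat on nondegeneracy for the P\'olya step is fair. D1--D7 only make $\bK_0$ invertible, and the paper takes continuity of the limiting c.d.f.\ for granted.
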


Define the scalar parameters 
$\htheta_n=\bigl(\bhb^T,\vect(\bhB)^T\bigr)\ba$
and $\htheta_n^*=
\bigl((\bhb^*)^T,\vect(\bhB^*)^T\bigr)\ba$,
where $(\bhb,\bhB)$ and $(\bhb^*,\bhB^*)$ 
are obtained from $\bhtheta_n$ and
$\bhtheta_n^*$ through \eqref{eq:ridge}.
Then Theorem~\ref{the:bootstrap-theta} applies
directly to the distributions of 
$\sqrt{n}(\htheta_n-\theta_0)$ and
$\sqrt{n}(\htheta_n^*-\htheta_n)$:

\begin{corollary}
\label{prop:bootstrap-theta-linear-reg}
Under assumptions~D1-D8 in Supplementary 
Material~\ref{app:proofs}, it holds that
\begin{equation*}
  \sup_{x\in\mathbb R} \Big| \Pr\bigl(
  \sqrt{n}(\htheta_n-\theta_0)\leqslant x \bigr)
  - {\Pr}^*\bigl(\sqrt{n}(\htheta_n^*-\htheta_n)
  \leqslant x \bigr)\Big|\rightarrow_p 0.
\end{equation*}
\end{corollary}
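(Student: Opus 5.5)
The plan is to obtain the corollary from Theorem~\ref{the:bootstrap-theta} through a delta-method argument for the bootstrap. Write $\bhtheta_n=\big(\bhmu_{II}^T,\vecth_s(\bhSigma_{II})^T\big)^T$. Formulas~\eqref{eq:ridge} then define a map
\begin{equation*}
  g(\btheta)=\big(\bb^T,\vect(\bB)^T\big)\ba ,
\end{equation*}
with $\bB=(\bSigma_x+\lambda\bI_p)^{-1}\bSigma_{xy}$ and $\bb=\bmu_y-\bB^T\bmu_x$. By construction $\htheta_n=g(\bhtheta_n)$ and $\htheta_n^*=g(\bhtheta_n^*)$. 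I take assumption D8 to say that $\theta_0=g(\btheta_0)$, with $\lambda$ fixed, or with $\lambda$ estimated consistently and treated as fixed in the limit. Under D8 it suffices to show that $g$ is continuously differentiable on a neighbourhood of $\btheta_0$ and to transfer Kolmogorov--Smirnov consistency through $g$.

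\textbf{Step 1 (smoothness).} On $\bTheta$ we have $\lambda_{\min}(\bSigma_x)\geqslant c>0$, so $\bSigma_x+\lambda\bI_p$ is invertible for every $\lambda\geqslant 0$. Matrix inversion is $C^\infty$ on invertible matrices, and the remaining operations are products and linear maps of the entries of $\btheta$. Hence $g$ is $C^1$ near $\btheta_0$, with gradient $\nabla g(\btheta_0)$ computable from the differentials $d(\bA^{-1})=-\bA^{-1}(d\bA)\bA^{-1}$ and $d\bb=d\bmu_y-(d\bB)^T\bmu_x-\bB^T d\bmu_x$.

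\textbf{Step 2 (limits).} Theorem~\ref{the:bootstrap-theta}, combined with assumptions D1--D7, gives $\sqrt{n}(\bhtheta_n-\btheta_0)\rightsquigarrow N(\bzero,\bSigma_\theta)$. I expect this limit to come from the asymptotic linearity of the II estimator used in proving the theorem, and in any case the limit law is continuous. Kolmogorov--Smirnov consistency then yields that, conditionally on the data, $\sqrt{n}(\bhtheta_n^*-\bhtheta_n)$ converges in probability to the same Gaussian limit. In particular $\bhtheta_n^*-\bhtheta_n=O_{p^*}(n^{-1/2})$ and $\bhtheta_n\rightarrow_p\btheta_0$.

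\textbf{Step 3 (delta method).} A first-order expansion gives
\begin{equation*}
  \sqrt{n}(\htheta_n-\theta_0)=\nabla g(\btheta_0)^T\sqrt{n}(\bhtheta_n-\btheta_0)+o_p(1).
\end{equation*}
The bootstrap version is
\begin{equation*}
  \sqrt{n}(\htheta_n^*-\htheta_n)=\nabla g(\bar\btheta_n)^T\sqrt{n}(\bhtheta_n^*-\bhtheta_n),
\end{equation*}
with $\bar\btheta_n$ between $\bhtheta_n$ and $\bhtheta_n^*$. Continuity of $\nabla g$ gives $\nabla g(\bar\btheta_n)\to\nabla g(\btheta_0)$ in conditional probability, so both sequences converge to $N(0,\nabla g(\btheta_0)^T\bSigma_\theta\nabla g(\btheta_0))$. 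To conclude, I would assume or verify that this variance is positive, which makes the limit distribution function continuous. Pólya's theorem then upgrades weak convergence to uniform convergence of distribution functions, both for the sampling law and, in probability, for the bootstrap law. The triangle inequality gives the stated supremum bound.

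\textbf{Main obstacle.} The delicate step is the bootstrap delta method, because the expansion is around a random centre and the remainder must be negligible conditionally, in $\Pr^*$-probability, in probability. I would handle this with a subsequence argument: convert the convergences in probability into almost sure convergence along subsequences, and then apply the ordinary delta method conditionally. A second issue is the case where the ridge parameter $\lambda$ is data-driven. If so, I would rely on D8 to ensure $\lambda$ is fixed in the bootstrap, or converges at rate $o_p(n^{-1/2})$ to a deterministic value, so that its variability does not enter the first-order expansion.
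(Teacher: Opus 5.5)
Your architecture matches the paper's: a delta method on top of the asymptotic linearity behind Theorem~\ref{the:bootstrap-theta}, followed by P\'olya's theorem. Your mean-value expansion of the bootstrap statistic around the random centre $\bhtheta_n$ is also a valid alternative to the paper's device of expanding both $\htheta_n$ and $\htheta_n^*$ around a fixed point and subtracting.

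The gap is in how you dispose of the ridge parameter. Write $\varphi(\btheta,\lambda)$ for your $g$ with $\lambda$ made explicit. The target $\theta_0=(\bb^T,\vect(\bB)^T)\ba$ is the coefficient of model~\eqref{eq:mainmodel}, i.e.\ the unpenalized functional $\varphi(\btheta_0,0)$. By contrast, $\htheta_n=\varphi(\bhtheta_n,\lambda_n)$ is a ridge functional. So reading D8 as the identity $\theta_0=\varphi(\btheta_0,\lambda)$ with $\lambda$ fixed is not a harmless convention. For $\lambda>0$ one has $(\bSigma_x+\lambda\bI_p)^{-1}\bSigma_{xy}\neq\bSigma_x^{-1}\bSigma_{xy}$ in general. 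Hence $\sqrt{n}(\htheta_n-\theta_0)$ carries the deterministic drift $\sqrt{n}\{\varphi(\btheta_0,\lambda)-\varphi(\btheta_0,0)\}$, which diverges. The same drift cancels in $\sqrt{n}(\htheta_n^*-\htheta_n)$ because both estimators use the same $\lambda$. Whenever this ridge bias is nonzero, the Kolmogorov--Smirnov distance therefore tends to one. So with fixed $\lambda>0$ the corollary is false, and your argument only proves a version re-centred at the ridge target $\varphi(\btheta_0,\lambda)$. Your \emph{main obstacle} paragraph is concerned with the variability of a data-driven $\lambda$, but the real issue is this bias: convergence of $\lambda_n$ to a deterministic limit at rate $o(n^{-1/2})$ helps only if that limit is $0$.

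The paper's D8 is precisely $\lambda_n=o(n^{-1/2})$, and it is the one ingredient the corollary adds to Theorem~\ref{the:bootstrap-theta}. To repair your proof, first note that your Step~1 argument gives $C^1$ smoothness of $\varphi$ jointly in $(\btheta,\lambda)$ near $(\btheta_0,0)$, since $\lambda_{\min}(\bSigma_x+\lambda\bI_p)\geq c+\lambda$. Then split
\[
\sqrt{n}(\htheta_n-\theta_0)=\sqrt{n}\{\varphi(\bhtheta_n,\lambda_n)-\varphi(\btheta_0,\lambda_n)\}+\sqrt{n}\{\varphi(\btheta_0,\lambda_n)-\varphi(\btheta_0,0)\}.
\]
The first term is handled by your mean-value argument, with $\nabla_{\btheta}\varphi(\tilde{\btheta}_n,\lambda_n)\rightarrow_p\nabla_{\btheta}\varphi(\btheta_0,0)$ by joint continuity. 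The second term is $O(\sqrt{n}\,\lambda_n)=o(1)$ by D8. The bootstrap statistic needs no change: the same $\lambda_n$ enters $\htheta_n^*$ and $\htheta_n$, and your mean-value step only involves $\nabla_{\btheta}\varphi(\bar{\btheta}_n,\lambda_n)$, which again tends to $\nabla_{\btheta}\varphi(\btheta_0,0)$. This is exactly what the paper's joint Taylor expansion at $(\btheta_0,0)$ does. There the term $\sqrt{n}\,\lambda_n\dot\varphi_{\lambda,0}$ is killed by D8 in the expansion of $\sqrt{n}(\htheta_n-\theta_0)$ and cancels identically in $\sqrt{n}(\htheta_n^*-\htheta_n)$.
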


The next corollary of 
Theorem~\ref{the:bootstrap-theta} guarantees
that the proposed bootstrap confidence interval 
for $\theta_0$ is asymptotically exact at level
$(1-\alpha)$.

\begin{corollary}
\label{cor:efron-percentile-exact}
Under assumptions~D1--D8 in Supplementary 
Material~\ref{app:proofs} it holds for all 
\mbox{$\alpha\in(0,1)$} that 
$\hC_n:= \bigl[\hc_n^*(\alpha/2),\ 
\hc_n^*(1-\alpha/2)\bigr]$ is asymptotically 
exact at level $1-\alpha$, where 
$\hc_n^*(\alpha/2)$ and 
$\hc_n^*(1-\alpha/2)$ are the $\alpha/2$ 
and $1-\alpha/2$ empirical quantiles
of the distribution of $\htheta_n^{\,*}$.
\end{corollary}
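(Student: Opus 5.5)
We derive Corollary~\ref{cor:efron-percentile-exact} from Corollary~\ref{prop:bootstrap-theta-linear-reg} by the standard percentile-interval argument. Let $J_n(x):=\Pr\bigl(\sqrt{n}(\htheta_n-\theta_0)\leqslant x\bigr)$ and $\hat J_n^*(x):={\Pr}^*\bigl(\sqrt{n}(\htheta_n^*-\htheta_n)\leqslant x\bigr)$. Corollary~\ref{prop:bootstrap-theta-linear-reg} gives $\sup_x|J_n(x)-\hat J_n^*(x)|\rightarrow_p 0$.

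The first step is to identify a continuous limit $J$. Under D1--D8, and as in the proof of Theorem~\ref{the:bootstrap-theta}, $\sqrt{n}(\bhtheta_n-\btheta_0)$ is asymptotically normal with covariance $\bOmega$. The map $\btheta\mapsto(\bb^T,\vect(\bB)^T)\ba$ is defined through \eqref{eq:ridge} and is differentiable on $\bTheta$, because $\bSigma_x+\lambda\bI_p$ is invertible when $\lambda_{\min}(\bSigma)\geqslant c>0$. The delta method then gives $\sqrt{n}(\htheta_n-\theta_0)\to N(0,\tau^2)$. We assume (via D8) that $\tau^2>0$, so the limit cdf $J=\Phi(\cdot/\tau)$ is continuous and strictly increasing. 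By P\'olya's theorem $\sup_x|J_n-J|\to0$, and hence $\sup_x|\hat J_n^*-J|\rightarrow_p 0$.

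The second step rewrites the interval. Since $\hc_n^*(\beta)=\htheta_n+n^{-1/2}\hat q^*_n(\beta)$, where $\hat q^*_n(\beta)$ is the $\beta$-quantile of $\hat J^*_n$, we have
\begin{equation*}
\theta_0\in\hC_n \iff -\hat q_n^*(1-\alpha/2)\leqslant \sqrt{n}(\htheta_n-\theta_0)\leqslant -\hat q_n^*(\alpha/2).
\end{equation*}
Because $J$ is continuous and strictly increasing at its quantiles, uniform convergence in probability of $\hat J_n^*$ to $J$ implies $\hat q_n^*(\beta)\rightarrow_p J^{-1}(\beta)$. This follows from the usual sandwich: for every $\delta>0$, with probability tending to one, $\hat J_n^*(J^{-1}(\beta)-\delta)<\beta<\hat J_n^*(J^{-1}(\beta)+\delta)$.

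By Slutsky's lemma, $\sqrt{n}(\htheta_n-\theta_0)+\hat q_n^*(\beta)$ converges in distribution to $N(0,\tau^2)+J^{-1}(\beta)$. The coverage probability therefore tends to
\begin{equation*}
J\bigl(-J^{-1}(\alpha/2)\bigr)-J\bigl(-J^{-1}(1-\alpha/2)\bigr).
\end{equation*}
By symmetry of the normal limit, $-J^{-1}(\alpha/2)=J^{-1}(1-\alpha/2)$, so this limit equals $(1-\alpha/2)-\alpha/2=1-\alpha$. The boundary events carry no mass in the limit because $J$ is continuous, so it does not matter whether the inequalities are strict.

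Symmetry is essential here: Efron's percentile interval is exact only because the limit law is symmetric. The main obstacle is therefore the first step, which must establish that the limit of $\sqrt{n}(\htheta_n-\theta_0)$ is a nondegenerate centered normal, not merely some continuous distribution. To do this we will extract the asymptotic linearity of the II estimator from the proof of Theorem~\ref{the:bootstrap-theta}, which rests on a CLT for the auxiliary FastCellCov estimator and invertibility of the binding function's Jacobian. We will then invoke D8 to rule out $\tau=0$. If only a general continuous limit were available, we would instead have to switch to the basic bootstrap interval.

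Finally, the quantiles are the empirical quantiles over $B$ bootstrap draws. We treat $B=\infty$, or alternatively let $B\to\infty$ and use the Glivenko--Cantelli theorem conditionally on the data.
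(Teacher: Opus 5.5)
Your proposal is correct and follows the paper's proof essentially step for step. Both arguments use uniform convergence in probability of the bootstrap cdf to the $N(0,\sigma_\theta^2)$ cdf, a sandwich argument giving $\hq_n^*(\gamma)\rightarrow_p q_\gamma$, the rewriting $\hc_n^*(\gamma)=\htheta_n+n^{-1/2}\hq_n^*(\gamma)$ of the coverage event, Slutsky, and symmetry of the normal limit.

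One correction concerns D8. D8 states $\lambda_n=o(n^{-1/2})$, and its role is to make the ridge term $\sqrt n\,\lambda_n\dot\varphi_{\lambda,0}$ vanish in the delta method, since $\theta_0$ is defined at $\lambda=0$. It does not ensure $\tau^2>0$. Nondegeneracy of the limit is not implied by D1--D8; it is a tacit extra assumption. The paper's proof relies on it as well, when it invokes continuity of the cdf of $Z_\theta+q_{\alpha/2}$.
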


\subsection{Robustness properties}
\label{sec:robustness_inf}

We now study the influence of data contamination
on the cellBoot confidence interval $\hC_n$ for
$\theta_0$. We will use the contamination model 
of Section~\ref{sec:robustness_cellMR}, 
with $H_0:=F_{\btheta_0}$.
The center of $\hC_n$ is $m_n :=
(\hc^*(\alpha/2)+\hc^*(1-\alpha/2))/2$, 
and its length is $\ell_{n,\alpha}=
\hc^*(1-\alpha/2)-\hc^*(\alpha/2)$.
Since the bootstrap estimator admits the 
decomposition 
$\htheta_n^{\,*}=\htheta_n+n^{-1/2}
\sqrt n(\htheta_n^{\,*}-\htheta_n)$, a
bootstrap $\gamma$-quantile can be written
$\hc_n^*(\gamma) = \htheta_n+\frac{1}{\sqrt n}\,
\hq_n^*(\gamma)$, where $\hq_n^*(\gamma)$ 
denotes the $\gamma$-quantile of the conditional 
distribution of
$\sqrt n(\htheta_n^{\,*}-\htheta_n)$ given the 
observed sample. The center and the length of 
the interval can thus be written as
$m_n = \htheta_n+\frac{1}{2\sqrt n}\bigl\{
\hq^*(\alpha/2)+\hq^*(1-\alpha/2)\bigr\}$,
and $\ell_{n,\alpha} = \frac{1}{\sqrt n}
\bigl\{\hq_n^*(1-\alpha/2)-\hq_n^*(\alpha/2)
\bigr\}$.

Under $H_0$, Theorem~\ref{the:bootstrap-theta} 
implies that the center $m_n$ and the scaled 
length $\sqrt n\,\ell_{n,\alpha}$ converge in 
probability to $\theta_0$ and 
$q(1-\alpha/2)-q(\alpha/2)$, where
$q(\alpha/2)$ and $q(1-\alpha/2)$ are the 
$\alpha/2$ and $(1-\alpha/2)$ quantiles of the 
asymptotic distribution of
$\sqrt n(\htheta_n-\theta_0)$.

At a generic distribution $H$ we define the center 
functional as $m(H)=T(H)$ where $T(H)$ is the 
functional version of the estimator $\htheta_n$
under $H$. We define the scaled length functional as
$\tl_{\alpha}(H)=Q_H(1-\alpha/2)-Q_H(\alpha/2)$,
where $Q_H(\gamma)$ denotes the $\gamma$-quantile 
of the asymptotic distribution of 
$\sqrt n\{T(\widehat H_n)-T(H)\}$, where 
$\widehat H_n$ is the empirical distribution 
obtained from an i.i.d.\ sample of size $n$ drawn
from $H$. 

\begin{proposition}
\label{lem:IF-length-final-weakA2}
Under assumptions~E1--E2 in Supplementary 
Material~\ref{sec:derivationIF_cellboot} it holds
that
\begin{equation*}
\IFu_{\case}(\bc,m,H_0)=\bmed_c^TZ_F(\bc,\boeta_0),
\end{equation*}
\begin{equation*}
\IFu_{\cell}(\bc,m,H_0)= d\, \bmed_c^T\left(
\sum_{j=1}^d \E_{H(j,\bc)}[Z_F(X,\boeta_0)]
\right),
\end{equation*}
\begin{align*}
  \IFu_{\case}(\bc,\tl_{\alpha},H_0)=
  2z_{1-\alpha/2}\Big[m_{s,1}&
  +\bmed_{s,2}^T\,\IFu_{\case}(\bc,\boeta,H_0)\\
& +\bmed_{s,3}^T Z_F(\bc,\boeta_0)
  +\bmed_{s,4}^T\!\bigl(
  Z_F(\bc,\boeta_0)\otimes Z_F(\bc,\boeta_0)
  \bigr)\Big],
\end{align*}
\begin{align*}
  \IFu_{\cell}(\bc,\tl_{\alpha},H_0)=
  2z_{1-\alpha/2}\Big[dm_{s,1}
  +&\bmed_{s,2}^T\,\IFu_{\cell}(\bc,\boeta,H_0)
  +d\, \bmed_{s,3}^T\left( \sum_{j=1}^{d}\E_{H(j,\bc)}
  \!\big[Z_F(X,\boeta_0)\big]\right)\\
  &+d\, \bmed_{s,4}^T\left(\sum_{j=1}^{d}
  \E_{H(j,\bc)}\!\bigl[Z_F(X,\boeta_0)\otimes 
  Z_F(X,\boeta_0)\bigr]\right)\Big].
\end{align*}
Here $\otimes$ denotes the Kronecker product,
$\boeta_0:=\boeta(H_0)$ with $\boeta(\cdot)$ the 
functional corresponding to the tuning parameter 
estimator $\boheta_n$ in the auxiliary estimator, 
$H(j,\bc)$ is the distribution of $X \sim H_0$ 
but with its $j$-th component fixed at the 
constant $c_j$\,, and $z_\gamma$ denotes the 
$\gamma$-quantile of the standard normal
distribution. The quantities $\bmed_c$, $m_{s,1}$, 
$\bmed_{s,2}$, $\bmed_{s,3}$, $\bmed_{s,4}$ and 
the function $Z_F$ are defined in Supplementary 
Material~\ref{sec:derivationIF_cellboot} where 
these results are proved, and 
$\IFu_{\case}(\bc,\boeta,H_0)$ 
and $\IFu_{\cell}(\bc,\boeta,H_0)$ 
are the casewise and cellwise IFs of $\boeta$.
\end{proposition}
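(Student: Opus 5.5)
The plan is to write both functionals as smooth transformations of an influence-type representation of the II estimator, and then apply the chain rule together with the definitions \eqref{eq:caseIF}--\eqref{eq:cellIF}.

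\textbf{Linearization.} First I will establish (presumably this is E1--E2) that the auxiliary FastCellCov functional, evaluated at tuning parameter $\boeta$, admits the linear expansion
\[
\pi(H,\boeta)=\pi(H_0,\boeta_0)+\int Z_F(\bz,\boeta_0)\,d(H-H_0)(\bz)+o(\|H-H_0\|).
\]
The II estimator solves $\pi(H,\boeta(H))=\pibar(\btheta,\boeta(H))$. Implicit differentiation of this equation then gives the influence function of $\btheta$ as $\bJ^{-1}$ applied to the IF of the auxiliary estimator, where $\bJ=\partial\pibar/\partial\btheta$.

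\textbf{Main difficulty.} The hardest step is the $\boeta$ term. The tuning parameters are estimated from the data, and their perturbation enters both $\pi$ and $\pibar$ through the same $\boeta$. I expect these two contributions to cancel to first order in the center, because the II map is built with the same $\boheta_n$ on both sides. That cancellation explains why $\IFu(\bc,\boeta,H_0)$ does not appear in $\IFu(\bc,m,H_0)$. It does not cancel for the variance, which is why $\boeta$ appears in the length.

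\textbf{The center $m$.} Since $m(H)=T(H)$, the delta method through \eqref{eq:ridge} applies. $\theta$ is a smooth function of $(\bmu,\bSigma)$ via \eqref{eq:ridge}, with gradient $\bg$. Collecting $\bg^T\bJ^{-1}$ into $\bmed_c$ gives the casewise formula
\[
\IFu_{\case}(\bc,m,H_0)=\bmed_c^TZ_F(\bc,\boeta_0),
\]
using that the casewise perturbation is $\eps(\Delta_{\bc}-H_0)$ and that $Z_F$ has mean zero.

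For the cellwise IF I will use the following expansion. Under $G^I_\eps$, to first order the law of $Z_\eps$ is
\[
(1-\eps d)H_0+\eps\sum_{j}H(j,\bc)+O(\eps^2),
\]
since at most one cell is contaminated at order $\eps$. Differentiating gives $\sum_j\E_{H(j,\bc)}Z_F$. The stated factor $d$ then comes from the paper's normalization of $\eps^{\cell}_j$; I will match that convention when I write it out.

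\textbf{The length $\tl_\alpha$.} Asymptotic normality gives
\[
Q_H(\gamma)=z_\gamma\,\sigma(H),\qquad\text{so}\qquad \tl_\alpha(H)=2z_{1-\alpha/2}\,\sigma(H),
\]
with
\[
\sigma^2(H)=\bmed_c(H)^T\,\E_H[Z_FZ_F^T]\,\bmed_c(H).
\]
I will differentiate $\sigma=\sqrt{\sigma^2}$ and write the quadratic form as $(\bmed_c\otimes\bmed_c)^T\E_H[Z_F\otimes Z_F]$. The resulting terms are:
\begin{itemize}
\item the $-\E_{H_0}$ part of the contamination, which is constant and gives $m_{s,1}$ (times $d$ in the cellwise case);
\item the derivative through $\boeta$, which gives $\bmed_{s,2}^T\IFu(\bc,\boeta,H_0)$;
\item the derivative of $\bmed_c(H)$ and of $\bJ$ with respect to $\btheta$, which is linear in $\IFu(\bc,\btheta)$ and hence gives $\bmed_{s,3}^TZ_F$;
\item the point-mass contribution to $\E[Z_F\otimes Z_F]$, which gives $\bmed_{s,4}^T(Z_F\otimes Z_F)$.
\end{itemize}
The cellwise version follows by the same substitution $\Delta_{\bc}\to\sum_jH(j,\bc)$.
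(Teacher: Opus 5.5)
Your skeleton matches the paper's: you reduce $\IFu(\bc,\tl_\alpha,H_0)$ to $2z_{1-\alpha/2}\,\IFu(\bc,\sigma_\theta,H_0)$, obtain $\IFu(\bc,m,H_0)=\bJ_0^T\,\partial_\eps\btheta(H_\eps)|_{\eps=0}$ by implicitly differentiating $P_F(H,\boeta(H))=\pi(\btheta(H),\boeta(H))$, and then expand $H_\eps$ as a mixture. Several steps do not hold as written, however.

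First, $Q_H(\gamma)=z_\gamma\sigma_\theta(H)$ does not follow from asymptotic normality at $H_0$. It is exactly what assumption E2 supplies, only up to $o(\eps)$ along $H_\eps$ and together with $\sigma_\theta(H_\eps)=O(1)$. No linearization hypothesis on the auxiliary functional is needed: $P_F(H,\boeta)=\mu_{a,F}(H,\boeta)\oslash\mu_{b,F}(H,\boeta)$ is a ratio of means, so the quotient rule differentiates it directly.

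Second, the $\boeta$-terms cancel in the center because $H_0=F_{\btheta_0}$ gives $\pi(\btheta_0,\boeta)=P_F(H_0,\boeta)$ for every $\boeta$, so the two $\boeta$-derivatives coincide. Using the same $\boeta(H)$ on both sides is not enough by itself.

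Third, and most important, $Z_F=(a_F^T,b_F^T)^T$ stacks the numerator and denominator summands of the FastCellCov ratio. It is not the auxiliary's influence function (that is $\bD_0Z_F(\bc,\boeta_0)$), and it is not centered: its $b_F$-block has mean bounded away from zero. The constant term vanishes for a different reason, namely $\bD_0\,\mu_{Z_F}(H_0,\boeta_0)=\bzero$, because a componentwise ratio is homogeneous of degree zero. Hence $\bmed_c^T=\bJ_0^T\bK_0\bD_0$ must contain $\bD_0$. The same identity, in the form $\bD(H)\mu_{Z_F}(H,\boeta(H))=\bzero$, does make your second-moment expression for $\sigma_\theta^2$ agree with the paper's covariance form, but only once $\bD(H)$ sits inside $\bmed_c(H)$. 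Your list of derivative terms then misses the variation of $\bD(H)$ through $\mu_{Z_F}(H_\eps,\boeta_0)$. This contributes a term linear in $Z_F(\bc,\boeta_0)-\mu_{Z_F}(H_0,\boeta_0)$ to both $\bmed_{s,3}$ and $m_{s,1}$, and the paper collects it in $\bmed_\mu$.

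Your cellwise expansion $(1-d\eps)H_0+\eps\sum_jH(j,\bc)+O(\eps^2)$ is the correct first-order law of $Z_\eps$ under independent Bernoulli cells. It gives $\partial_\eps\E_{H_\eps}[g]|_{\eps=0}=-d\,\E_{H_0}[g]+\sum_j\E_{H(j,\bc)}[g]$. The paper instead obtains $-d\,\E_{H_0}[g]+d\sum_j\E_{H(j,\bc)}[g]$ by inserting $\delta_1(\eps)=d(1-\eps)^{d-1}\eps$ into $\sum_k\delta_k(\eps)\sum_{I\in\ell_k}\E_{H(I,\bc)}[g]$. That counts every single-cell pattern $d$ times: for $g\equiv1$ the resulting derivative is $d^2-d\neq0$. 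This, not a normalization of $\eps^{\cell}_j$, is the source of the displayed factor $d$.

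So ``matching the convention'' is not an available step. For $m$, where the $\E_{H_0}$-term drops out, it would amount to rescaling $\eps$ by $d$. For $\tl_\alpha$ no rescaling works: the statement weights the $\E_{H_0}$-part (through $dm_{s,1}$) and the $\E_{H(j,\bc)}$-parts equally by $d$, whereas any genuine contamination path weights them in ratio $d:1$. Carried out consistently, your argument proves the cellwise formulas with $\sum_j$ in place of $d\sum_j$ in the $\bmed_c$-, $\bmed_{s,3}$- and $\bmed_{s,4}$-terms, keeping $dm_{s,1}$. Reproducing the displayed versions verbatim would require adopting the paper's weights $\delta_k$ as they stand.
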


\begin{figure}[!ht]
\centering
\includegraphics[width=0.45\textwidth]
    {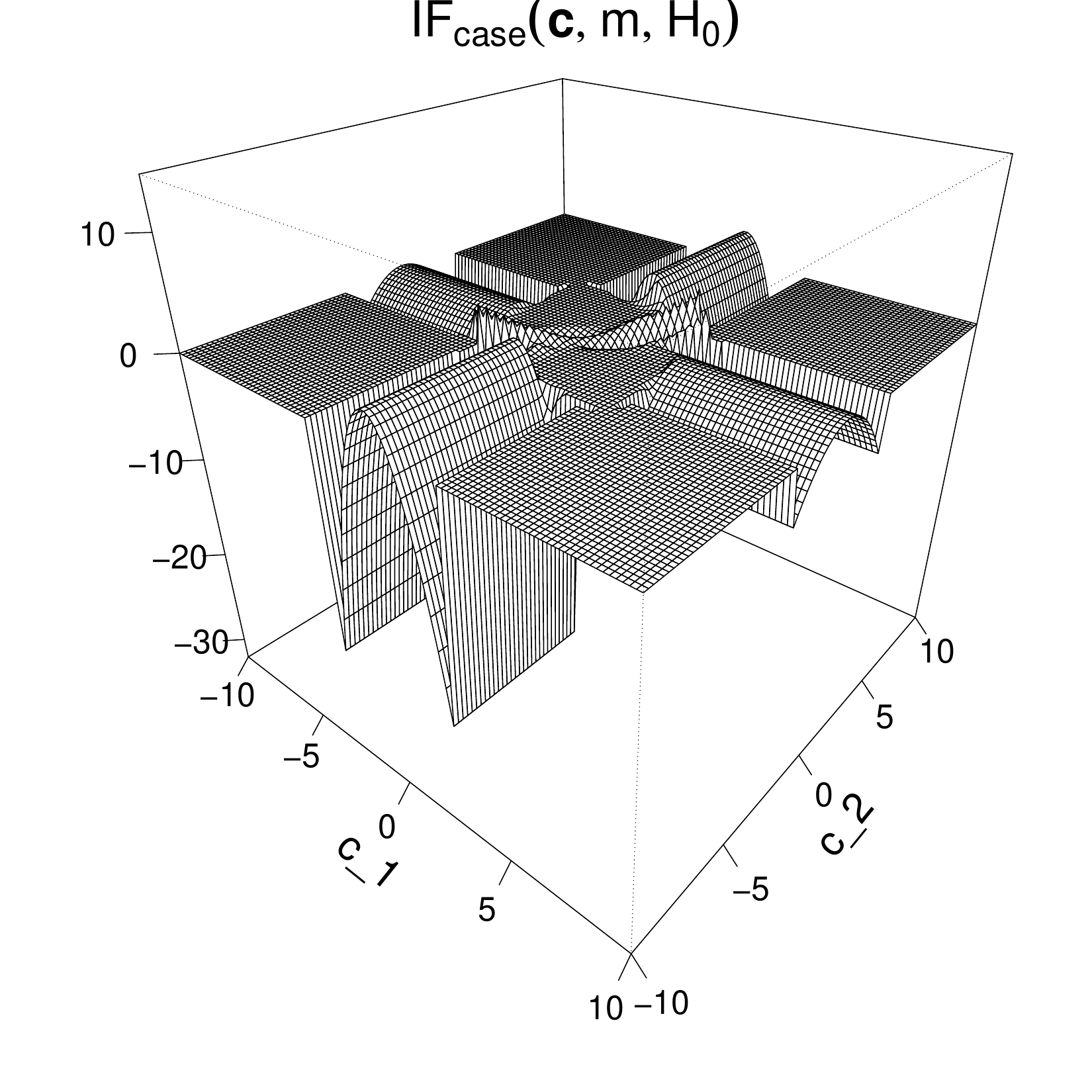}
\includegraphics[width=0.45\textwidth]
    {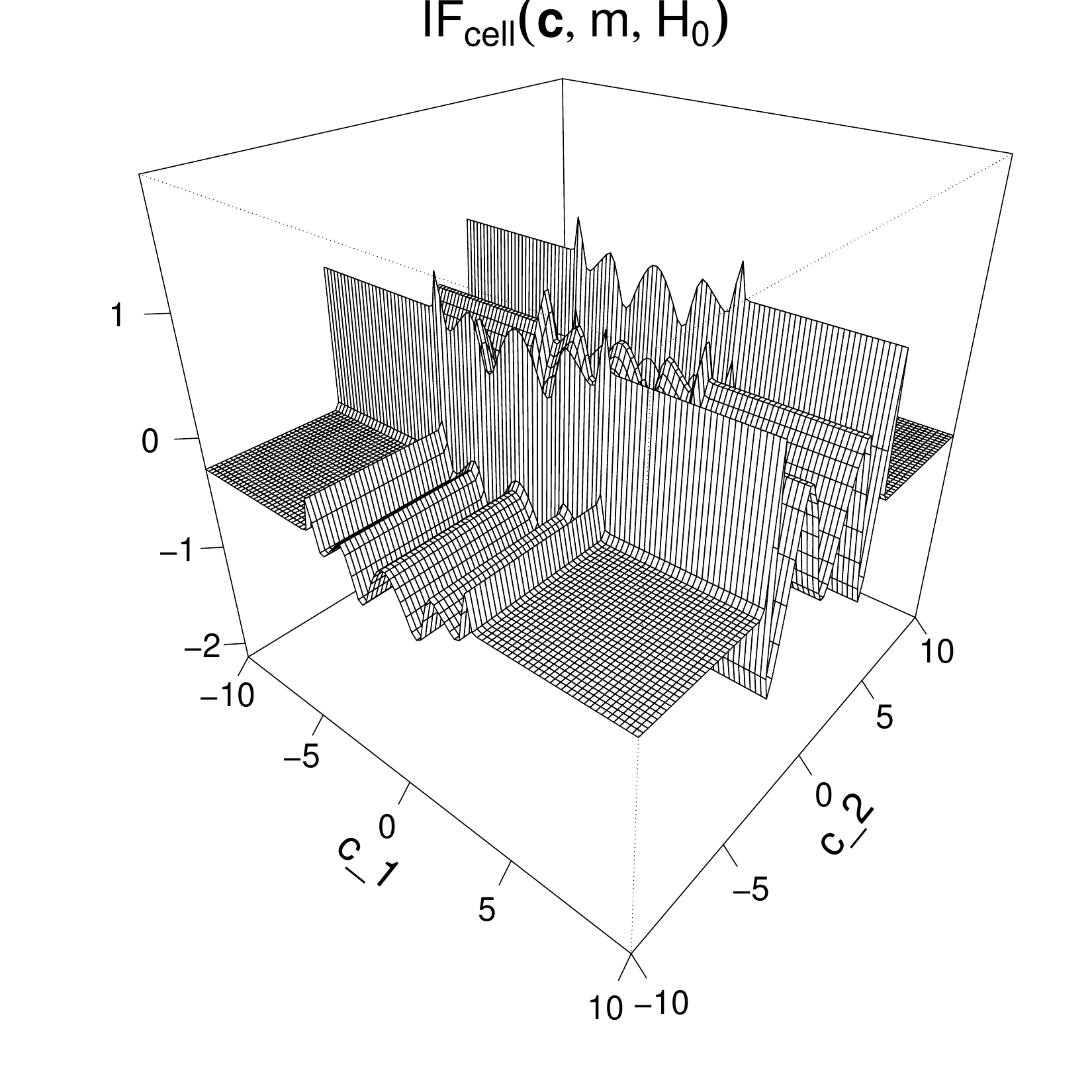}\\
\includegraphics[width=0.45\textwidth]
    {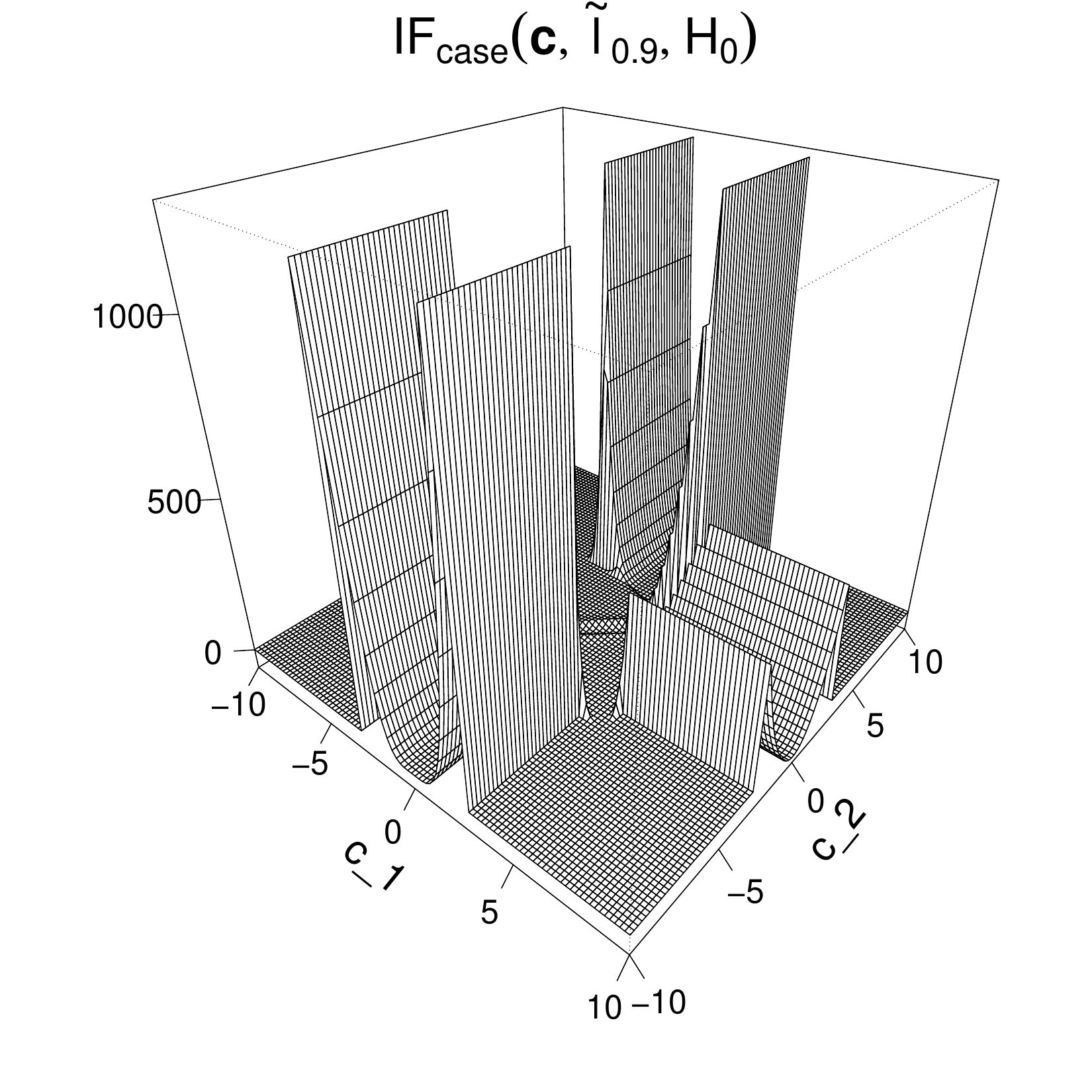}
\includegraphics[width=0.45\textwidth]
    {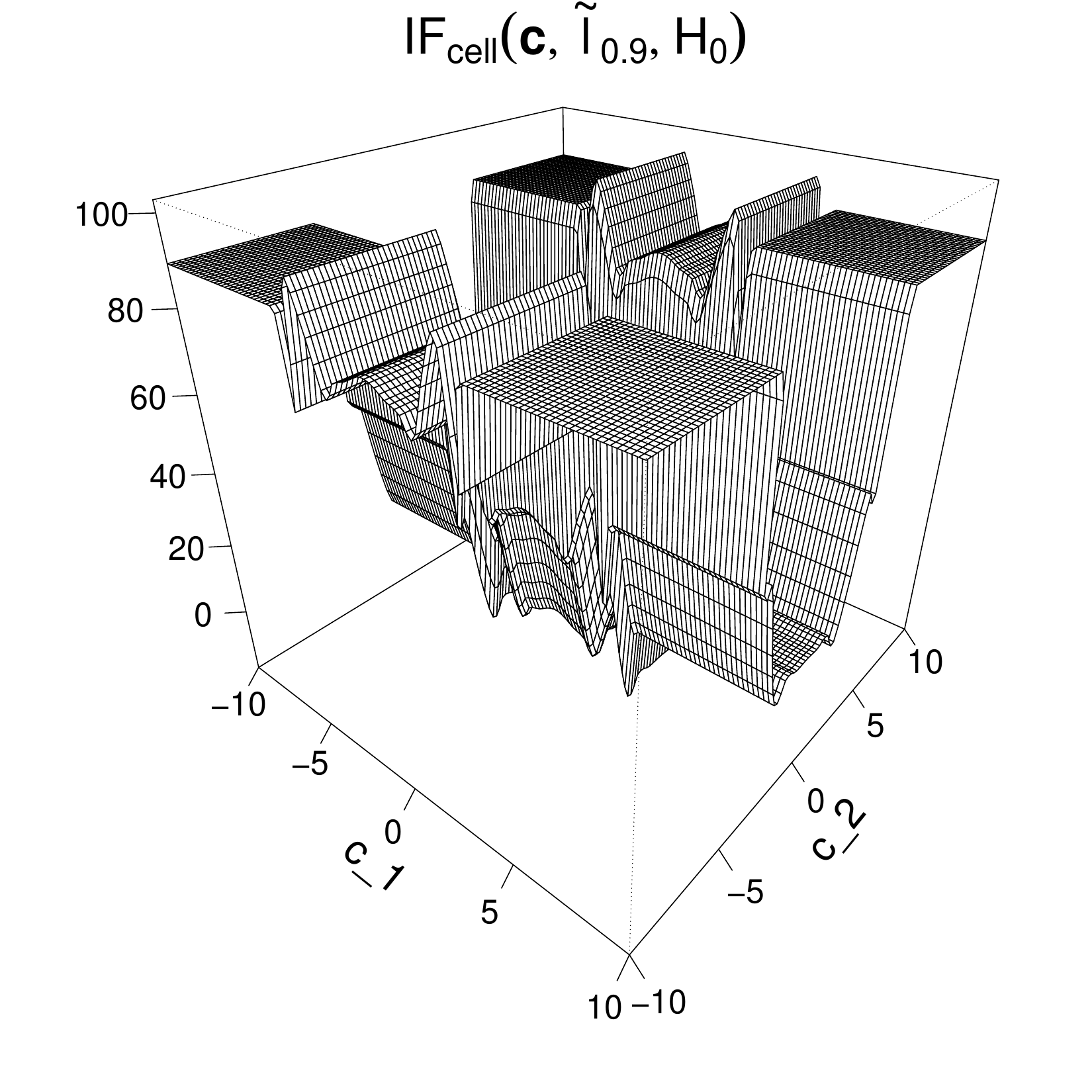}\\
\vspace{-4mm}
\caption{The casewise (left) and cellwise (right)
influence function of $m$ and $\tl_{\alpha}$ 
for the proposed bootstrap percentile confidence 
interval of the slope with $1-\alpha=0.9$.}
\label{fig_IFs}
\end{figure}

To gain intuition for these results, we 
consider the bivariate model introduced in 
Section~\ref{sec:robustness_cellMR}.
Figure~\ref{fig_IFs} shows the casewise and 
cellwise IF of the center $m$ and the scaled
length $\tl_{\alpha}$ of the $1-\alpha=0.9$
confidence interval of the slope. Their 
shape is involved due to the complex 
construction of the cellBoot intervals. 
The main feature is that all four IFs are 
bounded, indicating that $m$ and 
$\tl_{\alpha}$ are robust against casewise 
and cellwise contamination.

\section{Simulation study}
\label{sec:sim}
We assess the performance of cellMR and cellBoot 
through a Monte Carlo experiment, where the 
clean data are generated according to the linear 
model~\eqref{eq:mainmodel}. The predictors follow 
a multivariate normal distribution with  
$\bmu=\bzero$ and covariance $\bSigma_x$ with 
entries $\sigma_{x,j\ell}=(-0.4)^{|j-\ell|}$. 
The coefficient matrix $\bB$ has entries drawn 
from $N(0,0.04)$. To control the 
signal-to-noise ratio (SNR), the error 
covariance is $\bSigma_\eps=\bI_q(\Tr(
\bB^T\bSigma_x\bB)/q)/\text{SNR}$. 
We fix $\text{SNR}=10$. 

Three contamination scenarios are considered, 
with outlier fraction $\eps = 0.2$\,. 
In the cellwise outlier scenario, we replace 
a fraction $\eps$ of random entries $x_{ij}$ 
and $y_{ij}$ by $\gamma \sigma_{jj}$, where 
$\sigma_{j\ell}$ are the entries of the 
$(p+q) \times (p+q)$ covariance matrix 
$\bSigma$. The contamination position $\gamma$ 
varies from $1$ to $7$. In the casewise 
outlier scenario, a fraction $\eps$ of the 
$(\bx_i^{T},\by_i^{T})^{T}$
are generated from $N(0.2\gamma d\sqrt{d}\,
\be/\sqrt{\be^T\bSigma^{-1}\be}\,, \bSigma)$ 
where $d = p+q$ and $\be$ is the eigenvector 
of $\bSigma$ 
with smallest eigenvalue. Finally, in the 
mixed contamination scenario, the data 
contains a fraction $\eps/2$ of cellwise 
outliers and a fraction $\eps/2$ of casewise 
outliers. We label the clean data by
$\gamma=0$.

\subsection{Predictive performance of cellMR}
\label{sec:simreg}
We measure predictive performance on a clean 
test set $(\bx_{test,i},\by_{test,i})$ of size 
$\tn=1000$ by the mean squared error
$\mbox{MSE}=\frac{1}{\tn}
 \sum_{i=1}^{\tn} \|\by_{test,i} -
 \bhb-\bhB^T\bx_{test,i}\|^2$.
In each simulation setting we generate 200 
datasets and report the average MSE. 
Each dataset is composed of $n=100$ 
observations with $p=q=\{10,25,50\}$.
The competing approaches include the classical 
RIDGE regression, the CRM estimator of 
\citet{filzmoser2020cellwise} implemented in the 
\texttt{R} package \texttt{crmReg}, the REGCELL 
method of \citet{CRlasso} implemented in the 
\texttt{R} package \texttt{regcell}, the 
multivariate S-estimator (SEST) of 
\citet{van2005multivariate} implemented in the 
\texttt{R} package \texttt{FRB}, the robust 
penalized ridge version of the adaptive elastic 
net proposed by \citet{cohenfreue2019robust}, 
denoted as PENSE and implemented in the 
\texttt{R} package \texttt{pense}, and the sparse
robust regression method of 
\citet{bottmer2022sparse} referred to as SHOOT. 
Methods that were not explicitly designed for 
multivariate regression are carried out by 
fitting $q$ separate regression models, one for 
each response variable. The cellMR tuning 
parameters $\rk$ and $\lambda$ are selected as 
described in Section~\ref{sec:tuning} using 
$10$-fold cross-validation.

\begin{figure}[!ht]
\centering
\begin{tabular}{M{0.0005\textwidth}M{0.29\textwidth}M{0.29\textwidth}M{0.32\textwidth}}
   &\large \textbf{Cellwise}  & \large \textbf{Casewise} &\large{\textbf{Casewise \& Cellwise}} \\
[-4mm]

 \rotatebox{90}{\textbf{\footnotesize{$p=q=10$}}}&\includegraphics[width=.31\textwidth]{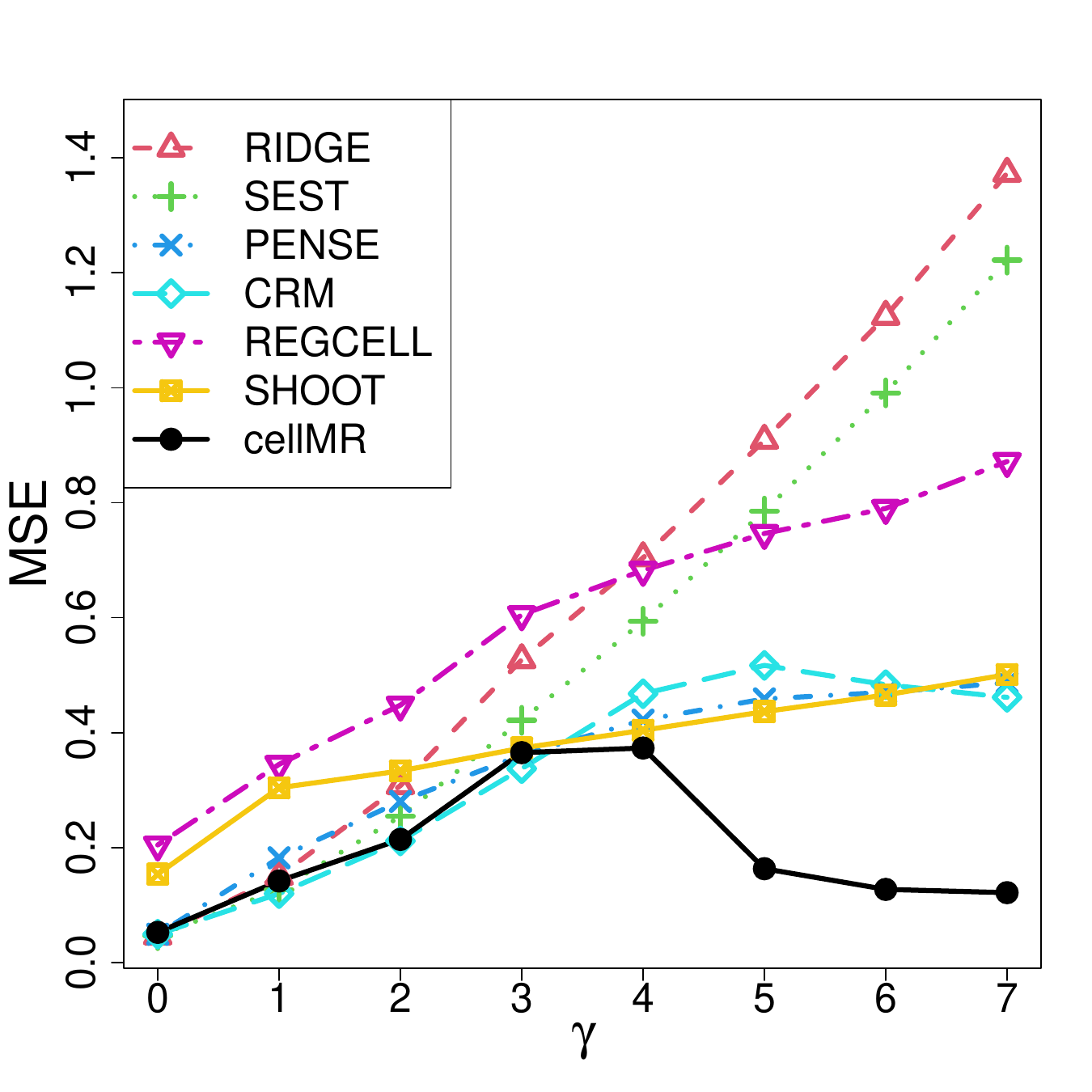} &\includegraphics[width=.31\textwidth]{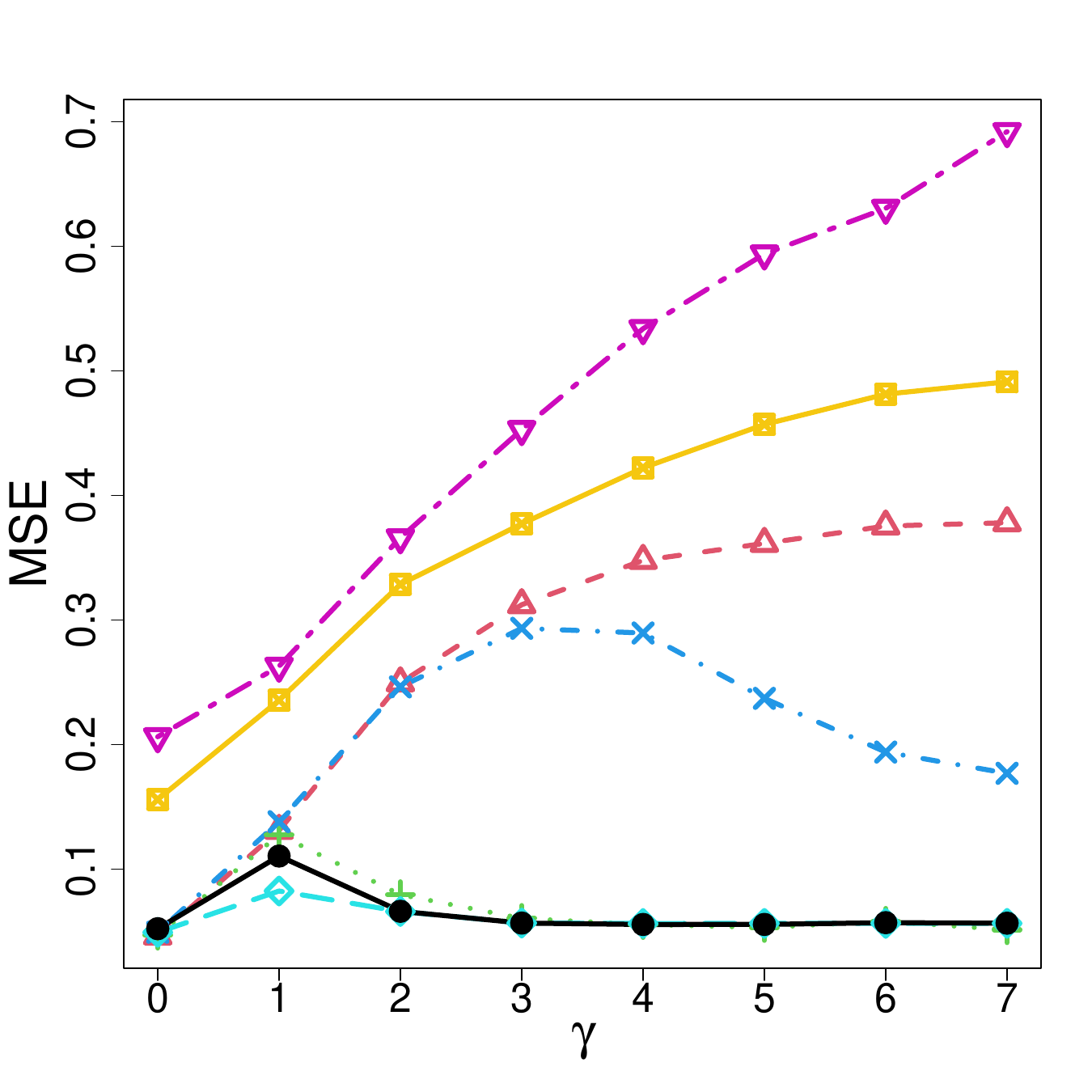} &\includegraphics[width=.31\textwidth]{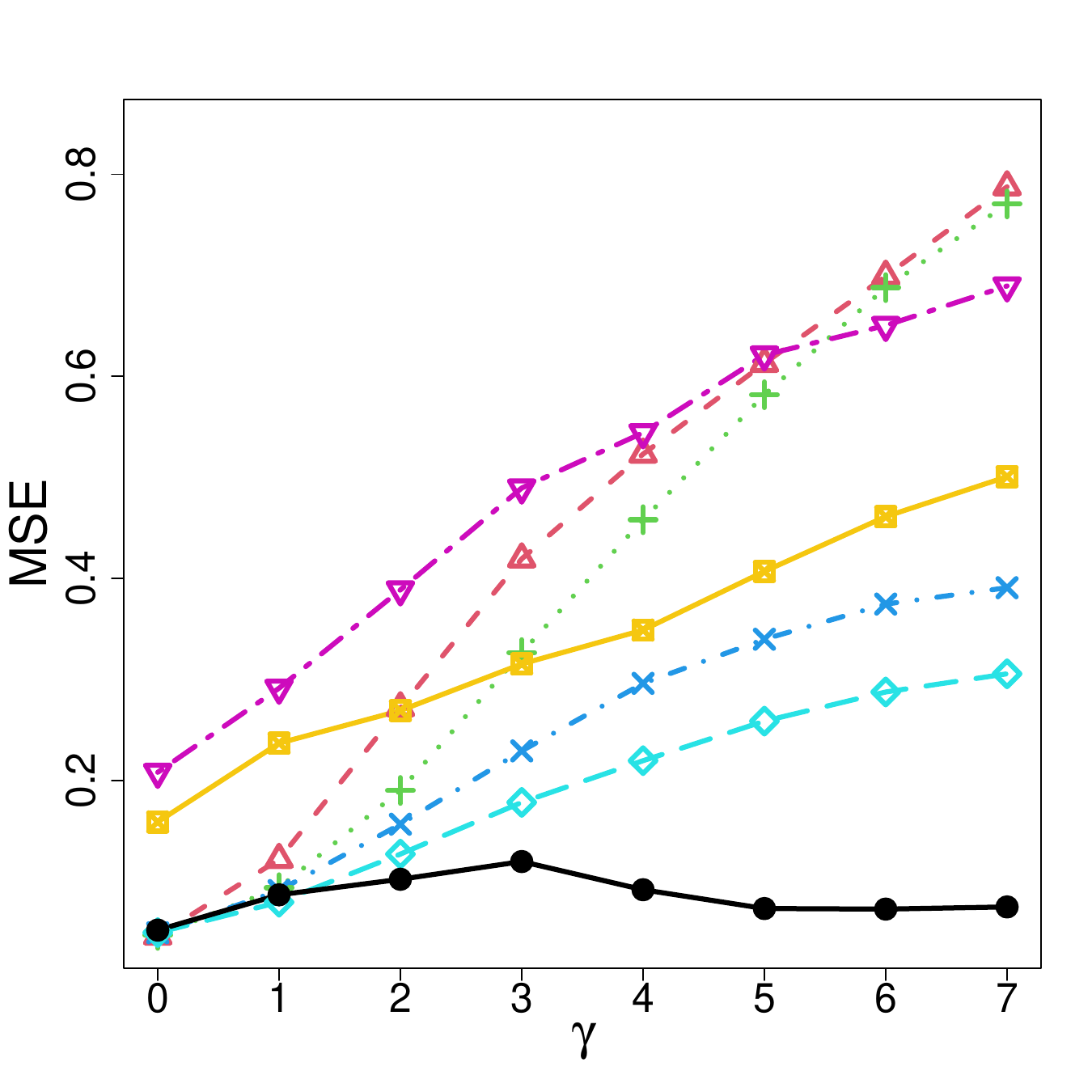} \\

 \rotatebox{90}{\textbf{\footnotesize{$p=q=25$}}}&\includegraphics[width=.31\textwidth]{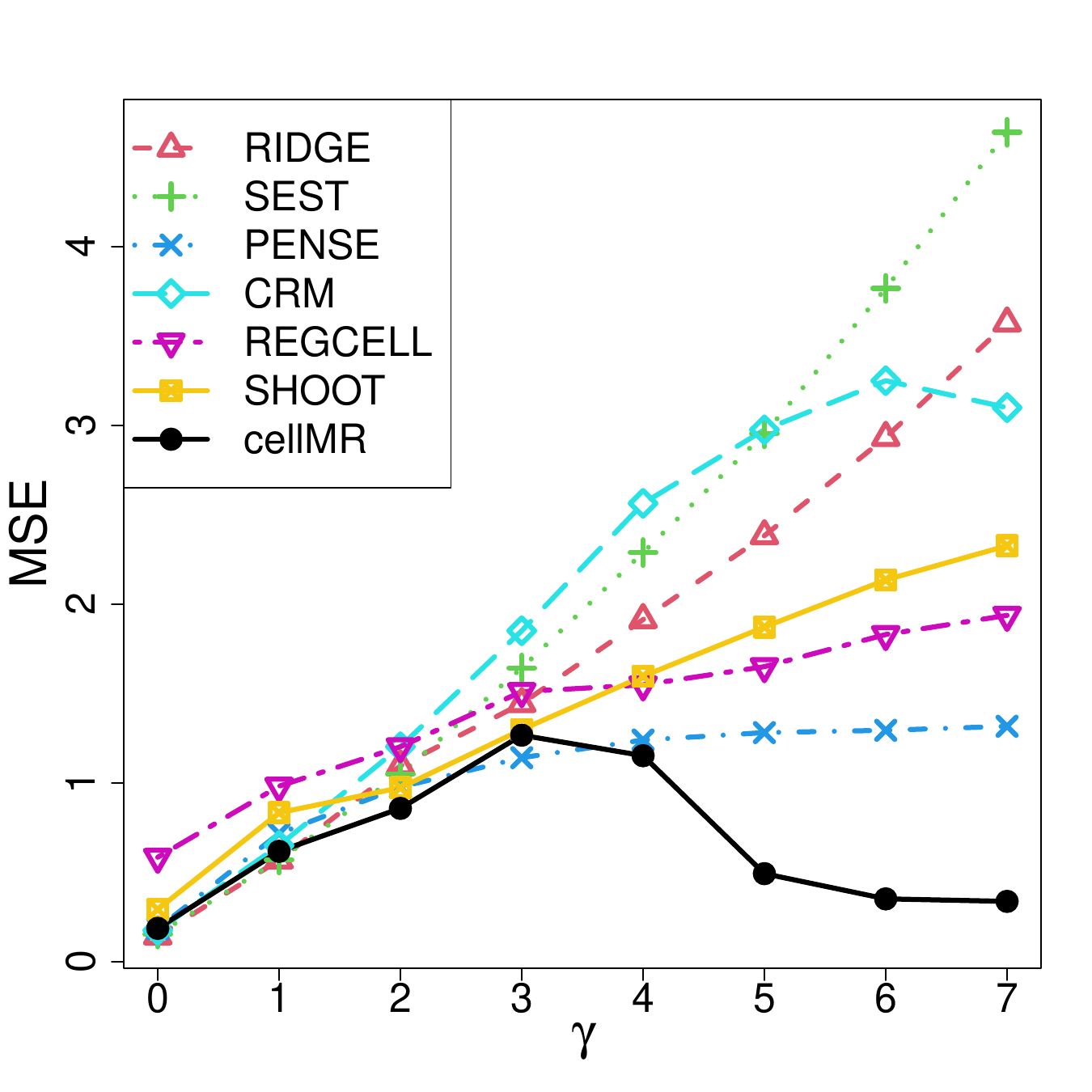} &\includegraphics[width=.31\textwidth]{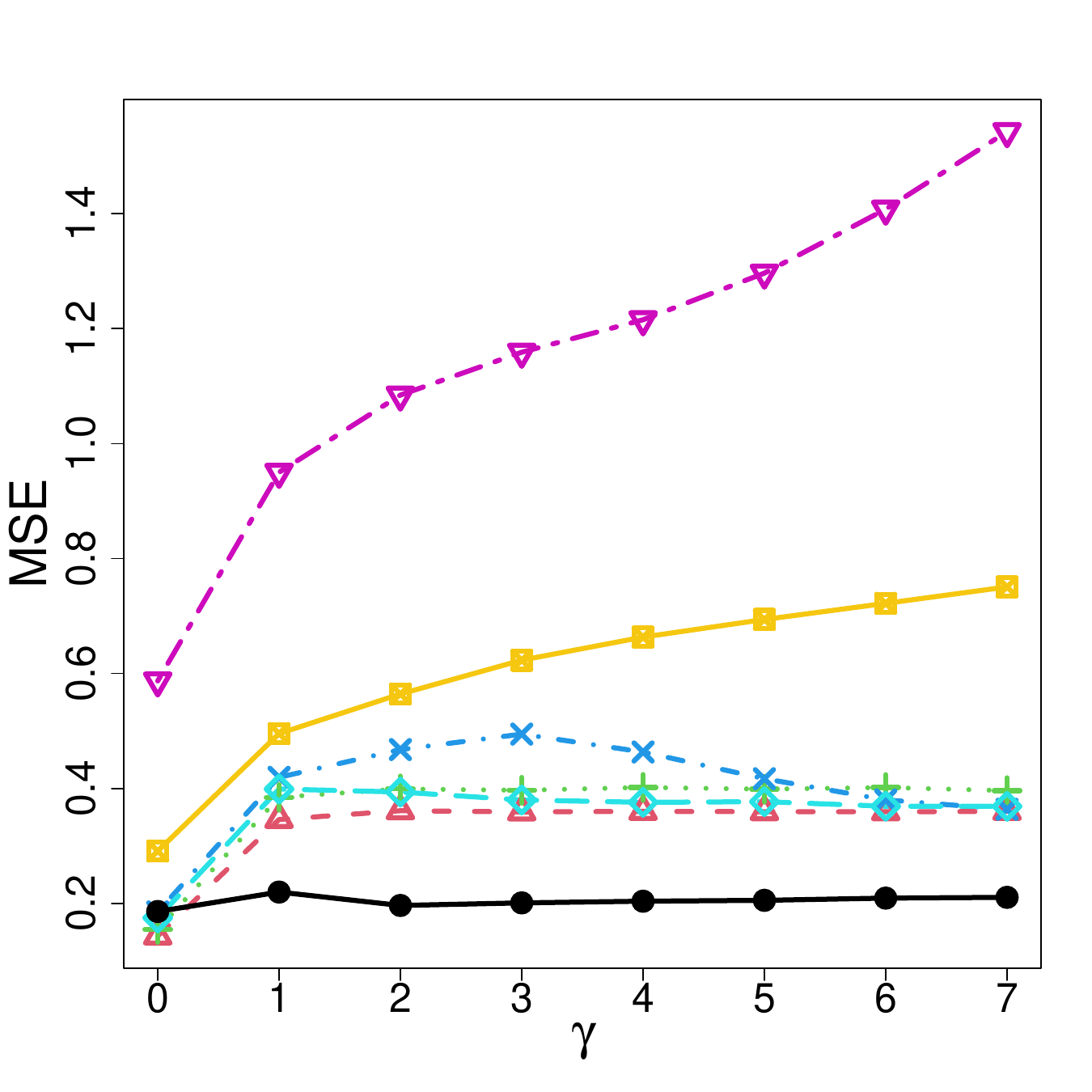} &\includegraphics[width=.31\textwidth]{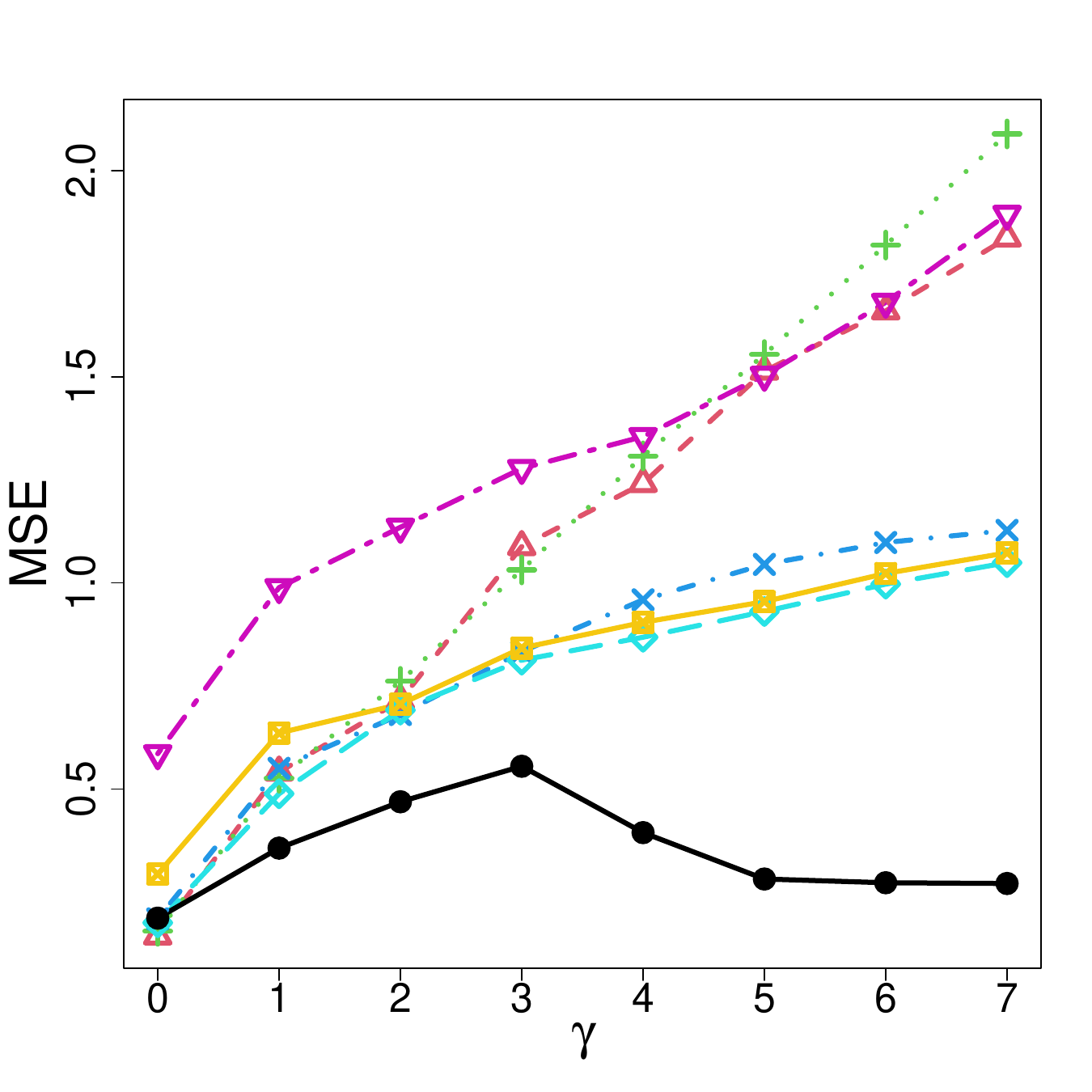}\\
 \rotatebox{90}{\textbf{\footnotesize{$p=q=50$}}}&\includegraphics[width=.31\textwidth]{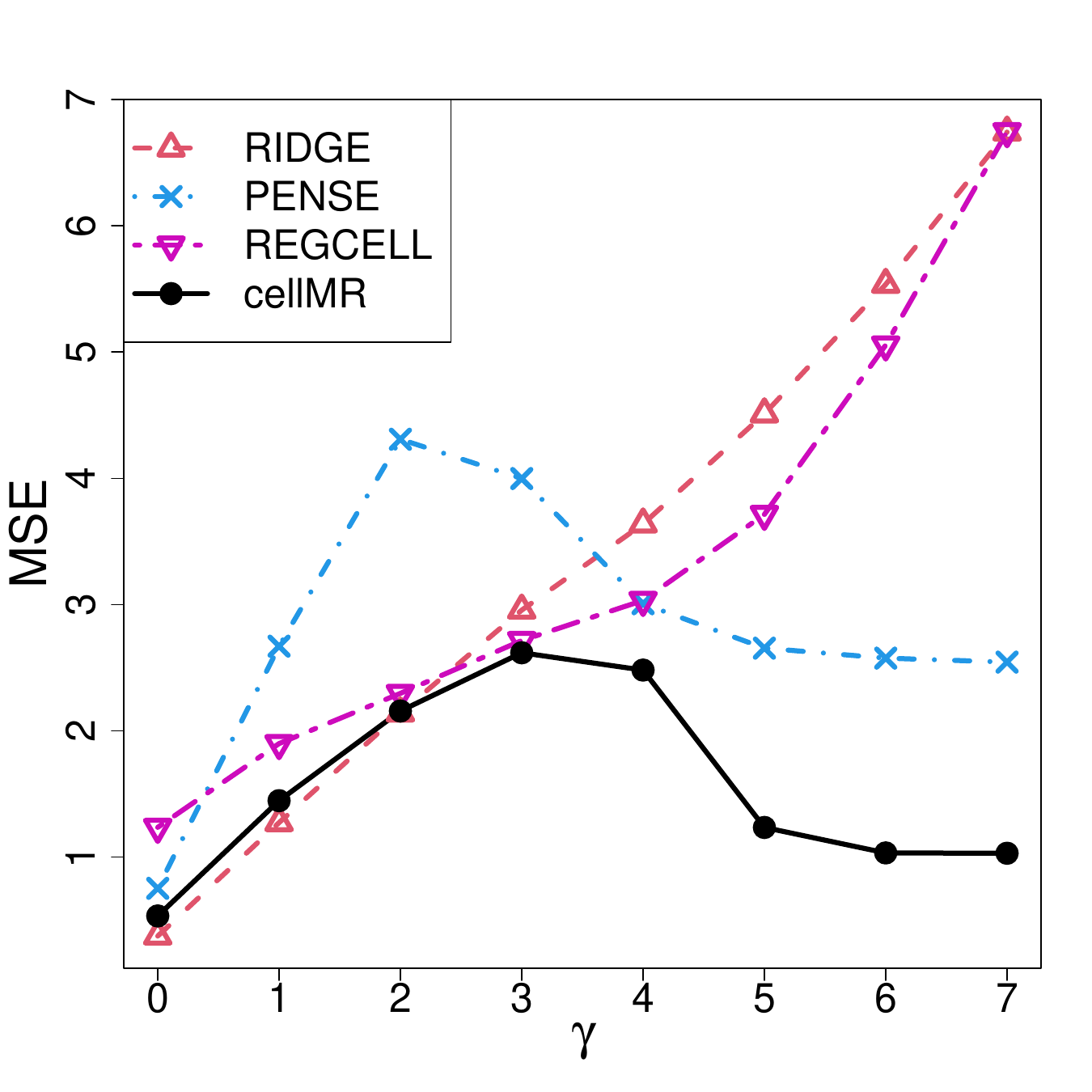} &\includegraphics[width=.31\textwidth]{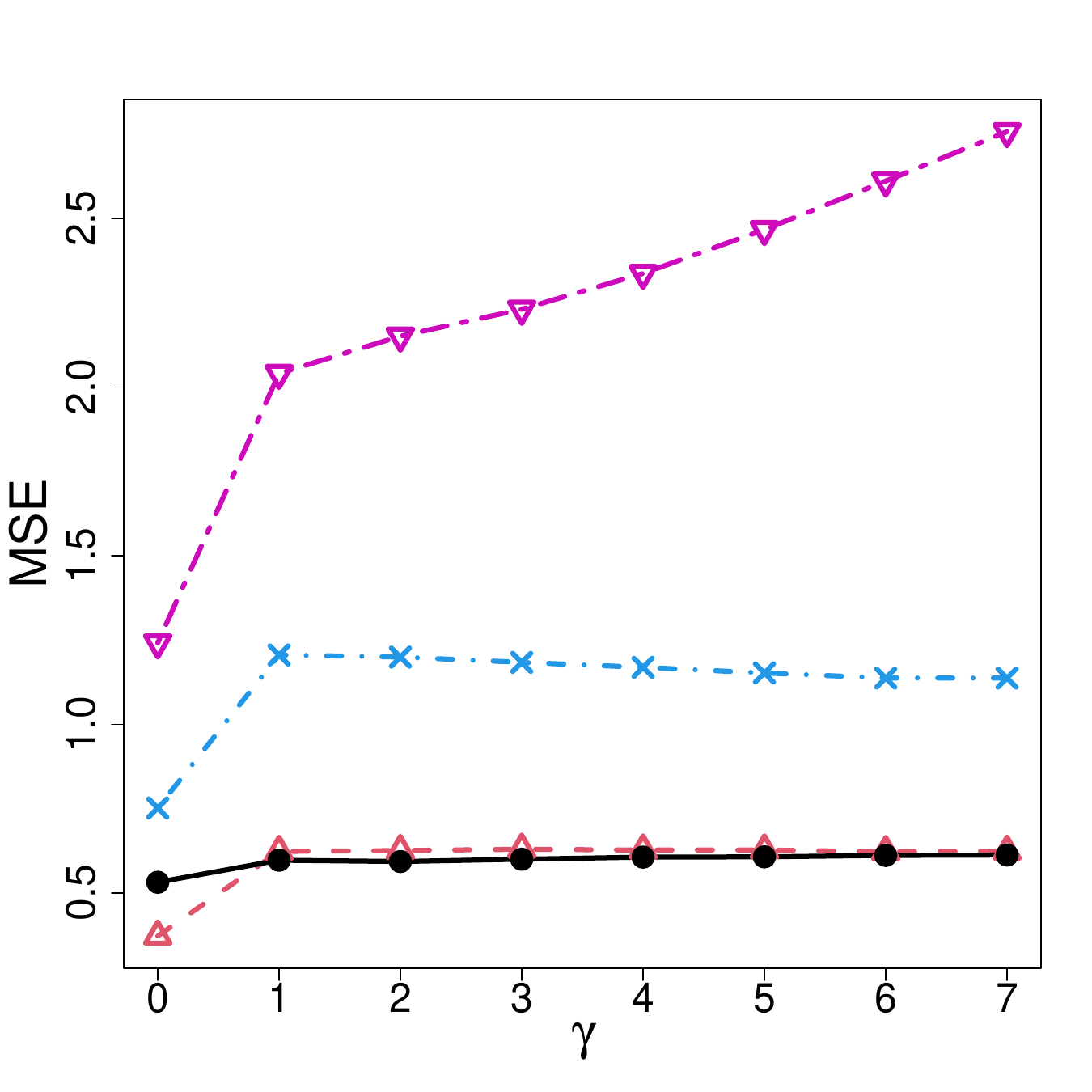} &\includegraphics[width=.31\textwidth]{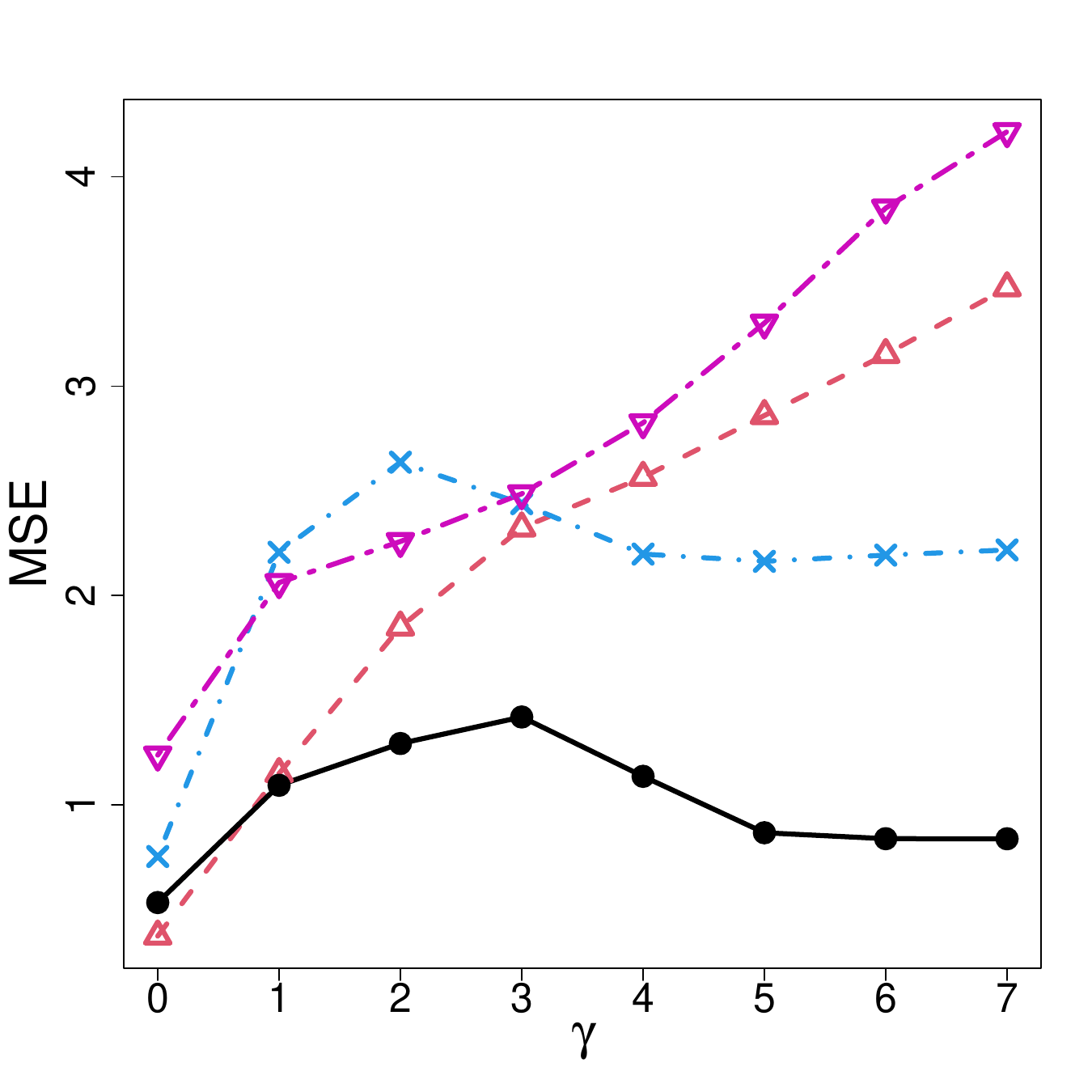}\\
\end{tabular}
\caption{Average MSE attained by RIDGE, SEST, 
PENSE, CRM, REGCELL, SHOOT, and cellMR in the 
presence of cellwise outliers, casewise 
outliers, or both, without missing data.}
\label{fig:results_NA1_perout2_reg}
\end{figure}

Figure~\ref{fig:results_NA1_perout2_reg} shows 
the results, without missing data.
In the top row we see the results for the 
low-dimensional setting with $p = q = 10$. 
The nonrobust RIDGE estimator gets a high MSE 
in all three contamination settings, where 
also REGCELL and SHOOT perform poorly. SEST, 
PENSE, and CRM perform well under casewise
contamination, but not when cellwise outliers 
are present. In contrast, cellMR performs well 
in all three scenarios. The middle row of 
Figure~\ref{fig:results_NA1_perout2_reg} 
displays qualitatively similar results for 
$p = q = 25$. 

In the bottom row $p = q = 50$, so 
$n = p + q$. In this high-dimensional setting
SEST and CRM did not yield results, and SHOOT 
gave a bad fit. Overall, we conclude that 
cellMR is the only method that achieves 
satisfactory performance across all settings.
Very similar results are obtained when setting
$10\%$ of randomly selected cells in both the 
predictors and responses to NA, as seen in 
Figure~\ref{fig:results_NA2_perout2_reg} in
Supplementary Material~\ref{app:addsim}.

\subsection{Inference performance of cellBoot}
\label{sec:siminf}

To evaluate the inference performance of cellBoot 
we compute the empirical coverage probability as 
the proportion of Monte Carlo replications where 
the true parameter value lies inside its 
confidence interval. In each setting we generate 
200 datasets of $n=400$ observations with 
$p=q=\{10,30,60\}$ and compute the confidence 
intervals with nominal level of 90\% of all 
entries of $\bB$. Ideally, the empirical
coverage should be close to 90\%.

We compare cellBoot with the classical bootstrap 
percentile method applied to the OLS estimator, 
and with the Fast and Robust Bootstrap (FRB) 
method of \cite{van2005multivariate}. In the
higher-dimensional settings, the OLS-based 
procedure uses the generalized inverse of the 
sample predictor covariance matrix.

\begin{figure}[!ht]
\centering
\begin{tabular}{M{0.0005\textwidth}M{0.29\textwidth}M{0.29\textwidth}M{0.32\textwidth}}
   &\large \textbf{Cellwise}  & \large \textbf{Casewise} &\large{\textbf{Casewise \& Cellwise}} \\
[-4mm]

 \rotatebox{90}{\textbf{\footnotesize{$p=q=10$}}}&\includegraphics[width=.31\textwidth]{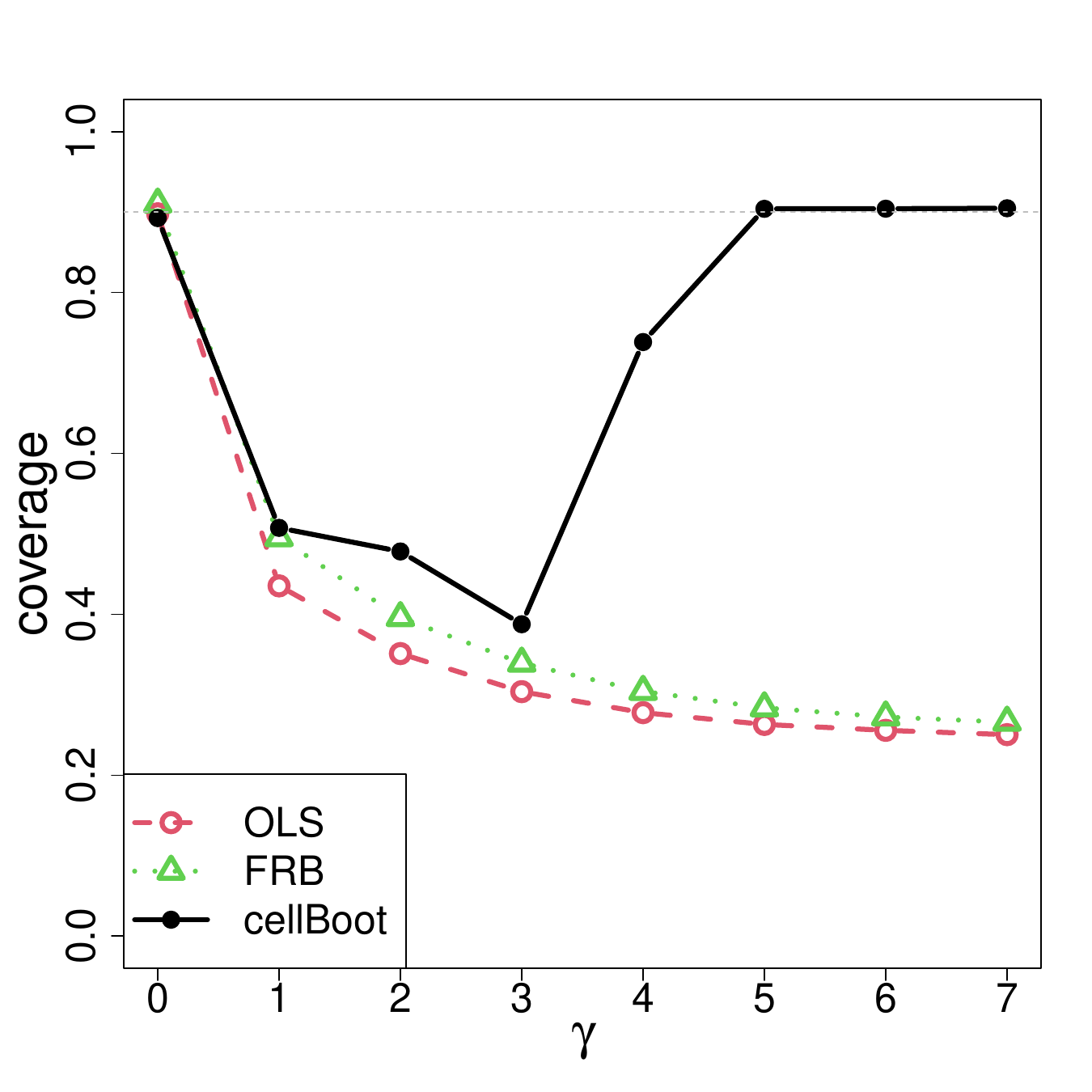} &\includegraphics[width=.31\textwidth]{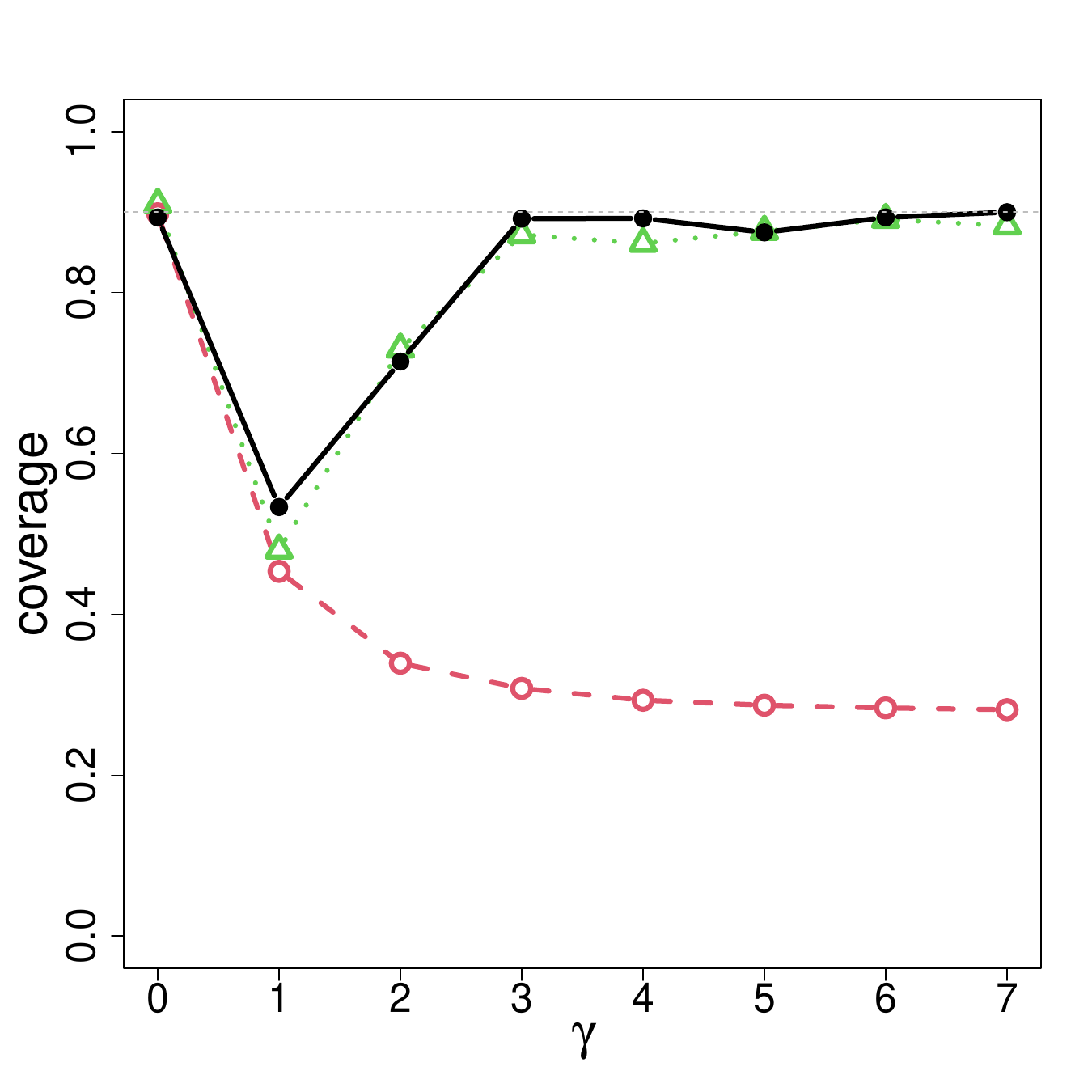} &\includegraphics[width=.31\textwidth]{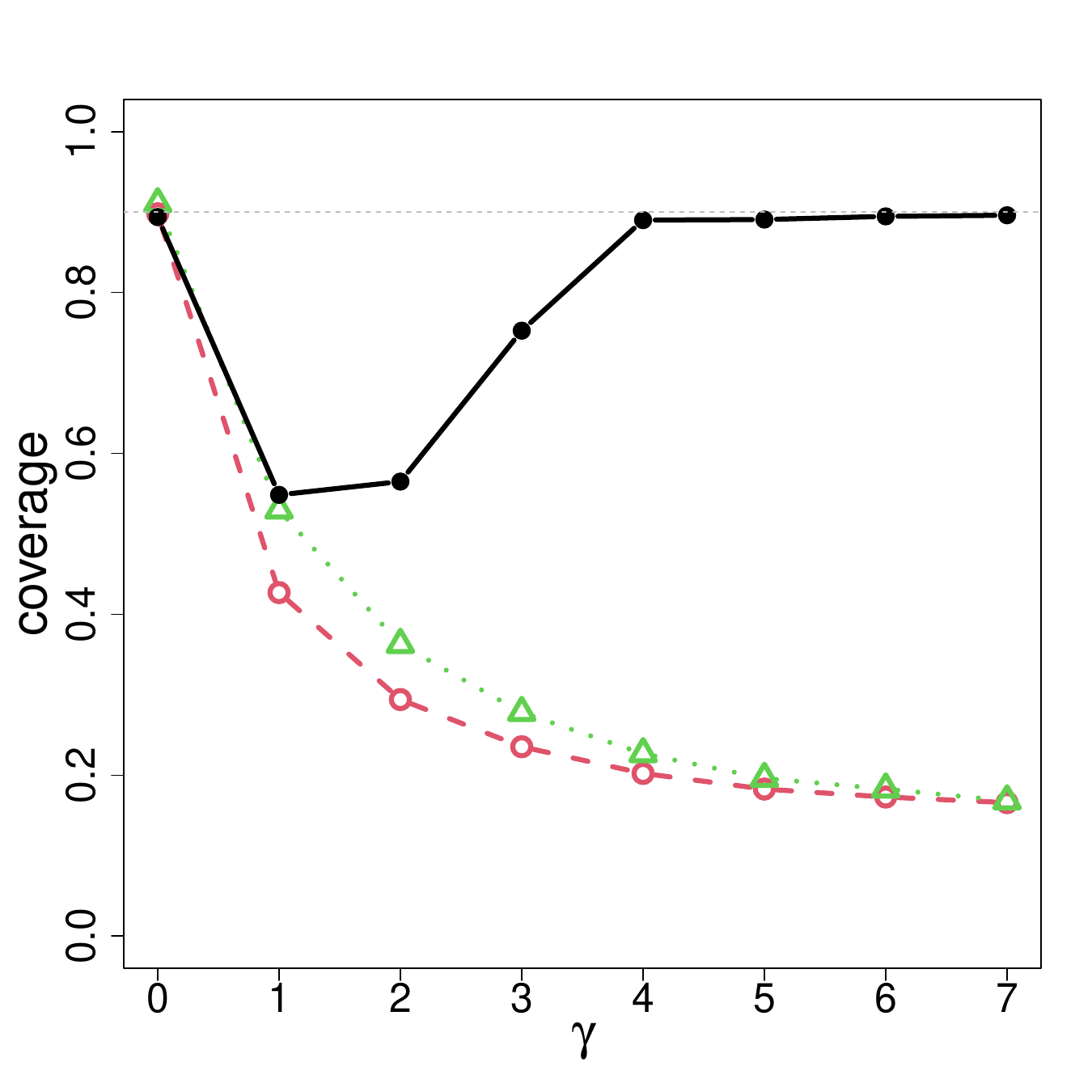} \\

 \rotatebox{90}{\textbf{\footnotesize{$p=q=30$}}}&\includegraphics[width=.31\textwidth]{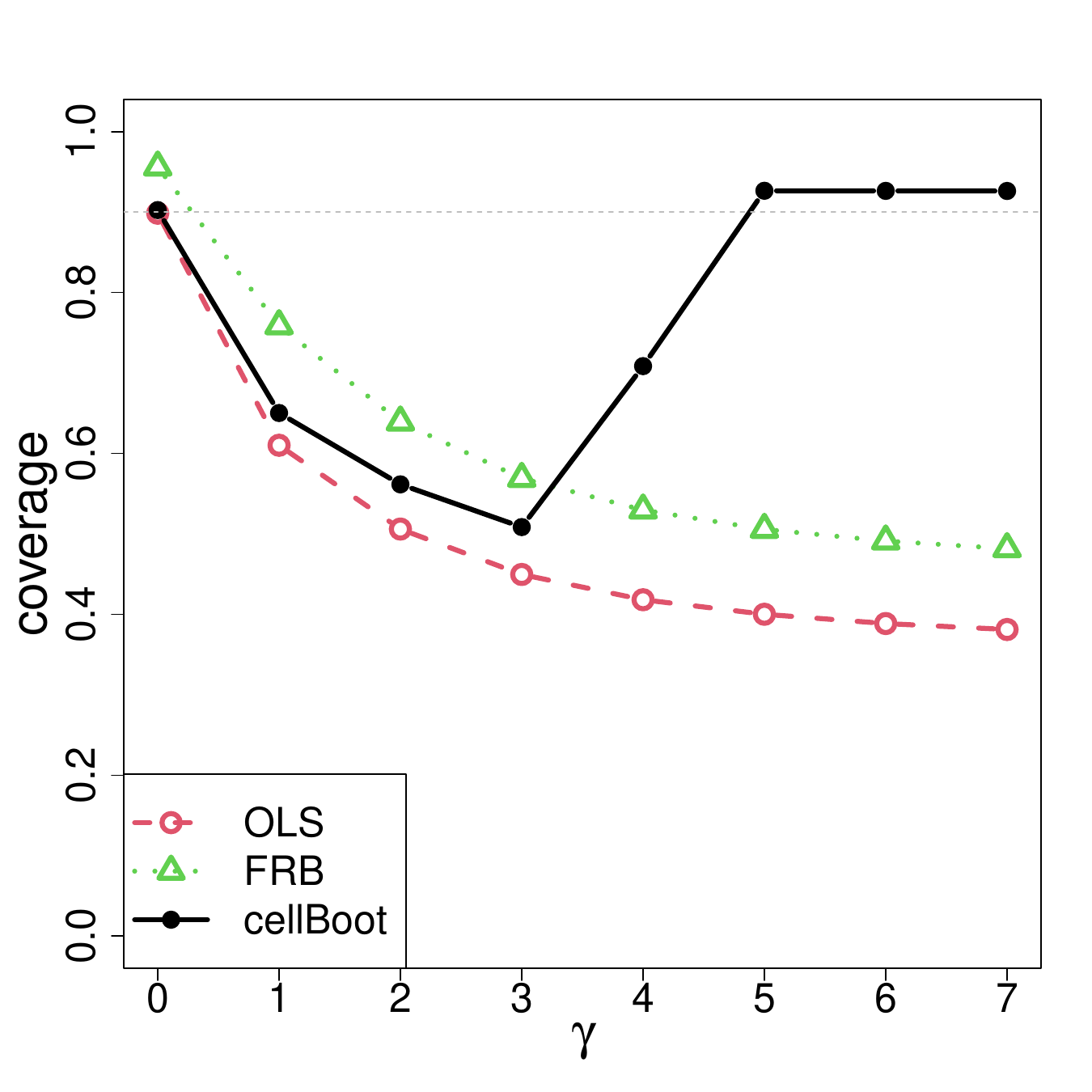} &\includegraphics[width=.31\textwidth]{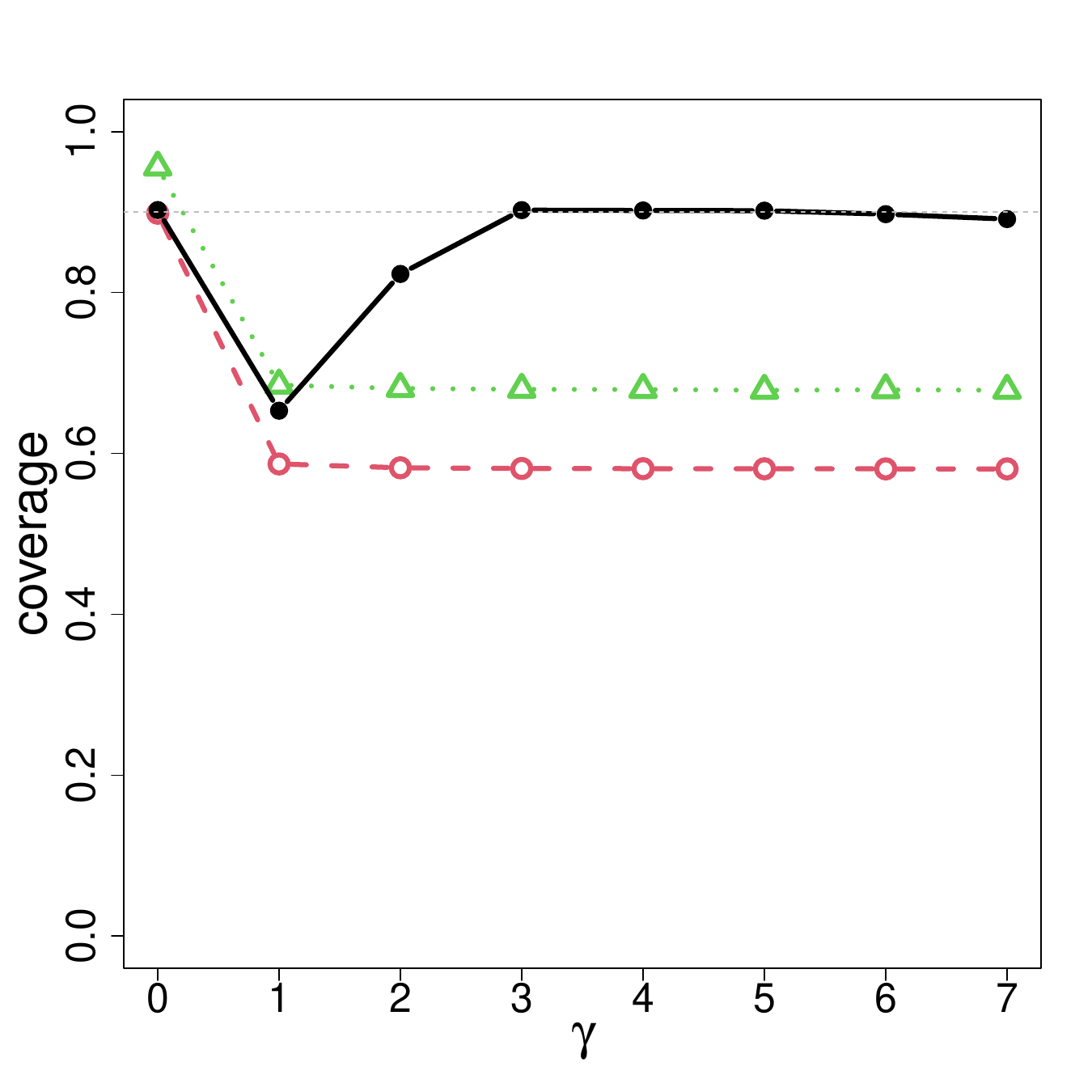} &\includegraphics[width=.31\textwidth]{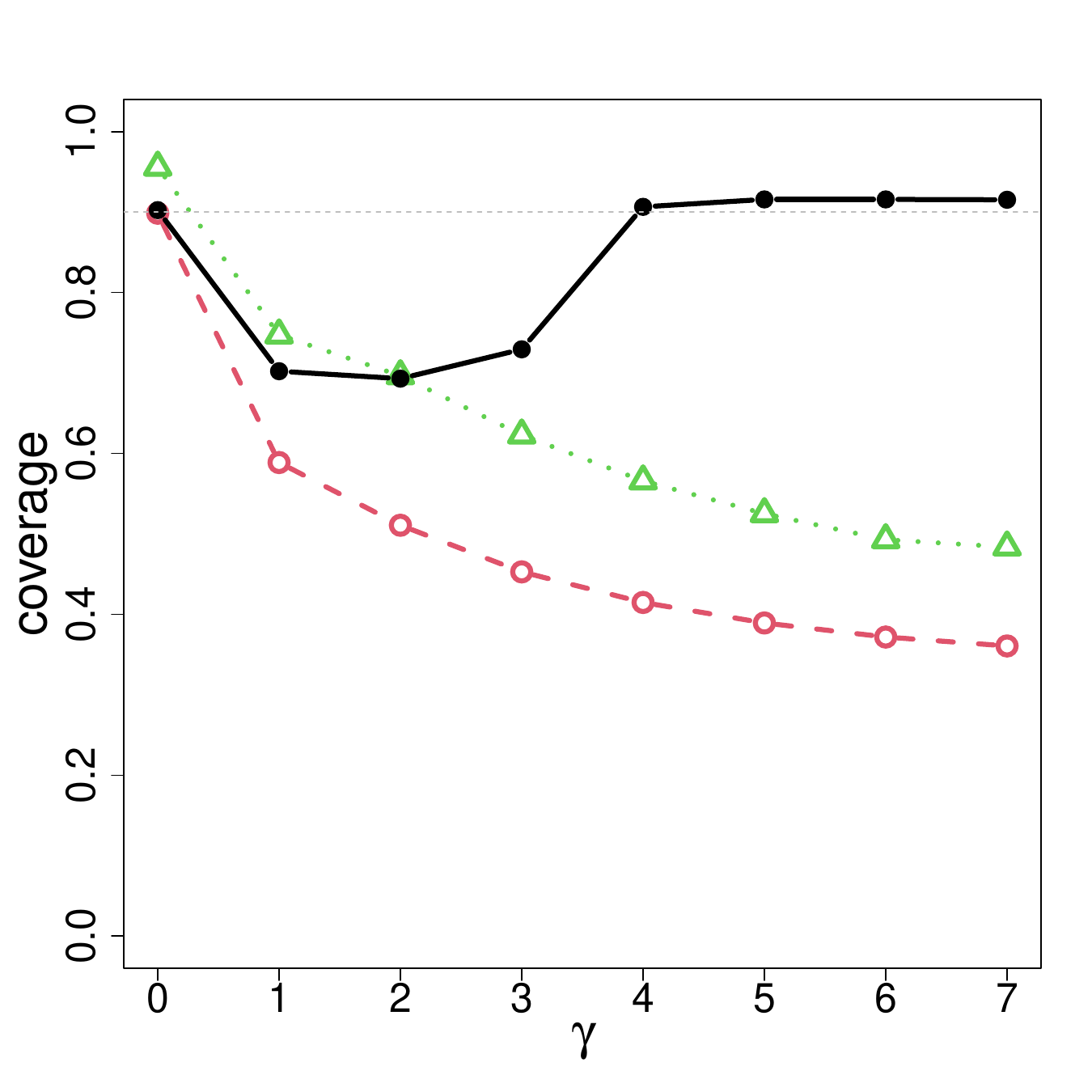}\\
 \rotatebox{90}{\textbf{\footnotesize{$p=q=60$}}}&\includegraphics[width=.31\textwidth]{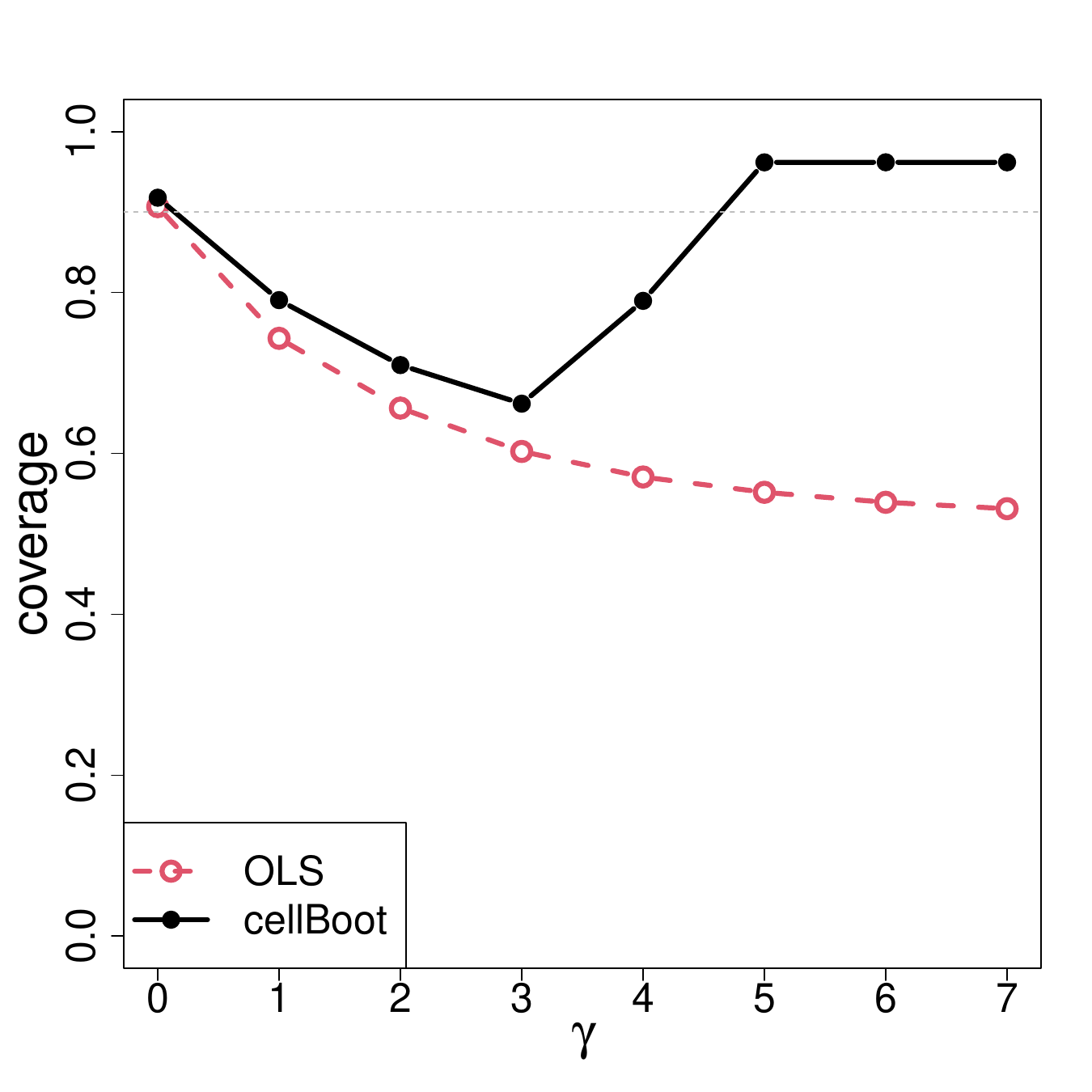} &\includegraphics[width=.31\textwidth]{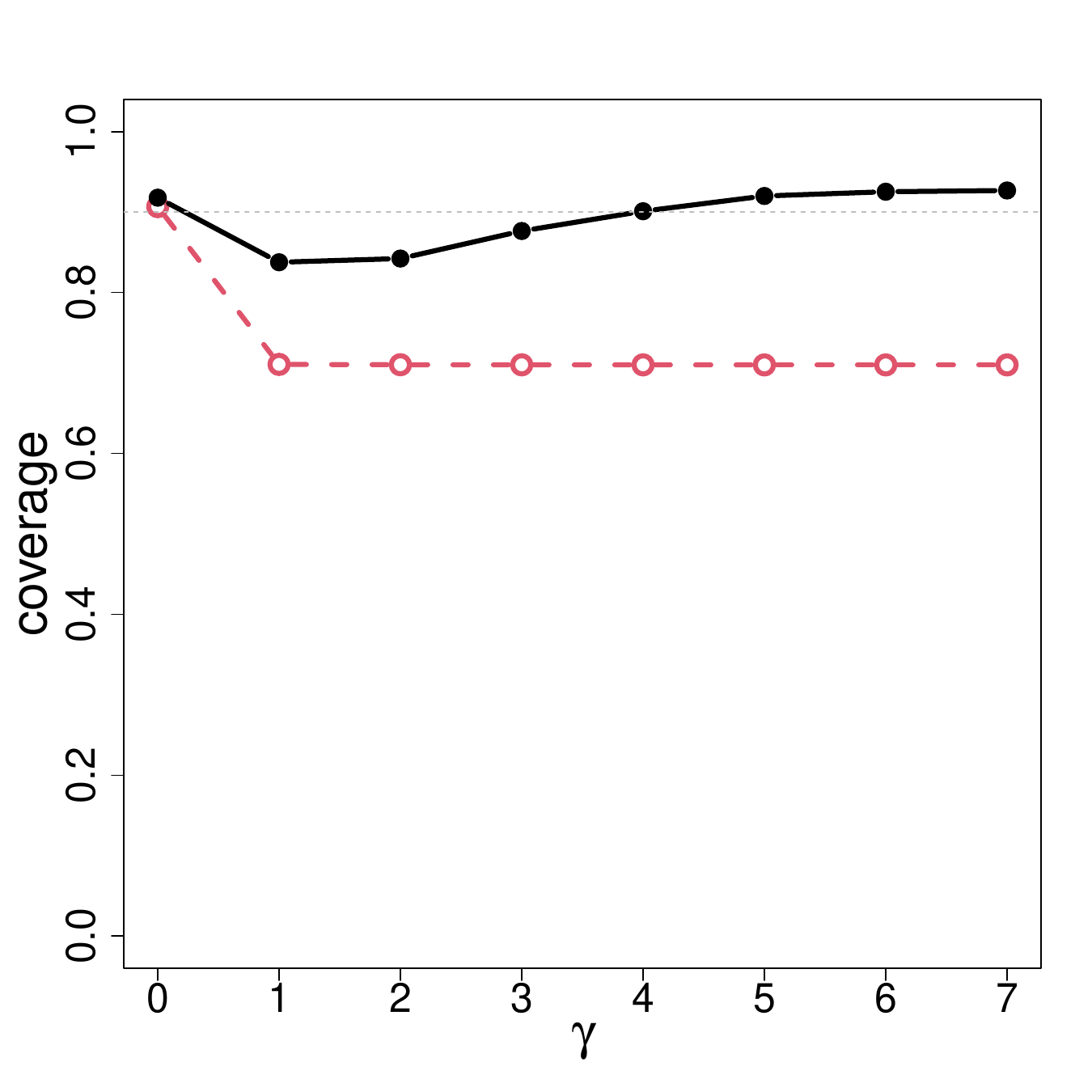} &\includegraphics[width=.31\textwidth]{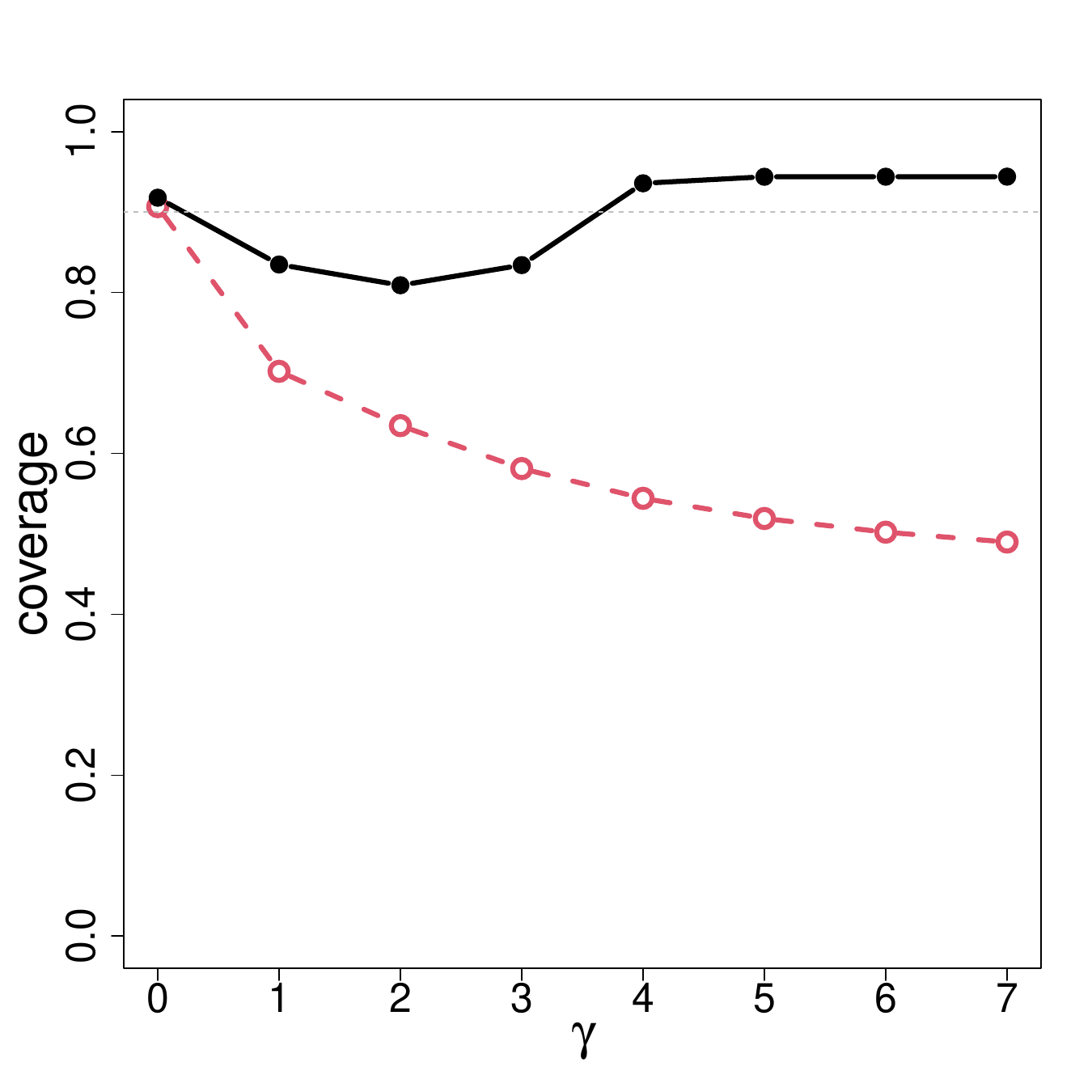}\\
\end{tabular}
\caption{Coverage attained by OLS, 
FRB, and cellBoot for the $0.9$-level confidence intervals of the regression coefficients in the presence 
of cellwise outliers, casewise outliers, or 
both  without missing data.}
\label{fig:results_NA1_coverage}
\end{figure}

Figure~\ref{fig:results_NA1_coverage} 
shows the resulting coverages. In the 
uncontaminated setting ($\gamma=0$), cellBoot 
attains the nominal $0.9$ coverage level, 
corroborating the theoretical results established 
in Section~\ref{sec:inf}. As expected, OLS and 
FRB attain the desired level as well. But under 
contamination, the differences become substantial. 
OLS exhibits severe undercoverage in all contaminated 
scenarios. FRB remains reliable in the low-dimensional 
setting ($p=q=10$) under casewise contamination, but its
performance deteriorates rapidly as the dimensions 
increase. At the high dimensions $p=q=60$, FRB crashed. 
In contrast, cellBoot maintains its coverage close to
the nominal level for large $\gamma$, in all settings. 
This is because far away cells or cases receive a 
small weight, so they do not have much effect on 
the inference.

We also repeated this simulation when 
setting $10\%$ of the cells to NA in the 
predictors and the responses. 
The results are very similar, and shown in 
Supplementary Material~\ref{app:addsim}.

\section{Real data example}
\label{sec:realdata}
To illustrate the proposed regression and inference 
procedure we use a well-known genomic dataset from 
\cite{shankavaram2007transcript} that was also analyzed 
by \cite{alfons2013sparseLTS}. It contains protein 
and gene expression measurements for $n=59$ human cancer 
cell lines. Our goal is to investigate the relationship
between the expression levels of $p=50$ genes and those
of $q=3$ proteins, called MLH1, PRKCI, and RELA.
The set of genes was chosen as those most strongly 
correlated with the protein responses. The aim is to 
assess how well gene expression information can predict 
protein expression levels, as proteins are the main 
drivers of cellular behavior and are frequently 
dysregulated in cancer.
We compare the performance of cellMR with the 
competing methods RIDGE, REGCELL, and PENSE, 
presented in Section~\ref{sec:simreg}. In this 
setting where $n \approx p$, SEST and CRM 
did not work and SHOOT gave a bad fit. To evaluate 
prediction performance while accounting for potential 
outliers, we sort the squared regression residuals 
$r_{ij}^2 = (y_{ij} - \hy_{ij})^2$ to
$r_{(1)j}^2 \leqslant \cdots \leqslant r_{(n)j}^2$
and define the robust trimmed Root Mean Squared 
Error (trimRMSE) as
\begin{equation*}
  \mathrm{trimRMSE}_{\alpha} = \left(
  \frac{1}{q H_{\alpha}} \sum_{j=1}^q
  \sum_{i=1}^{H_{\alpha}} r_{(i)j}^2
  \right)^{1/2},
\end{equation*}
where $H_{\alpha} = \lceil \alpha n \rceil$ and 
$0 < \alpha < 1$ denotes the trimming level. 
To avoid an optimistic bias in the performance
assessment, the residuals 
$r_{ij} = y_{ij} - \hy_{ij}$ are computed by 
10-fold CV. 

\begin{figure}[!ht]
\centering
\includegraphics[width=0.5\linewidth]
  {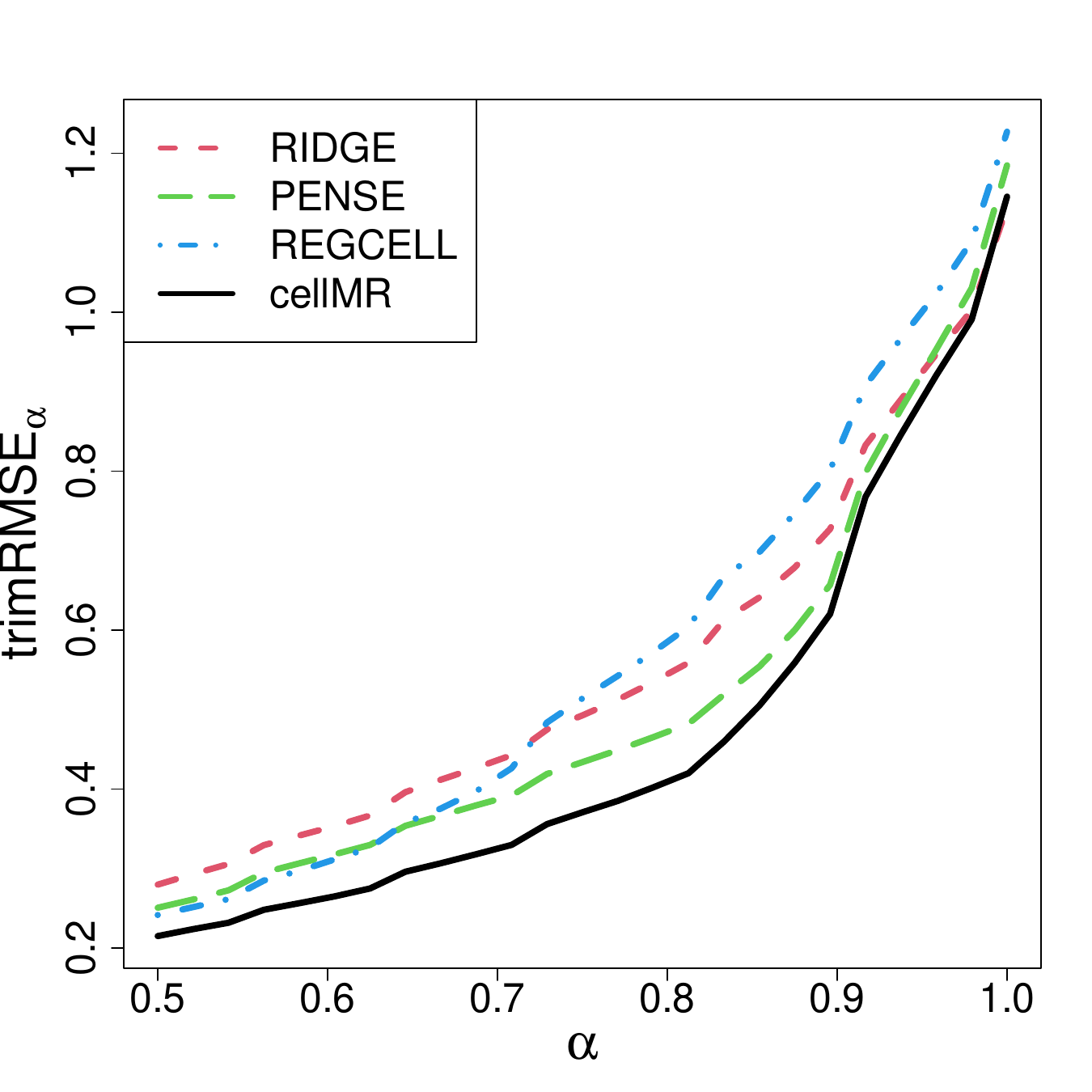}
\caption{$\mathrm{trimRMSE}_{\alpha}$ of 
  RIDGE, PENSE, REGCELL, and cellMR in function 
  of $\alpha\in [0.5,1]$.}
\label{fig:trimRMSE}
\end{figure}

Figure~\ref{fig:trimRMSE} shows the 
$\mathrm{trimRMSE}_{\alpha}$ attained by RIDGE, 
PENSE, REGCELL, and cellMR as a function of 
$\alpha\in [0.5,1]$. SHOOT is not shown as it 
had very poor predictive performance here.
CellMR achieves the best predictive performance 
for $\alpha \leqslant 0.85$ where trimming 
excluded the largest residuals. 
For $\alpha$ over 85\% the curve
shoots up, indicating about 15\% of poorly
fitted responses that might contain outliers.

To better understand these results, 
we look at the regression outlier map in 
Figure~\ref{fig:outmap} in 
Section~\ref{sec:outdetect}. Case 51 has
a large residual. There are several 
leverage points, with predictor distances 
exceeding the vertical cutoff. 
Case 20 is a good leverage point, and 5
is a bad one. The cases with several 
flagged outlying cells are shown 
as points with larger sizes. The casewise 
outliers are plotted as dark grey and black 
points.

Figure~\ref{fig:cellmap} displays the 
predictor and residual cellmaps of cases 5, 
20, 38, and 51. Case 5 is flagged as a 
casewise outlier, as indicated by its dark 
circles. Case 20 is a good leverage point, 
with typical residuals but several outlying 
cells in the predictors. Case 38 seems clean. 
We also see that case 51 has outlying negative 
residuals in the variables MLH1 and PRKCI.

To further investigate  the relationship between 
genes and proteins, Figure~\ref{fig:forest_plot} 
displays a forest plot. For each protein, it shows 
the cellMR regression coefficients and their 95\%
cellBoot confidence intervals for a selected set
of genes. The plot facilitates comparisons between
gene coefficients and their uncertainties. 
We see that the effects of all selected genes on 
PRKCI are deemed nonzero, indicating a stable 
association between these genes and the protein 
PRKCI. In contrast, for predicting MLH1 only 
fewer genes have confidence intervals excluding 
zero, suggesting a weaker association (based on 
this dataset with low $n$). The predictive 
strength for RELA is similar. Moreover, the 
relatively narrow bootstrap intervals of several 
coefficients indicate that cellBoot provides stable 
inference despite the presence of contamination.

\begin{figure}[ht]
\centering
\vspace{2mm}
\includegraphics[width=1\linewidth]
  {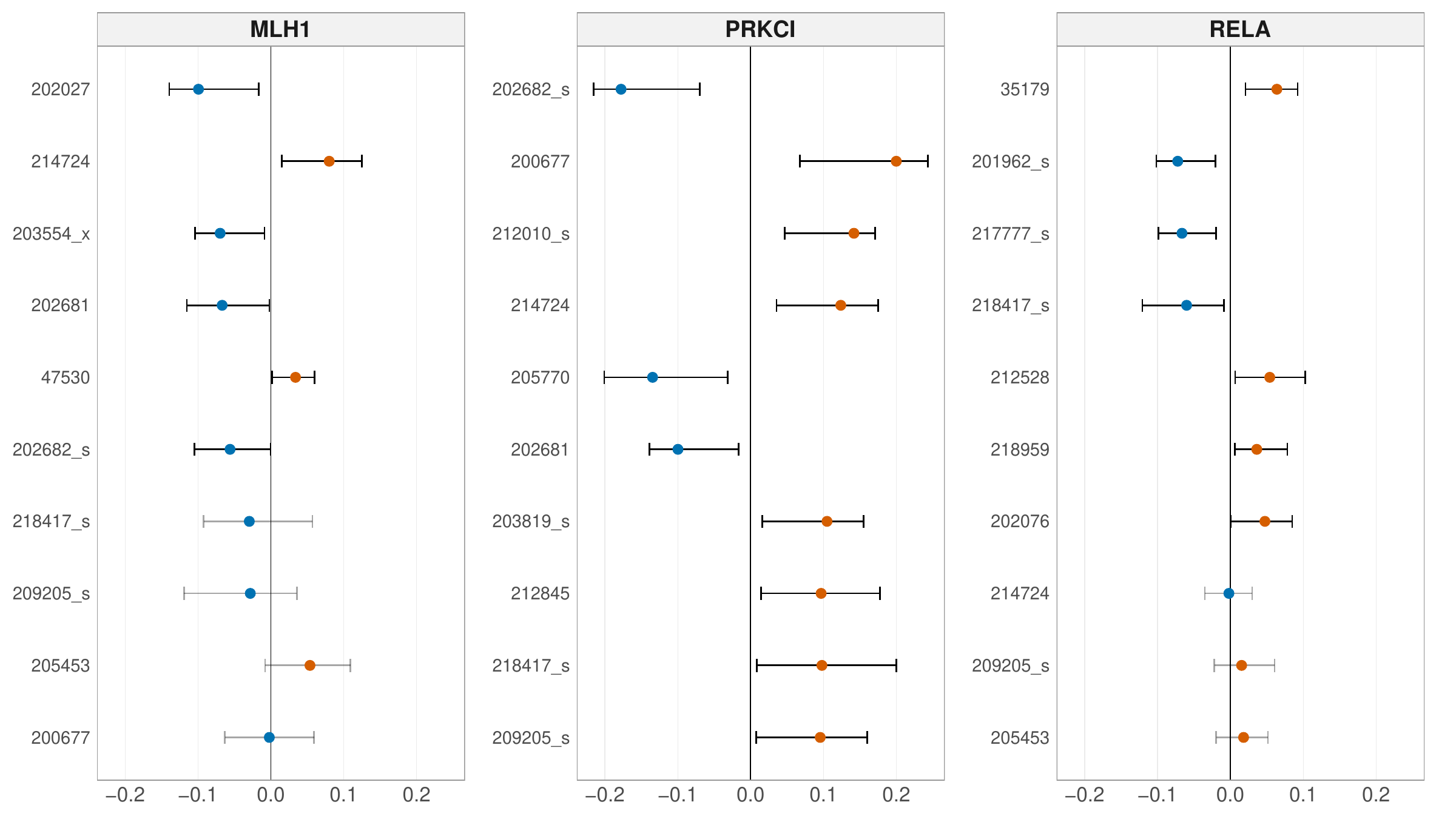} 
\caption{cellMR forest plot with level 0.95 bootstrap 
  confidence intervals for the gene-protein data.}
\label{fig:forest_plot}
\end{figure}

\section{Conclusions}
\label{sec:conc}

We have introduced the cellwise multivariate regression 
(cellMR) estimator, a novel robust regression method 
capable of simultaneously handling cellwise outliers, 
casewise outliers, missing data, and high dimensions. 
The method builds upon a recent robust covariance 
estimator and integrates it within a multivariate linear 
regression structure with ridge regularization. To the 
best of our knowledge, this is the first multivariate
regression approach that addresses all these 
challenges, and it does so with a unified methodology.
We also constructed visualizations of both cellwise 
and casewise outliers, facilitating 
anomaly detection and interpretation.

We complemented cellMR with cellBoot, a new 
bootstrap-based inference procedure. It leverages 
indirect inference to construct a consistent estimator 
of the sampling distribution. This procedure 
provides valid confidence intervals that remain stable 
under the possibly simultaneous presence of cellwise
and casewise contamination and missing values. We think
that cellBoot is the first inference framework 
specifically designed for this setting.

We established several theoretical properties of the 
proposed methodology. We derived the influence 
functions of the cellMR regression estimators, 
and we were able to prove the asymptotic consistency 
of cellBoot. We also derived the influence functions of 
the center and the length of the resulting confidence 
intervals.

The excellent finite-sample performance of cellMR and 
cellBoot was verified through extensive 
simulation, confirming their robustness and stability 
across a wide range of contamination scenarios and 
dimensions. A real data application from genomics 
illustrated the practical utility of the proposed 
approach. 

Future research may extend the cellMR framework in 
several directions, including extensions to 
generalized linear models and adaptations to 
structured high-dimensional settings such as graphical 
or functional regression models. Moreover, the 
cellBoot principle could serve as a general template 
for robust inference in the presence of cellwise 
outliers.

\vspace{6mm}
\noindent{\bf Software availability.} A zip file with
the R code, an example script, and the dataset is at
\url{https://wis.kuleuven.be/statdatascience/code/cellmr_r_code.zip}

\vspace{6mm}


\small
\spacingset{1}



\clearpage
\pagenumbering{arabic}
\appendix
\begin{center}
\phantom{abc}\\ 
\vspace{10mm}

\Large{Supplementary Material to: 
  Cellwise and Casewise Robust\\ \vspace{5mm}
  Multivariate Regression with Inference}\\
\end{center}
\vspace{5mm}

\setcounter{equation}{0} 
\renewcommand{\theequation}
  {A.\arabic{equation}} 

\spacingset{1.45} 

\section{\large More on M-estimation}
\label{app:adddet}
The hyperbolic 
tangent (\textit{tanh}) function $\rho_{b,c}$ 
introduced by \cite{tanh1981}
is defined piecewise by
\begin{equation}\label{eq:rhotanh}
\rho_{b,c}(z) = 
\begin{cases}
 z^2/2 &\mbox{ if } 0 \leqslant |z| \leqslant b,\\
  d - (q_1/q_2) \ln(\cosh(q_2(c - |z|)))    
    &\mbox{ if } b \leqslant |z| \leqslant c,\\
  d &\mbox{ if } c \leqslant |z|,\\
\end{cases}
\end{equation}
where $d = (b^2/2) + (q_1/q_2)\ln(\cosh(q_2(c - b)))$.
Its first derivative $\psi_{b,c} = \rho'_{b,c}$ has 
been used as the \textit{wrapping function}
\citep{FROC2021} and equals
\begin{equation}\label{eq:psitanh}
\psi_{b,c}(z) = 
\begin{cases}
  z &\mbox{ if } 0 \leqslant |z| \leqslant b,\\
  q_1\tanh(q_2(c - |z|))\,\mbox{sign}(z)    
  &\mbox{ if } b \leqslant |z| \leqslant c,\\
  0 &\mbox{ if } c \leqslant |z|\,.\\
\end{cases}
\end{equation}
The function $\psi_{b,c}$ is continuous, which 
implies certain constraints on $q_1$ and $q_2$. 
CellMR uses the default wrapping 
function shown in Figure~\ref{fig:rho_psi},
which has $b=1.5$ and $c=4$ with 
$q_1=1.54$ and $q_2=0.86$.
\begin{figure}[!ht]
\centering
\includegraphics[width=0.8\textwidth]
  {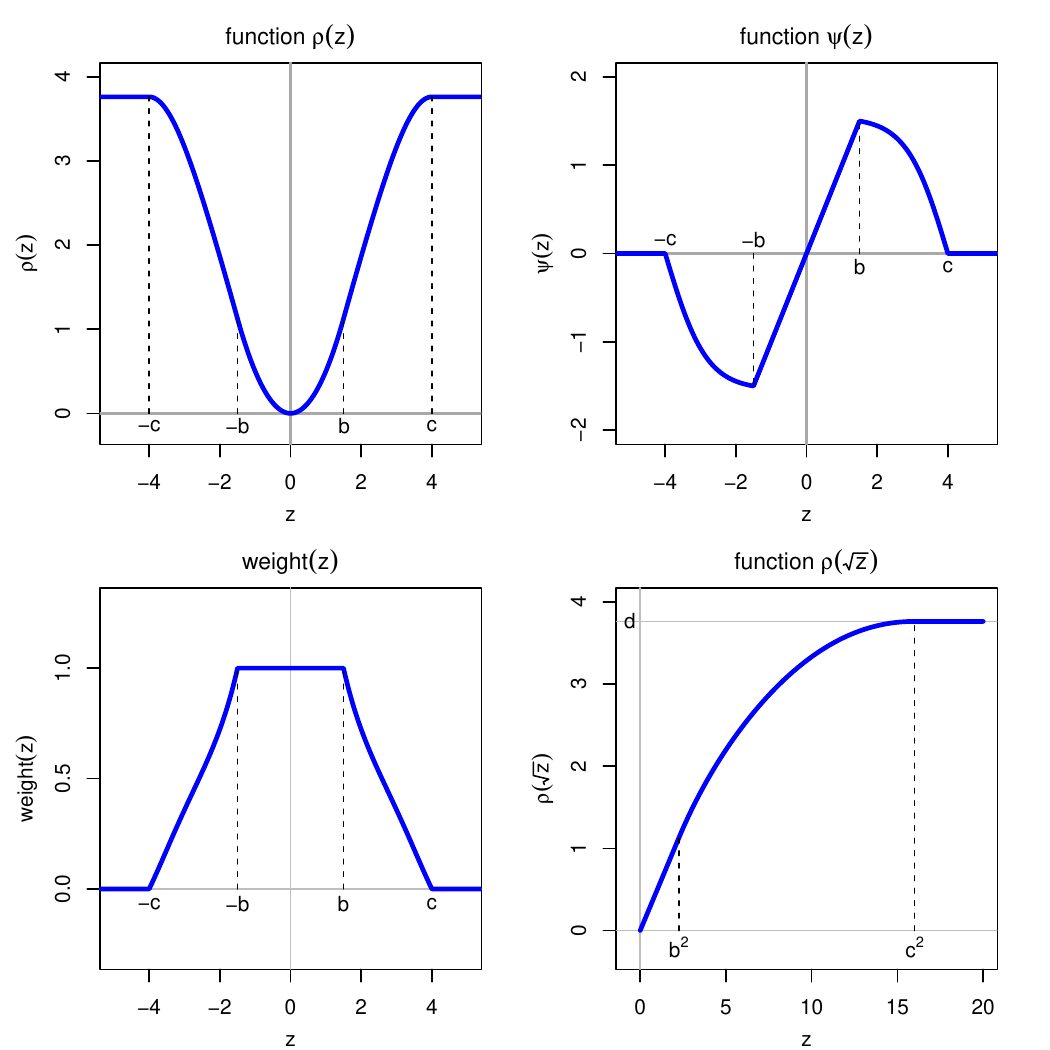}
\vspace{-4mm}
\caption{The function $\rho_{b,c}$ with 
  $b=1.5$ and $c=4$ (top left), its derivative
  $\psi_{b,c}$ (top right), its
  weight function 
  used in~\eqref{eq:cellweight}
  and~\eqref{eq:caseweight} (bottom left), and 
  the function $z \mapsto \rho(\sqrt{z})$ (bottom right).}
\label{fig:rho_psi}
\end{figure}
The bottom left panel of 
Figure~\ref{fig:rho_psi}
shows the weight function $w(z)$ used
in~\eqref{eq:cellweight} 
and~\eqref{eq:caseweight}.
\begin{proposition} \label{the_validtanh}
The hyperbolic tangent function $\rho_{b,c}$ in
\eqref{eq:rhotanh} is a valid $\rho$-function.
\end{proposition}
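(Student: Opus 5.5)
We check the defining properties of Definition~\ref{def:validrho} one at a time, working directly from the piecewise forms \eqref{eq:rhotanh} and \eqref{eq:psitanh}. Since $\rho_{b,c}$ depends on $z$ only through $|z|$, it is even, and $\rho_{b,c}(0)=0$ from the first branch. \emph{Continuity} at $|z|=b$ follows from the choice of $d$: the middle branch at $|z|=b$ equals $d-(q_1/q_2)\ln\cosh(q_2(c-b))=b^2/2$. At $|z|=c$ it equals $d-(q_1/q_2)\ln\cosh 0=d$. \emph{Boundedness} follows because $\rho_{b,c}$ is constant, equal to $d$, for $|z|\geqslant c$ and continuous on the compact set $[-c,c]$.

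For \emph{differentiability}, we differentiate each branch for $z>0$. The middle branch gives $q_1\tanh(q_2(c-z))$, which vanishes at $z=c$ and so matches the zero derivative of the constant branch. At $z=b$ the derivatives match because the constants $q_1,q_2$ are constrained by $q_1\tanh(q_2(c-b))=b$. This is exactly the continuity of $\psi_{b,c}$ mentioned in the text; for the default values we would verify numerically that $1.54\tanh(0.86\cdot 2.5)\approx 1.5$. At $z=0$ differentiability holds since the function is $z^2/2$ there, and by evenness the same conclusions hold for $z<0$. \emph{Monotonicity} in $|z|$ follows from $\psi_{b,c}(z)\geqslant 0$ for $z\geqslant 0$: the derivative is $z\geqslant 0$, then $q_1\tanh(q_2(c-z))\geqslant 0$ for $z\leqslant c$ (using $q_1,q_2>0$), then $0$.

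The main step is the concavity of $g(t)=\rho_{b,c}(\sqrt t)$ on $t\geqslant 0$. Because $g$ is $C^1$ on $(0,\infty)$, it suffices to show that $g'(t)=\psi_{b,c}(\sqrt t)/(2\sqrt t)=w(\sqrt t)/2$ is nonincreasing in $t$, where $w(z)=\psi_{b,c}(z)/z$ is the weight function. So we must show that $w$ is nonincreasing on $(0,\infty)$. On $(0,b]$ we have $w\equiv 1$, and on $[c,\infty)$ we have $w\equiv 0$. On $[b,c]$ the function $w(z)=q_1\tanh(q_2(c-z))/z$ is a ratio of a positive, decreasing numerator and a positive, increasing denominator, hence decreasing. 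Continuity of $w$ at $b$ (value $1$) and at $c$ (value $0$) then gives global monotonicity. A continuous $g$ whose derivative is nonincreasing on $(0,\infty)$ is concave on $[0,\infty)$, including at the endpoint $t=0$ by continuity.

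The only place where care is needed is confirming that the constraint on $(q_1,q_2)$ yields exact $C^1$ matching at $b$. This is what makes $w$ continuous at $b$, and continuity there is needed so that $g'$ is nonincreasing across $t=b^2$. If $w$ jumped upward at $b$, concavity would fail, so we would verify this constraint explicitly, or take it as the defining condition on $q_1,q_2$.
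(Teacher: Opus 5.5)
Your argument is correct and complete. The paper gives no proof of its own here; it defers to the cellPCA reference. Your direct check is the standard way to establish the result: continuity and $C^1$ gluing at $|z|=b$ and $|z|=c$, monotonicity from $\psi_{b,c}\geqslant 0$ on $[0,\infty)$, and concavity of $t\mapsto\rho_{b,c}(\sqrt t)$. For the last property you use that $w(z)=\psi_{b,c}(z)/z$ equals $1$ on $(0,b]$, is nonincreasing on $[b,c]$ because $\psi_{b,c}\geqslant 0$ decreases while $z$ increases, and vanishes on $[c,\infty)$.

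One remark on your numerical check: with the printed constants, $1.54\tanh(0.86\cdot 2.5)\approx 1.4988$, not exactly $1.5$. So exact differentiability at $b$ does require taking $q_1\tanh(q_2(c-b))=b$ as the defining condition, as the paper does by imposing continuity of $\psi_{b,c}$. Concavity, by contrast, would survive the rounding, since the resulting jump in $w$ at $b$ is downward.
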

The proof is given in \cite{cellPCA}. The 
bottom right panel of 
Figure~\ref{fig:rho_psi} illustrates the 
concavity of the function 
$z \mapsto \rho_{b,c}(\sqrt{z})$.
The choice of the hyperbolic tangent 
$\rho$-function is further motivated by both 
theoretical and practical considerations.
First note that, since $\rho_{b,c}$ is 
constant outside of $[-c,c]$, its derivative 
$\psi_{b,c}(z)$ and the corresponding weight 
function are zero for large positive and 
negative values of $z$. Such functions are 
said to be redescending in the sense that 
very extreme values receive zero weight
in the estimation. This favorable property 
is not shared by the well-known Huber 
$\rho$-function \citep{huber1964robust}, 
that is not suitable in our framework.
Moreover, \cite{hampel1986} show that the 
function $\rho_{b,c}$ arises as the unique 
solution to the V-robustness problem. This 
problem seeks to maximize the estimator's
efficiency subject to an upper bound on its 
change-of-variance function, that measures 
how much the asymptotic variance changes 
under point contamination. 
The optimization is performed over all 
bounded redescending $\rho$-functions. 
Moreover, the tanh $\rho$-function 
possesses several other desirable robustness 
properties, including qualitative robustness, 
low gross-error sensitivity, and a maximal 
breakdown value. For a comparison between
redescending $\rho$-functions see Table~3 
in Section~2.6 of \cite{hampel1986}.
From a practical point of view, another 
advantage of $\psi_{b,c}$ is that it is 
linear in the central region $[-b,b]$. This 
makes the weight exactly 1 in that 
region, so inlying cells will not be 
downweighted at all. That is an advantage 
over other valid $\rho$-functions 
that could have been used, such as
Tukey's biweight $\rho$-function.

An M-scale $\sigma_M(z_1,\dots,z_n)$ of a 
univariate sample $(z_1,\dots,z_n)$ is 
defined as the solution $\hat\sigma>0$ of 
the estimating equation
\begin{equation}\label{eq:Mscale_est}
\frac{1}{n}\sum_{i=1}^n
\chi\!\left(\frac{z_i}{\sigma}\right)=0.
\end{equation}
We employ a redescending M-scale whose 
associated $\chi$-function is given by
\[
\chi_{b,c}(x)=
\begin{cases}
x^2 - 1 + a, & |x|\leqslant b,\\[6pt]
\sqrt{A(k-1)}\;
\tanh\!\left(\dfrac{1}{2}\sqrt{\dfrac{(k-1)B^2}{A}}
\bigl(\log(c)-\log|x|\bigr)\right), & b \leqslant |x| \leqslant c,\\[10pt]
0, & |x|\geqslant c,
\end{cases}
\]
where the constant $a$ is chosen to ensure continuity at $|x|=b$, namely
$a
=
\sqrt{A(k-1)}\,
\tanh\!\left(
\frac{1}{2}\sqrt{\frac{(k-1)B^2}{A}}
\log\!\left(\frac{c}{b}\right)
\right)
-(b^2-1)$.
The parameters $A$, $B$, and $k$ are determined 
according to the optimal V-robust redescending 
construction described in Section~2.6 of 
\cite{hampel1986}. In the implementation of 
$\sigma_M$, we use $\chi_{b,c}$ with $b=1.5$ and 
$c=4$. The resulting redescending function is 
displayed in Figure~\ref{fig:rho_psi2}. The 
function $\chi$ is bounded and redescends to 
zero for $|x|>c$, so extreme observations 
receive zero weight. In the simulations, the 
reduced bias proved to be very helpful for
the resulting inference.

\begin{figure}[!ht]
\centering
\includegraphics[width=0.47\textwidth]
  {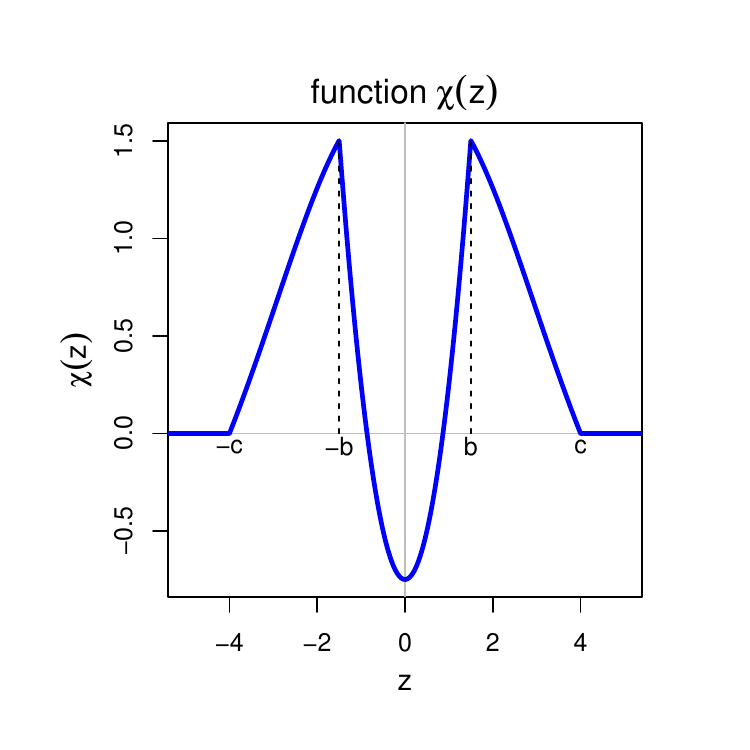}
\caption{The function $\chi_{b,c}$ with 
  $b=1.5$ and $c=4$.}
\label{fig:rho_psi2}
\end{figure}

\newpage
\section{\large More on cellMR regression}
\label{app:cellMR}

We first address the selection of the couple 
$(\rk,\lambda)$ by cross-validation, where 
the folds can contain outliers and NA's.
Section~\ref{sec:tuning} estimates the 
prediction error by formula~\eqref{eq:cv}:
\begin{equation*}
   \text{CV}(\rk,\lambda) = \frac{1}{q} \sum_{j=1}^{q}
   \frac{1}{K}\sum_{h=1}^K
   \mbox{WMSE}_{(h)j}\;,
\end{equation*}
where $\mbox{WMSE}_{(h)j}:=\sum_{i=1}^{n_h}
 w_{(h)ij}(y_{(h)ij}-\hy_{(h)ij})^2
 /\big(\sum_{i=1}^{n_h} w_{(h)ij}\big)$
is a weighted mean of the squared residuals in 
fold $h$. The weights in this formula are given by
 $$w_{(h)ij} = m_{(h)ij}^y 
  w^{\case,y}_{(h)i}
 (w^{\cell,y}_{(h)ij})^2\;.$$
Here $m_{(h)ij}^y$ is the missingness indicator 
of $y_{(h)ij}$\,, so missing responses are of
course excluded. We also downweight outlying
cells, by 
$$w^{\cell,y}_{(h)ij}=w^{\cell}
\left( \frac{y_{(h)ij}-\hy_{(h)ij}}{\hsigma_{1,j}^y}
\right)$$
where $w^{\cell}$ is the univariate function in the
bottom left panel of Figure~\ref{fig:rho_psi}, and
$\hsigma_{(h)1,j}^y: = 
\sigma_M(\{y_{(h)ij}-\hy_{(h)ij}\}_{i=1}^n)$.

The weight $w^{\case}_{(h)i}$ is given by 
$w^{\case,y}_{(h)i} = w^{\case}\left(d_{(h)i}/
\hsigma_{(h)2}^y\right)$ where $w^{\case}$
is a function of the same type,
\begin{equation*}
  d_{(h)i} := \frac{1}{\sum_{j=1}^{q}
  m_{(h)ij}^yw^{\cell,y}_{(h)ij}} \sum_{j=1}^{q}
  m_{(h)ij}^yw^{\cell,y}_{(h)ij}
  (y_{(h)ij}-\hy_{(h)ij})^2\,,
\end{equation*}
and $\hsigma_{(h)2}^y :=
\sigma_M(\{d_{(h)i}\}_{i=1}^n)$. 
We then select the couple ($\rk, \lambda)$  
that minimizes~\eqref{eq:cv}.\\

We now turn to the proofs of the influence functions
of cellMR. We consider the contamination 
model~\eqref{eq:cont_reg_mixed} with $H_C=\Delta_{\bc}$, 
where $\Delta_{\bc}$ denotes the distribution that 
assigns unit mass to a vector $\bc=(c_1,\ldots,c_d)^T$, 
thus
\begin{equation}
\label{eq:cont_reg_mixed2}
Z_{\eps}
=
A\odot Z + (\bone_d-A)\odot \bc,
\end{equation}
where $Z\sim H_0$ and $A=A^{\case}\odot A^{\cell}$. 
We denote the distribution of $Z_\eps$ as $H_\eps$\,, 
and the distribution of $A$ as $G_\eps$.
The casewise contamination component $A^{\case}$ has Bernoulli marginals with
$\Pr(A^{\case}_j=1)=1-\eps^{\case}$ for $j=1,\ldots,d$, and its components are fully dependent, in the sense that
$\Pr(A^{\case}_1=\cdots=A^{\case}_d)=1$.
The cellwise contamination component $A^{\cell}$
has components $A^{\cell}_j$, $j=1,\ldots,d$, that are Bernoulli random variables with success probabilities
$\Pr(A^{\cell}_j=1)=1-\eps^{\cell}_j$.

Under the \textit{fully dependent contamination
model} (FDCM) we have 
$A = A^{\case}$ with independent $Z$ and 
$A^{\case}$.  We denote $G_\eps$ as 
$G_\eps^D$ and the distribution $H_\eps$ of $Z_{\eps}$ as 
$H(G_\eps^D,\bc)$. 
In the \textit{fully independent
contamination model} (FICM), $G_\eps$ is denoted by $G_\eps^I$ and the corresponding contaminated distribution $H_\eps$ by $H(G_\eps^{I},\bc)$.
In this setting $A=A^{\cell}$, where the components $A^{\cell}_j$ are mutually independent and independent of $Z$.

Under both the dependent and independent 
contamination models,
the distribution of $A$ satisfies
$\Pr\left(A_j=1\right)=1-\eps$, $j=1, \dots, d$, 
and (ii) for any sequence 
$\left(j_1, j_2, \ldots, j_d\right)$ of zeroes 
and ones with $d-\ki$ ones and $\ki$ zeroes, 
$\Pr(A_1=j_1, \ldots, A_d=j_d)$ has the same 
value, denoted as $\delta_\ki(\eps)$. 
Obviously, $\eps=\eps^{\case}$ 
under the FDCM and $\eps=\eps^{\cell}$ under the FICM. 
Under FDCM we have that $\Pr(A_1=\cdots=A_d)=1$, and then 
$\delta_0(\eps)=(1-\eps),\; \delta_1(\eps)=\cdots=\delta_{d-1}(\eps)=0$, and 
$\delta_d(\eps)=\eps$.
In that situation, the 
distribution of $Z_{\eps}$ simplifies to 
$(1-\eps)H_0 + \eps\Delta_{\bc}$ where 
$\Delta_{\bc}$ is the distribution which puts 
all of its mass in the point $\bc$. The FICM instead 
assumes that $A_1, \dots, A_d$ are independent, 
hence
$$\delta_\ell(\eps)=\binom{d}{\ell}
   (1-\eps)^{d-\ell} \eps^\ell, 
  \quad \ell=0,1, \ldots, d\;. $$
  
We first present the proposition deriving the IFs of $\bmu_z(H)$, $\bV(H)$, $\bmu^{\bu}_{\text{\tiny MCD}}(H)$, $\bSigma^{\bu}_{\MCD}(H)$, and $\bSigma_{\zort}(H)$, which are the functionals corresponding to $\bhmu_z$ and $\bhV$ from~\eqref{eq:objP}, $\bhmu_{\text{\tiny MCD}}(\bhu_i)$ and $\bhSigma_{\text{\tiny MCD}}(\bhu_i)$ in \eqref{eq:SigmaMCD}, and $\btSigma_{\zort}$ in \eqref{eq:Sigmay}.
The proofs of the following propositions can be found in \cite{centofanti2025cellwise}.

When there are no missing values we can write 
the functional version $(\bV(H),\bmu_z(H))$ of 
the minimizer of \eqref{eq:objP} as
\begin{align} \label{eq:ux}
\hspace{-6mm}
  \left(\bV(H),\bmu_z(H)\right)&=\argmin_{\bV,\bmu_z}
  \E_H\left[\rho_2 \left(\frac{1}{\sigma_2(H)}
  \sqrt{\frac{1}{d}\sum_{j=1}^{d} \sigma_{1,j}^2(H)
  \rho_1\left(\frac{z_{j}-\mu_{z,j}-\bu^T\bv_j}
  {\sigma_{1,j}(H)}\right)}\right)\right]\nonumber\\
  \text{such that}\quad \bu&= \argmin_{\bu}
  \rho_2 
  \left(\frac{1}{\sigma_2(H)}\sqrt{\frac{1}{d}
  \sum_{j=1}^{d} \sigma_{1,j}^2(H)\rho_1 \left(
  \frac{z_{j}-\mu_{z,j}-\bu^T\bv_j}{\sigma_{1,j}(H)}
  \right)} \right)
\end{align}
where $\bu=\left(u_{1},\dots,u_{\rk}\right)^T$ 
and $\bz=\left(z_1,\dots,z_d\right)^T\sim H$.
Here $\sigma_{1,j}(H)$ and $\sigma_2(H)$ are 
estimating functionals corresponding to 
the initial scale estimates of 
$r_{j} :=z_{j}-\mu_{z,j}- \bu^T\bv_j$ and 
$\rt :=\sqrt{\sum_{j=1}^{d} \sigma_{1,j}^2(H)
\rho_1(r_{j}/\sigma_{1,j}(H))/d}$\,. 
This is subject to the first-order conditions 
in \cite{cellPCA} given by
\begin{align*} 
  \E_{H}\left[ \bW\bV\bu\bu^T\right]&=  
  \E_{H}\left[ \bW(\bz-\bmu_z) \bu^T\right],\\
  \E_{H}\left[\bW\bV\bu\right] &=
   \E_{H}\left[\bW 
   \left(\bz-\bmu_z\right)\right],\\
  \left(\bV^T\bW\bV\right)\bu &=
     \bV^T\bW(\bz-\bmu_z)\,.
\end{align*}
Here $\bW=\diag(\bw)$ for 
    $\bw=\left(w_1,\dots,w_d\right)^T$. 
The components of $\bw$ are $w_j=w^{\cell}_jw^{\case}$ 
with  cellwise weights 
$w_{j}^{\cell}=w^{\cell}\left( \frac{r_{j}}
{\sigma_{1,j}}\right)$ 
and casewise weights
$w^{\case}=w^{\case}\left(\frac{\rt}{\sigma_2}\right)$\,.  
We also denote 
$\btW=\diag(w^{\cell}_1,\dots,w^{\cell}_d)$.

\begin{proposition}\label{prop1}
The casewise and cellwise influence functions
of $\vect(\bV)$ and $\bmu_z$ are
\begin{equation*}
\IFu_{\case}\left(\bc,\vect(\bV),H_0\right)= -\bD_{1}\Big[\bS\IFu_{\case}\left(\bc,\bsigma,H_0\right) + 
   \bg\left(\Delta_{\bc},\vect\left(\bV_0\right),\bmu_{z,0},\bsigma_0\right)\Big],
\end{equation*}
\begin{equation*}
\IFu_{\case}\left(\bc,\bmu_z,H_0\right)= -\bD_{2}\Big[\bS\IFu_{\case}\left(\bc,\bsigma,H_0\right)+ 
   \bg\left(\Delta_{\bc},\vect\left(\bV_0\right),\bmu_{z,0},\bsigma_0\right)\Big],
\end{equation*}
and
\begin{equation}\label{eq:IFPFICM1}
\IFu_{\cell}\left(\bc,\vect(\bV),H_0\right)= -\bD_{1}\Big[\bS\IFu_{\cell}\left(\bc,\bsigma,H_0\right)+ 
   d\sum_{j=1}^d
  \bg\left(H\left(j,\bc\right),\vect\left(\bV_0\right),\bmu_{z,0},\bsigma_0\right)\Big],
\end{equation}
\begin{equation}\label{eq:IFPFICM2}
\IFu_{\cell}\left(\bc,\bmu_z,H_0\right)= -\bD_{2}\Big[\bS\IFu_{\cell}\left(\bc,\bsigma,H_0\right)+ 
   d\sum_{j=1}^d
  \bg\left(H\left(j,\bc\right),\vect\left(\bV_0\right),\bmu_{z,0},\bsigma_0\right)\Big],
\end{equation}
with $\bsigma(H)=\left(\sigma_{1,1}(H), \dots, 
\sigma_{1,d}(H), \sigma_{2}(H)\right)^T$, $\bmu_{z,0}:=\bmu_z(H_0)$,
$\bV_0:=\bV(H_0)$, $\bsigma_0:=\bsigma(H_0)$,
\begin{align}\label{eq:gmuL1}
  \bg_1(H,\bmu_z,\bV,\bsigma)&=
  \vect\left(\E_{H}\big[\bW\right(\bV\bu-\bz+\bmu_z\left)\bu^T\big]\right),\\\label{eq:gmuL2}
  \bg_2(H,\bmu_z,\bV,\bsigma)&=
 \E_{H}\big[\bW\left(\bV\bu-\bz+\bmu_z\right)\big],
\end{align}
and $\bg(H,\bmu_z,\bV,\bsigma)=\left(\bg_1(H,\bmu_z,\bV,
\bsigma)^T,\bg_2(H,\bmu_z,\bV,\bsigma)^T\right)^T$. 
The matrices $\bD_{1}$\,, $\bD_{2}$\,, and 
$\bS$ are described in \cite{centofanti2025cellwise}, and 
$\IFu_{\case}(\bc,\bsigma,H_0)$ and $\IFu_{\cell}(\bc,\bsigma,H_0)$ 
are the casewise and cellwise influence functions of 
$\bsigma$. 
\end{proposition}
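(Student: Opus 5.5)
We would obtain both influence functions by implicit differentiation of the population first-order conditions that define $(\bV(H),\bmu_z(H))$ in \eqref{eq:ux}. Stack the estimating equations as $\bg(H,\vect(\bV),\bmu_z,\bsigma)=\bzero$. Here $\bg_1$ and $\bg_2$ are the two expectations in \eqref{eq:gmuL1}--\eqref{eq:gmuL2}. The score $\bu=\bu(\bz;\bV,\bmu_z,\bsigma)$ is defined implicitly by the third condition $(\bV^T\bW\bV)\bu=\bV^T\bW(\bz-\bmu_z)$, and the weights $\bW$ depend on $\bz$, $\bV$, $\bmu_z$ and $\bsigma$ through $r_j/\sigma_{1,j}$ and $\rt/\sigma_2$. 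At $H_\eps$ the functionals satisfy $\bg(H_\eps,\vect(\bV(H_\eps)),\bmu_z(H_\eps),\bsigma(H_\eps))=\bzero$ identically in $\eps$. We would differentiate this identity at $\eps=0$, assuming the maps are smooth enough (the valid $\rho$-functions are differentiable) and that $\bV$ is identified by a normalization making the Jacobian invertible.

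\textbf{Casewise case.} The derivative splits into three terms.
\begin{enumerate}
\item The explicit dependence on the distribution. Under the FDCM, $H_\eps=(1-\eps)H_0+\eps\Delta_{\bc}$, and $\bg$ is linear in $H$, so $\partial_\eps\bg(H_\eps,\cdot)|_{0}=\bg(\Delta_{\bc},\cdot)-\bg(H_0,\cdot)=\bg(\Delta_{\bc},\vect(\bV_0),\bmu_{z,0},\bsigma_0)$, because $\bg(H_0,\cdot)=\bzero$ at the true values.
\item The dependence on $\bsigma$, giving $\bS\,\IFu_{\case}(\bc,\bsigma,H_0)$ with $\bS=\partial\bg/\partial\bsigma$ at $H_0$.
\item The Jacobian $\bJ=\partial\bg/\partial(\vect(\bV)^T,\bmu_z^T)$ at $H_0$, times the stacked IF of $(\vect(\bV),\bmu_z)$. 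Computing $\bJ$ requires the derivative of $\bu$ with respect to the parameters, which we get by differentiating the score equation implicitly.
\end{enumerate}
Solving gives $\bigl(\IFu(\vect\bV)^T,\IFu(\bmu_z)^T\bigr)^T=-\bJ^{-1}[\bS\IFu(\bsigma)+\bg(\Delta_{\bc},\dots)]$. We then define $\bD_1$ and $\bD_2$ as the corresponding row blocks of $\bJ^{-1}$.

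\textbf{Cellwise case.} Under the FICM, $H_\eps=\sum_{\ell}\delta_\ell(\eps)\,H_{(\ell)}$, where $H_{(\ell)}$ averages over patterns with $\ell$ contaminated coordinates. Since $\delta_0(\eps)=(1-\eps)^d$, $\delta_1(\eps)=d\eps(1-\eps)^{d-1}$ and $\delta_\ell(\eps)=O(\eps^2)$ for $\ell\ge2$, only the $\ell=0$ and $\ell=1$ terms survive at first order. The $\ell=1$ component is the uniform mixture $\frac1d\sum_j H(j,\bc)$. Differentiating the linear map $H\mapsto\bg(H,\cdot)$ therefore yields a sum over $j$ of $\bg(H(j,\bc),\cdot)$, minus $d\,\bg(H_0,\cdot)=\bzero$, with the multiplicative constant that appears in \eqref{eq:IFPFICM1}--\eqref{eq:IFPFICM2}. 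We would keep track of that constant consistently with the convention of \citet{alqallaf2009} for $\delta_1$. The remaining steps ($\bS$ term and $\bJ^{-1}$) are identical, with $\IFu_{\cell}(\bc,\bsigma,H_0)$ in place of the casewise IF of $\bsigma$.

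\textbf{Main obstacle.} We expect the hardest step to be the Jacobian $\bJ$ and its invertibility, since $\bu$ depends implicitly on $(\bV,\bmu_z)$ and $\bW$ depends on everything through the redescending weights. Our first attempt would be to use the envelope-type simplification that, at the solution, derivatives of $\bg$ through $\bu$ partly cancel because of the score equation. For invertibility we would impose an identifiability constraint on $\bV$ (orthonormal columns with a fixed rotation), restricting $\bJ$ to the tangent space, and assume nonsingularity there as a regularity condition. The explicit forms of $\bD_1$, $\bD_2$ and $\bS$ are then those of \cite{centofanti2025cellwise}.
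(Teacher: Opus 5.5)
Your overall route (differentiate $\bg\bigl(H_\eps,\vect(\bV(H_\eps)),\bmu_z(H_\eps),\bsigma(H_\eps)\bigr)=\bzero$ at $\eps=0$, use linearity of $\bg$ in $H$ and $\bg(H_0,\cdot)=\bzero$, with the score equation as a side constraint) is the one behind the stated formulas, and your casewise argument is fine. Two steps, however, do not go through as written.

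\textbf{The factor $d$ in the cellwise formulas.} Your own bookkeeping gives $\delta_1'(0)\cdot\frac1d\sum_{j}\bg(H(j,\bc),\cdot)=\sum_{j}\bg(H(j,\bc),\cdot)$, i.e.\ coefficient $1$. No consistent tracking of constants turns this into the $d\sum_j$ of the statement. So ``with the multiplicative constant that appears in the statement'' hides exactly the step where your derivation and the claim disagree.

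The $d$ comes from the paper's expansion $\E_{H_\eps}[g(X)]=\sum_{j=0}^d\delta_j(\eps)\sum_{I\in\ell_j}\E_{H(I,\bc)}[g(X)]$ with $\delta_1(\eps)=d(1-\eps)^{d-1}\eps$, written out in the cellBoot IF proof. That expansion gives each of the $d$ singleton patterns the total one-cell probability rather than $\eps(1-\eps)^{d-1}$. For $g\equiv 1$ it has derivative $d^2-d\neq 0$ at $\eps=0$, so it is not a mixture of probability measures. Under the FICM as actually defined (independent $A_j^{\cell}$), take the simplest case $\bg(H,\bmu)=\E_H[\bz-\bmu]$. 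It gives $\IFu_{\cell}(\bc,\bmu,H_0)=\bc-\bmu_0$, whereas the displayed formula would give $d(\bc-\bmu_0)$. So either adopt the paper's convention explicitly and say that it is the source of $d$, or carry the computation through and report coefficient $1$, flagging the discrepancy.

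\textbf{Invertibility of $\bJ$.} Orthonormal columns plus a fixed rotation remove the reparametrization $(\bV,\bu)\mapsto(\bV\bR,\bR^{-1}\bu)$. They do not remove the translation $(\bmu_z,\bu)\mapsto(\bmu_z+\bV\bs,\bu-\bs)$, $\bs\in\mathbb R^{k}$. Under it the residuals and $\bW$ are unchanged, so $\bg_2$ is unchanged, and $\bg_1$ changes by $-\vect(\bg_2\bs^T)$, which is zero at a solution. Hence $(\bzero,\bV_0\bs)$ lies in the kernel of your restricted Jacobian, and the nonsingularity you propose to assume is false.

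The codomain has the matching defect. The score equation gives $\bV^T\bW(\bV\bu-\bz+\bmu_z)=\bzero$ pointwise, so $\bV^T\E_H[\bW(\bV\bu-\bz+\bmu_z)\bu^T]$ and $\bV^T\bg_2$, which are $k^2+k$ components of $\bg$, vanish identically and must be discarded. You therefore need an extra location constraint (e.g.\ $\bV^T\bmu_z=\bzero$) and a map between the reduced spaces. Alternatively, as the paper does, interpret $(\bV(H),\bmu_z(H))$ as the population output of the fitting algorithm started at $(\bV_0,\bmu_{z,0})$, and take $\bD_1,\bD_2$ from the companion paper.

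The envelope shortcut does not help here either. The score equation only annihilates those $\bV^T$-projections. The terms of $\bJ$ involving $\partial\bu/\partial(\vect\bV,\bmu_z)$ do not cancel, so the implicit differentiation of the score equation cannot be avoided.
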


Note that the solutions $\bV(H)$ and $\bmu_z(H)$ of \eqref{eq:ux} are not unique. Therefore, the influence functions of $\vect(\bV)$ and $\bmu_z$ should be interpreted with respect to the functionals $\bV(H)$ and $\bmu_z(H)$ defined as the output of the algorithm used to minimize \eqref{eq:objP}, translated from the finite-sample setting to the population distribution setting and initialized at $\bV_0$ and $\bmu_{z,0}$.
Moreover, note that \eqref{eq:gmuL1} and \eqref{eq:gmuL2} 
express two of the first order conditions, but 
the other first-order condition 
$\left(\bV^T\bW\bV\right)\bu =
\bV^T\bW(\bz-\bmu_z)$ must hold as well, and acts 
as a constraint. Moreover, $\bg_1$ and $\bg_2$ 
depend on $\bsigma$ through $\bW$ and $\bu$\,. 
Also note that $H(j,\bc)$ in \eqref{eq:IFPFICM1} 
and \eqref{eq:IFPFICM2} is the distribution of 
$Z \sim H_0$ but with its \mbox{$j$-th} 
component fixed at the constant $c_j$\,. It is 
thus a degenerate distribution concentrated on 
the hyperplane $z_j = c_j$\,.

To derive the IF of the cellMR estimates, other important 
pieces are the IFs of the functionals  
$\bmu^{\bu}_{\MCD}(H)$ and $\bSigma^{\bu}_{\MCD}(H)$  corresponding to the
MCD estimator of location and covariance $\bmu_{\MCD}$  and $\bSigma_{\MCD}$ 
applied to $\bhu_1,\dots,\bhu_n$ under both FDCM and FICM.  
That is, $\bSigma^{\bu}_{\MCD}(H)$ corresponds 
to the MCD functional $\bSigma_{\MCD}(\cdot)$ 
of scatter with parameter $0.5<\alpha<1$, applied 
to the distribution $H^{\bu}(H, \bT(H))$ of 
$\bu$ when $\bz$ is distributed as $H$, for 
$\bT(H):=\left(\vect(\bV(H)), \bmu_z(H),
\bsigma(H)\right)^T$. Similarly,
$\bmu^{\bu}_{\MCD}(H)$ is the MCD functional 
$\bmu_{\MCD}(\cdot)$ of location with parameter 
$\alpha$ applied to $H^{\bu}(H, \bT(H))$.

\begin{proposition}\label{prop2}
The casewise and cellwise influence functions 
of $\vect(\bSigma^{\bu}_{\MCD})$  and  $\bmu^{\bu}_{\MCD}$ are
\begin{align*}
\IFu_{\case}(\bc,\vect(\bSigma^{\bu}_{\MCD}),H_0)= 
  & -\bD^{\bu}_{1}\Big[\bS^{\bu}  
  \IFu_{\case}\left(\bc,\bsigma,H_0\right) 
  + \bB^{\bu}_{1} \IFu_{\case}
    \left(\bc,\vect(\bV),H_0\right)\\
 &\;\; + \bB^{\bu}_{2} \IFu_{\case}\left(
   \bc,\bmu_z,H_0\right)+\bg^{\bu}(\Delta_{\bc}, 
   \bT_0,\bSigma^{\bu}_{\MCD,0},
   \bmu^{\bu}_{\MCD,0},q_{\alpha,0})\Big]
\end{align*}
\begin{align*}
\IFu_{\case}(\bc,\bmu^{\bu}_{\MCD},H_0)
  =& -\bD^{\bu}_{2}\Big[\bS^{\bu} 
     \IFu_{\case}\left(\bc,\bsigma,H_0\right)
     + \bB^{\bu}_{1} \IFu_{\case}
       \left(\bc,\vect(\bV),H_0\right)\\
  &+ \bB^{\bu}_{2} \IFu_{\case}\left(\bc,\bmu_z,H_0\right)
     +\bg^{\bu}(\Delta^{\bu}_{\bc}, 
     \bT_0,\bSigma^{\bu}_{\MCD,0},
     \bmu^{\bu}_{\MCD,0},q_{\alpha,0})\Big],
  \end{align*}
and
\begin{align*}
  \IFu_{\cell}(\bc,\vect(\bSigma^{\bu}_{\MCD}),H_0)= 
  & -\bD^{\bu}_{1}\Big[\bS^{\bu} 
    \IFu_{\cell}\left(\bc,\bsigma,H_0\right) +
    \bB^{\bu}_{1} \IFu_{\cell}
    \left(\bc,\vect(\bV),H_0\right)\\
  &\;\; +
    \bB^{\bu}_{2} \IFu_{\cell}\left(\bc,\bmu_z,H_0\right)\\
  &\;\; +
    d\sum_{j=1}^d\bg^{\bu}(H(j,\bc), 
    \bT_0,\bSigma^{\bu}_{\MCD,0},
    \bmu^{\bu}_{\MCD,0},q_{\alpha,0}) \Big]\,,
\end{align*}
\begin{align*}
\IFu_{\cell}(\bc,\bmu^{\bu}_{\MCD},H_0)
  =& -\bD^{\bu}_{2}\Big[\bS^{\bu} \IFu_{\cell}
      \left(\bc,\bsigma,H_0\right)
     +\bB^{\bu}_{1} \IFu_{\cell}
     \left(\bc,\vect(\bV),H_0\right)\\
  &+ \bB^{\bu}_{2} \IFu_{\cell}\left(\bc,\bmu_z,H_0\right)
     + d\sum_{j=1}^d\bg^{\bu}(H(j,\bc), 
     \bT_0,\bSigma^{\bu}_{\MCD,0},
     \bmu^{\bu}_{\MCD,0},q_{\alpha,0})\Big].
\end{align*}
with $\bsigma(H)=\left(\sigma_{1,1}(H), \dots, 
\sigma_{1,d}(H), \sigma_{2}(H)\right)^T$, 
$\bmu^{\bu}_{\MCD,0}:=\bmu^{\bu}_{\MCD}(H_0)$, 
$\bSigma^{\bu}_{\MCD,0}:=\bSigma^{\bu}_{\MCD}(H_0)$,
$\bT_0:=\left(\vect(\bV_0), \bmu_{z,0},\bsigma_0\right)^T$ and
\begin{equation}
    \label{eq:newgf2}
  \bg^{\bu}(H,\bT,\bSigma^{\bu}_{\MCD},\bmu^{\bu}_{\MCD},q_{\alpha})=\left[\begin{array}{cc}
       \bg^{\bu}_1(H,\bT,\bSigma^{\bu}_{\MCD},\bmu^{\bu}_{\MCD},q_{\alpha}) \\
       \bg^{\bu}_2(H,\bT,\bSigma^{\bu}_{\MCD},\bmu^{\bu}_{\MCD},q_{\alpha}) \\
       g^{\bu}_3(H,\bT,\bSigma^{\bu}_{\MCD},\bmu^{\bu}_{\MCD},q_{\alpha})
  \end{array}\right].
\end{equation}
The functions $\bg^{\bu}_1$, $\bg^{\bu}_2$, and 
$g^{\bu}_3$ are defined as
\begin{align*}\label{eq:newgf1}
    \bg^{\bu}_1(H,\bT,\bSigma^{\bu}_{\MCD},\bmu^{\bu}_{\MCD},q_{\alpha})
  &=\vect\left(\E_{H^{\bu}(H, \bT)}\right.\Big[\ind\left(
    {\bu \in A(\bSigma^{\bu}_{\MCD},\bmu^{\bu}_{\MCD},q_{\alpha})}\right)\\
  &\;\;\;\;\; \left.\left(c_{\alpha}\left(\bu-\bmu^{\bu}_{\MCD}\right)\left(\bu-\bmu^{\bu}_{\MCD}
  \right)^T-\bSigma^{\bu}_{\MCD}\right)\Big]\right)\,,
\end{align*}
\begin{equation*}\label{eq:newgf2.b}
    \bg^{\bu}_2(H,\bT,\bSigma^{\bu}_{\MCD},\bmu^{\bu}_{\MCD},q_{\alpha})\nonumber=\E_{H^{\bu}(H, \bT)}\Big[\ind\left(
    {\bu \in A(\bSigma^{\bu}_{\MCD},\bmu^{\bu}_{\MCD},q_{\alpha})}\right)\left(\bu-\bmu^{\bu}_{\MCD}\right)\Big],
\end{equation*}
and
\begin{equation*}\label{eq:newgf2.c}
    g^{\bu}_3(H,\bT,\bSigma^{\bu}_{\MCD},\bmu^{\bu}_{\MCD},q_{\alpha})=\E_{H^{\bu}(H, \bT)}\Big[\ind\left(
    {\bu \in A(\bSigma^{\bu}_{\MCD},\bmu^{\bu}_{\MCD},q_{\alpha})}\right)-(1-\alpha)\Big]
\end{equation*}
where 
\begin{align*}
   A(\bSigma^{\bu}_{\MCD},\bmu^{\bu}_{\MCD},q_{\alpha})=\lbrace \bu \in \mathbb{R}^k:\left(\bu-\bmu^{\bu}_{\MCD}\right)^T\left(\bSigma^{\bu}_{\MCD}\right)^{-1}\left(\bu-\bmu^{\bu}_{\MCD}\right)\leqslant q_{\alpha}\rbrace.
\end{align*}
The functional $q_{\alpha}(H)$  satisfies
\begin{equation*}
    \int_{\mathbb{R}^k}\ind\left(
    {\bu \in A( \bSigma^{\bu}_{\MCD}(H),\bmu^{\bu}_{\MCD}(H),q_{\alpha}(H))}\right)dH^{\bu}(H, \bT)(\bu)=1-\alpha,
\end{equation*}
in which $q_{\alpha,0}:=q_{\alpha}(H_0)$ and
$c_{\alpha}$ is chosen in such a way that 
consistency is obtained at a prespecified model. 
The matrices $\bB^{\bu}_{1}$, 
$\bB^{\bu}_{2}$,  $\bD^{\bu}_{1}$ and $\bD^{\bu}_{2}$ are 
defined in \cite{centofanti2025cellwise}.
\end{proposition}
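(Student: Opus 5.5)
The plan is to treat $(\vect(\bSigma^{\bu}_{\MCD}),\bmu^{\bu}_{\MCD},q_{\alpha})$ as the zero of a system of population estimating equations, and then differentiate that system along the contamination path. The MCD functional applied to the score distribution $H^{\bu}(H,\bT(H))$ is characterized by the fixed-point equations
\[
\bg^{\bu}\bigl(H,\bT(H),\bSigma^{\bu}_{\MCD}(H),\bmu^{\bu}_{\MCD}(H),q_{\alpha}(H)\bigr)=\bzero ,
\]
with $\bg^{\bu}=(\bg^{\bu}_1,\bg^{\bu}_2,g^{\bu}_3)$ as in~\eqref{eq:newgf2}. This is the usual Butler--Davies--Jhun characterization of the MCD: the covariance on the trimmed ellipsoid, the mean on the ellipsoid, and the mass condition defining $q_{\alpha}$.

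I will substitute $H=H_\eps$ and differentiate at $\eps=0$. The chain rule gives three kinds of terms.

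\emph{(i) The direct effect of contamination on the integrating measure.} Because $\bg^{\bu}$ is linear in $H$, this term is $\frac{\partial}{\partial\eps}\bg^{\bu}(H_\eps,\bT_0,\dots)|_{\eps=0}$. Under the FDCM we have $H_\eps=(1-\eps)H_0+\eps\Delta_{\bc}$, and the $H_0$ part vanishes since $\bg^{\bu}(H_0,\dots)=\bzero$. This leaves $\bg^{\bu}(\Delta_{\bc},\dots)$.

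Under the FICM I will write $H(G^I_\eps,\bc)=\sum_{\ell}\delta_\ell(\eps)H_\ell$, where $H_\ell$ mixes over which $\ell$ coordinates are replaced by the corresponding $c_j$. Only $\delta_0$ and $\delta_1$ have nonzero derivatives at $0$. Again the $H_0$ term drops out, and the one-cell term produces the sum over $j$ of $\bg^{\bu}(H(j,\bc),\dots)$. I will use the same normalization bookkeeping as in Proposition~\ref{prop1}, which yields the factor $d$.

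\emph{(ii) The indirect effect through $\bT(H)$.} Here I differentiate $\bg^{\bu}(H_0,\bT,\dots)$ with respect to $\bsigma$, $\vect(\bV)$ and $\bmu_z$ at $\bT_0$. These partial Jacobians define $\bS^{\bu}$, $\bB^{\bu}_1$ and $\bB^{\bu}_2$. They multiply the influence functions of $\bsigma$ and the influence functions from Proposition~\ref{prop1}.

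Computing these Jacobians requires the derivative of the score $\bu(\bz;\bT)$. The score is defined implicitly by the first-order condition $(\bV^T\bW\bV)\bu=\bV^T\bW(\bz-\bmu_z)$. I will apply the implicit function theorem to it, which is possible provided $\bV^T\bW\bV$ is nonsingular.

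\emph{(iii) The Jacobian with respect to the MCD parameters themselves.} This is the Jacobian of $\bg^{\bu}$ in $(\vect(\bSigma^{\bu}_{\MCD}),\bmu^{\bu}_{\MCD},q_\alpha)$. I will assume it is invertible and let $\bD^{\bu}_1$ and $\bD^{\bu}_2$ be the row blocks of minus its inverse that correspond to $\vect(\bSigma^{\bu}_{\MCD})$ and $\bmu^{\bu}_{\MCD}$. Solving the linearized equation
\[
\bzero=\text{(i)}+\text{(ii)}+\text{(iii)}\cdot\IFu
\]
for the IF then gives exactly the four displayed formulas.

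The main obstacle is justifying the differentiation in steps (ii) and (iii). The functions $\bg^{\bu}$ contain the indicator $\ind(\bu\in A(\cdot))$, so differentiability with respect to the parameters has to come from smoothness of the score distribution $H^{\bu}(H_0,\bT)$. I would first try the following.
\begin{itemize}
\item Assume $H^{\bu}(H_0,\bT)$ has a continuous density near the boundary of the ellipsoid, so that the derivatives become surface integrals over $\partial A$, as in Croux and Haesbroeck's IF of the MCD.
\item Show that $\bT\mapsto\bu(\bz;\bT)$ is $C^1$ for $H_0$-almost all $\bz$, using that $\rho_1$ and $\rho_2$ are valid, hence differentiable, $\rho$-functions.
\end{itemize}
Together these let dominated convergence move the derivative inside the expectation. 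Finally, I would justify the existence of the derivative of the functional itself by the implicit function theorem applied in $\eps$, which rests on the same invertibility assumptions.
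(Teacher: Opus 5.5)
Your overall route is the one behind this result (the paper defers the proof to the cited work and uses the same pattern in its own IF computations). You characterize the MCD functional on $H^{\bu}(H,\bT(H))$ as the zero of $\bg^{\bu}$ and differentiate along $H_\eps$. The derivative splits into three parts: the direct term; the chain-rule terms through $\bT(H)$, which give $\bS^{\bu},\bB^{\bu}_1,\bB^{\bu}_2$; and the inverse Jacobian in $(\vect(\bSigma^{\bu}_{\MCD}),\bmu^{\bu}_{\MCD},q_\alpha)$, which gives $\bD^{\bu}_1,\bD^{\bu}_2$.

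The casewise formulas follow as you say, up to a sign slip. Solving $\bzero=\text{(i)}+\text{(ii)}+\text{(iii)}\cdot\IFu$ gives $\IFu=-\text{(iii)}^{-1}[\text{(i)}+\text{(ii)}]$. To obtain the leading $-\bD^{\bu}$ of the statement, take $\bD^{\bu}_1,\bD^{\bu}_2$ as row blocks of the inverse itself, not of minus the inverse.

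The step that does not go through as written is the factor $d$ in the cellwise formulas.
\begin{itemize}
\item If $H(G^I_\eps,\bc)=\sum_\ell\delta_\ell(\eps)H_\ell$ is a genuine mixture of probability measures, then $\delta_1(\eps)=d\eps(1-\eps)^{d-1}$ and $H_1=\frac1d\sum_{j=1}^d H(j,\bc)$.
\item The one-cell contribution is then $\delta_1'(0)\cdot\frac1d\sum_j\bg^{\bu}(H(j,\bc),\ldots)=\sum_j\bg^{\bu}(H(j,\bc),\ldots)$, with no factor $d$. Equivalently, each specific one-cell pattern has probability $\eps(1-\eps)^{d-1}$, whose derivative at $0$ is $1$.
\item The paper's $d$ comes from its FICM bookkeeping. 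It defines $\delta_\ell(\eps)$ as the probability of one specific pattern with $\ell$ zeros, which is what \eqref{eq:meandecomp} needs. It then sets $\delta_\ell(\eps)=\binom{d}{\ell}(1-\eps)^{d-\ell}\eps^{\ell}$, which is the probability of having $\ell$ zeros at all.
\item With $g\equiv 1$, \eqref{eq:meandecomp} then gives a derivative of $\E_{H_\eps}[1]$ at $\eps=0$ equal to $d(d-1)\neq 0$.
\end{itemize}
So ``using the same bookkeeping as in Proposition~\ref{prop1}'' reproduces the displayed formula, but it is not a derivation from \eqref{eq:cellIF}, and your own mixture decomposition contradicts it. Under \eqref{eq:cellIF} as stated, the direct term is $\sum_j\bg^{\bu}(H(j,\bc),\ldots)$. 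The displayed cellwise IFs then equal $d$ times the derivative in \eqref{eq:cellIF}. This holds consistently provided the input IFs of $\bsigma$, $\vect(\bV)$, $\bmu_z$ use the same convention, since all the estimating functions vanish at $H_0$. You should state which of the two versions you are proving.

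Two smaller points concern your regularity plan.
\begin{itemize}
\item Because $\bW$ depends on $\bu$, the implicit function theorem for the score needs the Hessian of the $\bu$-objective in \eqref{eq:ux} to be nonsingular, not merely $\bV^T\bW\bV$.
\item Validity of $\rho_1,\rho_2$ does not make $\psi_1,\psi_2$ continuously differentiable; the tanh $\psi$ has kinks at $|z|=b$ and $|z|=c$. Hence $\bT\mapsto\bu(\bz;\bT)$ is $C^1$ only off the set where a standardized residual or $t/\sigma_2$ sits at a knot, and you need an almost-everywhere version of that step.
\end{itemize}
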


\begin{proposition}\label{prop3}
The casewise and cellwise influence functions 
of $\vect(\btSigma_{\zort})$ are
\begin{align*}
  \IFu_{\case}\left(\bc,\vect(\btSigma_{\zort}),H_0\right)
   =&\;\bD^{\zort}\Big[\bB^{\zort}_{1}\IFu_{\case}
      \left(\bc,\bmu_z,H_0\right)+\bB^{\zort}_{2}\IFu_{\case}
      \left(\bc,\vect(\bV),H_0\right)\\
   &+\bS^{\zort}\IFu_{\case}\left(\bc,\bsigma,H_0\right)
      +\bg^{\zort}(\Delta_{\bc},
       \bT_0,\bSigma_{\zort,0})\Big]
\end{align*}
and
\begin{align*}
  \IFu_{\cell}\left(\bc,\vect(\btSigma_{\zort}),H_0\right)
  =&\;\bD^{\zort}\Big[\bB^{\zort}_{1}\IFu_{\cell}
     \left(\bc,\bmu_z,H_0\right)+\bB^{\zort}_{2}
     \IFu_{\cell}\left(\bc,\vect(\bV),H_0\right)\\
   &+\bS^{\zort}\IFu_{\cell}\left(\bc,\bsigma,H_0\right)
    +d\sum_{j=1}^d \bg^{\zort}(H\left(j,\bc\right),
    \bT_0,\bSigma_{\zort,0})\Big],
\end{align*}
with  $ \bSigma_{\zort,0}= \btSigma_{\zort}(H_0)$,
$\bT_0=\left(\vect(\bV_0), \bmu_{z,0},\bsigma_0\right)^T$ and
\begin{multline*}
  \bg^{\zort}(H,\bT,\btSigma_{\zort}):=
  \vect\Big(\E_{H}\Big[b\btSigma_{\zort}
  - w^{\case}\btW(\bz - \bmu_z-\bV\bu)(\bz - \bmu_z-\bV\bu)^T
  \btW\Big]\Big),
\end{multline*}
where $\bT=\left(\vect(\bV), \bmu_z,\bsigma\right)^T$, 
$b=\sum_{j=1}^d\sum_{\ell=1}^d w^{\case}w_{j}^{\cell}w_{\ell}^{\cell}/d^2$\,, 
 $\bu$ depends on $\bz$ and $\bT$ through \eqref{eq:ux}.
The matrices $\bD^{\zort}$, $\bB_1^{\zort}$, 
$\bB_2^{\zort}$, and $\bS^{\zort}$  are 
defined in \cite{centofanti2025cellwise}. 
\end{proposition}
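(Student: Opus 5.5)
The plan is to treat $\btSigma_{\zort}(H)$ as the solution of an implicit estimating equation and to differentiate that equation along the contamination path. Taking the population version of~\eqref{eq:Sigmay}, the functional $\btSigma_{\zort}(H)$ satisfies
\begin{equation*}
  \bg^{\zort}\bigl(H,\bT(H),\btSigma_{\zort}(H)\bigr)=\bzero ,
\end{equation*}
where $\bT(H)=(\vect(\bV(H)),\bmu_z(H),\bsigma(H))^T$ and $\bu$ is defined through~\eqref{eq:ux}. At $H_0$ this gives $\bg^{\zort}(H_0,\bT_0,\bSigma_{\zort,0})=\bzero$. We substitute $H=H_\eps$, with $H_\eps=H(G_\eps^D,\bc)$ or $H(G_\eps^I,\bc)$, and differentiate in $\eps$ at $\eps=0$. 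By the chain rule the derivative splits into four contributions:
\begin{enumerate}[label=(\roman*), nosep]
\item the explicit dependence of the expectation on the distribution, with $\bT$ and $\btSigma_{\zort}$ held fixed;
\item the dependence through $\bmu_z$, with Jacobian $\bB^{\zort}_1$;
\item the dependence through $\vect(\bV)$, with Jacobian $\bB^{\zort}_2$;
\item the dependence through $\bsigma$, with Jacobian $\bS^{\zort}$.
\end{enumerate}
There is also the dependence on $\btSigma_{\zort}$ itself. Since $\bg^{\zort}$ is affine in $\btSigma_{\zort}$, its Jacobian is $\E_{H_0}[b]$ times the identity. We set $\bD^{\zort}:=-(\E_{H_0}[b]\,\bI)^{-1}$ and solve the linearised equation for $\IFu(\bc,\vect(\btSigma_{\zort}),H_0)$. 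The influence functions of $\bmu_z$, $\vect(\bV)$ and $\bsigma$ that appear are those of Proposition~\ref{prop1}.

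The explicit term (i) is where the casewise and cellwise cases differ.
\begin{itemize}[nosep]
\item Under the FDCM, $H_\eps=(1-\eps)H_0+\eps\Delta_{\bc}$. The expectation is linear in the distribution, so the derivative is $\bg^{\zort}(\Delta_{\bc},\cdot)-\bg^{\zort}(H_0,\cdot)$. The second term vanishes at $(\bT_0,\bSigma_{\zort,0})$, leaving $\bg^{\zort}(\Delta_{\bc},\bT_0,\bSigma_{\zort,0})$.
\item Under the FICM, we write $H(G_\eps^I,\bc)$ as the mixture over contamination patterns with weights $\delta_\ell(\eps)$. The derivatives at $0$ are $\delta_0'(0)=-d$, $\delta_1'(0)=d$ and $\delta_\ell'(0)=0$ for $\ell\geqslant 2$. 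Only the patterns with a single contaminated coordinate survive, and these are exactly the distributions $H(j,\bc)$. The $\delta_0$ term is again killed by $\bg^{\zort}(H_0,\cdot)=\bzero$. Following the paper's convention for $\delta_1$, this produces $d\sum_{j=1}^d\bg^{\zort}(H(j,\bc),\bT_0,\bSigma_{\zort,0})$, the same structure as in Proposition~\ref{prop1}.
\end{itemize}

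The step I expect to be the main obstacle is justifying the chain rule. The integrand depends on $\bT$ both directly and through the weights $w^{\case}$, $w^{\cell}_j$ and the score $\bu(\bz,\bT)$, which is only defined implicitly by the constraint $(\bV^T\bW\bV)\bu=\bV^T\bW(\bz-\bmu_z)$. The plan is as follows.
\begin{enumerate}[nosep]
\item Use that the tanh $\rho$-function is valid (Proposition~\ref{the_validtanh}), so $\psi$ is continuous and the weights are differentiable almost everywhere and bounded.
\item Apply the implicit function theorem to the score equation to get $\partial\bu/\partial\bT$ locally.
\item Apply dominated convergence, using boundedness of the weighted residuals $w^{\cell}_j r_j$, to interchange differentiation and expectation; this yields the matrices $\bB^{\zort}_1$, $\bB^{\zort}_2$ and $\bS^{\zort}$.
\end{enumerate}
I would also invoke the non-uniqueness convention stated after Proposition~\ref{prop1}, namely that $\bT(H)$ is the algorithmic functional started at $\bT_0$, so that $\bT(H_\eps)$ is differentiable in $\eps$. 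Finally, I would assume $\E_{H_0}[b]>0$ so that $\bD^{\zort}$ exists. This holds at any continuous $H_0$, because the weights equal $1$ on a set of positive probability.
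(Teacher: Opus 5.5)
Your proposal is correct and takes essentially the route behind the cited derivation. You differentiate the population estimating equation $\bg^{\zort}(H,\bT(H),\btSigma_{\zort}(H))=\bzero$ along the contamination path and solve for the influence function, with $\bD^{\zort}=-(\E_{H_0}[b]\,\bI)^{-1}$ coming from the affine dependence on $\btSigma_{\zort}$ and with $\bB^{\zort}_1$, $\bB^{\zort}_2$, $\bS^{\zort}$ the Jacobians in $\bmu_z$, $\vect(\bV)$, $\bsigma$.

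You were right to flag that the factor $d$ in front of $\sum_{j}$ rests on the paper's convention $\delta_1(\eps)=d(1-\eps)^{d-1}\eps$. If $\delta_1$ is read literally as the probability of one specific single-cell pattern under independent cells, then $\delta_1'(0)=1$ and that factor would disappear.
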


\vspace{3mm}

\begin{proof}[\textbf{Proof of Proposition \ref{IFBb}}]
Let us introduce the functionals $\bmu$ and $\bSigma$ corresponding to $\bhmu$ and $\bhSigma$ defined in \eqref{eq:cellCov}, which can be written by using \eqref{eq:SigmaMCD} as
\begin{equation}\label{eq:functionalmuSigma}
\bmu(H)=\bmu_z(H)+\bV(H)\bmu^{u}_{\text{\tiny MCD}}(H),\quad \bSigma(H)=\bV(H)
\bSigma^u_{\text{\tiny MCD}}(H)(\bV(H))^T+\btSigma_{\zort}(H).
\end{equation}
Define the  selection matrices $\bS_x\in\mathbb R^{p\times d}$ and
$\bS_y\in\mathbb R^{q\times d}$ such that
$\bmu_x(H)=\bS_x\bmu(H)$, $\bmu_y(H)=\bS_y\bmu(H)$ and
$\bSigma_x(H)=\bS_x\bSigma(H) \bS_x^T$, $\bSigma_{xy}(H)=\bS_x\bSigma(H) \bS_y^T$, where $ \bmu_x$, $ \bmu_y$,   $\bSigma_{x}$ and $\bSigma_{xy}$ represent the usual partitions of $\bmu$ and $\bSigma$. 
From the definition of  $\bhb$ and $\bhB$ in 
\eqref{eq:ridge} we have
\begin{equation}
\label{eq:blue2_IF}
\bB(H)=\bigl(\bS_x\bSigma(H) \bS_x^T+\lambda\bI_p\bigr)^{-1}(\bS_x\bSigma(H) \bS_y^T),
\qquad
\bb(H)=\bS_y\bmu(H)-(\bB(H)
)^T(\bS_x\bmu(H)).
\end{equation}

Let $\bA(H):=\bS_x\bSigma(H)\bS_x^T+\lambda\bI_p$ and
$\bC(H):=\bS_x\bSigma(H)\bS_y^T$, so that $\bB(H)=\bA(H)^{-1}\bC(H)$.
Using the identity
\begin{equation*}
\left.\frac{\partial}{\partial\eps}\bA(H_\eps)^{-1}\right|_{\eps=0}
=
-\bA_0^{-1}
\left.\frac{\partial}{\partial\eps}\bA(H_\eps)\right|_{\eps=0}
\bA_0^{-1},
\end{equation*}
we obtain
\begin{equation*}
\left.\frac{\partial}{\partial\eps}\bB(H_\eps)\right|_{\eps=0}
=
-\bA_0^{-1}
\left.\frac{\partial}{\partial\eps}\bA(H_\eps)\right|_{\eps=0}
\bA_0^{-1}\bC_0
+\bA_0^{-1}
\left.\frac{\partial}{\partial\eps}\bC(H_\eps)\right|_{\eps=0},
\end{equation*}
where $\bA_0:=\bA(H_0)$ and $\bC_0:=\bC(H_0)$.
Equivalently, using $\bB_0:=\bB(H_0)=\bA_0^{-1}\bC_0$,
\begin{equation*}
\left.\frac{\partial}{\partial\eps}\bB(H_\eps)\right|_{\eps=0}
=
-\bA_0^{-1}
\left.\frac{\partial}{\partial\eps}\bA(H_\eps)\right|_{\eps=0}
\bB_0
+\bA_0^{-1}
\left.\frac{\partial}{\partial\eps}\bC(H_\eps)\right|_{\eps=0}.
\end{equation*}
Since
\begin{equation*}
\left.\frac{\partial}{\partial\eps}\bA(H_\eps)\right|_{\eps=0}
=
\bS_x
\left.\frac{\partial}{\partial\eps}\bSigma(H_\eps)\right|_{\eps=0}
\bS_x^T,
\qquad
\left.\frac{\partial}{\partial\eps}\bC(H_\eps)\right|_{\eps=0}
=
\bS_x
\left.\frac{\partial}{\partial\eps}\bSigma(H_\eps)\right|_{\eps=0}
\bS_y^T,
\end{equation*}
it follows that
\begin{equation*}
\left.\frac{\partial}{\partial\eps}\bB(H_\eps)\right|_{\eps=0}
=
\bA_0^{-1}\bS_x
\left.\frac{\partial}{\partial\eps}\bSigma(H_\eps)\right|_{\eps=0}
\bigl(\bS_y^T-\bS_x^T\bB_0\bigr).
\end{equation*}
Finally, vectorizing and using $\vect(\bM\bX\bN)=(\bN^T\otimes\bM)\vect(\bX)$, we obtain
\begin{align*}
\left.\frac{\partial}{\partial\eps}\vect\bigl(\bB(H_\eps)\bigr)\right|_{\eps=0}
&=
\Bigl(
\bigl(\bS_y-\bB_0^T\bS_x\bigr)\otimes
((\bS_x\bSigma_0\bS_x^T+\lambda\bI_p)^{-1}\bS_x
)\Bigr)
\left.\frac{\partial}{\partial\eps}\vect\bigl(\bSigma(H_\eps)\bigr)\right|_{\eps=0}\\
&=\bR_{B}\left.\frac{\partial}{\partial\eps}\vect\bigl(\bSigma(H_\eps)\bigr)\right|_{\eps=0}
\end{align*}
with $\bSigma_0:=\bSigma(H_0)$, and $\bR_{B}:=
\bigl(\bS_y-\bB_0^T\bS_x\bigr)\otimes
((\bS_x\bSigma_0\bS_x^T+\lambda\bI_p)^{-1}\bS_x
)$.

Now consider $\bb(H)$, we have
\begin{align*}
\left.\frac{\partial}{\partial\eps}\bb(H_\eps)\right|_{\eps=0}
&=
\bS_y
\left.\frac{\partial}{\partial\eps}\bmu(H_\eps)\right|_{\eps=0}
-
\left.\frac{\partial}{\partial\eps}
\bigl(\bB(H_\eps)^T\bS_x\bmu(H_\eps)\bigr)
\right|_{\eps=0}.
\end{align*}
For the second term, applying the product rule yields
\begin{align*}
\left.\frac{\partial}{\partial\eps}
\bB(H_\eps)^T\bS_x\bmu(H_\eps)
\right|_{\eps=0}
&=
\bigl(\bmu_0^T\bS_x^T\otimes \bI_q\bigr)
\left.\frac{\partial}{\partial\eps}
\vect\bigl(\bB(H_\eps)^T\bigr)
\right|_{\eps=0} \\
&\quad
+
\bB_0^T\bS_x
\left.\frac{\partial}{\partial\eps}
\bmu(H_\eps)
\right|_{\eps=0},
\end{align*}
with $\bmu_0:=\bmu(H_0)$.
Define the permutation matrix satisfying $\vect(\bB(H_\eps)^T)=\bP_{B}\vect(\bB(H_\eps))$. Then
\begin{align*}
\left.\frac{\partial}{\partial\eps}
\bb(H_\eps)
\right|_{\eps=0}
&=
\Bigl(
\bS_y
-
\bB_0^T\bS_x
\Bigr)
\left.\frac{\partial}{\partial\eps}
\bmu(H_\eps)
\right|_{\eps=0} \\
&\quad
-
\bigl(\bmu_0^T\bS_x^T\otimes \bI_q\bigr)
\bP_B
\left.\frac{\partial}{\partial\eps}
\vect\bigl(\bB(H_\eps)\bigr)
\right|_{\eps=0}\\
&=\bR_{b,1}
\left.\frac{\partial}{\partial\eps}
\bmu(H_\eps)
\right|_{\eps=0} 
-
\bR_{b,2}
\left.\frac{\partial}{\partial\eps}
\vect\bigl(\bB(H_\eps)\bigr)
\right|_{\eps=0},
\end{align*}
where $\bR_{b,1}:=
\bS_y
-
\bB_0^T\bS_x$ and $\bR_{b,2}:=\bigl(\bmu_0^T\bS_x^T\otimes \bI_q\bigr)
\bP_B$.

By using \eqref{eq:functionalmuSigma},  we obtain 
\begin{align*}
    \left.\frac{\partial}{\partial \eps} \bSigma(H_{\eps})\right|_{\eps=0}
    &=\left.\frac{\partial}{\partial \eps}\bV(H_{\eps})\bSigma^{u}_{\MCD}(H_{\eps})\left(\bV(H_{\eps})\right)^T\right|_{\eps=0}+\left.\frac{\partial}{\partial \eps}\btSigma_{\red{\zort}}(H_{\eps})\right|_{\eps=0}\\
    &=\left.\frac{\partial}{\partial \eps}\bV(H_{\eps})\right|_{\eps=0}\bSigma^{u}_{\MCD,0}\bV_0^T+\bV_0\left.\frac{\partial}{\partial \eps}\bSigma^{u}_{\MCD}(H_{\eps})\right|_{\eps=0}\bV_0^T\\
    &\;\;\;\;\; +\bV_0\bSigma^{u}_{\MCD,0}\left.\frac{\partial}{\partial \eps}\left(\bV(H_{\eps})\right)^T\right|_{\eps=0}+\left.\frac{\partial}{\partial \eps}\btSigma_{\zort}(H_{\eps})\right|_{\eps=0}\;,
\end{align*}
where $\bV_{0}:=\bV(H_0)$, and 
$\bSigma^{u}_{\MCD,0}:=\bSigma^{u}_{\MCD}(H_0)$.
Applying $\vect$ to both sides yields 
\begin{align*}
    &\left.\frac{\partial}{\partial \eps} \vect\left(\bSigma(H_{\eps})\right)\right|_{\eps=0}\\
    &=\left(\bV_0\bSigma^{u}_{\MCD,0}\otimes\bI_d\right)\left.\frac{\partial}{\partial \eps}\vect\left(\bV(H_{\eps})\right)\right|_{\eps=0}\\
    &\;\;\;\;\; +\left(\bV_0\otimes\bV_0\right)\left.\frac{\partial}{\partial \eps}\vect\left(\bSigma^{u}_{\MCD}(H_{\eps})\right)\right|_{\eps=0}\\
    &\;\;\;\;\; +\left(\bI_d\otimes\bV_0\bSigma^{u}_{\MCD,0}\right)\bP_V\left.\frac{\partial}{\partial \eps}\vect\left(\bV(H_{\eps})\right)\right|_{\eps=0}
      +\left.\frac{\partial}{\partial \eps}\vect\left(\btSigma_{\zort}(H_{\eps})\right)\right|_{\eps=0}\\
        &=\bR_{\Sigma,1}\left.\frac{\partial}{\partial \eps}\vect\left(\bV(H_{\eps})\right)\right|_{\eps=0}
      +\bR_{\Sigma,2}\left.\frac{\partial}{\partial \eps}\vect\left(\bSigma^{u}_{\MCD}(H_{\eps})\right)\right|_{\eps=0}
      +\left.\frac{\partial}{\partial \eps}\vect\left(\btSigma_{\zort}(H_{\eps})\right)\right|_{\eps=0},
\end{align*}
where the matrix $\bP_{V}$ is a  permutation 
matrix such that  
$\vect\left(\left.\frac{\partial}{\partial \eps}
\left(\bV(H_{\eps})\right)^T\right|_{\eps=0}\right)=\bP_{V}\vect\left(\left.\frac{\partial}{\partial \eps}
\left(\bV(H_{\eps})\right)\right|_{\eps=0}\right)$.
The matrices $\bR_{\Sigma,1}$ and $\bR_{\Sigma,2}$ are
\begin{align*}
    \bR_{\Sigma,1}&=\left(\bV_0\bSigma^{u}_{\MCD,0}\otimes\bI_d\right)+\left(\bI_d\otimes\bV_0\bSigma^{u}_{\MCD,0}\right)\bP_V,\\
    \bR_{\Sigma,2}&=\bV_0\otimes\bV_0.
\end{align*}

Similarly, by \eqref{eq:functionalmuSigma}, we have
\begin{align*}
    \left.\frac{\partial}{\partial \eps}\bmu(H_{\eps})\right|_{\eps=0}
    &=
    \left.\frac{\partial}{\partial \eps}\bmu_z(H_{\eps})\right|_{\eps=0}
    +\left.\frac{\partial}{\partial \eps}\Bigl(\bV(H_{\eps})\bmu^{u}_{\text{\tiny MCD}}(H_{\eps})\Bigr)\right|_{\eps=0}\\
    &=
    \left.\frac{\partial}{\partial \eps}\bmu_z(H_{\eps})\right|_{\eps=0}
    +\left.\frac{\partial}{\partial \eps}\bV(H_{\eps})\right|_{\eps=0}\bmu^{u}_{\text{\tiny MCD},0}
    +\bV_0\left.\frac{\partial}{\partial \eps}\bmu^{u}_{\text{\tiny MCD}}(H_{\eps})\right|_{\eps=0}\\
    &=
    \left.\frac{\partial}{\partial \eps}\bmu_z(H_{\eps})\right|_{\eps=0}
    +\left(\left(\bmu^{u}_{\text{\tiny MCD},0}\right)^T\otimes \bI_d\right)
    \left.\frac{\partial}{\partial \eps}\vect\bigl(\bV(H_{\eps})\bigr)\right|_{\eps=0}
    +\bV_0\left.\frac{\partial}{\partial \eps}\bmu^{u}_{\text{\tiny MCD}}(H_{\eps})\right|_{\eps=0}\\
    &=
    \bR_{\mu}\left.\frac{\partial}{\partial \eps}\vect\bigl(\bV(H_{\eps})\bigr)\right|_{\eps=0}
    +\bV_0\left.\frac{\partial}{\partial \eps}\bmu^{u}_{\text{\tiny MCD}}(H_{\eps})\right|_{\eps=0}
    +\left.\frac{\partial}{\partial \eps}\bmu_z(H_{\eps})\right|_{\eps=0},
\end{align*}
where $\bmu^{u}_{\text{\tiny MCD},0}=\bmu^{u}_{\text{\tiny MCD}}(H_0)$
and $\bR_{\mu}=\left(\left(\bmu^{u}_{\text{\tiny MCD},0}\right)^T\otimes \bI_d\right)$.

So under the FDCM, we have
\begin{align*}
\left.\frac{\partial}{\partial\eps}
\vect\bigl(\bB(H(G_\eps^D,\bc))\bigr)
\right|_{\eps=0}
&=
\bR_B\,
\left.\frac{\partial}{\partial\eps}
\vect\bigl(\bSigma(H(G_\eps^D,\bc))\bigr)
\right|_{\eps=0} \nonumber\\
&=
\bR_B\Bigg[
\bR_{\Sigma,1}
\left.\frac{\partial}{\partial \eps}
\vect\bigl(\bV(H(G_\eps^D,\bc))\bigr)
\right|_{\eps=0} \nonumber\\
&\hspace{2.2em}
+\bR_{\Sigma,2}
\left.\frac{\partial}{\partial \eps}
\vect\bigl(\bSigma^{u}_{\MCD}(H(G_\eps^D,\bc))\bigr)
\right|_{\eps=0} \nonumber\\
&\hspace{2.2em}
+\left.\frac{\partial}{\partial \eps}
\vect\bigl(\btSigma_{\zort}(H(G_\eps^D,\bc))\bigr)
\right|_{\eps=0}
\Bigg] \label{eq:IF_B_intermediate}\\
&=
\bR_B\Big[
\bR_{\Sigma,1}\,
\IFu_{\case}\!\left(\bc,\vect(\bV),H_0\right)\\&\hspace{2.2em}
+\bR_{\Sigma,2}\,
\IFu_{\case}\!\left(\bc,\vect(\bSigma^{u}_{\MCD}),H_0\right)
+\IFu_{\case}\!\left(\bc,\vect(\btSigma_{\zort}),H_0\right)
\Big].
\end{align*}
Similarly
\begin{align*}
\left.\frac{\partial}{\partial\eps}
\bb\bigl(H(G_\eps^D,\bc)\bigr)
\right|_{\eps=0}
&=
\bR_{b,1}
\left.\frac{\partial}{\partial\eps}
\bmu\bigl(H(G_\eps^D,\bc)\bigr)
\right|_{\eps=0} 
-
\bR_{b,2}
\left.\frac{\partial}{\partial\eps}
\vect\bigl(\bB(H(G_\eps^D,\bc))\bigr)
\right|_{\eps=0}\\
&=
\bR_{b,1}\Bigg[
\bR_{\mu}
\left.\frac{\partial}{\partial \eps}
\vect\bigl(\bV(H(G_\eps^D,\bc))\bigr)
\right|_{\eps=0}
+\bV_0
\left.\frac{\partial}{\partial \eps}
\bmu^{u}_{\text{\tiny MCD}}\bigl(H(G_\eps^D,\bc)\bigr)
\right|_{\eps=0}\\
&\hspace{2.2em}
+\left.\frac{\partial}{\partial \eps}
\bmu_z\bigl(H(G_\eps^D,\bc)\bigr)
\right|_{\eps=0}
\Bigg]
-
\bR_{b,2}
\left.\frac{\partial}{\partial\eps}
\vect\bigl(\bB(H(G_\eps^D,\bc))\bigr)
\right|_{\eps=0}\\
&=
\bR_{b,1}\Big[
\bR_{\mu}\,
\IFu_{\case}\!\left(\bc,\vect(\bV),H_0\right)
+\bV_0\,
\IFu_{\case}\!\left(\bc,\bmu^{u}_{\text{\tiny MCD}},H_0\right)
\\
&\hspace{2.2em}+\IFu_{\case}\!\left(\bc,\bmu_z,H_0\right)
\Big]
-\bR_{b,2}\,
\IFu_{\case}\!\left(\bc,\vect(\bB),H_0\right).
\end{align*}
Using similar arguments, we obtain for the FICM
\begin{align*}
\left.\frac{\partial}{\partial\eps}
\vect\bigl(\bB(H(G_\eps^I,\bc))\bigr)
\right|_{\eps=0}&=
\bR_B\Big[
\bR_{\Sigma,1}\,
\IFu_{\cell}\!\left(\bc,\vect(\bV),H_0\right)\\&\hspace{2.2em}
+\bR_{\Sigma,2}\,
\IFu_{\cell}\!\left(\bc,\vect(\bSigma^{u}_{\MCD}),H_0\right)
+\IFu_{\cell}\!\left(\bc,\vect(\btSigma_{\zort}),H_0\right)
\Big],
\end{align*}
and
\begin{align*}
\left.\frac{\partial}{\partial\eps}
\bb\bigl(H(G_\eps^I,\bc)\bigr)
\right|_{\eps=0}
&=
\bR_{b,1}\Big[
\bR_{\mu}\,
\IFu_{\cell}\!\left(\bc,\vect(\bV),H_0\right)
+\bV_0\,
\IFu_{\cell}\!\left(\bc,\bmu^{u}_{\text{\tiny MCD}},H_0\right)
\\
&\hspace{2.2em}+\IFu_{\cell}\!\left(\bc,\bmu_z,H_0\right)
\Big]
-\bR_{b,2}\,
\IFu_{\cell}\!\left(\bc,\vect(\bB),H_0\right).
\end{align*}
This completes the proof.
\end{proof}

\newpage
\section{\large More on the construction of FastCellCov}
\label{app:FastCellCov}

Section~\ref{sec:FastCellCov} introduced the
FastCellCov auxiliary estimator. It 
starts by applying cellCov to the 
actual sample $\btz_1, \dots, \btz_n$. 
Now consider another sample
$\btz_1^*, \dots, \btz_n^*$\,, 
that may be a bootstrap sample or a simulated 
sample of size $n$ generated from $F_{\btheta}$.
We then standardize it by computing
$\bz_i^*=\bhD^{-1}\btz^*_i$ where 
$\bhD$ is the diagonal matrix of scale 
estimators of the original sample, as in 
Section~\ref{sec:estimator}.

To compute fitted values $\bhz_i^{\,*}$ from 
the $\bz_i^*$\,, we first need to 
construct an outlier-free version 
$\bz_{i,\mathrm{imp}}^*$ of $\bz_i^*$.
We first standardize each $\bz_{i}^*$ to 
\begin{equation}
  \bz_{s,i}^* = \bhS^{-1}\bigl(\bz_{i}^* - \bhmu_z),
  \label{eq:standardization}
\end{equation}
where $\bhS := \diag(s_1,\ldots,s_d)$ with
$s_j := \sqrt{(\bhD^{-1}\bhSigma\bhD^{-1})_{jj}}$\,.
Then we compute the filtering weights vector 
$\bw^{f}_i=(w^{f}_{i1},\dots,w^{f}_{id})^T$, where 
$w^{f}_{ij}=w^{\cell}(z_{s,ij}^*)$.
Next we compute the predicted vectors 
$\bhz_{s,i}^{\,*} = (\hz_{s,i1}^{\,*}, \dots, 
\hz_{s,id}^{\,*})^T$ as
\begin{equation}
  \hz_{s,ij}^{\,*} \;=\; \frac{1}
  {\sum_{h \in H_j} | \mathrm{corr}_{jh} |\,
  w^{f}_{ih}}\;\, 
  \sum_{h \in H_j} |\mathrm{corr}_{jh}|
  \, w^{f}_{ih}\, b_{jh}\, z_{s,ih}^*,
  \label{eq:predicted}
\end{equation}
which is a weighted mean of the terms 
$b_{jh} z_{s,ih}^*$. We obtain the $b_{jh}$ 
from the Detecting Deviating Cells (DDC) method, 
used in the initialization of cellCov. The
$b_{jh}$ is the DDC-estimated slope from a robust 
no-intercept regression of variable $j$ on 
variable $h$. We also use the DDC-estimated 
absolute correlation $\mathrm{corr}_{jh}$ between 
variables $j$ and $h$, combined with 
$\omega^{f}_{ih}$. The index set 
$H_j$ contains all variables $h$ for which 
$|\mathrm{corr}_{jh}| \geqslant 0.5$. 

Note that the prediction \eqref{eq:predicted} 
typically shrinks the scale of the entries. 
To correct for this, we compute the slope
$a^{shr}_j$ of a robust no-intercept regression 
of the observed $z_{s,ij}^*$ on the predicted 
values $\hz_{s,ij}^{\,*}$\,, and then rescale 
the predictions by setting all $\hz_{s,ij}^{\,*}
\leftarrow a^{shr}_j \hz_{s,ij}^{\,*}$\,. 

We then compute the standardized cell residuals
\begin{equation}\label{eq:residuals}
  r_{ij}^* = \frac{z_{s,ij}^* - 
  \hz_{s,ij}^{\,*}}{s_{r,j}}\;,
\end{equation}
where $s_{r,j}=\sigma_M(\{z_{ij} - 
\hz_{ij}\}_{i=1}^n)$ was computed 
on the observed sample.
Next we compute the residual weights vector 
$\bw^r_i=(w^{r}_{i1},\dots,w^r_{id})^T$, where
$w^r_{ij}=w^{\cell}(r_{ij}^*)$. Finally we set 
\begin{equation}
 \bz_{i,\imp}^*=  (\bw^f_i\odot \bw^r_i 
 \odot \bmed_i^*)\odot\bz_{i}^*+
 (\bone_d-\bw^f_i\odot \bw^r_i \odot \bmed_i^*)
 \odot\btz_{s,i}^*\;,
\end{equation}
with $\bmed_i^*$ the  missing indicator vector
whose entries are 0 for missing and 1 otherwise, 
and $\btz_{s,i}^*=\bhS\bhz_{s,i}^{\,*}+\bhmu_z$.
Then we compute $\bhz_i^{\,*} := \bhmu_z +
\bhV\bhV^T(\bz_{i,\imp}^*-\bhmu_z)$.

Next we compute the matrix
$\btW^*=\bW^{\cell}_* \odot \bM_*$, where the 
$n \times d$ matrix $\bW^{\cell}_*$ contains the 
weights $w_{*,ij}^{\cell}=w^{\cell}\left( 
\frac{\hr_{ij}^{\,*}}{\hsigma_{1,j}^*}\right)$ 
with $\hr_{ij}^{\,*}=z_{ij}^*-\hz_{ij}^{\,*}$ and 
$\hsigma_{1,j}= \sigma_M(\{\hr_{ij}\}_{i=1}^n)$ 
was already computed on the observed sample.
The $n \times d$ matrix $\bM_*$ contains the 
missingness indicators of 
$\bz_{1}^*,\dots,\bz_{n}^*$. Next we compute
the new center
\begin{equation} \label{eq:btmuF}
  \btmu_{F} := (\bC^{\bmu})^{-1}\odot\sum_{i=1}^n
  w_{*,i}^{\case}w_{*,i}^{\mathrm{sub} }\btW_i^*\bz_i^*\,,
\end{equation}
where $\btW_i^*$ is a diagonal matrix whose diagonal 
is the $i$-th row of $\btW^*$ and 
$\bC^{\bmu}=\diag(c^{\bmu}_{1},\dots,c^{\bmu}_{d})$ 
with $c^{\bmu}_{j}=\sum_{i=1}^n w_{*,i}^{\case}
w_{*,i}^{\mathrm{sub} } w_{*,ij}^{\cell}m_{ij}^*$.
We also compute the new covariance
\begin{equation} \label{eq:btSigmaF}
   \btSigma_{F} :=
   (\bC^{\bSigma})^{-1}\odot\sum_{i=1}^n
   w_{*,i}^{\case}w_{*,i}^{\mathrm{sub} }\btW_i^*
   (\bz_i^* - \bhmu_z)
   (\bz_i^* - \bhmu_z)^T\btW_i^*,
\end{equation}
where $\bC^{\bSigma}=\{c^{\bSigma}_{j\ell}\}$ with 
$c^{\bSigma}_{j\ell}=\sum_{i=1}^n m_{ij}^
*m_{i\ell}^*w_{*,i}^{\case}w_{*,i}^{\mathrm{sub}}
w_{*,ij}^{\cell} w_{*,i\ell}^{\cell}$\,.
The weights $w_{*,i}^{\case}$ 
in~\eqref{eq:btmuF} and~\eqref{eq:btSigmaF}
are given by 
$w_{*,i}^{\case} = w^{\case}\left(\frac{
\hr_{*,i}^{\,\imp}}{\hsigma_2}\right)$
where $\hr_{*,i}^{\;\imp}$ is given by
\begin{equation}
 \hr_{*,i}^{\;\imp} := \frac{1}
 {\sum_{j=1}^{d}m_{*,ij}w_{*,ij}^{\cell}} \;
 \sum_{j=1}^{d}m_{*,ij}w_{*,ij}^{\cell}
   (\hr_{ij}^{\,*})^2
\end{equation}
and $\hsigma_2 = \sigma_M(\hr_i^{\;\imp})$ 
was computed on the observed sample.

As $w_{*,i}^{\mathrm{case}}$ and 
$w_{*,ij}^{\mathrm{cell}}$ measure the outlyingness 
of $\bz_i^*$ in the principal subspace formed
by $\bhmu_z$ and $\bhV$, we included 
weights $w_{*,i}^{\mathrm{sub}}$
in~\eqref{eq:btmuF} and~\eqref{eq:btSigmaF} to 
limit the influence of potentially outlying fitted 
values $\bhz_i^{\;*}$. Let $d_{i,\mathrm{mah}}^2 :=
(\bhz_i^{\,*}-\bhmu_{\bhz})^T \bhSigma_{\bhz}^{-1}
(\bhz_i^{\,*}-\bhmu_{\bhz})$ denote the robust squared 
Mahalanobis distance of $\bhz_i^{\,*}$ from the robust
location and scatter estimates $\bhmu_{\bhz}$ and 
$\bhSigma_{\bhz}$\,, obtained by applying the MCD 
estimator to the fitted values computed on the
observed sample, as in~\eqref{eq:SigmaMCD}. 
The subspace weights are defined as
$w_{*,i}^{\mathrm{sub}} =
\frac{\psi^s\!\left(d^2_{i,\mathrm{mah}}\right)}
     {d^2_{i,\mathrm{mah}}},$
where $\psi^s$ is the derivative of the hyperbolic 
tangent $\rho$-function defined in Supplementary 
Material~\ref{app:adddet}.

The final FastCellCov estimates $\bhmu_{F}$ and 
$\bhSigma_{F}$ of $\bmu$ and $\bSigma$ are given by
\begin{equation}
  \bhmu_{F} := \bhD \btmu_{F}\qquad \mbox{and} 
  \qquad  \bhSigma_{F} := \bhD \btSigma_{F}\bhD.
\end{equation}

\newpage
\section{\large Proofs of asymptotic exactness of cellBoot}
\label{app:proofs}
Throughout the paper, for sequences of random variables $(X_n)$ and positive
deterministic sequences $(a_n)$, we use the following asymptotic notation:
\begin{itemize}
\item $X_n = o_p(a_n)$ if $X_n/a_n \rightarrow_p 0$;
\item $X_n = O_p(a_n)$ if for every $\eps>0$ there exists an $M<\infty$ such that
$\Pr(|X_n|/a_n > M) < \eps$ for all sufficiently large $n$.
\item $X_n = o(a_n)$ if $X_n/a_n \rightarrow 0$.
\end{itemize}
We write $X_n \rightarrow_p X$ for convergence in probability.
For any event $A$, we denote by $I_A$ its indicator function, that is,
\[
I_A
:=
\begin{cases}
1, & \text{if } A \text{ occurs},\\
0, & \text{otherwise}.
\end{cases}
\]
In particular, for any integrable random vector $X$ and any event $A$,
\[
\E[X]
=
\E[X I_A]
+
\E[X I_{A^c}]
=
\Pr(A)\,\E[X\mid A]
+
\Pr(A^c)\,\E[X\mid A^c],
\]
where $A^c$ denotes the complement of the event $A$.
The bootstrap probability measure ${\Pr}^*$ denotes probability computed
under the bootstrap distribution, conditional on the observed data
$\bx_1,\ldots,\bx_n$. Correspondingly, $\E^*(\cdot)$ and $\Cov^*(\cdot)$ denote
expectation and covariance with respect to ${\Pr}^*$.

The admissible parameter space is defined as
\begin{equation*}
\bTheta =
\Bigl\{\,(\bmu^T,\vecth_s(\bSigma)^T)^T 
\in \mathbb R^{d + d(d+1)/2} :
\bSigma \in \mathbb S^d,\ 
\|\bmu\|\leqslant M,\ 
c \leqslant \lambda_{\min}(\bSigma)\leqslant 
\lambda_{\max}(\bSigma)\leqslant C \Bigr\},
\end{equation*}
where $M<\infty$ and $0< c \leqslant C<\infty$.
Here, $\mathbb S^d$ denotes the space of symmetric $d\times d$ matrices.
The operator $\vecth_s(\cdot)$ denotes the \emph{scaled half--vectorization}
defined as follows: for $\bSigma\in\mathbb S^d$,
$\vecth_s(\bSigma)$ is obtained by stacking the lower triangular elements
(including the diagonal), and multiplying each off--diagonal element by $\sqrt{2}$.
With this convention, the Euclidean norm of $\vecth_s(\bSigma)$ coincides
with the Frobenius norm of $\bSigma$, consequently, for any $(\bmu,\bSigma)\in\mathbb R^d\times\mathbb S^d$,
\[
\|(\bmu^T,\vecth_s(\bSigma)^T)^T\|^2
=
\|\bmu\|^2+\|\vect(\bSigma)\|^2,
\]
so that the standard Euclidean geometry on
$\mathbb R^{d+d(d+1)/2}$ corresponds exactly to the natural
Frobenius geometry on the covariance matrix component.
Hence, $\bTheta$ is a subset of $\mathbb R^{d + d(d+1)/2}$ whose last $d(d+1)/2$ components, 
when reshaped and rescaled into a  symmetric $d\times d$ matrix, form a  matrix 
with eigenvalues bounded between $c$ and $C$.
The interior of $A$ is defined as
\[
\interior(A)
:=
\Bigl\{x \in A :
\exists\, \delta > 0 \text{ such that}\ 
x+h\in A
\ \text{for all }h\in\mathbb R^{m}
\ \text{with }\|h\|<\delta
\Bigr\}.
\]

Let us consider $\bt=(\ba,\vecth(\bL))^T\in
\mathbb R^{d + d(d+1)/2}$, 
where $\bL=\bQ\bLambda\bQ^T$, with $\bQ$ the $d\times d$ orthogonal matrix whose columns are the eigenvectors of $\bL$, and $\bLambda=\mathrm{diag}(\lambda_1,\ldots,\lambda_d)$ the diagonal matrix containing the corresponding eigenvalues.
The operator $\Pi_{\bTheta}$ projects $\bt$ onto $\bTheta$ and is defined as
\[
\Pi_{\bTheta}(\bt)
=
(\tilde\bmu^T,\vecth_s(\tilde\bSigma)^T)^T,
\]
where
\begin{equation*}
\tilde\bmu =
\begin{cases}
\ba, & \text{if } \|\ba\|\leqslant M,\\[3pt]
\dfrac{M}{\|\ba\|}\,\ba, & \text{if } \|\ba\|>M,
\end{cases}
\qquad
\tilde\bSigma = \bQ\bLambda_c\bQ^T,
\end{equation*}
and $\bLambda_c=\mathrm{diag}(\tilde\lambda_1,\dots,\tilde\lambda_d)$ with
\[
\tilde\lambda_j=\min\{\max\{\lambda_j,c\},C\},
\qquad j=1,\ldots,d.
\]

A bootstrap sample is obtained by resampling from the empirical
distribution $\hat F$, and the corresponding bootstrap version of
$\hpi_n(\boheta_n)$ is denoted by $\hpi^*_n(\boheta_n)$. Then, the bootstrap estimator $\bhtheta_n^*$ is 
\begin{equation*}
\bhtheta_n^*
=
\argzero_{\btheta\in\bTheta}
  \bigl\{
    \hpi^*_n(\boheta_n)-\pibar(\btheta,\boheta_n,n)
  \bigr\}.
\end{equation*}

Let $\bx_{1,\btheta},\ldots,\bx_{n,\btheta}$ be a random sample generated from the distribution
$F_{\btheta}$, with parameter $\btheta=(\bmu^T,\vecth_s(\bSigma)^T)^T$.
We rewrite the FastCellCov estimator of the location as
\begin{equation*}
\bhmu_{F,n}(\btheta,\boeta)
=
\Big( \bC^{\bmu}(\btheta,\boeta) \Big)^{-1}
\odot
\sum_{i=1}^n
   w^{\mathrm{case}}(\bx_{i,\btheta},\boeta)\,
   w^{\mathrm{sub}}(\bx_{i,\btheta},\boeta)\,
   \btW(\bx_{i,\btheta},\boeta)\bx_{i,\btheta},
\end{equation*}
where $\btW(\bx_{i,\btheta},\boeta)=\diag(\tilde w^{\mathrm{cell}}(x_{i1,\btheta},\boeta),\dots,\tilde w^{\mathrm{cell}}(x_{id,\btheta},\boeta))$, and $\bC^{\bmu}(\btheta,\boeta)$ is the diagonal matrix with entries, for $j=1,\dots,d$,
\[
\big(\bC^{\bmu}(\btheta,\boeta)\big)_{jj}
:=
\max\!\left\{
\sum_{i=1}^n
w^{\mathrm{case}}(\bx_{i,\btheta},\boeta)\,
w^{\mathrm{sub}}(\bx_{i,\btheta},\boeta)\,
w^{\mathrm{cell}}(x_{ij,\btheta},\boeta)
,
\, n\tilde\delta
\right\}.
\]
Note that the diagonal elements of the summed denominator matrix are clipped
at $n\tilde\delta$, with $\tilde\delta>0$, making explicit the regularization step that is implicitly applied in the FastCellCov estimator. This prevents division by arbitrarily small values and ensures numerical stability.
We make explicit the dependence on
$\boeta\in\bH\subseteq\mathbb R^{r}$, namely the tuning parameters that determine
the weights
$w^{\mathrm{case}}(\bx_{i,\btheta},\boeta)$,
$w^{\mathrm{sub}}(\bx_{i,\btheta},\boeta)$,
and the diagonal matrices
$\btW(\bx_{i,\btheta},\boeta)$ associated with each observation
$\bx_{i,\btheta}$.
Specifically,
\begin{align*}
\boeta=
\left(
\diag(\bhD)^T,
\bhmu_z^T,\right.&
\vecth(\bhV\bhV^T)^T,
\diag(\bhS)^T,
\vecth_{-d}(\bB^{slope})^T,\\&
\vecth_{-d}(\bC^{corr})^T,
(\ba^{shr})^T,
\bs_{r}^T,
(\bhsigma^*)^T,
\bhmu_{\bhz}^T,
\vecth_{-d}(\bhSigma_{\bhz})^T
\left.\right)^T,
\end{align*}
where $\bB^{slope}=\{b_{jh}\}$, $\bC^{corr}=\{corr_{jh}\}$,
$\ba^{shr}=(a^{shr}_1,\dots,a^{shr}_d)^T$,
$\bs_{r}=(s_{r,1},\dots,s_{r,d})^T$, and
$\bhsigma^*=(\hsigma^*_{1,1},\dots,\hsigma^*_{1,d},\hsigma^*_2)^T$.
Here, $\vecth(\cdot)$ denotes the half--vectorization operator, mapping a
symmetric matrix to the vector obtained by stacking its lower triangular
elements (including the diagonal), while $\vecth_{-d}(\cdot)$ denotes the
half--vectorization without the diagonal, obtained by stacking only the
strictly lower triangular elements of a symmetric matrix.
Analogously, the FastCellCov estimator of the scatter matrix is
\begin{align*}
\bhSigma_{F,n}(\btheta,\boeta)
=
\Big( &\bC^{\bSigma}(\btheta,\boeta) \Big)^{-1}
\odot\\&
\sum_{i=1}^n
   w^{\mathrm{case}}(\bx_{i,\btheta},\boeta)\,
   w^{\mathrm{sub}}(\bx_{i,\btheta},\boeta)\,
   \btW(\bx_{i,\btheta},\boeta)\,
   (\bx_{i,\btheta}-\bhmu_z)\,(\bx_{i,\btheta}-\bhmu_z)^T\,
   \btW(\bx_{i,\btheta},\boeta),
\end{align*}
where $\bC^{\bSigma}(\btheta,\boeta)
=\{c^{\bSigma}_{j\ell}(\btheta,\boeta)\}$ is defined elementwise by
\[
c^{\bSigma}_{j\ell}(\btheta,\boeta)
:=
\max\!\left\{
\sum_{i=1}^n
w^{\mathrm{case}}(\bx_{i,\btheta},\boeta)\,
w^{\mathrm{sub}}(\bx_{i,\btheta},\boeta)\,
\tilde w^{\mathrm{cell}}(x_{ij,\btheta},\boeta)\,
\tilde w^{\mathrm{cell}}(x_{i\ell,\btheta},\boeta)
,
\, n\tilde\delta
\right\}.
\]

Let
\begin{align*}
a_\mu(\bx_{i,\btheta},\boeta)
&:= w^{\mathrm{case}}(\bx_{i,\btheta},\boeta)\,
   w^{\mathrm{sub}}(\bx_{i,\btheta},\boeta)\,
   \btW(\bx_{i,\btheta},\boeta)\,
   \bx_{i,\btheta},\\
b_\mu(\bx_{i,\btheta},\boeta)
&:= \diag\!\big(\bC^{\bmu}(\bx_{i,\btheta},\boeta)\big),\\[0.2cm]
a_\Sigma(\bx_{i,\btheta},\boeta)
&:= \vecth_s\Big(
   w^{\mathrm{case}}(\bx_{i,\btheta},\boeta)\,
   w^{\mathrm{sub}}(\bx_{i,\btheta},\boeta)\,
   \btW(\bx_{i,\btheta},\boeta)\,
   (\bx_{i,\btheta}-\bmu)\,(\bx_{i,\btheta}-\bmu)^T\,
   \btW(\bx_{i,\btheta},\boeta)
   \Big),\\
b_\Sigma(\bx_{i,\btheta},\boeta)
&:= \vecth_s\big(\bC^{\bSigma}(\bx_{i,\btheta},\boeta)\big),
\end{align*}
where \[
\bC^{\bmu}(\bx_{i,\btheta},\boeta)
=
w^{\mathrm{case}}(\bx_{i,\btheta},\boeta)\,
w^{\mathrm{sub}}(\bx_{i,\btheta},\boeta)\,
\btW(\bx_{i,\btheta},\boeta).
\] and $\bC^{\bSigma}(\bx_{i,\btheta},\boeta)
=\{c^{\bSigma}_{j\ell}(\bx_{i,\btheta},\boeta)\}$ is defined elementwise by
\[
c^{\bSigma}_{j\ell}(\bx_{i,\btheta},\boeta)
=
w^{\mathrm{case}}(\bx_{i,\btheta},\boeta)\,
w^{\mathrm{sub}}(\bx_{i,\btheta},\boeta)\,
\tilde w^{\mathrm{cell}}(x_{ij,\btheta},\boeta)\,
\tilde w^{\mathrm{cell}}(x_{i\ell,\btheta},\boeta).
\]
We now collect the location and scatter components into the vector-valued
functions
\begin{equation*}
a_F(\bx_{i,\btheta},\boeta)
:=
\bigl(
   a_\mu(\bx_{i,\btheta},\boeta)^T,\;
   a_\Sigma(\bx_{i,\btheta},\boeta)^T
\bigr)^T,
\qquad
b_F(\bx_{i,\btheta},\boeta)
:=
\bigl(
   b_\mu(\bx_{i,\btheta},\boeta)^T,\;
   b_\Sigma(\bx_{i,\btheta},\boeta)^T
\bigr)^T.
\end{equation*}
Write
\begin{equation*}
\bar a_{F,n}(\btheta,\boeta)
=
\frac{1}{n}\sum_{i=1}^n a_F(\bx_{i,\btheta},\boeta),
\qquad
\bar b_{F,n}(\btheta,\boeta)
=
\frac{1}{n}\sum_{i=1}^n b_F(\bx_{i,\btheta},\boeta).
\end{equation*}
With this notation, the
FastCellCov estimator $\hP_F(\btheta,\boeta,n)$ of $(\bmu^T,\vecth_s(\bSigma)^T)^T$ is defined componentwise,
for $j=1,\dots,d+d(d+1)/2$, as
\[
\big(\hP_F(\btheta,\boeta,n)\big)_j
:=
\begin{cases}
\dfrac{(\bar a_{F,n})_j(\btheta,\boeta)}
      {(\bar b_{F,n})_j(\btheta,\boeta)},
&  \big|(\bar b_{F,n})_j(\btheta,\boeta)\big|
\geqslant \tilde \delta, \\[6pt]
\dfrac{(\bar a_{F,n})_j(\btheta,\boeta)}
      {\tilde \delta},
&  \big|(\bar b_{F,n})_j(\btheta,\boeta)\big|
< \tilde \delta.
\end{cases}
\]

Finally, the corresponding population target is
\begin{equation*}
P_F(\btheta,\boeta)
=
\mu_{a,F}(\btheta,\boeta)\oslash \mu_{b,F}(\btheta,\boeta),
\end{equation*}
where  $\oslash$ denotes the elementwise (Hadamard) division of two vectors of the same
dimension, 
\begin{equation*}
\mu_{a,F}(\btheta,\boeta)=\E[a_F(\bx_{\btheta},\boeta)],
\qquad
\mu_{b,F}(\btheta,\boeta)=\E[b_F(\bx_{\btheta},\boeta)],
\end{equation*}
and $\bx_{\btheta}$ denotes a generic random vector from $F_{\btheta}$.

\begin{lemma}\label{prop:closed}
The set $\bTheta$ is closed and convex.
\end{lemma}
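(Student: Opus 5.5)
Since $\vecth_s$ is a linear bijection from $\mathbb S^d$ onto $\mathbb R^{d(d+1)/2}$ that preserves the Frobenius norm, we can work with pairs $(\bmu,\bSigma)\in\mathbb R^d\times\mathbb S^d$. Then $\bTheta$ is the image, under a linear isometry, of the product set $\mathcal B_M\times\mathcal S_{c,C}$. Here $\mathcal B_M=\{\bmu:\|\bmu\|\leqslant M\}$ and $\mathcal S_{c,C}=\{\bSigma\in\mathbb S^d: c\leqslant\lambda_{\min}(\bSigma)\leqslant\lambda_{\max}(\bSigma)\leqslant C\}$. A product of closed convex sets is closed and convex, and a linear isometry preserves both properties. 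So it suffices to treat each factor separately.

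For $\mathcal B_M$ there is little to do. The norm is continuous, so $\mathcal B_M$ is the preimage of $[0,M]$ and hence closed. Convexity follows from the triangle inequality: $\|t\bmu_1+(1-t)\bmu_2\|\leqslant t\|\bmu_1\|+(1-t)\|\bmu_2\|\leqslant M$.

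For $\mathcal S_{c,C}$ the plan is to use the variational characterizations
\begin{equation*}
\lambda_{\min}(\bSigma)=\min_{\|\bv\|=1}\bv^T\bSigma\bv,\qquad
\lambda_{\max}(\bSigma)=\max_{\|\bv\|=1}\bv^T\bSigma\bv .
\end{equation*}
These let us rewrite the set as
\begin{equation*}
\mathcal S_{c,C}=\bigcap_{\|\bv\|=1}\{\bSigma\in\mathbb S^d:\ c\leqslant \bv^T\bSigma\bv\leqslant C\}.
\end{equation*}
For each fixed unit vector $\bv$, the map $\bSigma\mapsto\bv^T\bSigma\bv$ is linear and continuous on $\mathbb S^d$. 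Each set in the intersection is therefore a closed slab between two parallel hyperplanes, which is closed and convex. An arbitrary intersection of closed convex sets is again closed and convex, and this gives the claim for $\mathcal S_{c,C}$.

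The only step needing care is the equality of $\mathcal S_{c,C}$ with the intersection over unit vectors. It follows directly from the Rayleigh quotient bounds: $\lambda_{\min}(\bSigma)\leqslant\bv^T\bSigma\bv\leqslant\lambda_{\max}(\bSigma)$ for every unit $\bv$, with equality attained at the corresponding eigenvectors. As an alternative route for closedness, one can note that the eigenvalues depend continuously on $\bSigma$ (for example via Weyl's inequality). I expect no real obstacle here. Finally, I would remark that $\bTheta$ is also bounded, hence compact, although that is not required for the lemma.
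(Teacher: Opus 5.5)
Your proof is correct, but it takes a different route from the paper for the covariance constraint. The paper proves convexity by citing that $\bSigma\mapsto\lambda_{\min}(\bSigma)$ is concave and $\bSigma\mapsto\lambda_{\max}(\bSigma)$ is convex on $\mathbb S^d$. It then proves closedness separately, with a sequence argument that uses continuity of eigenvalues in the matrix entries.

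You instead use the Rayleigh-quotient characterization to write the eigenvalue set as $\bigcap_{\|\bv\|=1}\{\bSigma\in\mathbb S^d:\ c\leqslant\bv^T\bSigma\bv\leqslant C\}$. This is an intersection of closed slabs, so closedness and convexity follow at once. Your version is more self-contained. The concavity of $\lambda_{\min}$ that the paper takes as well known is exactly the fact that it is a pointwise minimum of linear functionals. You also never need an eigenvalue-perturbation result such as Weyl's inequality to get closedness.

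In exchange, the paper treats $\bmu$ and $\bSigma$ in parallel with the same elementary sequence argument. It also works directly with the stacked vector rather than passing explicitly through the isometry induced by $\vecth_s$. That identification is still used implicitly when the paper writes $\vecth_s(\bSigma_t)$ as a convex combination.
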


\begin{proof}
To prove convexity, let 
$(\bmu_1^T,\vecth_s(\bSigma_1)^T)^T$ and 
$(\bmu_2^T,\vecth_s(\bSigma_2)^T)^T$ 
be two elements of $\bTheta$, and let $t\in[0,1]$.  
Define
\begin{equation*}
\bmu_t = t\bmu_1+(1-t)\bmu_2, 
\qquad 
\vecth_s(\bSigma_t) = t\vecth_s(\bSigma_1)+(1-t)\vecth_s(\bSigma_2).
\end{equation*}
By convexity of the Euclidean norm,
\begin{equation*}
\|\bmu_t\| 
\leqslant 
t\|\bmu_1\| + (1-t)\|\bmu_2\| 
\leqslant 
M,
\end{equation*}
so $\bmu_t$ satisfies $\|\bmu_t\|\leqslant M$.

We have $\bSigma_t\in\mathbb S^d$.
We now verify that the eigenvalues of $\bSigma_t$ remain within $[c,C]$.
It is well known that the maps 
$\bSigma\mapsto\lambda_{\min}(\bSigma)$ 
and $\bSigma\mapsto\lambda_{\max}(\bSigma)$ 
are, respectively, concave and convex on the space of symmetric matrices.
Hence, for any $t\in[0,1]$,
\begin{equation*}
\lambda_{\min}(\bSigma_t)
\geqslant 
t\,\lambda_{\min}(\bSigma_1)
+
(1-t)\,\lambda_{\min}(\bSigma_2),
\qquad
\lambda_{\max}(\bSigma_t)
\leqslant 
t\,\lambda_{\max}(\bSigma_1)
+
(1-t)\,\lambda_{\max}(\bSigma_2).
\end{equation*}
It follows that
\begin{equation*}
\lambda_{\min}(\bSigma_t)\geqslant c,
\qquad
\lambda_{\max}(\bSigma_t)\leqslant C.
\end{equation*}
Therefore,
$(\bmu_t^T,\vecth_s(\bSigma_t)^T)^T\in\bTheta$,
proving that $\bTheta$ is convex.

To prove closedness, let 
$\{(\bmu_n^T,\vecth_s(\bSigma_n)^T)^T\}$ 
be a convergent sequence in $\bTheta$, with
\[
(\bmu_n^T,\vecth_s(\bSigma_n)^T)^T
\rightarrow
(\bmu^T,\vecth_s(\bSigma)^T)^T
\quad \text{in } \mathbb R^{d + d(d+1)/2}.
\]
We show that $(\bmu^T,\vecth_s(\bSigma)^T)^T \in \bTheta$.
Since the Euclidean norm is continuous and $\|\bmu_n\|\leqslant M$ for all $n$, 
passing to the limit gives $\|\bmu\|\leqslant M$.
Moreover, $\vecth_s(\bSigma_n)\rightarrow\vecth_s(\bSigma)$ implies that
$\bSigma_n\rightarrow\bSigma$ in the matrix entries.
Since each $\bSigma_n$ is symmetric, the limit $\bSigma$ is also symmetric.
The eigenvalues of symmetric matrices depend continuously on the matrix entries.
Hence,
\begin{equation*}
\lambda_{\min}(\bSigma_n)\rightarrow\lambda_{\min}(\bSigma),
\qquad
\lambda_{\max}(\bSigma_n)\rightarrow\lambda_{\max}(\bSigma).
\end{equation*}
Because $c\leqslant\lambda_{\min}(\bSigma_n)\leqslant\lambda_{\max}(\bSigma_n)\leqslant C$
for all $n$, continuity implies
\begin{equation*}
c\leqslant\lambda_{\min}(\bSigma)\leqslant\lambda_{\max}(\bSigma)\leqslant C.
\end{equation*}
Thus $(\bmu^T,\vecth_s(\bSigma)^T)^T\in\bTheta$, showing that $\bTheta$ is closed.
\end{proof}

\begin{lemma}
\label{prop:projection-btheta}
For any $\bt=(\ba^T,\vecth_s(\bL)^T)^T\in 
\mathbb R^{d + d(d+1)/2}$, 
define
\[
\Pi_{\bTheta}(\bt)
=
(\tilde\bmu^T,\vecth_s(\tilde\bSigma)^T)^T.
\]
Then $\Pi_{\bTheta}$ is the metric projection of $\bt$ onto $\bTheta$
with respect to the norm $\|\cdot\|$.
\end{lemma}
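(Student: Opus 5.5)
The plan is to exploit the product structure of $\bTheta$ and then handle the two factors separately. Since $\|(\bmu^T,\vecth_s(\bSigma)^T)^T\|^2=\|\bmu\|^2+\|\bSigma\|_F^2$, and $\bTheta=\mathcal B_M\times\mathcal S_{[c,C]}$ with $\mathcal B_M=\{\bmu:\|\bmu\|\leqslant M\}$ and $\mathcal S_{[c,C]}=\{\bSigma\in\mathbb S^d: c\leqslant\lambda_{\min}(\bSigma)\leqslant\lambda_{\max}(\bSigma)\leqslant C\}$, the squared distance from $\bt=(\ba^T,\vecth_s(\bL)^T)^T$ to a point of $\bTheta$ splits as $\|\ba-\bmu\|^2+\|\bL-\bSigma\|_F^2$. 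Minimizing over $\bTheta$ therefore amounts to minimizing each term separately over its own factor. By Lemma~\ref{prop:closed} (applied to each factor, or directly), $\bTheta$ is closed and convex, so the metric projection exists and is unique. It thus suffices to show that $\tilde\bmu$ is the projection of $\ba$ onto $\mathcal B_M$ and that $\tilde\bSigma$ is the projection of $\bL$ onto $\mathcal S_{[c,C]}$ in Frobenius norm.

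For the location part, the projection onto a Euclidean ball is standard. If $\|\ba\|\leqslant M$, the minimizer is $\ba$ itself. Otherwise, for any $\bmu$ with $\|\bmu\|\leqslant M$, the reverse triangle inequality gives $\|\ba-\bmu\|\geqslant\|\ba\|-\|\bmu\|\geqslant\|\ba\|-M=\|\ba-M\ba/\|\ba\|\|$. Equality holds only for $\bmu=M\ba/\|\ba\|$, which is $\tilde\bmu$.

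For the scatter part, I would first establish the key inequality: for symmetric $\bL,\bSigma$ with eigenvalues sorted decreasingly, $\lambda(\bL)$ and $\lambda(\bSigma)$ satisfy $\|\bL-\bSigma\|_F^2\geqslant\sum_j(\lambda_j(\bL)-\lambda_j(\bSigma))^2$. This is the Hoffman--Wielandt inequality, or equivalently von Neumann's trace inequality $\Tr(\bL\bSigma)\leqslant\sum_j\lambda_j(\bL)\lambda_j(\bSigma)$ combined with $\|\bL\|_F^2=\sum\lambda_j(\bL)^2$ and the analogous identity for $\bSigma$. Given this, for any $\bSigma\in\mathcal S_{[c,C]}$ every $\lambda_j(\bSigma)\in[c,C]$, so
\[
\|\bL-\bSigma\|_F^2\geqslant\sum_j(\lambda_j-\lambda_j(\bSigma))^2\geqslant\sum_j(\lambda_j-\tilde\lambda_j)^2 ,
\]
because scalar clipping $\tilde\lambda_j=\min\{\max\{\lambda_j,c\},C\}$ is the closest point of $[c,C]$ to $\lambda_j$. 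Finally, $\tilde\bSigma=\bQ\bLambda_c\bQ^T$ lies in $\mathcal S_{[c,C]}$ and attains this bound: $\bL-\tilde\bSigma=\bQ(\bLambda-\bLambda_c)\bQ^T$, so $\|\bL-\tilde\bSigma\|_F^2=\sum_j(\lambda_j-\tilde\lambda_j)^2$. Hence $\tilde\bSigma$ is a minimizer, and by uniqueness of the projection onto a closed convex set it is the projection. A side benefit of this uniqueness argument is that it settles well-definedness when $\bL$ has repeated eigenvalues: $\tilde\bSigma$ does not depend on the choice of $\bQ$.

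The main obstacle is the spectral inequality $\Tr(\bL\bSigma)\leqslant\sum_j\lambda_j(\bL)\lambda_j(\bSigma)$. I would quote it as a classical result (von Neumann / Hoffman--Wielandt, or Fan's inequality) rather than reprove it. If a self-contained argument is wanted, write $\Tr(\bL\bSigma)=\sum_{j,k}\lambda_j(\bL)\lambda_k(\bSigma)P_{jk}$ with $P_{jk}=(\bq_j^T\bp_k)^2$ doubly stochastic, and then apply Birkhoff's theorem together with the rearrangement inequality.
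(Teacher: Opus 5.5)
Your proposal is correct and follows essentially the same route as the paper. Both split the separable squared distance over the product structure of $\bTheta$, project $\ba$ onto the ball, reduce the matrix part to eigenvalue clipping via Hoffman--Wielandt, and obtain uniqueness from closedness and convexity. If anything, your sorted-eigenvalue bound followed by explicit attainment at $\tilde\bSigma=\bQ\bLambda_c\bQ^T$ is tidier than the paper's assertion that equality holds if and only if $\bU=\bQ$, which is not literally true when eigenvalues repeat.
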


\begin{proof}
We seek the metric projection of 
$\bt=(\ba^T,\vecth_s(\bL)^T)^T$ onto $\bTheta$ with respect to $\|\cdot\|$, i.e.,
\begin{equation*}
\min_{\bp=(\bmu^T,\vecth_s(\bSigma)^T)^T\in\bTheta}
\left\|
\bp
-
(\ba^T,\vecth_s(\bL)^T)^T
\right\|^2.
\end{equation*}
By definition of $\|\cdot\|$, this equals
\begin{equation*}
\min_{(\bmu^T,\vecth_s(\bSigma)^T)^T\in\bTheta}
\bigl(
\|\bmu-\ba\|^2
+
\|\vect(\bSigma)-\vect(\bL)\|^2
\bigr).
\end{equation*}
Since $\bTheta$ is the Cartesian product
\[
\{\bmu:\|\bmu\|\leqslant M\}
\times
\{\vecth_s(\bSigma):\bSigma\in\mathbb S^d, c \leqslant \lambda_{\min}(\bSigma)
\leqslant \lambda_{\max}(\bSigma)\leqslant C\},
\]
and the objective is separable, the minimization splits into two independent parts.
The problem
\begin{equation*}
\min_{\bmu:\|\bmu\|\leqslant M}\ \|\bmu-\ba\|^2
\end{equation*}
is the Euclidean projection of $\ba$ onto the ball of radius $M$ in $\mathbb R^d$,
which yields the stated $\tilde\bmu$.
The problem
\begin{equation*}
\min_{\vecth_s(\bSigma):c \leqslant \lambda_{\min}(\bSigma)\leqslant 
\lambda_{\max}(\bSigma)\leqslant C}
\ \|\vect(\bSigma)-\vect(\bL)\|^2
\end{equation*}
is the projection of $\bL$ onto the set of symmetric matrices whose eigenvalues lie in $[c,C]$.
Let $\bL=\bQ\,\mathrm{diag}(\lambda_1,\ldots,\lambda_d)\bQ^T$
be the spectral decomposition of $\bL$, where $\bQ$ is orthogonal.
For any $\bSigma=\bU\,\mathrm{diag}(y_1,\ldots,y_d)\bU^T$
satisfying $c\leqslant y_i\leqslant C$,
the Hoffman--Wielandt inequality (and the unitary invariance of the Frobenius norm) gives
\begin{equation*}
\|\vect(\bSigma)-\vect(\bL)\|^2
\geqslant
\sum_{i=1}^d (y_i-\lambda_{\pi(i)})^2,
\end{equation*}
for some permutation $\pi$, with equality if and only if $\bU=\bQ$.
Hence the optimal choice is $\bU=\bQ$, and the minimization reduces to
$d$ independent scalar problems
\begin{equation*}
\min_{c\leqslant y_j\leqslant C}(y_j-\lambda_j)^2,
\qquad j=1,\ldots,d,
\end{equation*}
whose unique solutions are
\[
y_j=\tilde\lambda_j=\min\{\max\{\lambda_j,c\},C\}.
\]
Therefore,
\begin{equation*}
\tilde\bSigma=\bQ\bLambda_c\bQ^T.
\end{equation*}

Since both subproblems are strictly convex and are defined on closed convex sets (Lemma~\ref{prop:closed}), the solution 
$(\tilde\bmu,\tilde\bSigma)$ is unique.
Thus
\[
\Pi_{\bTheta}(\bt)
=
(\tilde\bmu^T,\vecth_s(\tilde\bSigma)^T)^T
\]
is the metric projection of $\bt$ onto $\bTheta$ with respect to the norm $\|\cdot\|$.
\end{proof}

\begin{lemma}
\label{lem:proj-firm}
Then, for all
$u,v\in\mathbb R^{d+d(d+1)/2}$,
\begin{equation}\label{eq:firmH}
  \|\Pi_{\bTheta}(u)-\Pi_{\bTheta}(v)\|^2
  \;\leqslant\;
  \big\langle \Pi_{\bTheta}(u)-\Pi_{\bTheta}(v),\, u-v\big\rangle.
\end{equation}
In particular, $\Pi_{\bTheta}$ is $1$-Lipschitz with respect to $\|\cdot\|$,
that is,
\begin{equation}\label{eq:nonexpH}
  \|\Pi_{\bTheta}(u)-\Pi_{\bTheta}(v)\|
  \;\leqslant\; \|u-v\|.
\end{equation}
\end{lemma}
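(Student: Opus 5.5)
The plan is to derive the firm nonexpansiveness \eqref{eq:firmH} from the variational characterization of the metric projection onto a nonempty closed convex set. Lemma~\ref{prop:closed} gives that $\bTheta$ is closed and convex. It is nonempty because $(\bzero^T,\vecth_s(c\bI_d)^T)^T\in\bTheta$. Lemma~\ref{prop:projection-btheta} identifies $\Pi_{\bTheta}$ with the Euclidean metric projection onto $\bTheta$ in $\mathbb R^{d+d(d+1)/2}$. With the scaled half-vectorization, $\|\cdot\|$ is the ordinary Euclidean norm induced by the standard inner product $\langle\cdot,\cdot\rangle$, so Hilbert-space projection theory applies directly.

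\textbf{Step 1: variational inequality.} For any $u$, the point $p=\Pi_{\bTheta}(u)$ satisfies
\begin{equation*}
\langle u-p,\, z-p\rangle\leqslant 0 \quad\text{for all } z\in\bTheta .
\end{equation*}
To show this, take $z\in\bTheta$ and $t\in(0,1]$. By convexity, $p+t(z-p)\in\bTheta$. Minimality of $p$ gives $\|u-p-t(z-p)\|^2\geqslant\|u-p\|^2$. Expanding yields $-2t\langle u-p,z-p\rangle+t^2\|z-p\|^2\geqslant 0$. Dividing by $t$ and letting $t\downarrow0$ gives the claim.

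\textbf{Step 2: firm nonexpansiveness.} Write $p=\Pi_{\bTheta}(u)$ and $q=\Pi_{\bTheta}(v)$. Apply Step 1 at $u$ with $z=q$, and at $v$ with $z=p$:
\begin{equation*}
\langle u-p,\, q-p\rangle\leqslant0,\qquad \langle v-q,\, p-q\rangle\leqslant0 .
\end{equation*}
Adding the two inequalities gives $\langle (u-v)-(p-q),\,p-q\rangle\geqslant 0$. This is exactly $\|p-q\|^2\leqslant\langle p-q,u-v\rangle$, i.e.\ \eqref{eq:firmH}.

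\textbf{Step 3: Lipschitz bound.} Apply Cauchy--Schwarz to the right-hand side of \eqref{eq:firmH} to get $\|p-q\|^2\leqslant\|p-q\|\,\|u-v\|$. If $p=q$, \eqref{eq:nonexpH} holds trivially. Otherwise divide by $\|p-q\|$ to obtain \eqref{eq:nonexpH}.

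None of the steps is a real obstacle. The one point needing care is that the projection is taken in the Euclidean norm with the standard inner product. This holds precisely because of the $\sqrt2$ scaling in $\vecth_s$, so the Frobenius geometry on matrices and the vector geometry coincide. It also requires that Lemma~\ref{prop:projection-btheta} indeed gives the minimizer, whose existence and uniqueness follow from closedness and convexity.
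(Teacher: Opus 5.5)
Your proof is correct and follows the paper's argument. The paper also applies the variational characterization $\langle u-\Pi_{\bTheta}(u),\, z-\Pi_{\bTheta}(u)\rangle\leqslant 0$ for all $z\in\bTheta$ at $u$ and at $v$ (with $z=\Pi_{\bTheta}(v)$ and $z=\Pi_{\bTheta}(u)$), adds the two inequalities, and finishes with Cauchy--Schwarz. The only difference is that you derive the variational inequality from minimality and convexity and check that $\bTheta$ is nonempty, whereas the paper quotes that inequality as a standard Hilbert-space fact.
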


\begin{proof}
The characterization of the metric projection in a Hilbert space states that
\[
\big\langle u-\Pi_{\bTheta}(u),\, z-\Pi_{\bTheta}(u)\big\rangle \leqslant 0
\quad \forall z\in\bTheta,
\]
and similarly for $v$.  
Taking $z=\Pi_{\bTheta}(v)$ in the first inequality and
$z=\Pi_{\bTheta}(u)$ in the second yields
\[
\big\langle u-\Pi_{\bTheta}(u),\, \Pi_{\bTheta}(v)-\Pi_{\bTheta}(u)\big\rangle \leqslant 0,
\]
\[
\big\langle v-\Pi_{\bTheta}(v),\, \Pi_{\bTheta}(u)-\Pi_{\bTheta}(v)\big\rangle \leqslant 0.
\]
Adding these two inequalities yields
\begin{align*}
0
&\;\geqslant\; \langle u-\Pi_{\bTheta}(u),\, \Pi_{\bTheta}(v)-\Pi_{\bTheta}(u)\rangle + \langle v-\Pi_{\bTheta}(v),\, \Pi_{\bTheta}(u)-\Pi_{\bTheta}(v)\rangle \\
&\;=\; \langle u-\Pi_{\bTheta}(u),\, \Pi_{\bTheta}(v)-\Pi_{\bTheta}(u)\rangle - \langle v-\Pi_{\bTheta}(v),\, \Pi_{\bTheta}(v)-\Pi_{\bTheta}(u)\rangle \\
&\;=\; \langle (u-v) - (\Pi_{\bTheta}(u)-\Pi_{\bTheta}(v)),\, \Pi_{\bTheta}(v)-\Pi_{\bTheta}(u)\rangle \\
&\;=\; \langle u-v,\, \Pi_{\bTheta}(v)-\Pi_{\bTheta}(u)\rangle - \|\Pi_{\bTheta}(u)-\Pi_{\bTheta}(v)\|^2 .
\end{align*}
Rearranging yields~\eqref{eq:firmH}.  

Finally, applying Cauchy--Schwarz with respect to $\langle\cdot,\cdot\rangle$,
\[
\|\Pi_{\bTheta}(u)-\Pi_{\bTheta}(v)\|^2
\leqslant
\|\Pi_{\bTheta}(u)-\Pi_{\bTheta}(v)\|\,\|u-v\|,
\]
which implies~\eqref{eq:nonexpH}.
\end{proof}

\begin{lemma}
\label{prop:proj-implies-zero-btheta}
Fix $\btheta^*\in\operatorname{int}(\bTheta)$ and 
$g\in\mathbb R^{d+d(d+1)/2}$.
If
\[
\Pi_{\bTheta}(\btheta^*+g)=\btheta^*,
\]
then $g=0$.
\end{lemma}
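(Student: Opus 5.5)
The plan is to use the variational characterization of the metric projection that was already invoked in the proof of Lemma~\ref{lem:proj-firm}. By Lemma~\ref{prop:closed}, $\bTheta$ is closed and convex, and by Lemma~\ref{prop:projection-btheta}, $\Pi_{\bTheta}$ is the metric projection onto $\bTheta$ with respect to $\|\cdot\|$. Therefore, for $u=\btheta^*+g$ and $\Pi_{\bTheta}(u)=\btheta^*$, we have
\[
\big\langle u-\Pi_{\bTheta}(u),\, z-\Pi_{\bTheta}(u)\big\rangle
=\langle g,\, z-\btheta^*\rangle \leqslant 0
\quad \text{for all } z\in\bTheta .
\]

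Next, I would exploit that $\btheta^*$ lies in the interior. By the definition of $\interior(\bTheta)$, there is $\delta>0$ such that $\btheta^*+h\in\bTheta$ whenever $\|h\|<\delta$. Suppose $g\neq 0$ and take $h=\tfrac{\delta}{2}\,g/\|g\|$. Plugging $z=\btheta^*+h$ into the inequality above gives
\[
\langle g,h\rangle=\tfrac{\delta}{2}\|g\|\leqslant 0 ,
\]
which is impossible unless $\|g\|=0$. This contradiction gives $g=0$.

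There is essentially no obstacle. The only point to check is that the Hilbert-space projection characterization applies with the inner product inducing $\|\cdot\|$, which is just the Euclidean inner product on $\mathbb R^{d+d(d+1)/2}$ (this is why $\vecth_s$ was scaled). The closedness and convexity needed for it come from Lemma~\ref{prop:closed}.
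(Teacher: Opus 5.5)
Your proof is correct and follows the paper's argument. Both use the projection optimality condition $\langle g, z-\btheta^*\rangle\leqslant 0$ for all $z\in\bTheta$ and then exploit that $\btheta^*$ is interior; the only difference is that you test the single direction $z=\btheta^*+\tfrac{\delta}{2}g/\|g\|$ to reach a contradiction, whereas the paper tests $\btheta^*\pm t v$ for arbitrary $v$ to conclude $\langle g,v\rangle=0$ for all $v$.
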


\begin{proof}
Since $\Pi_{\bTheta}$ is the metric projection with respect to
$\|\cdot\|$, the projection optimality condition states that for any
$u\in\mathbb R^{d+d(d+1)/2}$ and $x\in\bTheta$, such that
$x=\Pi_{\bTheta}(u)$ implies $\big\langle u-x,\; z-x\big\rangle \leqslant 0$, for each  $z\in\bTheta$.
Apply this with $u=\btheta^*+g$ and $x=\btheta^*$ to obtain
$\big\langle g,\; z-\btheta^*\big\rangle \leqslant 0$
$ \forall\,z\in\bTheta$.
Since $\btheta^*\in\operatorname{int}(\bTheta)$, there exists a $\delta>0$
such that for every $h$ with $\|h\|<\delta$ we have
$\btheta^*+h\in\bTheta$.
Fix any $v\in\mathbb R^{d+d(d+1)/2}$.
For any $t\in(0,\delta/\|v\|)$,
\[
z=\btheta^*+t v\in\bTheta,
\]
and hence
\[
\big\langle g,\; t v\big\rangle \leqslant 0.
\]
By linearity of the inner product, this is
\[
t\,\big\langle g,v\big\rangle \leqslant 0.
\]
Since $t>0$, we obtain
\[
\big\langle g,v\big\rangle \leqslant 0.
\]
Repeating the argument with $-v$ in place of $v$ yields
\[
\big\langle g,v\big\rangle \geqslant 0.
\]
Therefore,
\[
\big\langle g,v\big\rangle=0
\qquad \forall\,v\in\mathbb R^{d+d(d+1)/2}.
\]
By positive definiteness of the inner product,
this implies $g=0$.
\end{proof}

\newcounter{oldprop}
\setcounter{oldprop}{\value{proposition}}

\setcounter{proposition}{\getrefnumber{prop:2}-1}
\begin{proposition}[Restated]
\label{prop:2_supp}
Assume that the following conditions hold:
\begin{enumerate}[label=(A\arabic*)]
\item there exists a $\boeta_0 \in \bH$ such that $\boheta_n \rightarrow_p \boeta_0$.
\item $\hpi(\btheta,\boheta_n,n)$ converges uniformly in probability to a
limit $\pi(\btheta,\boeta_0)$ over $\bTheta$, that is,
\begin{equation*}
  \sup_{\btheta \in \bTheta}
  \bigl\| \hpi(\btheta,\boheta_n,n) - \pi(\btheta,\boeta_0) \bigr\|
  \;\rightarrow_p\; 0,
\end{equation*}
\item $\pi(\cdot,\boeta_0):\bTheta\rightarrow\mathbb R^{d+d(d+1)/2}$
is continuous on $\bTheta$ and for any $\eps>0$,
\[
\inf_{\btheta\in\bTheta:\ \|\btheta-\btheta_0\|\geqslant\eps}
\|\pi(\btheta,\boeta_0)-\pi(\btheta_0,\boeta_0)\|>0.
\]
\end{enumerate}

Then any sequence $\bhtheta_n$ satisfying
\[
\bigl\|\hpi_n(\boheta_n)-\pibar(\bhtheta_n,\boheta_n,n)\bigr\|  \rightarrow_p0
\]
is consistent for $\btheta_0$.
\end{proposition}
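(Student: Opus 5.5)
The plan is the classical identification-plus-uniform-convergence argument for extremum and Z-estimators, adapted to the indirect inference equation. The strategy is to show that $\|\pi(\bhtheta_n,\boeta_0)-\pi(\btheta_0,\boeta_0)\|\rightarrow_p 0$. The identification condition in (A3) then forces $\bhtheta_n$ into any $\eps$-ball around $\btheta_0$ with probability tending to one.

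First, I link the observed auxiliary estimator to the population binding function. Since the data are generated from $F=F_{\btheta_0}$, the estimator $\hpi_n(\boheta_n)$ has the same form as $\hpi(\btheta_0,\boheta_n,n)$, evaluated on a sample from $F_{\btheta_0}$. Assumption (A2), applied at the point $\btheta_0\in\bTheta$, therefore gives $\hpi_n(\boheta_n)\rightarrow_p\pi(\btheta_0,\boeta_0)$. Here I am implicitly using that (A2) holds for the observed sample as well as for simulated ones, which is the natural reading since both are draws from the model family.

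Second, I handle the simulated average. Each $\hpi_h(\btheta,\boheta_n,n)$, for $h=1,\dots,H$ with $H$ fixed, satisfies (A2). By the triangle inequality,
\[
\sup_{\btheta\in\bTheta}\bigl\|\pibar(\btheta,\boheta_n,n)-\pi(\btheta,\boeta_0)\bigr\|
\leqslant \frac1H\sum_{h=1}^H\sup_{\btheta\in\bTheta}\bigl\|\hpi_h(\btheta,\boheta_n,n)-\pi(\btheta,\boeta_0)\bigr\|\rightarrow_p 0 .
\]
Because this bound is uniform in $\btheta$, it still holds at the random point $\bhtheta_n$. Combining this with the hypothesis $\|\hpi_n(\boheta_n)-\pibar(\bhtheta_n,\boheta_n,n)\|\rightarrow_p 0$ and with the first step, the triangle inequality yields
\[
\|\pi(\bhtheta_n,\boeta_0)-\pi(\btheta_0,\boeta_0)\|\leqslant \|\pi(\bhtheta_n,\boeta_0)-\pibar(\bhtheta_n,\boheta_n,n)\|+\|\pibar(\bhtheta_n,\boheta_n,n)-\hpi_n(\boheta_n)\|+\|\hpi_n(\boheta_n)-\pi(\btheta_0,\boeta_0)\|\rightarrow_p0 .
\]

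Third, I conclude. Fix $\eps>0$ and let $\delta_\eps:=\inf_{\|\btheta-\btheta_0\|\geqslant\eps}\|\pi(\btheta,\boeta_0)-\pi(\btheta_0,\boeta_0)\|$, which is strictly positive by (A3). The event $\{\|\bhtheta_n-\btheta_0\|\geqslant\eps\}$ is contained in $\{\|\pi(\bhtheta_n,\boeta_0)-\pi(\btheta_0,\boeta_0)\|\geqslant\delta_\eps\}$, because $\bhtheta_n\in\bTheta$. The probability of the latter event tends to zero, so $\bhtheta_n\rightarrow_p\btheta_0$. Assumption (A1) enters only through (A2), by fixing the limit $\boeta_0$. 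The continuity part of (A3) is not strictly needed; it matters only for the existence of solutions.

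The main obstacle is the second step, which evaluates a uniform limit at a data-dependent point. This is exactly why (A2) is stated with a supremum, and measurability of the supremum is tacitly assumed. The seed-fixing convention makes $\pibar$ deterministic in $\btheta$, so no further issue arises there.
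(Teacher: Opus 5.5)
Your argument is correct, and it takes a more direct route than the paper.

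\textbf{The paper's route.} The paper treats $\bhtheta_n$ as an approximate minimizer of $\widehat Q_n(\btheta)=\|\hpi_n(\boheta_n)-\pibar(\btheta,\boheta_n,n)\|^2$. It then invokes Theorem~2.1 of Newey and McFadden, with limit criterion $Q(\btheta)=\|\pi(\btheta_0,\boeta_0)-\pi(\btheta,\boeta_0)\|^2$. This obliges it to verify four things:
\begin{enumerate}
\item compactness of $\bTheta$, via closedness, the eigenvalue bounds and Heine--Borel;
\item continuity of $Q$;
\item unique minimization of $Q$ at $\btheta_0$;
\item $\sup_{\btheta}|\widehat Q_n(\btheta)-Q(\btheta)|\rightarrow_p 0$, obtained through a squared-norm expansion that needs $\sup_{\btheta}\|\pi(\btheta_0,\boeta_0)-\pi(\btheta,\boeta_0)\|<\infty$, i.e.\ continuity on a compact set again.
\end{enumerate}

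\textbf{Your route.} You evaluate the uniform bound at the random point $\bhtheta_n$ to get $\|\pi(\bhtheta_n,\boeta_0)-\pi(\btheta_0,\boeta_0)\|\rightarrow_p 0$. You then apply the well-separatedness in (A3) directly. This is essentially the Z-estimator consistency theorem (Theorem~5.9 in van der Vaart, 2000), with $\hpi_n(\boheta_n)-\pibar(\btheta,\boheta_n,n)$ as the estimating function. Because (A3) already asserts that the infimum over $\{\|\btheta-\btheta_0\|\geqslant\eps\}$ is positive, you need neither compactness of $\bTheta$ nor continuity of $\pi$. So your remark that the continuity part of (A3) is superfluous is right, and the same holds for compactness.

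\textbf{What each buys.} The paper's framing buys reuse. The same $\widehat Q_n$ versus $Q$ comparison, followed by a minimizer-separation argument, is recycled for the bootstrap consistency result (Proposition~\ref{prop:bootstrap-consistency-theta}). Your argument is shorter and shows the stated hypotheses are stronger than necessary.

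\textbf{The step you flag.} Transferring (A2) from simulated samples to the observed one is made in exactly the same way in the paper. The paper even asserts that $\hpi_n(\boheta_n)$ has the same law as $\pibar(\btheta_0,\boheta_n,n)$, so this is not a gap relative to the paper. Strictly, though, it is not implied by (A2) as written, because $\boheta_n$ is computed from the same observed sample. A version of (A2) that is uniform in $\boeta$, as in Lemma~\ref{lem:ratio-uniform-eta}, is what actually closes it.
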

\setcounter{proposition}{\value{oldprop}}
\begin{proof}

The proof is obtained by verifying the conditions of Proposition~2.1 of
\cite{newey1994large}.
Let
\[
\widehat Q_n(\btheta)
=
\bigl\|\hpi_n(\boheta_n)-\pibar(\btheta,\boheta_n,n)\bigr\|^2
\qquad\text{and}\qquad
Q(\btheta)
=
\bigl\|\pi(\btheta_0,\boeta_0)-\pi(\btheta,\boeta_0)\bigr\|^2.
\]
By assumption,
\[
\widehat Q_n(\bhtheta_n)
=
\bigl\|\hpi_n(\boheta_n)-\pibar(\bhtheta_n,\boheta_n,n)\bigr\|^2
\rightarrow_p 0.
\]
Since $\widehat Q_n(\btheta)\geqslant 0$ for all $\btheta\in\bTheta$, we have
\[
0 \leqslant \inf_{\btheta\in\bTheta}\widehat Q_n(\btheta)
\leqslant \widehat Q_n(\bhtheta_n),
\]
and therefore
\[
\widehat Q_n(\bhtheta_n)
\leqslant
\inf_{\btheta\in\bTheta}\widehat Q_n(\btheta)
+
o_p(1).
\]
Hence $\bhtheta_n$ is an approximate minimizer of $\widehat Q_n$
in the sense of \cite{newey1994large}.
Theorem~2.1 of \cite{newey1994large} states that if $Q(\btheta)$ satisfies:
\begin{enumerate}[label=(\roman*)]
  \item $Q$ is continuous on $\bTheta$,
  \item $Q$ is uniquely minimized at $\btheta_0$,
  \item $\bTheta$ is compact,
  \item $\widehat Q_n$ converges uniformly in probability to $Q$, that is,
  \[
  \sup_{\btheta\in\bTheta}
  \bigl|
  \widehat Q_n(\btheta)-Q(\btheta)
  \bigr|
  \;\rightarrow_p\;0,
  \]
\end{enumerate}
then $\bhtheta_n\rightarrow_p\btheta_0$.

By assumption,
\[
\sup_{\btheta\in\bTheta}
\bigl\|\hpi(\btheta,\boheta_n,n)-\pi(\btheta,\boeta_0)\bigr\|
\;\rightarrow_p\;0,
\]
and hence also
\[
\sup_{\btheta\in\bTheta}
\bigl\|\pibar(\btheta,\boheta_n,n)-\pi(\btheta,\boeta_0)\bigr\|
\;\rightarrow_p\;0.
\]
Since $\pi(\cdot,\boeta_0)$ is continuous on $\bTheta$, the objective function
\[
Q(\btheta)
=
\bigl\|\pi(\btheta_0,\boeta_0)-\pi(\btheta,\boeta_0)\bigr\|^2
\]
is continuous on $\bTheta$, which establishes condition~(i).

Since $Q(\btheta)=0$ if and only if $\pi(\btheta,\boeta_0)=\pi(\btheta_0,\boeta_0)$ and, by the
separation condition, this can occur only when $\btheta=\btheta_0$, it follows
that $Q$ is uniquely minimized at $\btheta_0$, which establishes condition~(ii).

By Lemma~\ref{prop:closed}, $\bTheta$ is closed.
Moreover, note that for any 
$\btheta=(\bmu^T,\vecth_s(\bSigma)^T)\in\bTheta$ we have 
$\|\bmu\|\leqslant M$ and, since 
$c\leqslant\lambda_{\min}(\bSigma)\leqslant\lambda_{\max}(\bSigma)\leqslant C$,
\begin{equation*}
\|\vect(\bSigma)\|
= \Big(\sum_{j=1}^{d}\lambda_j(\bSigma)^2\Big)^{1/2}
\leqslant \sqrt{d}\,\lambda_{\max}(\bSigma)
\leqslant \sqrt{d}\,C.
\end{equation*}
Hence,
\begin{equation*}
\|\btheta\|^2
= \|\bmu\|^2+\|\vect(\bSigma)\|^2
\leqslant M^2 + d\,C^2,
\end{equation*}
which proves that $\bTheta$ is bounded.
Therefore, $\bTheta$ is closed and bounded, and by the Heine-Borel theorem, it is also compact, which proves (iii).

Consider the absolute value of the difference between
$\widehat Q_n(\btheta)$ and $Q(\btheta)$, that is,
\begin{align*}
\bigl|\widehat Q_n(\btheta) - Q(\btheta)\bigr|
&=
\Bigl|
\bigl\|\hpi_n(\boheta_n) - \pibar(\btheta,\boheta_n,n)\bigr\|^2
-
\bigl\|\pi(\btheta_0,\boeta_0)-\pi(\btheta,\boeta_0)\bigr\|^2
\Bigr| \\[2mm]
&=
\Bigl|
\bigl\|
\big[\hpi_n(\boheta_n)-\pi(\btheta_0,\boeta_0)\big]
+
\big[\pi(\btheta_0,\boeta_0)-\pi(\btheta,\boeta_0)\big]  \\
&\hspace{3.5cm}
+
\big[\pi(\btheta,\boeta_0)-\pibar(\btheta,\boheta_n,n)\big]
\bigr\|^2
-
\bigl\|\pi(\btheta_0,\boeta_0)-\pi(\btheta,\boeta_0)\bigr\|^2
\Bigr| \\[2mm]
&=
\Bigl|
\|\hpi_n(\boheta_n)-\pi(\btheta_0,\boeta_0)\|^2
+
\|\pi(\btheta,\boeta_0)-\pibar(\btheta,\boheta_n,n)\|^2  \\
&\quad
+2\big(\hpi_n(\boheta_n)-\pi(\btheta_0,\boeta_0)\big)^{\!T}
   \big(\pi(\btheta_0,\boeta_0)-\pi(\btheta,\boeta_0)\big) \\[1mm]
&\quad
+2\big(\pi(\btheta,\boeta_0)-\pibar(\btheta,\boheta_n,n)\big)^{\!T}
   \big(\pi(\btheta_0,\boeta_0)-\pi(\btheta,\boeta_0)\big) \\[1mm]
&\quad
+2\big(\hpi_n(\boheta_n)-\pi(\btheta_0,\boeta_0)\big)^{\!T}
   \big(\pi(\btheta,\boeta_0)-\pibar(\btheta,\boheta_n,n)\big)
\Bigr| \\[2mm]
&\leqslant
\|\hpi_n(\boheta_n)-\pi(\btheta_0,\boeta_0)\|^2
+
\|\pi(\btheta,\boeta_0)-\pibar(\btheta,\boheta_n,n)\|^2 \\[1mm]
&\quad
+2\|\hpi_n(\boheta_n)-\pi(\btheta_0,\boeta_0)\|
  \,\|\pi(\btheta_0,\boeta_0)-\pi(\btheta,\boeta_0)\| \\[1mm]
&\quad
+2\|\pi(\btheta,\boeta_0)-\pibar(\btheta,\boheta_n,n)\|
  \,\|\pi(\btheta_0,\boeta_0)-\pi(\btheta,\boeta_0)\| \\[1mm]
&\quad
+2\|\hpi_n(\boheta_n)-\pi(\btheta_0,\boeta_0)\|
  \,\|\pi(\btheta,\boeta_0)-\pibar(\btheta,\boheta_n,n)\| .
\end{align*}
Therefore,
\begin{align*}
\sup_{\btheta\in\bTheta}
\bigl|\widehat Q_n(\btheta)-Q(\btheta)\bigr|
&\leqslant
\|\hpi_n(\boheta_n)-\pi(\btheta_0,\boeta_0)\|^2
+
\sup_{\btheta\in\bTheta}
\|\pi(\btheta,\boeta_0)-\pibar(\btheta,\boheta_n,n)\|^2 \\[1mm]
&\quad
+2\|\hpi_n(\boheta_n)-\pi(\btheta_0,\boeta_0)\|
 \sup_{\btheta\in\bTheta}
 \|\pi(\btheta_0,\boeta_0)-\pi(\btheta,\boeta_0)\| \\[1mm]
&\quad
+2\sup_{\btheta\in\bTheta}
 \|\pi(\btheta,\boeta_0)-\pibar(\btheta,\boheta_n,n)\|
 \sup_{\btheta\in\bTheta}
 \|\pi(\btheta_0,\boeta_0)-\pi(\btheta,\boeta_0)\| \\[1mm]
&\quad
+2\|\hpi_n(\boheta_n)-\pi(\btheta_0,\boeta_0)\|
 \sup_{\btheta\in\bTheta}
 \|\pi(\btheta,\boeta_0)-\pibar(\btheta,\boheta_n,n)\| .
\end{align*}
Since $\pi(\cdot,\boeta_0)$ is continuous on $\bTheta$ and $\bTheta$ is compact,
it follows that
$\sup_{\btheta\in\bTheta}
\|\pi(\btheta_0,\boeta_0)-\pi(\btheta,\boeta_0)\|$
is bounded.
Therefore, since
\[
\sup_{\btheta\in\bTheta}
\|\bar\pi(\btheta,\boheta_n,n)-\pi(\btheta,\boeta_0)\|
\;\rightarrow_p\; 0,
\]
it follows in particular that, at $\btheta=\btheta_0$,
\[
\|\bar\pi(\btheta_0,\boheta_n,n)-\pi(\btheta_0,\boeta_0)\|
\;\rightarrow_p\; 0.
\]
Because the observed sample $(\bx_i,\by_i)$ is a random sample from
$F_{\btheta_0}$, the observed auxiliary estimator $\hpi_n(\boheta_n)$
has the same distribution as $\bar\pi(\btheta_0,\boheta_n,n)$.
Consequently,
\[
\|\hpi_n(\boheta_n)-\pi(\btheta_0,\boeta_0)\|
\;\rightarrow_p\; 0,
\]
and, thus,
\[
\sup_{\btheta\in\bTheta}
\bigl|\widehat Q_n(\btheta)-Q(\btheta)\bigr|
\rightarrow_p 0,
\]
which establishes condition~(iv).

\end{proof}

\setcounter{oldprop}{\value{proposition}}

\setcounter{proposition}{\getrefnumber{pro:proj-fp-contraction}-1}
\begin{proposition}[Restated]
\label{prop:proj-fp-contraction_supp}
Define $T_n(\btheta)
=
\hpi_n(\boheta_n)
+
\Big[
\btheta
-
\pibar(\btheta,\boheta_n,n)
\Big]$, and assume that the following conditions hold:
\begin{enumerate}[label=(B\arabic*)]

\item The mapping $T_n$
is uniformly $L$-Lipschitz with $L<1$, that is,
\[
\|T_n(\btheta_1)-T_n(\btheta_2)\|
\leqslant
L\|\btheta_1-\btheta_2\|
\qquad
\forall\,\btheta_1,\btheta_2\in\bTheta.
\]
\item $\bhtheta_n\in\operatorname{int}(\bTheta)$.

\end{enumerate}
Then $\bhtheta_n$ is unique and the sequence $\bhtheta_n^{(\ell)}$ converges in norm to $\bhtheta_n$ with linear rate for every $\bhtheta_n^{(0)}\in\bTheta$, that is
 \[
\|\bhtheta_n^{(\ell)}-\bhtheta_n\|
\leqslant
L^\ell
\|\bhtheta_n^{(0)}-\bhtheta_n\|.
\]
\end{proposition}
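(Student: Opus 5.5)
The plan is to treat the projected iteration~\eqref{eq:thetaupdate} as fixed-point iteration of the composite map $G_n:=\Pi_{\bTheta}\circ T_n$ on $\bTheta$, and to apply the Banach fixed-point theorem. By construction $\bhtheta_n^{(\ell)}=G_n(\bhtheta_n^{(\ell-1)})$. Lemma~\ref{prop:closed} shows that $\bTheta$ is closed, and it is bounded (as in the proof of Proposition~\ref{prop:2_supp}), so $\bTheta$ is a complete metric space under $\|\cdot\|$. Since $\Pi_{\bTheta}$ maps into $\bTheta$, the map $G_n$ sends $\bTheta$ into itself.

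First I will show that $G_n$ is a contraction. For $\btheta_1,\btheta_2\in\bTheta$, Lemma~\ref{lem:proj-firm} gives that $\Pi_{\bTheta}$ is $1$-Lipschitz, and (B1) gives that $T_n$ is $L$-Lipschitz. Hence
\[
\|G_n(\btheta_1)-G_n(\btheta_2)\|\leqslant\|T_n(\btheta_1)-T_n(\btheta_2)\|\leqslant L\|\btheta_1-\btheta_2\|,
\]
with $L<1$. Banach's theorem then gives a unique fixed point $\btheta^\dagger\in\bTheta$ of $G_n$, and convergence $\bhtheta_n^{(\ell)}\to\btheta^\dagger$ from any starting point in $\bTheta$. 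The rate follows by induction: $\|\bhtheta_n^{(\ell)}-\btheta^\dagger\|=\|G_n(\bhtheta_n^{(\ell-1)})-G_n(\btheta^\dagger)\|\leqslant L\|\bhtheta_n^{(\ell-1)}-\btheta^\dagger\|$.

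It remains to identify $\btheta^\dagger$ with the solution $\bhtheta_n$ of~\eqref{eq:indirect_obj} and to prove that this solution is unique. Take any zero $\bhtheta_n$, that is, $\hpi_n(\boheta_n)=\pibar(\bhtheta_n,\boheta_n,n)$. Then $T_n(\bhtheta_n)=\bhtheta_n$, and because $\bhtheta_n\in\bTheta$ the projection leaves it unchanged, so $G_n(\bhtheta_n)=\bhtheta_n$. By uniqueness of the fixed point, $\bhtheta_n=\btheta^\dagger$, so every zero equals $\btheta^\dagger$ and the zero is unique. Conversely, suppose $\btheta^\dagger\in\operatorname{int}(\bTheta)$, which is what (B2) asserts. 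Write $g:=\hpi_n(\boheta_n)-\pibar(\btheta^\dagger,\boeta_n,n)$, so that $\Pi_{\bTheta}(\btheta^\dagger+g)=\btheta^\dagger$. Lemma~\ref{prop:proj-implies-zero-btheta} then gives $g=0$, so $\btheta^\dagger$ is indeed a zero of~\eqref{eq:indirect_obj}. This converse is where (B2) is needed: without interiority the fixed point of the projected map could sit on the boundary of $\bTheta$ without solving the estimating equation.

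The step I expect to need the most care is this identification of the fixed point with the zero. (B2) is phrased in terms of $\bhtheta_n$ itself, so the argument has to be arranged so that it is not circular. I would handle this by reading (B2) as the statement that the fixed point $\btheta^\dagger$ lies in $\operatorname{int}(\bTheta)$; equivalently, by noting that any zero is automatically the fixed point, so the two notions coincide. The contraction and rate steps are routine once the two Lipschitz bounds are combined.
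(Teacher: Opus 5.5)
Your proposal is correct and follows essentially the same route as the paper. You both prove that $\Pi_{\bTheta}\circ T_n$ is an $L$-contraction from nonexpansiveness of the projection and (B1), apply Banach's fixed-point theorem on the closed set $\bTheta$, and use (B2) with Lemma~\ref{prop:proj-implies-zero-btheta} to show the fixed point solves~\eqref{eq:indirect_obj}. Your extra step, that every zero in $\bTheta$ is a fixed point, makes explicit the uniqueness of the zero, which the paper leaves implicit. Note that $\boeta_n$ in your definition of $g$ should read $\boheta_n$.
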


\setcounter{proposition}{\value{oldprop}}

\begin{proof}
Define $F(\btheta):=\Pi_{\bTheta}(T_n(\btheta))$ for $\btheta\in\bTheta$.
By Lemma~\ref{lem:proj-firm}, the metric projection is nonexpansive, hence
for any $\btheta_1,\btheta_2\in\bTheta$,
\[
\|F(\btheta_1)-F(\btheta_2)\|
=
\|\Pi_{\bTheta}(T_n(\btheta_1))-\Pi_{\bTheta}(T_n(\btheta_2))\|
\leqslant
\|T_n(\btheta_1)-T_n(\btheta_2)\|
\leqslant
L\|\btheta_1-\btheta_2\|.
\]
Thus $F$ is a contraction with constant $L<1$.
Since $\bTheta$ is a closed subset of a Hilbert space
(Lemma~\ref{prop:closed}), it is complete. Therefore, by the
Banach fixed-point theorem, there exists a unique
$\bhtheta_n\in\bTheta$ such that
\[
F(\bhtheta_n)=\bhtheta_n,
\]
that is,
\[
\bhtheta_n=\Pi_{\bTheta}(T_n(\bhtheta_n)).
\]
Moreover, for every $\bhtheta_n^{(0)}\in\bTheta$, the iterates
\[
\bhtheta_n^{(\ell)}=F(\bhtheta_n^{(\ell-1)}),
\qquad \ell=1,2,\dots,
\]
converge to $\bhtheta_n$ with linear rate
\[
\|\bhtheta_n^{(\ell)}-\bhtheta_n\|
\leqslant
L^\ell
\|\bhtheta_n^{(0)}-\bhtheta_n\|.
\]
Finally, as $\bhtheta_n\in\operatorname{int}(\bTheta)$,
then from $\bhtheta_n=\Pi_{\bTheta}(T_n(\bhtheta_n))$
Lemma~\ref{prop:proj-implies-zero-btheta}
implies $T_n(\bhtheta_n)=\bhtheta_n$.
\end{proof}

\setcounter{oldprop}{\value{proposition}}

\begin{proposition}
\label{pro:II_WI}
Let the tuning parameter estimator $\boheta_n\in\bH$, computed on the observed sample,  satisfies $\boheta_n-\boeta_0=O_p(n^{-1/2})$, with $\boeta_0\in\interior(\bH)
$.
Assume
\begin{equation}\label{eq:asslemma5_eta}
\sup_{\btheta \in \bTheta}
\|\hpi(\btheta,\boheta_n,n)-\pi(\btheta,\boeta_0)\|
=O_p(n^{-1/2}),
\end{equation}
and 
\begin{equation}\label{eq:simerr_eta}
\sup_{\btheta \in \bTheta}
\|\bar\pi(\btheta,\boheta_n,n)-\pi(\btheta,\boheta_n)\|
=o_p(n^{-1/2}).
\end{equation}
Assume that 
$\pi:\bTheta\times\bH\rightarrow\mathbb R^{d+d(d+1)/2}$ is continuous on
$\bTheta\times\bH$ and that, for any $\eps>0$,
\[
\inf_{\btheta\in\bTheta:\ \|\btheta-\btheta_0\|\geqslant\eps}
\|\pi(\btheta,\boeta_0)-\pi(\btheta_0,\boeta_0)\|>0,
\]
with $\btheta_0\in\interior(\bTheta)$.
Moreover, assume that $\pi$ is differentiable in $\btheta$ and $\boeta$ at
$(\btheta_0,\boeta_0)$ with full-rank Jacobian
$
\bA_0
=
\left.
\frac{\partial\, \pi(\btheta,\boeta)}{\partial \btheta}
\right|_{(\btheta,\boeta)=(\btheta_0,\boeta_0)}$ and $\bK_0:=\bA_0^{-1}$.

Then,
\[
\bhtheta_n-\btheta_0
=
\bK_0\big(\hpi_n(\boheta_n)-\pi(\btheta_0,\boheta_n)\big)
+\tilde\br_n,
\qquad
\tilde\br_n=o_p(n^{-1/2}).
\]
\end{proposition}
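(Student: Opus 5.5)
The plan is to first get consistency of $\bhtheta_n$ from Proposition~\ref{prop:2_supp}, then linearize the defining equation $\hpi_n(\boheta_n)=\pibar(\bhtheta_n,\boheta_n,n)$ around $(\btheta_0,\boeta_0)$ using the joint differentiability of $\pi$. Along the way I will show that $\bhtheta_n-\btheta_0=O_p(n^{-1/2})$, so that all remainders become $o_p(n^{-1/2})$.

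\emph{Step 1 (consistency).} I will check (A1)--(A3). (A1) follows from $\boheta_n-\boeta_0=O_p(n^{-1/2})$. For (A2), \eqref{eq:asslemma5_eta} gives uniform convergence at rate $n^{-1/2}$, hence uniform convergence in probability. (A3) is exactly the assumed continuity of $\pi$ together with the separation condition. Since $\bhtheta_n$ is a zero of the criterion, the premise of Proposition~\ref{prop:2_supp} holds trivially, so $\bhtheta_n\rightarrow_p\btheta_0$. Because $\btheta_0\in\interior(\bTheta)$, with probability tending to one $\bhtheta_n$ lies in a fixed neighbourhood of $\btheta_0$ inside $\bTheta$, and the projection in \eqref{eq:thetaupdate} is inactive there.

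\emph{Step 2 (expansion).} Using \eqref{eq:simerr_eta}, I will write
$\hpi_n(\boheta_n)=\pibar(\bhtheta_n,\boheta_n,n)=\pi(\bhtheta_n,\boheta_n)+o_p(n^{-1/2})$. Subtracting $\pi(\btheta_0,\boheta_n)$ gives
\[
\hpi_n(\boheta_n)-\pi(\btheta_0,\boheta_n)=\pi(\bhtheta_n,\boheta_n)-\pi(\btheta_0,\boheta_n)+o_p(n^{-1/2}).
\]
For the difference on the right I will add and subtract $\pi(\btheta_0,\boeta_0)$. Let $\bG_0$ denote the $\boeta$-Jacobian of $\pi$ at $(\btheta_0,\boeta_0)$. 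Differentiability at $(\btheta_0,\boeta_0)$ then gives
\[
\pi(\bhtheta_n,\boheta_n)-\pi(\btheta_0,\boeta_0)=\bA_0(\bhtheta_n-\btheta_0)+\bG_0(\boheta_n-\boeta_0)+o_p\big(\|\bhtheta_n-\btheta_0\|+\|\boheta_n-\boeta_0\|\big).
\]
Applying the same expansion to $\pi(\btheta_0,\boheta_n)-\pi(\btheta_0,\boeta_0)$ and subtracting, the $\bG_0$ terms cancel. Since $\boheta_n-\boeta_0=O_p(n^{-1/2})$, this yields
\[
\hpi_n(\boheta_n)-\pi(\btheta_0,\boheta_n)=\bA_0(\bhtheta_n-\btheta_0)+o_p(\|\bhtheta_n-\btheta_0\|)+o_p(n^{-1/2}).
\]
The $o_p$ statements are justified by Step 1, since both arguments converge in probability.

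\emph{Step 3 (rate and conclusion).} I will show the left-hand side is $O_p(n^{-1/2})$. The observed sample comes from $F_{\btheta_0}$, so by \eqref{eq:asslemma5_eta} evaluated at $\btheta_0$ we get $\hpi_n(\boheta_n)-\pi(\btheta_0,\boeta_0)=O_p(n^{-1/2})$. In addition, $\pi(\btheta_0,\boheta_n)-\pi(\btheta_0,\boeta_0)=O_p(n^{-1/2})$ by differentiability in $\boeta$. Because $\bA_0$ has full rank, $\|\bA_0 v\|\geqslant s_{\min}\|v\|$ with $s_{\min}>0$. Step 2 then gives $(s_{\min}-o_p(1))\|\bhtheta_n-\btheta_0\|\leqslant O_p(n^{-1/2})$, so $\bhtheta_n-\btheta_0=O_p(n^{-1/2})$. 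Every remainder in Step 2 is therefore $o_p(n^{-1/2})$, and multiplying by $\bK_0=\bA_0^{-1}$ gives the claim with $\tilde\br_n=\bK_0\,o_p(n^{-1/2})=o_p(n^{-1/2})$.

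The main obstacle is Step 3's use of \eqref{eq:asslemma5_eta} for the observed estimator. That assumption concerns a sample simulated from $F_{\btheta}$ with tuning parameter $\boheta_n$, whereas $\hpi_n(\boheta_n)$ uses a $\boheta_n$ computed from the same observed sample. I would argue as in the proof of Proposition~\ref{prop:2_supp}, namely that the observed auxiliary estimator has the law of $\hpi(\btheta_0,\boheta_n,n)$. If that is too loose, I would fall back on a uniform-in-$\boeta$ version over a shrinking neighbourhood of $\boeta_0$, which suffices since $\boheta_n\in$ that neighbourhood with probability tending to one.
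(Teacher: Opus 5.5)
Your proposal is correct and follows essentially the same route as the paper: consistency via Proposition~\ref{prop:2}, a joint first-order expansion of $\pi$ at $(\btheta_0,\boeta_0)$, absorption of the $o_p(\|\bhtheta_n-\btheta_0\|)$ remainder to get the $O_p(n^{-1/2})$ rate, and inversion by $\bK_0$. The only cosmetic difference is that you cancel the $\boeta$-Jacobian terms up front, whereas the paper carries $\bB_0(\boheta_n-\boeta_0)$ through the rate argument and only at the end adds and subtracts $\pi(\btheta_0,\boheta_n)$ to produce a remainder $\br_{\eta,n}=o_p(n^{-1/2})$; the step you flag as the main obstacle is handled in the paper exactly by your first option, namely identifying the law of $\hpi_n(\boheta_n)$ with that of $\hpi(\btheta_0,\boheta_n,n)$ because the observed sample is drawn from $F_{\btheta_0}$.
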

\begin{proof}

Using Proposition~\ref{prop:2}, we have $\bhtheta_n\rightarrow_p\btheta_0$.

Let
\begin{equation*}
\zeta_n(\bhtheta_n,\boheta_n)
=
\bar\pi(\bhtheta_n,\boheta_n,n)-\pi(\bhtheta_n,\boheta_n).
\end{equation*}
By assumption,
\begin{equation*}
\sup_{\btheta\in\bTheta}\|\zeta_n(\btheta,\boheta_n)\|
=
\sup_{\btheta\in\bTheta}
\|\bar\pi(\btheta,\boheta_n,n)-\pi(\btheta,\boheta_n)\|
=o_p(n^{-1/2}),
\end{equation*}
and in particular $\zeta_n(\bhtheta_n,\boheta_n)=o_p(n^{-1/2})$.

By a first-order Taylor expansion of $\pi(\btheta,\boeta)$ at $(\btheta_0,\boeta_0)$,
we have
\begin{equation*}
\pi(\btheta,\boeta)
=
\pi(\btheta_0,\boeta_0)
+
\bA_0(\btheta-\btheta_0)
+
\bB_0(\boeta-\boeta_0)
+
\br(\btheta,\boeta),
\end{equation*}
where
\[
\bB_0
:=
\left.\frac{\partial\,\pi(\btheta,\boeta)}{\partial\boeta}\right|_{(\btheta,\boeta)=(\btheta_0,\boeta_0)},
\]
and
\[
\br(\btheta,\boeta)
:=
\pi(\btheta,\boeta)-\pi(\btheta_0,\boeta_0)
-\bA_0(\btheta-\btheta_0)-\bB_0(\boeta-\boeta_0).
\]
Since $\pi$ is differentiable at $(\btheta_0,\boeta_0)$, the remainder $\br(\btheta,\boeta)$ satisfies
\begin{equation}\label{eq:reminderdet_joint_sum1}
\frac{\|\br(\btheta,\boeta)\|}{\|\btheta-\btheta_0\|+\|\boeta-\boeta_0\|}\rightarrow 0
\quad\text{as }(\btheta,\boeta)\rightarrow(\btheta_0,\boeta_0).
\end{equation}
Define, for $(\btheta,\boeta)\neq(\btheta_0,\boeta_0)$,
\[
f(\btheta,\boeta)
:=
\frac{\|\br(\btheta,\boeta)\|}{\|\btheta-\btheta_0\|+\|\boeta-\boeta_0\|},
\qquad
f(\btheta_0,\boeta_0)=0.
\]
Due to \eqref{eq:reminderdet_joint_sum1}, $f$ is continuous at $(\btheta_0,\boeta_0)$ and
$f(\btheta_0,\boeta_0)=0$.
By considering the sequence $(\bhtheta_n,\boheta_n)\rightarrow_p(\btheta_0,\boeta_0)$,
the   continuous mapping theorem \citep{van2000asymptotic} yields
\begin{equation*}
f(\bhtheta_n,\boheta_n)
=
\frac{\|\br(\bhtheta_n,\boheta_n)\|}{\|\bhtheta_n-\btheta_0\|+\|\boheta_n-\boeta_0\|}
\rightarrow_p0,
\end{equation*}
that is,
\begin{equation}\label{eq:reminaderop_joint_sum}
\br_n:=\br(\bhtheta_n,\boheta_n)
=
o_p\bigl(\|\bhtheta_n-\btheta_0\|+\|\boheta_n-\boeta_0\|\bigr).
\end{equation}

Then,
\begin{align*}
\mathbf 0
&=\hpi_n(\boheta_n)-\bar\pi(\bhtheta_n,\boheta_n,n)\\
&=\hpi_n(\boheta_n)-\pi(\bhtheta_n,\boheta_n)-\zeta_n(\bhtheta_n,\boheta_n)\\
&=\hpi_n(\boheta_n)-\pi(\btheta_0,\boeta_0)
-\bA_0(\bhtheta_n-\btheta_0)
-\bB_0(\boheta_n-\boeta_0)
-\br_n
-\zeta_n(\bhtheta_n,\boheta_n).
\end{align*}
Thus,
\[
\bhtheta_n-\btheta_0
=
\bK_0\big(\hpi_n(\boheta_n)-\pi(\btheta_0,\boeta_0)
-\bB_0(\boheta_n-\boeta_0))
-\bK_0\br_n
-\bK_0\zeta_n(\bhtheta_n,\boheta_n),
\]
Taking norms and using the triangle inequality,
\begin{align*}
\|\bhtheta_n-\btheta_0\|
&\leqslant
C\|\hpi_n(\boheta_n)-\pi(\btheta_0,\boeta_0)\|
+CB\|\boheta_n-\boeta_0\|
+C\|\br_n\|
+C\|\zeta_n(\bhtheta_n,\boheta_n)\|,
\end{align*}
with $C:=\sigma_{\max}(\bK_0)<\infty$ and $B:=\sigma_{\max}(\bB_0)<\infty$ by assumption and, where $\sigma_{\max}(\cdot)$ extracts the maximum singular value.

Moreover, since $\br_n = o_p\bigl(\|\bhtheta_n-\btheta_0\|+\|\boheta_n-\boeta_0\|\bigr)$, by the
definition of $o_p(\cdot)$, this means, for every $\eps>0$,
\begin{equation*}
\Pr\Big(
 \|\br_n\|
\le
\eps\bigl(\|\bhtheta_n-\btheta_0\|+\|\boheta_n-\boeta_0\|\bigr)
\Big) \rightarrow 1.
\end{equation*}
On the event
\(
E_{n,\eps}
:=
\bigl\{\|\br_n\|\leqslant \eps(\|\bhtheta_n-\btheta_0\|+\|\boheta_n-\boeta_0\|)\bigr\}
\), we obtain
\begin{align*}
\|\bhtheta_n-\btheta_0\|
&\le
C\|\hpi_n(\boheta_n)-\pi(\btheta_0,\boeta_0)\|
+CB\,\|\boheta_n-\boeta_0\|
+C\eps\|\bhtheta_n-\btheta_0\|
+C\eps\|\boheta_n-\boeta_0\|
+C\|\zeta_n(\bhtheta_n,\boheta_n)\|.
\end{align*}
Rearranging terms yields, on $E_{n,\eps}$,
\begin{equation}\label{eq:theta_rearranged}
(1-C\eps)\|\bhtheta_n-\btheta_0\|
\le
C\|\hpi_n(\boheta_n)-\pi(\btheta_0,\boeta_0)\|
+C(B+\eps)\|\boheta_n-\boeta_0\|
+C\|\zeta_n(\bhtheta_n,\boheta_n)\|.
\end{equation}
Choose $\eps>0$ such that $C\eps<1$. Dividing both sides of
\eqref{eq:theta_rearranged} by $(1-C\eps)$, we obtain on $E_{n,\eps}$
\begin{equation}\label{eq:theta_bound_final}
\|\bhtheta_n-\btheta_0\|
\le
C_1\|\hpi_n(\boheta_n)-\pi(\btheta_0,\boeta_0)\|
+C_2\|\boheta_n-\boeta_0\|
+C_1\|\zeta_n(\bhtheta_n,\boheta_n)\|,
\end{equation}
for finite constants $C_1,C_2>0$.
Therefore, for any $t>0$,
\[
\{\|\bhtheta_n-\btheta_0\|>t\}
\subseteq
E_{n,\eps}^c
\;\cup\;
\Bigl\{
C_1\|\hpi_n(\boheta_n)-\pi(\btheta_0,\boeta_0)\|
+
C_2\|\boheta_n-\boeta_0\|
+
C_1\|\zeta_n(\bhtheta_n,\boheta_n)\|
>t
\Bigr\}.
\]
This implies
\[
\Pr(\|\bhtheta_n-\btheta_0\|>t)
\le
\Pr(E_{n,\eps}^c)
+
\Pr\Bigl(
C_1\|\hpi_n(\boheta_n)-\pi(\btheta_0,\boeta_0)\|
+
C_2\|\boheta_n-\boeta_0\|
+
C_1\|\zeta_n(\bhtheta_n,\boheta_n)\|
>t
\Bigr).
\]
Since the bound holds for all $t>0$, it also holds for any deterministic
sequence $t=t_n>0$. In particular, since by assumption
\[
\sup_{\btheta \in \bTheta}
\|\hpi(\btheta,\boheta_n,n)-\pi(\btheta,\boeta_0)\|
=O_p(n^{-1/2}),
\]
it follows by evaluating at $\btheta_0$ and using that
the observed sample is a random sample from $F_{\btheta_0}$ that
\[
\|\hpi_n(\boheta_n)-\pi(\btheta_0,\boeta_0)\|=O_p(n^{-1/2}).
\]
Therefore, taking $t=Mn^{-1/2}$ and using
\[
\|\hpi_n(\boheta_n)-\pi(\btheta_0,\boeta_0)\|=O_p(n^{-1/2}),\qquad
\|\boheta_n-\boeta_0\|=O_p(n^{-1/2}),\qquad
\|\zeta_n(\bhtheta_n,\boheta_n)\|=o_p(n^{-1/2}),
\]
together with $\Pr(E_{n,\eps}^c)\rightarrow 0$, we conclude that
\begin{equation}\label{eq:ratehtheta}
\|\bhtheta_n-\btheta_0\|=O_p(n^{-1/2}).
\end{equation}
Moreover, since $\|\bhtheta_n-\btheta_0\|+\|\boheta_n-\boeta_0\|=O_p(n^{-1/2})$ and
$\|\br_n\|=o_p(\|\bhtheta_n-\btheta_0\|+\|\boheta_n-\boeta_0\|)$, we conclude that
\[
\br_n=o_p(n^{-1/2}).
\]

Define the combined remainder term
\[
\tilde\br_n
:=
-\bK_0\br_n
-\bK_0\zeta_n(\bhtheta_n,\boheta_n).
\]
 By the triangle inequality,
\begin{equation*}
\sqrt{n}\,\|\tilde\br_n\|
\le
C\sqrt{n}\,\|\br_n\|
+
C\sqrt{n}\,\|\zeta_n(\bhtheta_n,\boheta_n)\|.
\end{equation*}
From above we have $\|\br_n\| = o_p(n^{-1/2})$ and
$\|\zeta_n(\bhtheta_n,\boheta_n)\| = o_p(n^{-1/2})$, hence
$\sqrt{n}\,\|\br_n\|\rightarrow_p0$ and
$\sqrt{n}\,\|\zeta_n(\bhtheta_n)\|\rightarrow_p0$. Therefore,
\begin{equation*}
\sqrt{n}\,\|\tilde\br_n\|\rightarrow_p0,
\end{equation*}
which is equivalent to
\begin{equation*}
\tilde\br_n = o_p(n^{-1/2}).
\end{equation*}
Note that
\[
\bhtheta_n-\btheta_0
=
\bK_0\big(\hpi_n(\boheta_n)-\pi(\btheta_0,\boeta_0)-\bB_0(\boheta_n-\boeta_0))
+\tilde\br_n.
\]
Add and subtract $\pi(\btheta_0,\boheta_n)$:
\[
\hpi_n(\boheta_n)-\pi(\btheta_0,\boeta_0)-\bB_0(\boheta_n-\boeta_0)
=
\big(\hpi_n(\boheta_n)-\pi(\btheta_0,\boheta_n)\big)
+\br_{\eta,n},
\]
where
\[
\br_{\eta,n}
:=
\pi(\btheta_0,\boheta_n)-\pi(\btheta_0,\boeta_0)-\bB_0(\boheta_n-\boeta_0).
\]
Since $\pi(\btheta_0,\boeta)$ is differentiable at $\boeta_0$ with derivative
$\bB_0$, by definition of differentiability we have
\[
\frac{
\big\|
\pi(\btheta_0,\boeta)-\pi(\btheta_0,\boeta_0)-\bB_0(\boeta-\boeta_0)
\big\|
}{
\|\boeta-\boeta_0\|
}
\;\rightarrow\;0
\quad\text{as }\boeta\rightarrow\boeta_0.
\]
Define
\[
h(\boeta)
:=
\frac{
\big\|
\pi(\btheta_0,\boeta)-\pi(\btheta_0,\boeta_0)-\bB_0(\boeta-\boeta_0)
\big\|
}{
\|\boeta-\boeta_0\|
},
\qquad
h(\boeta_0):=0.
\]
Then $h$ is continuous at $\boeta_0$ and $h(\boeta_0)=0$. Since
$\boheta_n\rightarrow_p\boeta_0$, the   continuous mapping theorem \citep{van2000asymptotic} yields
$h(\boheta_n)\rightarrow_p 0$. Therefore,
\[
\|\br_{\eta,n}\|
=
\|\boheta_n-\boeta_0\|\,h(\boheta_n)
=
o_p(\|\boheta_n-\boeta_0\|).
\]
As $\boheta_n-\boeta_0=O_p(n^{-1/2})$, it follows that
$\br_{\eta,n}=o_p(n^{-1/2})$.
Thus,
\[
\bhtheta_n-\btheta_0
=
\bK_0\big(\hpi_n(\boheta_n)-\pi(\btheta_0,\boheta_n)\big)
+\tilde\br_n^{\,*},
\qquad
\tilde\br_n^{\,*}=o_p(n^{-1/2}),
\]
where
\[
\tilde\br_n^{\,*}
:=
\tilde\br_n+\bK_0\br_{\eta,n}.
\]

\end{proof}

\begin{lemma}\label{lem:ratio-uniform-eta}
Let $\bTheta$ and $\bH$ be compact, and let the tuning parameters $\boeta\in\bH$.
The observed sample $\bx_1,\ldots,\bx_n$ consists of i.i.d.\
observations drawn from $F_{\btheta_0}$, where
$\btheta_0\in\interior(\bTheta)$.
Let $\{F_{\btheta} : \btheta \in \bTheta\}$, where 
$\btheta=(\bmu^T,\vecth_s(\bSigma)^T)^T$,  be a $d$-dimensional
parametric family such that there exist i.i.d.\ random vectors 
$\bu_1,\ldots,\bu_n$ defined on $\mathcal U$ with known distribution $P_U$, not depending on $\btheta$,
with  $\bx_{i,\btheta}
=
G(\btheta,\bu_i)$, $i=1,\ldots,n$,
where $\E\|\bu_1\|^2<\infty$. Assume moreover, that there exists a measurable function
$m:\mathcal U\rightarrow[0,\infty)$ with $\E[m(\bu_1)^2]<\infty$
such that for all $\btheta_1,\btheta_2\in\bTheta$ and all
$\bu\in\mathcal U$,
\[
\|G(\btheta_1,\bu)-G(\btheta_2,\bu)\|
\le
m(\bu)\,\|\btheta_1-\btheta_2\|.
\]
For each $(\btheta,\boeta)\in\bTheta\times\bH$, define
\begin{equation*}
 \bar a_n(\btheta,\boeta)
=
\frac{1}{n}\sum_{i=1}^n a(\bx_{i,\btheta},\boeta),
\qquad
\bar b_n(\btheta,\boeta)
=
\frac{1}{n}\sum_{i=1}^n b(\bx_{i,\btheta},\boeta),
\end{equation*}
with population targets 
$\mu_{a}(\btheta,\boeta)=\E[a(\bx_{i,\btheta},\boeta)]$ and 
$\mu_{b}(\btheta,\boeta)=\E[b(\bx_{i,\btheta},\boeta)]$, 
where the  function $a: \mathbb R^{d}\times \mathbb R^{r}\rightarrow\mathbb R^{d+d(d+1)/2}$ and $b:\mathbb R^{d}\times \mathbb R^{r}\rightarrow\mathbb R^{d+d(d+1)/2}$. Assume that there exists $c>0$ such that for all $(\btheta,\boeta)\in\bTheta\times\bH$, $\min_{ j}|(\mu_b)_j(\btheta,\boeta)| > c$.
Define
\[
E_n
:=
\Big\{
\inf_{(\btheta,\boeta)\in\bTheta\times\bH}
\ \min_{ j}
\big|(\bar b_n)_j(\btheta,\boeta)\big|
\geqslant c
\Big\},
\]
and assume that there exists $\alpha>1/2$ such that $\Pr(E_n^c)=O(n^{-\alpha})$
 as $n\rightarrow\infty$.
 The estimator $\hP(\btheta,\boeta,n)$ is defined componentwise, for $j=1,\dots,d+d(d+1)/2$, as
\[
\big(\hP(\btheta,\boeta,n)\big)_j
:=
\begin{cases}
\dfrac{(\bar a_n)_j(\btheta,\boeta)}
      {(\bar b_n)_j(\btheta,\boeta)},
&  \big|(\bar b_n)_j(\btheta,\boeta)\big|
\geqslant \tilde \delta, \\[6pt]
\dfrac{(\bar a_n)_j(\btheta,\boeta)}
      {\tilde \delta},
&  \big|(\bar b_n)_j(\btheta,\boeta)\big|
< \tilde \delta,
\end{cases}
\]
with $0<\tilde \delta \leqslant c$.
Define
\begin{equation*}
P(\btheta,\boeta):=\mu_{a}(\btheta,\boeta)\oslash \mu_{b}(\btheta,\boeta).
\end{equation*}
Assume $a$ is Lipschitz continuous, which means that for some $L_a<\infty$, and any $\bx_1,\bx_2\in\mathbb R^d$, and 
$\boeta_1,\boeta_2\in\bH$,
\begin{equation*}
\|a(\bx_1,\boeta_1)-a(\bx_2,\boeta_2)\|
\le
L_a\bigl(\|\bx_1-\bx_2\|+\|\boeta_1-\boeta_2\|\bigr).
\end{equation*}
An analogous Lipschitz condition holds for $b(\bx,\boeta)$.
Moreover, assume that there exists an $M<\infty$ such that for all
$\bx\in\mathbb R^d$ and all $\boeta\in\bH$,
$\|a(\bx,\boeta)\|\leqslant M$,
$\|b(\bx,\boeta)\|\leqslant M$.
The tuning parameter estimator $\boheta_n$, computed on the observed sample,  satisfies $\boheta_n-\boeta_0=O_p(n^{-1/2})$. 
Further define
\[
\bar P(\btheta,\boeta,n):=\frac{1}{H}\sum_{h=1}^H \hP^{(h)}(\btheta,\boeta,n),
\]
where $\hP^{(h)}(\btheta,\boeta,n)$ is $\hP(\btheta,\boeta,n)$ computed on $H$ random samples $\{\bx^{(h)}_{1,\btheta},\ldots,\bx^{(h)}_{n,\btheta}\}_{h=1}^H$,  with law
$F_{\btheta}$. 
Assume that the number of Monte Carlo replications $H$ satisfies
$\sqrt{\frac{\log H}{H}}=o(n^{-1/2})$  as $n\rightarrow\infty$.

Then
\begin{equation*}
\sup_{\btheta\in\bTheta}\big\|
\hP(\btheta,\boheta_n,n)-P(\btheta,\boeta_0)
\big\| = O_p(n^{-1/2}),
\end{equation*}
and
\[
\sup_{\btheta\in\bTheta}
\left\|
\bar P(\btheta,\boheta_n,n)-P(\btheta,\boheta_n)
\right\|
=o_p(n^{-1/2}).
\]
\end{lemma}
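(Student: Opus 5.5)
Both claims follow the same pattern. We reduce the ratio estimator $\hP$ to the averages $\bar a_n,\bar b_n$, prove uniform $\sqrt n$-rates for those averages over the compact set $\bTheta\times\bH$ using the Lipschitz structure, and then handle the Monte Carlo average separately.

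\emph{Step 1 (uniform rate for the empirical averages).} Set $\bar a_n(\btheta,\boeta)-\mu_a(\btheta,\boeta)=n^{-1}\sum_i\{a(G(\btheta,\bu_i),\boeta)-\E a(G(\btheta,\bu_1),\boeta)\}$. The class $\{\bu\mapsto a(G(\btheta,\bu),\boeta):(\btheta,\boeta)\in\bTheta\times\bH\}$ is Lipschitz in the parameter:
\[
\|a(G(\btheta_1,\bu),\boeta_1)-a(G(\btheta_2,\bu),\boeta_2)\|\leqslant L_a(1+m(\bu))\,(\|\btheta_1-\btheta_2\|+\|\boeta_1-\boeta_2\|).
\]
The envelope $L_a(1+m)$ is square integrable, and the index set is a compact subset of a Euclidean space. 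By the bracketing result for parametric Lipschitz classes (van der Vaart, 2000, Example 19.7), the class is $P_U$-Donsker. Hence
\[
\sup_{\bTheta\times\bH}\|\bar a_n-\mu_a\|=O_p(n^{-1/2}),
\]
and likewise for $b$. The same argument gives that $\mu_a,\mu_b$ are Lipschitz in $(\btheta,\boeta)$, since $\E m(\bu_1)<\infty$.

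\emph{Step 2 (the ratio, first claim).} On $E_n$ every $|(\bar b_n)_j|\geqslant c\geqslant\tilde\delta$, so $\hP=\bar a_n\oslash\bar b_n$ there. Using the identity
\[
\frac{\bar a}{\bar b}-\frac{\mu_a}{\mu_b}=\frac{\bar a-\mu_a}{\bar b}-\frac{\mu_a(\bar b-\mu_b)}{\bar b\,\mu_b}
\]
together with $|\bar b|,|\mu_b|\geqslant c$ and $\|\mu_a\|\leqslant M$, we get $\sup\|\hP(\btheta,\boeta,n)-P(\btheta,\boeta)\|=O_p(n^{-1/2})$ uniformly in $\boeta$ on $E_n$. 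Since $\Pr(E_n^c)\to0$, this holds in probability overall. We then split
\[
\hP(\btheta,\boheta_n,n)-P(\btheta,\boeta_0)=[\hP(\btheta,\boheta_n,n)-P(\btheta,\boheta_n)]+[P(\btheta,\boheta_n)-P(\btheta,\boeta_0)].
\]
The first bracket is bounded by the uniform-in-$\boeta$ sup. The second is at most $L_P\|\boheta_n-\boeta_0\|=O_p(n^{-1/2})$, because $P$ is Lipschitz: it is a ratio of Lipschitz functions with denominators bounded away from $0$.

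\emph{Step 3 (Monte Carlo average, second claim) — the main obstacle.} The $H$ simulated samples are independent of the data, and we condition on $\boheta_n$. For fixed $(\btheta,\boeta)$, the replicates $\hP^{(h)}$ are i.i.d. and bounded by $M/\tilde\delta$ thanks to the clipping. We decompose
\[
\bar P-P=\big(\bar P-\E\hP\big)+\big(\E\hP-P\big).
\]

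For the bias term $\E\hP-P$, we bound it on $E_n$ by a second-order expansion of the ratio. The linear terms have mean zero; the quadratic remainder has expectation $O(\E\sup\|\bar b_n-\mu_b\|^2+\E\|\bar a_n-\mu_a\|\,\|\bar b_n-\mu_b\|)=O(n^{-1})$, using a maximal inequality for the Donsker class with bounded envelopes. On $E_n^c$ the contribution is at most $2M/\tilde\delta\cdot\Pr(E_n^c)=O(n^{-\alpha})=o(n^{-1/2})$; this is exactly where the assumption $\alpha>1/2$ is used.

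For the fluctuation term $\bar P-\E\hP$, we cover $\bTheta\times\bH$ by a grid of mesh $n^{-1}$, which has polynomially many points $N_n$. At each grid point, Hoeffding's inequality combined with a union bound gives deviations of order
\[
\sqrt{\log(N_n/\delta)/H}=O\big(\sqrt{\log H/H}+\sqrt{\log n/H}\big).
\]
This is $o(n^{-1/2})$ by the assumption on $H$, since that assumption forces $H\gg n$, so $\log n\lesssim\log H$. Between grid points we use Lipschitz continuity of $\hP^{(h)}$ in $(\btheta,\boeta)$ on $E_n$, with constant $\lesssim 1+n^{-1}\sum_i m(\bu_i^{(h)})$ averaged over $h$. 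This average is $O_p(1)$, so the interpolation error is $O_p(n^{-1})$. Off $E_n$ there is a discontinuity from the clipping, which we control again through $\Pr(E_n^c)$.

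The delicate points are (i) making the bias bound uniform in $(\btheta,\boeta)$ and (ii) the clipping discontinuity. Once these are handled, evaluating at $\boeta=\boheta_n$ yields the second claim.
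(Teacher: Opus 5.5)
Your proposal is correct and follows essentially the paper's route. The paper likewise uses a uniform Donsker bound (van der Vaart, Example~19.7) for $\bar a_n,\bar b_n$ over $\bTheta\times\bH$ and the Lipschitz ratio map on $E_n$. For the Monte Carlo part it also splits into the fluctuation around $\E\hP$ (finite net, Hoeffding, union bound) and a bias term, controlled by a second-order expansion on $E_n$ together with $\Pr(E_n^c)=O(n^{-\alpha})$, $\alpha>1/2$.

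The differences are minor:
\begin{itemize}
\item You route the first claim through $P(\btheta,\boheta_n)$, whereas the paper goes through $\hP(\btheta,\boeta_0,n)$.
\item The uniformity of the bias, which you flag as delicate, is automatic: the pointwise bound $\E|(\bar b_n)_j-(\mu_b)_j|^2\leqslant M^2/n$ already holds uniformly, so no maximal inequality is needed. Note, though, that on $E_n$ the linear term has mean $O(\Pr(E_n^c))$ rather than exactly zero.
\item Your interpolation step uses the random constant $1+n^{-1}\sum_i m(\bu_i^{(h)})$ and absorbs the replicates outside $E_n$ via $\Pr(E_n^c)$. This is actually more careful than the paper, which asserts a deterministic Lipschitz constant $L_a/\tilde\delta$ for $\hP^{(h)}$. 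That constant overlooks the factor $m(\bu)$, the dependence on the denominator, and the jump of the clipped ratio at $(\bar b_n)_j=-\tilde\delta$.
\end{itemize}
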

\begin{proof}
Let $\btheta_i=(\bmu_i^T,\vecth_s(\bSigma_i)^T)^T$, $i=1,2$. Then, for any $\bu\in\mathcal U$,
\begin{equation*}
\|G(\btheta_1,\bu)-G(\btheta_2,\bu)\|
\leqslant m(\bu)\,\|\btheta_1-\btheta_2\|,
\end{equation*}
where 
$\E[m(U)^2]<\infty$ for $U\sim P_U$.
Define for any $(\btheta,\boeta)\in\bTheta\times\bH$, $f_{\btheta,\boeta}(\bu):=a\big(G(\btheta,\bu),\boeta\big)$,
with components $f_{\btheta,\boeta,j}(\bu)$, $j=1,\dots,d+d(d+1)/2$.
By the assumptions on $G$ and $a$, for every $\bu\in\mathcal U$ and $j$,
\begin{align*}
\bigl|f_{\btheta_1,\boeta_1,j}(\bu)-f_{\btheta_2,\boeta_2,j}(\bu)\bigr|
&\le
\|f_{\btheta_1,\boeta_1}(\bu)-f_{\btheta_2,\boeta_2}(\bu)\|\\
&\leqslant L_a\bigl(\|G(\btheta_1,\bu)-G(\btheta_2,\bu)\|+\|\boeta_1-\boeta_2\|\bigr)\\
&\leqslant L_a\bigl(m(\bu)\|\btheta_1-\btheta_2\|+\|\boeta_1-\boeta_2\|\bigr)\\
&\leqslant m_G(\bu)\,\bigl(\|\btheta_1-\btheta_2\|^2+\|\boeta_1-\boeta_2\|^2\bigr)^{1/2},
\end{align*}
where $m_G(\bu):=L_a\sqrt{2}\,\max\{m(\bu),1\}$.
Moreover, $|f_{\btheta,\boeta,j}(\bu)|\leqslant M<\infty$ for all $(\btheta,\boeta,\bu)$.
Thus $\{f_{\btheta,\boeta,j}:(\btheta,\boeta)\in\bTheta\times\bH\}$ is a
Lipschitz-parametric class indexed by a compact, and thus bounded, subset of
$\mathbb R^{d+d(d+1)/2+r}$, with $\E[m_G(U)^2]<\infty$.
The boundedness assumption ensures that the class admits a finite (hence
square-integrable envelope).
By Example~19.7 in \cite{van2000asymptotic} the bracketing entropy integral is finite,
and therefore this class is $P_{ U}$-Donsker by Theorem~19.5. 
Consequently, for each fixed
$j$,
\begin{equation}
\sup_{(\btheta,\boeta)\in\bTheta\times\bH}
| (\bar a_n(\btheta,\boeta))_j-(\mu_a(\btheta,\boeta))_j|
=O_p(n^{-1/2}).
\end{equation}
and, thus,
\begin{equation}\label{eq:barAmuA}
\sup_{(\btheta,\boeta)\in\bTheta\times\bH}
\|\bar a_n(\btheta,\boeta)-\mu_a(\btheta,\boeta)\|
=O_p(n^{-1/2}).
\end{equation}
An identical argument gives
\begin{equation}\label{eq:barBmuB}
\sup_{(\btheta,\boeta)\in\bTheta\times\bH}
\|\bar b_n(\btheta,\boeta)-\mu_b(\btheta,\boeta)\|
=O_p(n^{-1/2}).
\end{equation}

On the event $E_n$, for each $j$ and each $(\btheta,\boeta)\in\bTheta\times\bH$,
both pairs $
\big((\bar a_n)_j(\btheta,\boeta),(\bar b_n)_j(\btheta,\boeta)\big)$,
 and 
$\big((\mu_a)_j(\btheta,\boeta),(\mu_b)_j(\btheta,\boeta)\big)$
belong to the domain $\{(a,b):|b|\geqslant c\}$ as by assumption $\inf_{(\btheta,\boeta)\in\bTheta\times\bH}\min_j |(\mu_b)_j(\btheta,\boeta)|> c$. 
The function $g$ is continuously differentiable there with
gradient $\nabla g(a,b)=(1/b,- a /b^2)$, which is uniformly bounded since
$|b|\geqslant c$ and $|a|\leqslant M$. Thus $g$ is Lipschitz, thus, there exists an $L_g<\infty$
such that
\begin{equation}\label{eq:gLineq}
|g(a_1,b_1)-g(a_2,b_2)|
\leqslant L_g\big(|a_1-a_2|+|b_1-b_2|\big).
\end{equation}
Applying this to the points to
$(a_1,b_1)=((\bar a_n)_{j}(\btheta,\boeta),(\bar b_n)_{j}(\btheta,\boeta))$ and
$(a_2,b_2)=((\mu_{a})_{j}(\btheta,\boeta),(\mu_{b})_{j}(\btheta,\boeta))$ yields
\begin{equation*}
\big|\hP_{j}(\btheta,\boeta,n)-P_{j}(\btheta,\boeta)\big|
\le
L_g\Big(\big|(\bar a_n)_{j}(\btheta,\boeta)-(\mu_{a})_{j}(\btheta,\boeta)\big|
+
\big|(\bar b_n)_{j}(\btheta,\boeta)-(\mu_{b})_{j}(\btheta,\boeta)\big|
\Big),
\end{equation*}
as $\hP_{j}(\btheta,\boeta,n)=g(a_1,b_1)$ and $P_{j}(\btheta,\boeta)=g(a_2,b_2)$ on $E_n$.
Taking the supremum over $(\btheta,\boeta)\in\bTheta\times\bH$, we obtain on the event $E_n$
\[
Y_n
\le
L_g
Z_n,
\]
where
\[
Y_n :=
\sup_{(\btheta,\boeta)\in\bTheta\times\bH}
\|\hP(\btheta,\boeta,n)-P(\btheta,\boeta)\|,\quad Z_n :=
\sup_{(\btheta,\boeta)\in\bTheta\times\bH}
\big(
\|\bar a_n-\mu_a\|+\|\bar b_n-\mu_b\|
\big).
\]
Thus, we have
\[
\Pr\big(\{Y_n > M n^{-1/2}\}\cap E_n\big)\le
\Pr(L_g Z_n > M n^{-1/2}).
\]
By \eqref{eq:barAmuA} and \eqref{eq:barBmuB}, $Z_n = O_p(n^{-1/2})$. Hence, for every $\eps>0$ there exists $M<\infty$ and $N_1$
such that for all $n\geqslant N_1$,
$\Pr(L_g Z_n > M n^{-1/2}) \leqslant \eps/2$.
Since by assumption $\Pr(E_n^c)\rightarrow 0$, by definition of convergence there exists
$N_2$ such that for all $n\geqslant N_2$,
$\Pr(E_n^c) \leqslant \eps/2$.
Moreover,
\begin{align*}
\Pr\big( Y_n > M n^{-1/2} \big)
&=
\Pr\big(\{Y_n > M n^{-1/2}\}\cap E_n\big)
+
\Pr\big(\{Y_n > M n^{-1/2}\}\cap E_n^c\big) \\
&\le
\Pr\big(\{Y_n > M n^{-1/2}\}\cap E_n\big)
+
\Pr(E_n^c).
\end{align*}
Therefore, for all $n\geqslant N:=\max(N_1,N_2)$,
$\Pr\big( Y_n > M n^{-1/2} \big)
\le
\eps/2 + \eps/2
=
\eps$.
Because $\eps>0$ is arbitrary, we conclude that 
\begin{equation}\label{eq:sumP}
\sup_{(\btheta,\boeta)\in\bTheta\times\bH}
\|\hP(\btheta,\boeta,n)-P(\btheta,\boeta)\|
=O_p(n^{-1/2}).
\end{equation}

Thus,
\begin{equation}\label{eq:inePhatetahat}
\sup_{\btheta\in\bTheta}
\|\hP(\btheta,\boheta_n,n)-P(\btheta,\boeta_0)\|
\le
\sup_{\btheta\in\bTheta}
\|\hP(\btheta,\boheta_n,n)-\hP(\btheta,\boeta_0,n)\|
+
\sup_{\btheta\in\bTheta}
\|\hP(\btheta,\boeta_0,n)-P(\btheta,\boeta_0)\|.
\end{equation}
By \eqref{eq:sumP}, the second term on the right hand side is $O_p(n^{-1/2})$ .

By assumption, for any $\btheta\in\bTheta$, any $\boeta_1,\boeta_2\in\bH$, and any
$\bu\in\mathcal U$,
\begin{equation*}
\|a(G(\btheta,\bu),\boeta_1)-a(G(\btheta,\bu),\boeta_2)\|
\le
L_a\,\|\boeta_1-\boeta_2\|,
\end{equation*}
and analogously,
\begin{equation*}
\|b(G(\btheta,\bu),\boeta_1)-b(G(\btheta,\bu),\boeta_2)\|
\le
L_b\,\|\boeta_1-\boeta_2\|.
\end{equation*}
Averaging over $i$ preserves the Lipschitz constant, so
\begin{equation*}
\|\bar a_n(\btheta,\boeta_1)-\bar a_n(\btheta,\boeta_2)\|
\leqslant L_a\|\boeta_1-\boeta_2\|,
\qquad
\|\bar b_n(\btheta,\boeta_1)-\bar b_n(\btheta,\boeta_2)\|
\leqslant L_b\|\boeta_1-\boeta_2\|.
\end{equation*}
This means  that
\begin{equation*}
\big|(\bar a_n)_{j}(\btheta,\boeta_1)-(\bar a_n)_{j}(\btheta,\boeta_2)\big|
\leqslant L_a\|\boeta_1-\boeta_2\|,
\qquad
\big|(\bar b_n)_{j}(\btheta,\boeta_1)-(\bar b_n)_{j}(\btheta,\boeta_2)\big|
\leqslant L_b\|\boeta_1-\boeta_2\|.
\end{equation*}
On $E_n$, apply \eqref{eq:gLineq} to 
$(a_1,b_1)=((\bar a_n)_{j}(\btheta,\boeta_1),(\bar b_n)_{j}(\btheta,\boeta_1))$ and
$(a_2,b_2)=((\bar a_n)_{j}(\btheta,\boeta_2),(\bar b_n)_{j}(\btheta,\boeta_2))$,
we obtain
\begin{equation*}
\big|\hP_{j}(\btheta,\boeta_1,n)-\hP_{j}(\btheta,\boeta_2,n)\big|
\le
L_g\Big(
\big|(\bar a_n)_{j}(\btheta,\boeta_1)-(\bar a_n)_{j}(\btheta,\boeta_2)\big|
+
\big|(\bar b_n)_{j}(\btheta,\boeta_1)-(\bar b_n)_{j}(\btheta,\boeta_2)\big|
\Big).
\end{equation*}
Thus, for each $j$,
\begin{equation*}
\big|\hP_{j}(\btheta,\boeta_1,n)-\hP_{j}(\btheta,\boeta_2,n)\big|
\le
L_g(L_a+L_b)\,\|\boeta_1-\boeta_2\|.
\end{equation*}
Since there are finitely many entries,
there exists a finite constant $C_P$ such that for any $\btheta\in\bTheta,\ \boeta_1,\boeta_2\in\bH$,
\begin{equation*}
\|\hP(\btheta,\boeta_1,n)-\hP(\btheta,\boeta_2,n)\|
\leqslant C_P\|\boeta_1-\boeta_2\|.
\end{equation*}
Taking the supremum over $\btheta\in\bTheta$ preserves the bound for any $\boeta_1,\boeta_2\in\bH$
\begin{equation*}
\sup_{\btheta\in\bTheta}
\|\hP(\btheta,\boeta_1,n)-\hP(\btheta,\boeta_2,n)\|
\leqslant C_P\|\boeta_1-\boeta_2\|.
\end{equation*}
Applying this with $\boeta_1=\boheta_n$ and $\boeta_2=\boeta_0$ gives on $E_n$
\begin{equation*}
W_n \leqslant C_P\|\boheta_n-\boeta_0\|.
\end{equation*}
where
\[
W_n
:=
\sup_{\btheta\in\bTheta}
\|\hP(\btheta,\boheta_n,n)-\hP(\btheta,\boeta_0,n)\|.
\]
Therefore, for any $M>0$,
\[
\Pr\big(\{W_n > M n^{-1/2}\}\cap E_n\big)
\le
\Pr\big(C_P\|\boheta_n-\boeta_0\| > M n^{-1/2}\big).
\]
Since $\|\boheta_n-\boeta_0\|=O_p(n^{-1/2})$,
for every $\eps>0$ there exist $M<\infty$ and $N_1$ such that
for all $n\geqslant N_1$,
$\Pr\big(C_P\|\boheta_n-\boeta_0\| > M n^{-1/2}\big)
\leqslant \eps/2$.
Moreover, since $\Pr(E_n^c)\rightarrow 0$, there exists $N_2$ such that
for all $n\geqslant N_2$,
$\Pr(E_n^c)\leqslant \eps/2$.
Finally,
\begin{align*}
\Pr\big(W_n > M n^{-1/2}\big)
&=
\Pr\big(\{W_n > M n^{-1/2}\}\cap E_n\big)
+
\Pr\big(\{W_n > M n^{-1/2}\}\cap E_n^c\big) \\
&\le
\Pr\big(\{W_n > M n^{-1/2}\}\cap E_n\big)
+
\Pr(E_n^c).
\end{align*}
Hence, for all $n\geqslant N:=\max(N_1,N_2)$,
$\Pr\big(W_n > M n^{-1/2}\big)
\le
\eps/2+\eps/2
=
\eps$.
Because $\eps>0$ is arbitrary, we conclude that
\begin{equation}\label{eq:comb2}
\sup_{\btheta\in\bTheta}
\|\hP(\btheta,\boheta_n,n)-\hP(\btheta,\boeta_0,n)\|
=O_p(n^{-1/2}).
\end{equation}
By using \eqref{eq:comb2}
in \eqref{eq:inePhatetahat}, it yields
\begin{equation*}
\sup_{\btheta\in\bTheta}
\|\hP(\btheta,\boheta_n,n)-P(\btheta,\boeta_0)\|
=O_p(n^{-1/2}).
\end{equation*}

Next, for each replication $h$, define
\[
\bar a^{(h)}_n(\btheta,\boeta):=\frac{1}{n}\sum_{i=1}^n a(\bx^{(h)}_{i,\btheta},\boeta),
\qquad
\bar b^{(h)}_n(\btheta,\boeta):=\frac{1}{n}\sum_{i=1}^n b(\bx^{(h)}_{i,\btheta},\boeta).
\]
Moreover, let $P(\btheta,\boeta,n):=\E\!\left[\hP(\btheta,\boeta,n)\right]$
denotes the finite-sample target.
We have for all $(\btheta,\boeta)\in\bTheta\times\bH$ and all $h$,
\begin{equation}\label{eq:Pj-bounded}
|\hP^{(h)}_{j}(\btheta,\boeta,n)| \leqslant M/\tilde\delta,
\qquad
\forall\,(\btheta,\boeta)\in\bTheta\times\bH,\ \forall\,h=1,\dots,H.
\end{equation}
Moreover, 
we have $|\hP^{(h)}_j(\btheta,\boeta,n)|\le|(\bar a_n^{(h)})_j(\btheta,\boeta)|/\tilde\delta$.
Since $(\bar a_n^{(h)})_j(\btheta,\boeta)$ is Lipschitz in $(\btheta,\boeta)$
with constant $L_a$, it follows that
\[
|\hP^{(h)}_{j}(\btheta_1,\boeta_1,n)
-
\hP^{(h)}_{j}(\btheta_2,\boeta_2,n)|
\le
\frac{L_a}{\tilde\delta}
\bigl(\|\btheta_1-\btheta_2\|+\|\boeta_1-\boeta_2\|\bigr).
\]
Therefore, letting $L:=L_a/\tilde\delta$, for all $h$,
\begin{equation}\label{eq:Pj-Lip}
|\hP^{(h)}_{j}(\btheta_1,\boeta_1,n)
-
\hP^{(h)}_{j}(\btheta_2,\boeta_2,n)|
\le
L\bigl(\|\btheta_1-\btheta_2\|+\|\boeta_1-\boeta_2\|\bigr),
\qquad
(\btheta_k,\boeta_k)\in\bTheta\times\bH.
\end{equation}
Since $P_{j}(\btheta,\boeta,n)=\E[\hP_{j}(\btheta,\boeta,n)]$,
the same Lipschitz bound holds for $P_{j}(\btheta,\boeta,n)$ by Jensen's inequality.
Let $\mathcal N_\delta$ be a finite subset of $\bTheta\times\bH$
such that for every $(\btheta,\boeta)\in\bTheta\times\bH$
there exists a $(\btheta^\delta,\boeta^\delta)\in\mathcal N_\delta$ with $\|\btheta-\btheta^\delta\|
+
\|\boeta-\boeta^\delta\|
\leqslant \delta$.
Such a set exists because $\bTheta\times\bH$ is compact,  hence totally bounded,
and $\#\mathcal N_\delta \leqslant C_0\,\delta^{-(r+d+d(d+1)/2)}$ for some $C_0<\infty$.
For any $(\btheta,\boeta)$, choose
$(\btheta^\delta,\boeta^\delta)\in\mathcal N_\delta$
with distance at most $\delta$.
By adding and subtracting
$\bar P_j(\btheta^\delta,\boeta^\delta,n)$
and
$P_{j}(\btheta^\delta,\boeta^\delta,n)$
and using the triangle inequality,
\begin{align*}
|\bar P_j(\btheta,\boeta,n)-P_{j}(\btheta,\boeta,n)|
&\le
|\bar P_j(\btheta,\boeta,n)
-
\bar P_j(\btheta^\delta,\boeta^\delta,n)|
\\
&\quad
+
|\bar P_j(\btheta^\delta,\boeta^\delta,n)
-
P_{j}(\btheta^\delta,\boeta^\delta,n)|
\\
&\quad
+
|P_{j}(\btheta^\delta,\boeta^\delta,n)
-
P_{j}(\btheta,\boeta,n)|.
\end{align*}
By \eqref{eq:Pj-Lip}, 
$\hP^{(h)}_{j}(\btheta,\boeta,n)$
is $L$-Lipschitz, hence so is its average $\bar P_j(\btheta,\boeta,n)$.
Therefore
\[
|\bar P_j(\btheta,\boeta,n)
-
\bar P_j(\btheta^\delta,\boeta^\delta,n)|
\leqslant L\delta.
\]
Since $P_{j}(\btheta,\boeta,n)$ is also $L$-Lipschitz,
\[
|P_{j}(\btheta^\delta,\boeta^\delta,n)
-
P_{j}(\btheta,\boeta,n)|
\leqslant L\delta.
\]
Combining the bounds gives
\[
|\bar P_j(\btheta,\boeta,n)-P_{j}(\btheta,\boeta,n)|
\le
|\bar P_j(\btheta^\delta,\boeta^\delta,n)
-
P_{j}(\btheta^\delta,\boeta^\delta,n)|
+
2L\delta,
\]
and therefore
\begin{equation}\label{eq:sup-net-P}
\sup_{(\btheta,\boeta)\in\bTheta\times\bH}
|\bar P_j(\btheta,\boeta,n)-P_{j}(\btheta,\boeta,n)|
\le
\max_{(\btheta,\boeta)\in\mathcal N_\delta}
|\bar P_j(\btheta,\boeta,n)-P_{j}(\btheta,\boeta,n)|
+
2L\delta.
\end{equation}
For each fixed $(\btheta,\boeta)\in\mathcal N_\delta$,
the variables
$\{\hP^{(h)}_{j}(\btheta,\boeta,n)\}_{h=1}^H$
are i.i.d.\ and bounded by $[-B,B]$, with $B:=M/\tilde\delta$, by construction. 
Hence Hoeffding's inequality yields, for any $t>0$
\[
\Pr\!\left(
|\bar P_j(\btheta,\boeta,n)
-
P_{j}(\btheta,\boeta,n)|>t
\right)
\le
2\exp\!\left(-\frac{H t^2}{2B^2}\right).
\]
Choose $\delta=H^{-1/2}$ and define
\[
t_H
:=
B\sqrt{\frac{(r+d+\frac{d(d+1)}{2}+2)\log H+2\log(2C_0)}{H}}.
\]
Then
\[
2C_0\,\delta^{-(r+d+\frac{d(d+1)}{2})}
\exp\!\left(-\frac{H t_H^2}{2B^2}\right)
=
H^{-1},
\]
so that
\[
\Pr\!\left(
\max_{(\btheta,\boeta)\in\mathcal N_{H^{-1/2}}}
|\bar P_j(\btheta,\boeta,n)
-
P_{j}(\btheta,\boeta,n)|>t_H
\right)
\le
H^{-1}.
\]
Using \eqref{eq:sup-net-P} with $\delta=H^{-1/2}$, we have the implication
\begin{align*}
&
\Big\{
\sup_{(\btheta,\boeta)\in\bTheta\times\bH}
|\bar P_j(\btheta,\boeta,n)-P_j(\btheta,\boeta,n)|
>
t_H + 2L\,H^{-1/2}
\Big\}
\\
&\qquad\subseteq
\Big\{
\max_{(\btheta,\boeta)\in\mathcal N_{H^{-1/2}}}
|\bar P_j(\btheta,\boeta,n)-P_j(\btheta,\boeta,n)|
>
t_H
\Big\}.
\end{align*}
Therefore,
\[
\Pr\!\left(
\sup_{(\btheta,\boeta)\in\bTheta\times\bH}
|\bar P_j(\btheta,\boeta,n)-P_j(\btheta,\boeta,n)|
>
t_H + 2L\,H^{-1/2}
\right)
\le
H^{-1}.
\]
Since $\sqrt{\frac{\log H}{H}} = o(n^{-1/2})$, it follows $H\rightarrow\infty$, we have 
\[
\sup_{(\btheta,\boeta)\in\bTheta\times\bH}
|\bar P_j(\btheta,\boeta,n)-P_{j}(\btheta,\boeta,n)|
=
O_p\!\bigl(t_H + 2L\,H^{-1/2}\bigr).
\]
Now,
\[
t_H
=
B\sqrt{\frac{(r+d+\frac{d(d+1)}{2}+2)\log H + 2\log(2C_0)}{H}}
=
O\!\left(\sqrt{\frac{\log H}{H}}\right),
\]
and clearly $H^{-1/2} = O\!\left(\sqrt{\frac{\log H}{H}}\right)$ for large $H$.
Therefore,
\[
t_H + 2L\,H^{-1/2}
=
O\!\left(\sqrt{\frac{\log H}{H}}\right),
\]
and we conclude
\begin{equation}\label{eq:MC-rate}
\sup_{(\btheta,\boeta)\in\bTheta\times\bH}
|\bar P_j(\btheta,\boeta,n)-P_{j}(\btheta,\boeta,n)|
=
O_p\!\left(\sqrt{\frac{\log H}{H}}\right).
\end{equation}

Recall that
\[
P_j(\btheta,\boeta,n)
=
\E\!\left[\hP_j(\btheta,\boeta,n)\right],
\qquad
P_j(\btheta,\boeta)
=
g\!\left((\mu_a)_j(\btheta,\boeta),(\mu_b)_j(\btheta,\boeta)\right),
\]
where $g(a,b)=a/b$.
On the event $E_n$, we have $|(\bar b_n)_j(\btheta,\boeta)|\geqslant c$ and also
$|(\mu_b)_j(\btheta,\boeta)|> c$ by assumption, hence both points
$\big((\bar a_n)_j,(\bar b_n)_j\big)$ and $\big((\mu_a)_j,(\mu_b)_j\big)$
belong to the domain $\{(a,b):|b|\geqslant c\}$.
Since $g(a,b)=a/b$ is twice continuously differentiable on the domain
$\{|b|\geqslant c\}$ and its Hessian is uniformly bounded there, a second-order
Taylor expansion of $g$ around
$\big((\mu_a)_j(\btheta,\boeta),(\mu_b)_j(\btheta,\boeta)\big)$ gives,
for some point
$\xi$ on the line segment joining
$\big((\bar a_n)_j(\btheta,\boeta),(\bar b_n)_j(\btheta,\boeta)\big)$, and
$\big((\mu_a)_j(\btheta,\boeta),(\mu_b)_j(\btheta,\boeta)\big)$,
the expansion
\begin{align*}
g\!\left((\bar a_n)_j(\btheta,\boeta),(\bar b_n)_j(\btheta,\boeta)\right)
&=
g\!\left((\mu_a)_j(\btheta,\boeta),(\mu_b)_j(\btheta,\boeta)\right) +
\left(\nabla g_{0}(\btheta,\boeta)\right)^T
\begin{pmatrix}
\Delta_{a,n}(\btheta,\boeta)\\
\Delta_{b,n}(\btheta,\boeta)
\end{pmatrix}\\
&\quad+
\frac12
\begin{pmatrix}
\Delta_{a,n}(\btheta,\boeta)\\
\Delta_{b,n}(\btheta,\boeta)
\end{pmatrix}^T
H(\xi)
\begin{pmatrix}
\Delta_{a,n}(\btheta,\boeta)\\
\Delta_{b,n}(\btheta,\boeta)
\end{pmatrix}.
\end{align*}
where 
\[
\Delta_{a,n}(\btheta,\boeta):=(\bar a_n)_j(\btheta,\boeta)-(\mu_a)_j(\btheta,\boeta),
\qquad
\Delta_{b,n}(\btheta,\boeta):=(\bar b_n)_j(\btheta,\boeta)-(\mu_b)_j(\btheta,\boeta),
\]
and 
\[
\nabla g_{0}(\btheta,\boeta):=
\nabla g\!\left((\mu_a)_j(\btheta,\boeta),(\mu_b)_j(\btheta,\boeta)\right),
\qquad
H(\xi):=\nabla^2 g(\xi).
\]
Let $A_i:=a_j(\bx_{i,\btheta},\boeta)$. Then $\E[A_i]=(\mu_a)_j(\btheta,\boeta)$ and
\begin{equation}\label{eq:varaince bounddeltaa}
\E\!\left[\big(\Delta_{a,n}(\btheta,\boeta)\big)^2\right]
=\Var\!\left(\frac1n\sum_{i=1}^n A_i\right)
=\frac{1}{n}\Var(A_1)
\leqslant \frac{1}{n}\E[A_1^2]
\leqslant \frac{M^2}{n},
\end{equation}
where the last inequality follows from $|A_1|\leqslant M$ , uniformly in $(\btheta,\boeta)$.
An identical argument applies to $\Delta_{b,n}$.
Multiplying the Taylor expansion by $I_{E_n}$ and taking expectations yields
\begin{align*}
&\E\!\Big[
\Big(
g\!\left((\bar a_n)_j(\btheta,\boeta),(\bar b_n)_j(\btheta,\boeta)\right)
-
g\!\left((\mu_a)_j(\btheta,\boeta),(\mu_b)_j(\btheta,\boeta)\right)
\Big)I_{E_n}
\Big]
\\
&\qquad=
\left(\nabla g_{0}(\btheta,\boeta)\right)^T
\E\!\Big[
\begin{pmatrix}
\Delta_{a,n}(\btheta,\boeta)\\
\Delta_{b,n}(\btheta,\boeta)
\end{pmatrix}
I_{E_n}
\Big]
+
\frac12\,\E\!\Big[
\begin{pmatrix}
\Delta_{a,n}(\btheta,\boeta)\\
\Delta_{b,n}(\btheta,\boeta)
\end{pmatrix}^T
H(\xi)
\begin{pmatrix}
\Delta_{a,n}(\btheta,\boeta)\\
\Delta_{b,n}(\btheta,\boeta)
\end{pmatrix}
I_{E_n}
\Big].
\end{align*}
Note that $\E[\Delta_{a,n}(\btheta,\boeta)]=0$ and $\E[\Delta_{b,n}(\btheta,\boeta)]=0$.
Hence
\[
\E[\Delta_{a,n}(\btheta,\boeta)I_{E_n}]
=
-\E[\Delta_{a,n}(\btheta,\boeta)I_{E_n^c}],
\qquad
\E[\Delta_{b,n}(\btheta,\boeta)I_{E_n}]
=
-\E[\Delta_{b,n}(\btheta,\boeta)I_{E_n^c}].
\]
By Cauchy--Schwarz and the variance bound \eqref{eq:varaince bounddeltaa},
\begin{align*}
\sup_{(\btheta,\boeta)\in\bTheta\times\bH}
\big|\E[\Delta_{a,n}(\btheta,\boeta)I_{E_n}]\big|
&=
\sup_{(\btheta,\boeta)\in\bTheta\times\bH}
\big|\E[\Delta_{a,n}(\btheta,\boeta)I_{E_n^c}]\big|
\\
&\le
\sup_{(\btheta,\boeta)\in\bTheta\times\bH}
\sqrt{\E\!\big[\Delta_{a,n}(\btheta,\boeta)^2\big]}\,
\sqrt{\Pr(E_n^c)}
\\
&\le
\frac{M}{\sqrt n}\sqrt{\Pr(E_n^c)}.
\end{align*}
and similarly for $\Delta_{b,n}$.
Since $\|\nabla g_0(\btheta,\boeta)\|$ is uniformly bounded
(because $|(\mu_b)_j(\btheta,\boeta)|> c$ and $|(\mu_a)_j(\btheta,\boeta)|\leqslant M$),
it follows that the linear term is
\[
\sup_{(\btheta,\boeta)\in\bTheta\times\bH}
\left|
\left(\nabla g_{0}(\btheta,\boeta)\right)^T
\E\!\Big[
\begin{pmatrix}
\Delta_{a,n}(\btheta,\boeta)\\
\Delta_{b,n}(\btheta,\boeta)
\end{pmatrix}
I_{E_n}
\Big]
\right|
=
O\!\left(\frac{1}{\sqrt n}\sqrt{\Pr(E_n^c)}\right).
\]
On $E_n$, the point $\xi$ lies in $\{|b|\geqslant c\}$, hence $\|H(\xi)\|\leqslant C$ for a constant $C<\infty$.
Therefore,
\begin{align*}
&\sup_{(\btheta,\boeta)\in\bTheta\times\bH}
\left|
\E\!\Big[
\begin{pmatrix}
\Delta_{a,n}(\btheta,\boeta)\\
\Delta_{b,n}(\btheta,\boeta)
\end{pmatrix}^T
H(\xi)
\begin{pmatrix}
\Delta_{a,n}(\btheta,\boeta)\\
\Delta_{b,n}(\btheta,\boeta)
\end{pmatrix}
I_{E_n}
\Big]
\right|
\\
&\qquad\le
C\,
\sup_{(\btheta,\boeta)\in\bTheta\times\bH}
\E\!\Big[
\big(\Delta_{a,n}(\btheta,\boeta)^2+\Delta_{b,n}(\btheta,\boeta)^2\big) I_{E_n}
\Big]
\\
&\qquad\le
C\,
\sup_{(\btheta,\boeta)}\E\!\big[\Delta_{a,n}(\btheta,\boeta)^2\big]
+
C\,
\sup_{(\btheta,\boeta)}\E\!\big[\Delta_{b,n}(\btheta,\boeta)^2\big]
=
O(n^{-1}).
\end{align*}
The last equality uses \eqref{eq:varaince bounddeltaa} and the analogous for $\Delta_{b,n}$.
Recall that
$|\hP_j(\btheta,\boeta,n)|
\leqslant |(\bar a_n)_j(\btheta,\boeta)|/\tilde\delta$
and that
$|P_j(\btheta,\boeta)|
\leqslant \frac{M}{c}$.
Hence, using $|(\bar a_n)_j(\btheta,\boeta)|\leqslant M$,
\[
\sup_{(\btheta,\boeta)\in\bTheta\times\bH}
\Big|
\E\big[(\hP_j(\btheta,\boeta,n)-P_j(\btheta,\boeta))I_{E_n^c}\big]
\Big|
\le
M\left(\frac{1}{\tilde\delta}+\frac{1}{c}\right)
\Pr(E_n^c).
\]
Using that for any integrable random vector $X$, we have
$\E[X]
=
\E[XI_{E_n}]
+
\E[XI_{E_n^c}]$, then
\begin{align*}
P_j(\btheta,\boeta,n)-P_j(\btheta,\boeta)
&=
\E\!\big[(\hP_j(\btheta,\boeta,n)-P_j(\btheta,\boeta))I_{E_n}\big]
\\
&\quad+
\E\!\big[(\hP_j(\btheta,\boeta,n)-P_j(\btheta,\boeta))I_{E_n^c}\big].
\end{align*}
Taking absolute values and the supremum over $(\btheta,\boeta)$ yields
\begin{align*}
&\sup_{(\btheta,\boeta)\in\bTheta\times\bH}
\big|P_j(\btheta,\boeta,n)-P_j(\btheta,\boeta)\big|
\\
&\qquad\le
\sup_{(\btheta,\boeta)}
\Big|
\E\big[(\hP_j(\btheta,\boeta,n)-P_j(\btheta,\boeta))I_{E_n}\big]
\Big|
\\
&\qquad\quad+
\sup_{(\btheta,\boeta)}
\Big|
\E\big[(\hP_j(\btheta,\boeta,n)-P_j(\btheta,\boeta))I_{E_n^c}\big]
\Big|.
\end{align*}
Combining the bounds above yields
\[
\sup_{(\btheta,\boeta)\in\bTheta\times\bH}
\big|P_j(\btheta,\boeta,n)-P_j(\btheta,\boeta)\big|
=
O(n^{-1})
+
O\!\left(\frac{1}{\sqrt n}\sqrt{\Pr(E_n^c)}\right)
+
O\!\big(\Pr(E_n^c)\big).
\]
By assumption, $\Pr(E_n^c)=O(n^{-\alpha})$, so
\[
\sup_{(\btheta,\boeta)\in\bTheta\times\bH}
\big|P_j(\btheta,\boeta,n)-P_j(\btheta,\boeta)\big|
=
O(n^{-1})
+
O\!\big(n^{-(1+\alpha)/2}\big)
+
O(n^{-\alpha}).
\]
Since $\alpha>1/2$ then the right-hand side is $o(n^{-1/2})$.
Then,
\begin{equation}\label{eq:bias-rate}
\sup_{(\btheta,\boeta)\in\bTheta\times\bH}
\left\|
P(\btheta,\boeta,n)-P(\btheta,\boeta)
\right\|
=
o(n^{-1/2}).
\end{equation}

By the triangle inequality,
\begin{align*}
\sup_{\btheta\in\bTheta}
\left\|
\bar P(\btheta,\boheta_n,n)-P(\btheta,\boheta_n)
\right\|
&\le
\sup_{\btheta\in\bTheta}
\left\|
\bar P(\btheta,\boheta_n,n)-P(\btheta,\boheta_n,n)
\right\|
+
\sup_{\btheta\in\bTheta}
\left\|
P(\btheta,\boheta_n,n)-P(\btheta,\boheta_n)
\right\| \\
&\le
\sup_{(\btheta,\boeta)\in\bTheta\times\bH}
\left\|
\bar P(\btheta,\boeta,n)-P(\btheta,\boeta,n)
\right\|
+
\sup_{(\btheta,\boeta)\in\bTheta\times\bH}
\left\|
P(\btheta,\boeta,n)-P(\btheta,\boeta)
\right\|.
\end{align*}
Combining \eqref{eq:MC-rate} and \eqref{eq:bias-rate} yields
\[
\sup_{\btheta\in\bTheta}
\left\|
\bar P(\btheta,\boheta_n,n)-P(\btheta,\boheta_n)
\right\|
=
O_p\left(\sqrt{\frac{\log H}{H}}\right)+o(n^{-1/2}).
\]
By assumption $\sqrt{\frac{\log H}{H}} = o(n^{-1/2})$, we conclude that
\[
\sup_{\btheta\in\bTheta}
\left\|
\bar P(\btheta,\boheta_n,n)-P(\btheta,\boheta_n)
\right\|
=o_p(n^{-1/2}).
\]
\end{proof}

\begin{proposition}
\label{prop:bootstrap-consistency-theta}
Under the assumptions of Proposition~\ref{prop:2}, further assume that for each $\eps>0$ there exists an $L>0$ such that
\begin{equation}
\label{eq:bootstrap-uniform-pi-hat}
\Pr\Big( {\Pr}^*\Big( \sqrt{n}\| \hpi^*_n(\boheta_n) - \hpi_n(\boheta_n) \| > L \Big) >\eps \Big) \rightarrow 0.
\end{equation}
Then, for every $\eps>0$,
\begin{equation*}
{\Pr}^*\big(\|\bhtheta_n^* - \btheta_0\|>\eps)
\;\rightarrow_p\; 0.
\end{equation*}
\end{proposition}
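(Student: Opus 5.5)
The idea is to rerun the consistency argument of Proposition~\ref{prop:2} conditionally on the data, with the deterministic criterion $Q(\btheta)=\|\pi(\btheta_0,\boeta_0)-\pi(\btheta,\boeta_0)\|^2$ and the bootstrap criterion
\[
\widehat Q_n^*(\btheta)=\bigl\|\hpi^*_n(\boheta_n)-\pibar(\btheta,\boheta_n,n)\bigr\|^2 .
\]
Since $\bhtheta_n^*$ is an exact zero of $\hpi^*_n(\boheta_n)-\pibar(\cdot,\boheta_n,n)$ on $\bTheta$, it minimizes $\widehat Q_n^*$ with value $0$. The key deterministic fact comes from assumption (A3), continuity of $\pi(\cdot,\boeta_0)$ and compactness of $\bTheta$ (shown in the proof of Proposition~\ref{prop:2}). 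For every $\eps>0$ these give
\[
\kappa_\eps:=\inf_{\|\btheta-\btheta_0\|\geqslant\eps}\bigl\|\pi(\btheta,\boeta_0)-\pi(\btheta_0,\boeta_0)\bigr\|>0 .
\]
Hence $\|\bhtheta_n^*-\btheta_0\|\geqslant\eps$ forces $\|\pi(\bhtheta_n^*,\boeta_0)-\pi(\btheta_0,\boeta_0)\|\geqslant\kappa_\eps$.

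Next I bound this quantity at $\btheta=\bhtheta_n^*$ by the triangle inequality, inserting $\pibar(\bhtheta_n^*,\boheta_n,n)=\hpi^*_n(\boheta_n)$:
\[
\|\pi(\bhtheta_n^*,\boeta_0)-\pi(\btheta_0,\boeta_0)\|
\leqslant \underbrace{\sup_{\btheta}\|\pi(\btheta,\boeta_0)-\pibar(\btheta,\boheta_n,n)\|}_{R_{1,n}}
+\underbrace{\|\hpi^*_n(\boheta_n)-\hpi_n(\boheta_n)\|}_{R_{2,n}^*}
+\underbrace{\|\hpi_n(\boheta_n)-\pi(\btheta_0,\boeta_0)\|}_{R_{3,n}} .
\]
The terms $R_{1,n}$ and $R_{3,n}$ depend only on the observed and simulated data, not on the bootstrap draw. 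As in the proof of Proposition~\ref{prop:2}, (A2) gives $R_{1,n}\rightarrow_p0$ (averaging over $h$ preserves the uniform bound). Also $R_{3,n}\rightarrow_p0$, because $\hpi_n(\boheta_n)$ has the law of $\hpi(\btheta_0,\boheta_n,n)$ under $F_{\btheta_0}$. The bootstrap term is controlled by \eqref{eq:bootstrap-uniform-pi-hat}: with $L=L(\eta)$ fixed, ${\Pr}^*(R_{2,n}^*>L/\sqrt n)\leqslant\eta$ on an event $\Omega_n$ with $\Pr(\Omega_n)\rightarrow1$.

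To conclude, fix $\eps,\eta>0$ and choose $n$ so large that $L/\sqrt n<\kappa_\eps/3$. Then
\[
{\Pr}^*\bigl(\|\bhtheta_n^*-\btheta_0\|\geqslant\eps\bigr)\leqslant I\{R_{1,n}+R_{3,n}\geqslant2\kappa_\eps/3\}+{\Pr}^*\bigl(R_{2,n}^*>\kappa_\eps/3\bigr).
\]
The first term is $0$ with probability tending to one. The second is at most ${\Pr}^*(R_{2,n}^*>L/\sqrt n)\leqslant\eta$ on $\Omega_n$. Hence $\Pr\{{\Pr}^*(\|\bhtheta_n^*-\btheta_0\|>\eps)>\eta\}\rightarrow0$, which is the claim.

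The main obstacle is bookkeeping in two places. First, the bootstrap probability is random, so the outer probability over the data must be kept separate from the inner bootstrap probability. Second, $\bhtheta_n^*$ must be assumed to exist as an exact (or approximate) zero. If only approximate zeros are available, I would add the term $\|\hpi^*_n(\boheta_n)-\pibar(\bhtheta_n^*,\boheta_n,n)\|$, which is $o_{p^*}(1)$ in probability, to the triangle inequality; the argument is otherwise unchanged.
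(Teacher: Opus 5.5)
Your proof is correct, but it takes a different route from the paper's.

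\textbf{The paper's route.} The paper works with the criterion $\widehat Q_n^*(\btheta)=\|\hpi^*_n(\boheta_n)-\pibar(\btheta,\boheta_n,n)\|^2$ in three steps:
\begin{enumerate}
\item It upgrades \eqref{eq:bootstrap-uniform-pi-hat} to ${\Pr}^*(\|\hpi^*_n(\boheta_n)-\pi(\btheta_0,\boeta_0)\|>L)\rightarrow_p 0$ for every fixed $L>0$.
\item It proves the conditional uniform convergence ${\Pr}^*(\sup_{\btheta\in\bTheta}|\widehat Q_n^*(\btheta)-Q(\btheta)|>\delta)\rightarrow_p 0$.
\item It localizes the minimizer $\bhtheta_n^*$ by a well-separated-minimum argument: on the good event, $\widehat Q_n^*(\btheta_0)\leqslant\eta<2\eta\leqslant\widehat Q_n^*(\btheta)$ outside a ball around $\btheta_0$.
\end{enumerate}

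\textbf{Your route.} You never form the criterion. You evaluate the separation inequality of (A3) directly at the zero $\bhtheta_n^*$. You then split $\|\pi(\bhtheta_n^*,\boeta_0)-\pi(\btheta_0,\boeta_0)\|$ into two data-only terms and one bootstrap term. This is the classical Z-estimator consistency argument rather than the M-estimator argmin argument.

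\textbf{What your version buys.} It is shorter, and it only needs the separation half of (A3). The positivity of $\kappa_\eps$ is assumed there outright, so continuity of $\pi(\cdot,\boeta_0)$ and compactness of $\bTheta$ are not needed for that step, contrary to what you wrote. It also avoids the paper's intermediate bound $\sup_{\btheta}|\widehat Q_n^*-Q|\leqslant C_1\|\hpi^*_n(\boheta_n)-\pi(\btheta_0,\boeta_0)\|^2+C_2\sup_{\btheta}\|\pibar-\pi\|^2$. As written, that bound drops the linear cross terms. This is harmless for the paper's conclusion, but your argument simply does not need it.

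\textbf{What the paper's version buys.} The argmin formulation applies verbatim to any minimizer of $\widehat Q_n^*$, for instance a projected fixed point that is not an exact zero. Your approach extends to that case just as easily, and the extra term you mention needs no separate treatment. Minimality gives $\|\hpi^*_n(\boheta_n)-\pibar(\bhtheta_n^*,\boheta_n,n)\|\leqslant\widehat Q_n^*(\btheta_0)^{1/2}\leqslant R_{1,n}+R_{2,n}^*+R_{3,n}$. So that term is controlled by the same three quantities, and $\kappa_\eps/3$ becomes $\kappa_\eps/6$ in your final step.
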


\begin{proof}
Set $\Delta_n^* := \|\hpi_n^*(\boheta_n)-\hpi_n(\boheta_n)\|$. From \eqref{eq:bootstrap-uniform-pi-hat}, for each $\eps>0$ there exists a constant
$L_\eps>0$ such that
\begin{equation}
\label{eq:assumption-sqrt}
\Pr\Big(
  {\Pr}^*\big( \sqrt{n}\,\Delta_n^* > L_\eps \big) > \eps
\Big)
\rightarrow 0.
\end{equation}
Fix $\eps>0$ and an arbitrary $L>0$.  Since $L>0$, we can choose
$N_1$ such that $
\sqrt{n}\,L \geqslant L_\eps$ for all $n\geqslant N_1$.
For such $n$ we have the inclusion of events
$
\{\Delta_n^*>L\}
\subseteq 
\{\sqrt{n}\,\Delta_n^*>L_\eps\}$,
which implies, for every fixed sample, that
${\Pr}^*(\Delta_n^*>L)
\le
{\Pr}^*(\sqrt{n}\,\Delta_n^*>L_\eps)$.
Therefore, for any $n\geqslant N_1$
\begin{equation*}
\Pr\Big(
  {\Pr}^*(\Delta_n^*>L) > \eps
\Big)
\le
\Pr\Big(
  {\Pr}^*(\sqrt{n}\,\Delta_n^*>L_\eps) > \eps
\Big).
\end{equation*}
Since $L>0$, $\eps>0$ are arbitrary, this proves that the for every $L>0$ and
$\eps>0$,
\begin{equation}
\label{eq:bootstrap-uniform-pi-hat2}
\Pr\Big(
  {\Pr}^*\Big(
    \| \hpi^*_n(\boheta_n) - \hpi_n(\boheta_n) \|
    > L
  \Big)
  > \eps
\Big)
\;\rightarrow\; 0.
\end{equation}
Using the triangle inequality
\begin{equation}\label{eq:trineq11}
\|\hpi^*_n(\boheta_n)-\pi(\btheta_0,\boeta_0)\|
\le
\|\hpi^*_n(\boheta_n)-\hpi_n(\boheta_n)\|
+
\|\hpi_n(\boheta_n)-\pi(\btheta_0,\boeta_0)\|.
\end{equation}
Hence, for $L>0$, 
\begin{equation*}
\Big\{
\|\hpi^*_n(\boheta_n)-\pi(\btheta_0,\boeta_0)\|>L
\Big\}
\subseteq
\Big\{
\|\hpi_n(\boheta_n)-\pi(\btheta_0,\boeta_0)\|>L/2
\Big\}
\cup
\Big\{
\|\hpi^*_n(\boheta_n)-\hpi_n(\boheta_n)\|>L/2
\Big\},
\end{equation*}
as \eqref{eq:trineq11}
implies that whenever the left-hand side exceeds \(L\), at least one of the
two terms on the right must exceed \(L/2\).
Taking bootstrap probability ${\Pr}^*$ of both sides and then outer
probability $\Pr$, we obtain for any fixed $L>0$ and $\eps>0$,
\begin{align*}
&\Pr\Big(
  {\Pr}^*\Big(
    \|\hpi^*_n(\boheta_n)-\pi(\btheta_0,\boeta_0)\|
    > L
  \Big)
  > \eps
\Big)\\
&\qquad\le
\Pr\Big(
  \|\hpi_n(\boheta_n)-\pi(\btheta_0,\boeta_0)\|
  > L/2
\Big)
+
\Pr\Big(
  {\Pr}^*\Big(
    \|\hpi^*_n(\boheta_n)-\hpi_n(\boheta_n)\|
    > L/2
  \Big)
  > \eps/2
\Big).
\end{align*}
The first term converges to zero by assumption, while the
second term converges to zero by \eqref{eq:bootstrap-uniform-pi-hat2} .
Therefore, for every $L>0$ and $\eps>0$,
\begin{equation}
\label{eq:bootstrap-uniform-to-pi}
\Pr\Big(
  {\Pr}^*\Big(
   \|\hpi^*_n(\boheta_n)-\pi(\btheta_0,\boeta_0)\|
    > L
  \Big)
  > \eps
\Big)
\;\rightarrow\; 0.
\end{equation}

 Note that, $\bhtheta_n^*$ satisfies
\begin{equation*}
\bhtheta_n^* = \argmin_{\btheta\in\bTheta} \widehat Q_n^*(\btheta),
\end{equation*}
where $\widehat Q_n^*(\btheta)
=
\big\|
  \hpi^*_n(\boheta_n)
  - \pibar(\btheta,\boheta_n,n) 
\big\|^2$.
Let $Q(\btheta)=\|\pi(\btheta_0,\boeta_0)-\pi(\btheta,\boeta_0)\|^2$.
 Repeating the same algebra as in the proof of
Proposition~\ref{prop:2}, but replacing $\hpi_n(\boheta_n)$  with $\hpi^*_n(\boheta_n)$ we obtain, 
\begin{equation*}
\sup_{\btheta\in\bTheta}
|\widehat Q_n^*(\btheta)-Q(\btheta)|
\;\le\;
C_1
\|\hpi^*_n(\boheta_n)-\pi(\btheta_0,\boeta_0)\|^2
+
C_2
\sup_{\btheta\in\bTheta}
\|\overline\pi(\btheta,\boheta_n,n)-\pi(\btheta,\boeta_0)\|^2,
\end{equation*}
for suitable finite constants $C_1,C_2>0$.
Now fix $\delta>0$.
Choose $L_1>0$ and $L_2>0$ such that $
C_1 L_1^2 + C_2 L_2^2 < \delta$.
Consider the event
\begin{equation*}
E_n = 
\Big\{
  \sup_{\btheta\in\bTheta}
  |\widehat Q_n^*(\btheta)-Q(\btheta)|
  > \delta
\Big\}.
\end{equation*}
By the inequality above, if both
\(
\|\hpi_n^*(\boheta_n)-\pi(\btheta_0,\boeta_0)\|\leqslant L_1
\)
and
\(
\sup_{\btheta}\|\overline\pi(\btheta,\boheta_n,n)-\pi(\btheta,\boeta_0)\|\leqslant L_2
\),
then
\(
\sup_{\btheta}|\widehat Q_n^*(\btheta)-Q(\btheta)|\leqslant C_1L_1^2+C_2L_2^2<\delta
\).
Taking complements yields
\begin{equation*}
E_n \subseteq
\Big\{
    \|\hpi^*_n(\boheta_n)-\pi(\btheta_0,\boeta_0)\|
  > L_1
\Big\}
\;\cup\;
\Big\{
  \sup_{\btheta\in\bTheta}
  \| \pibar(\btheta,\boheta_n,n) -\pi(\btheta,\boeta_0)\|
  > L_2
\Big\}.
\end{equation*}
Taking ${\Pr}^*$ of both sides
\begin{equation*}
{\Pr}^*(E_n)
\le
{\Pr}^*\Big(
  \|\hpi^*_n(\boheta_n)-\pi(\btheta_0,\boeta_0)\|
  > L_1
\Big)
+
\bone
\Big\{
  \sup_{\btheta\in\bTheta}
  \| \pibar(\btheta,\boheta_n,n) -\pi(\btheta,\boeta_0)\|
  > L_2
\Big\}.
\end{equation*}
Finally, taking $\Pr$ of both sides and using
\eqref{eq:bootstrap-uniform-to-pi} for the first term and the fact that the assumptions implies $\sup_{\btheta\in\bTheta}
\| \pibar(\btheta,\boheta_n,n) -\pi(\btheta,\boeta_0)\|
\;\rightarrow_p\; 0$ (see the proof of Proposition~\ref{prop:2}) for the second,
we obtain for any $\eps>0$
\begin{equation*}
\Pr\big({\Pr}^*(E_n)>\eps\big)
\rightarrow 0.
\end{equation*}
This proves that for any $\delta>0$
\begin{equation}\label{eq:supQboot}
{\Pr}^*\Big(
\sup_{\btheta\in\bTheta}
|\widehat Q_n^*(\btheta)-Q(\btheta)|
> \delta
\Big)
\;\rightarrow_p\; 0.
\end{equation}

From Proposition~\ref{prop:2},  we know that $\bTheta$ is compact, $Q$ is continuous, and $Q(\btheta)$ is uniquely minimized at $\btheta_0$.
Let
$A_\delta := \{\btheta\in\bTheta:\|\btheta-\btheta_0\|\ge\delta\}$ , then by assumption, we have $\inf_{\btheta \in A_\delta} Q(\btheta) > 0$.
Fix 
$0<\eta\le\Delta_\delta/3$ and $\delta>0$, with $\Delta_\delta := \inf_{\btheta \in A_\delta} Q(\btheta)>0$. 
Then
\begin{equation}
\inf_{\btheta:\|\btheta-\btheta_0\|\ge\delta} Q(\btheta)
\geqslant  3\eta.
\label{eq:identification}
\end{equation}
Let $U:=\{\btheta:\|\btheta-\btheta_0\|<\delta\}$ and $U^c:=\{\btheta:\|\btheta-\btheta_0\|\ge\delta\}$.
Consider the event
\begin{equation*}
B_n = \Big\{
\sup_{\btheta\in\bTheta}
|\widehat Q_n^*(\btheta)-Q(\btheta)| \leqslant \eta
\Big\}.
\end{equation*}
Then on $B_n$,  we have that   
  \begin{equation*}
  \widehat Q_n^*(\btheta_0)
  \leqslant \eta.
  \end{equation*}
and by \eqref{eq:identification}, for $\btheta\in U^c$:  
  \begin{equation*}
  \widehat Q_n^*(\btheta)
  \ge
  Q(\btheta)-\eta\ge3\eta-\eta= 2\eta.
  \end{equation*}
Thus, on $B_n$, and for any $\btheta\in U^c$
\begin{equation*}
\widehat Q_n^*(\btheta_0)\leqslant \eta\leqslant \widehat Q_n^*(\btheta)- \eta< \widehat Q_n^*(\btheta).
\end{equation*}
On $B_n$, $\widehat Q_n^*(\btheta_0)$ is strictly smaller than
$\widehat Q_n^*(\btheta)$ for all $\btheta\in U^c$, hence the minimizer
$\bhtheta_n^*$ must belong to $U$. 
Hence,
\begin{equation*}
\{\bhtheta_n^*\in U^c\}
\subseteq 
\Big\{
\sup_{\btheta\in\bTheta}
|\widehat Q_n^*(\btheta)-Q(\btheta)| > \eta
\Big\},
\end{equation*}
and taking conditional probability ${\Pr}^*$ yields
\begin{equation*}
{\Pr}^*(\bhtheta_n^*\in U^c)
\;\le\;
{\Pr}^*\Big(
\sup_{\btheta\in\bTheta}
|\widehat Q_n^*(\btheta)-Q(\btheta)| > \eta
\Big).
\end{equation*}
Consequently, for any $t>0$,
\begin{equation*}
\big\{{\Pr}^*(\bhtheta_n^*\in U^c) > t\big\}
\subseteq
\Big\{
{\Pr}^*\Big(
\sup_{\btheta\in\bTheta}
|\widehat Q_n^*(\btheta)-Q(\btheta)| > \eta
\Big) > t
\Big\},
\end{equation*}
and hence by \eqref{eq:supQboot}
\begin{equation*}
\Pr\Big(
{\Pr}^*(\bhtheta_n^*\in U^c) > t
\Big)
\le
\Pr\Big(
{\Pr}^*\Big(
\sup_{\btheta\in\bTheta}
|\widehat Q_n^*(\btheta)-Q(\btheta)| > \eta
\Big) > t
\Big)\rightarrow 0.
\end{equation*}
 This implies
\begin{equation*}
{\Pr}^*(\|\bhtheta_n^*-\btheta_0\|>\delta)\rightarrow_p 0.
\end{equation*}
\end{proof}

\begin{proposition}
\label{pro:II_WI_boot}
Assume the conditions of Proposition~\ref{pro:II_WI} hold and that for each $\eps>0$ there exists an $L>0$ such that 
\begin{equation}
\label{eq:asslemma5_bootpointwise}
\Pr\Big( {\Pr}^*\Big(  \sqrt{n}\big\| \hpi^*_n(\boheta_n) - \hpi_n(\boheta_n) \big\| > L \Big) >\eps \Big)\rightarrow 0.
\end{equation}

Then
\begin{align*}
\bhtheta_n^*-\btheta_0
=
\bK_0\big(\hpi^*_n(\boheta_n)-\pi(\btheta_0,\boheta_n)\big)
+\tilde\br_n^*,
\end{align*}
where ${\Pr}^*(\sqrt{n}\,\|\tilde\br_n^*\|>\eps)\rightarrow_p 0$.
\end{proposition}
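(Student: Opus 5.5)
We will prove this by repeating the linearization argument of Proposition~\ref{pro:II_WI}, with the bootstrap auxiliary estimator $\hpi^*_n(\boheta_n)$ in place of $\hpi_n(\boheta_n)$. Every $o_p$/$O_p$ statement is replaced by its conditional counterpart: a bootstrap quantity $X_n^*$ is $o_{p^*}(a_n)$ in probability when ${\Pr}^*(|X_n^*|/a_n>\eps)\rightarrow_p 0$ for every $\eps>0$, and analogously for $O_{p^*}$. The tuning estimate $\boheta_n$ is computed once on the observed sample, so it is not resampled. Hence the deterministic (given the data) facts $\boheta_n-\boeta_0=O_p(n^{-1/2})$ and \eqref{eq:simerr_eta} carry over unchanged.

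\emph{Step 1: consistency.} By Proposition~\ref{prop:bootstrap-consistency-theta}, whose hypothesis \eqref{eq:bootstrap-uniform-pi-hat} is exactly \eqref{eq:asslemma5_bootpointwise}, we have ${\Pr}^*(\|\bhtheta_n^*-\btheta_0\|>\eps)\rightarrow_p 0$.

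\emph{Step 2: expansion.} We write the zero equation $\hpi^*_n(\boheta_n)=\pibar(\bhtheta_n^*,\boheta_n,n)=\pi(\bhtheta_n^*,\boheta_n)+\zeta_n(\bhtheta_n^*,\boheta_n)$. Here $\sup_{\btheta}\|\zeta_n\|=o_p(n^{-1/2})$ depends only on the simulated samples and $\boheta_n$, so it holds uniformly over bootstrap draws. A first-order Taylor expansion at $(\btheta_0,\boeta_0)$ gives
\[
\bhtheta_n^*-\btheta_0=\bK_0\big(\hpi^*_n(\boheta_n)-\pi(\btheta_0,\boeta_0)-\bB_0(\boheta_n-\boeta_0)\big)-\bK_0\br_n^*-\bK_0\zeta_n(\bhtheta_n^*,\boheta_n),
\]
where $\br_n^*=\br(\bhtheta_n^*,\boheta_n)$. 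To control $\br_n^*$ we use the function $f$ from the proof of Proposition~\ref{pro:II_WI}, which is continuous at $(\btheta_0,\boeta_0)$ with value $0$. For any $\gamma>0$, choose $\delta$ such that $f\le\gamma$ on a $\delta$-ball. Steps 1 and the convergence of $\boheta_n$ then give ${\Pr}^*\big(\|\br_n^*\|>\gamma(\|\bhtheta_n^*-\btheta_0\|+\|\boheta_n-\boeta_0\|)\big)\rightarrow_p 0$.

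\emph{Step 3: rate.} On the event where the remainder bound holds with $\gamma C<1$, we rearrange as in \eqref{eq:theta_bound_final} to get
\[
\|\bhtheta_n^*-\btheta_0\|\le C_1\|\hpi^*_n(\boheta_n)-\pi(\btheta_0,\boeta_0)\|+C_2\|\boheta_n-\boeta_0\|+C_1\|\zeta_n\|.
\]
The first term is bounded by $\|\hpi^*_n-\hpi_n\|+\|\hpi_n-\pi(\btheta_0,\boeta_0)\|$. The first piece is $O_{p^*}(n^{-1/2})$ in probability by \eqref{eq:asslemma5_bootpointwise}. The second piece is $O_p(n^{-1/2})$, as shown in the proof of Proposition~\ref{pro:II_WI}. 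Hence $\sqrt n\|\bhtheta_n^*-\btheta_0\|$ is bounded in bootstrap probability, in probability. It follows that $\sqrt n\|\br_n^*\|$ is conditionally $o_{p^*}(1)$ in probability: since $\gamma$ is arbitrary, we take $\gamma\rightarrow0$ after fixing the bound $L$. Finally, as in Proposition~\ref{pro:II_WI}, we replace $\pi(\btheta_0,\boeta_0)+\bB_0(\boheta_n-\boeta_0)$ by $\pi(\btheta_0,\boheta_n)$ at cost $\br_{\eta,n}=o_p(n^{-1/2})$, which does not depend on the bootstrap. Collecting terms, we set $\tilde\br_n^*:=-\bK_0\br_n^*-\bK_0\zeta_n+\bK_0\br_{\eta,n}$, and each piece satisfies ${\Pr}^*(\sqrt n\|\cdot\|>\eps)\rightarrow_p0$.

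The main obstacle is Step 3: handling the random set of good events in the two-level probability (outer $\Pr$, inner ${\Pr}^*$) so that the $O_{p^*}$ bound and the $o$-remainder combine without circularity. We will handle it by fixing $\eps$ and choosing $L$ from \eqref{eq:asslemma5_bootpointwise}, then $\gamma$, then $\delta$. We then bound ${\Pr}^*$ of the bad event by the sum of conditional probabilities of the three failure events, each of which is $\le\eps$ with outer probability tending to one.
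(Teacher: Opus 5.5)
Your proposal is correct and follows essentially the same route as the paper's proof: bootstrap consistency via Proposition~\ref{prop:bootstrap-consistency-theta}, a Taylor expansion at $(\btheta_0,\boeta_0)$ whose remainder is controlled by the continuity of $f$, a rearrangement on the good event showing $\sqrt n\|\bhtheta_n^*-\btheta_0\|$ is bounded in bootstrap probability (via the triangle inequality through $\hpi_n(\boheta_n)$), and finally recentering at $\pi(\btheta_0,\boheta_n)$ at the cost of $\br_{\eta,n}$. The paper writes out explicitly the two-level $\Pr/{\Pr}^*$ bookkeeping you sketch in your last paragraph, choosing the constants in the same order.
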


\begin{proof}
From Proposition~\ref{prop:bootstrap-consistency-theta}, we have  for any $\eps>0$, ${\Pr}^*(\|\bhtheta_n^*-\btheta_0\|>\eps)\rightarrow_p 0$.
Let,
\begin{equation*}
\zeta_n(\bhtheta_n^*,\boheta_n)
=
\bar\pi(\bhtheta_n^*,\boheta_n,n)-\pi(\bhtheta_n^*,\boheta_n).
\end{equation*}
By assumption,
\begin{equation*}
\sup_{\btheta\in\bTheta}\|\zeta_n(\btheta,\boheta_n)\|
=
\sup_{\btheta\in\bTheta}
\|\bar\pi(\btheta,\boheta_n,n)-\pi(\btheta,\boheta_n)\|
=o_p(n^{-1/2}),
\end{equation*}
and in particular $\zeta_n(\bhtheta_n^*,\boheta_n)=o_p(n^{-1/2})$.
Then
\[
X_n:=\sqrt n\sup_{\btheta\in\bTheta}\|\zeta_n(\btheta,\boheta_n)\|=o_p(1).
\]
Fix $M>0$ and $\eps>0$, for large $n$, we have
\begin{equation}\label{eq:Prsupbarpi_hateta}
\Pr(X_n>M)\leqslant \eps.
\end{equation}
Define the event $E_n:=\{X_n\leqslant M\}$. On $E_n$, since $\bhtheta_n^*\in\bTheta$,
\[
\sqrt n\,\|\zeta_n(\bhtheta_n^*,\boheta_n)\|
\leqslant \sqrt n\sup_{\btheta\in\bTheta}\|\zeta_n(\btheta,\boheta_n)\|
= X_n \leqslant M,
\]
hence
\[
{\Pr}^*\!\big(\sqrt n\,\|\zeta_n(\bhtheta_n^*,\boheta_n)\|>M\big)=0.
\]
Therefore, for any $\delta>0$,
\[
\Big\{{\Pr}^*\!\big(\sqrt n\,\|\zeta_n(\bhtheta_n^*,\boheta_n)\|>M\big)>\delta\Big\}
\subseteq \{X_n>M\}.
\]
Taking $\Pr$ on both sides and using \eqref{eq:Prsupbarpi_hateta} yields, for all $n\geqslant N$,
\[
\Pr\Big({\Pr}^*\!\big(\sqrt n\,\|\zeta_n(\bhtheta_n^*,\boheta_n)\|>M\big)>\delta\Big)
\leqslant \Pr(X_n>M)\leqslant \eps.
\]
Thus, for each $\delta,M,\eps>0$  we have for large $n$
\begin{equation}\label{eq:bound-zeta-mixed}
\Pr\Big(
{\Pr}^*\big(
\sqrt{n}\,\|\zeta_n(\bhtheta_n^*)\|>M
\big)>\delta
\Big)
\leqslant \eps.
\end{equation}

By a first-order Taylor expansion of $\pi(\btheta,\boeta)$ at $(\btheta_0,\boeta_0)$,
we have
\begin{equation*}
\pi(\btheta,\boeta)
=
\pi(\btheta_0,\boeta_0)
+
\bA_0(\btheta-\btheta_0)
+
\bB_0(\boeta-\boeta_0)
+
\br(\btheta,\boeta),
\end{equation*}
where
\[
\bB_0
:=
\left.\frac{\partial\,\pi(\btheta,\boeta)}{\partial\boeta}\right|_{(\btheta,\boeta)=(\btheta_0,\boeta_0)},
\]
and
\[
\br(\btheta,\boeta)
:=
\pi(\btheta,\boeta)-\pi(\btheta_0,\boeta_0)
-\bA_0(\btheta-\btheta_0)-\bB_0(\boeta-\boeta_0).
\]
Since $\pi$ is differentiable at $(\btheta_0,\boeta_0)$, the remainder $\br(\btheta,\boeta)$ satisfies
\begin{equation}\label{eq:reminderdet_joint_sum}
\frac{\|\br(\btheta,\boeta)\|}{\|\btheta-\btheta_0\|+\|\boeta-\boeta_0\|}\rightarrow 0
\quad\text{as }(\btheta,\boeta)\rightarrow(\btheta_0,\boeta_0).
\end{equation}
Define, for $(\btheta,\boeta)\neq(\btheta_0,\boeta_0)$,
\[
f(\btheta,\boeta)
:=
\frac{\|\br(\btheta,\boeta)\|}{\|\btheta-\btheta_0\|+\|\boeta-\boeta_0\|},
\qquad
f(\btheta_0,\boeta_0)=0.
\]
Due to \eqref{eq:reminderdet_joint_sum}, $f$ is continuous at $(\btheta_0,\boeta_0)$ and
$f(\btheta_0,\boeta_0)=0$.
Hence, for any $\eps>0$ there exists an $\eta>0$ such that
\begin{equation}\label{eq:joint-continuity-fixedeta}
\|\btheta-\btheta_0\|+\|\boeta-\boeta_0\|\leqslant \eta
\quad\text{implies}\quad
f(\btheta,\boeta)\leqslant \eps.
\end{equation}
From \eqref{eq:joint-continuity-fixedeta},
\[
\big\{ f(\bhtheta_n^*,\boheta_n)>\eps \big\}
\subseteq
\big\{ \|\bhtheta_n^*-\btheta_0\|
      +\|\boheta_n-\boeta_0\|>\eta \big\},
\]
and therefore
\begin{equation*}
{\Pr}^*\!\big( f(\bhtheta_n^*,\boheta_n)>\eps \big)
\;\le\;
{\Pr}^*\!\big(
\|\bhtheta_n^*-\btheta_0\|
+\|\boheta_n-\boeta_0\|>\eta
\big).
\end{equation*}
Let $\delta>0$. Taking probability $\Pr$ on both sides yields
\begin{align*}
\Pr\Big(
{\Pr}^*\!\big( f(\bhtheta_n^*,\boheta_n)>\eps \big)>\delta
\Big)
&\le
\Pr\Big(
{\Pr}^*\!\big(
\|\bhtheta_n^*-\btheta_0\|
+\|\boheta_n-\boeta_0\|>\eta
\big)>\delta
\Big).
\end{align*}
Fix $\eta>0$ and $\delta>0$. By the union bound,
\[
\big\{\|\bhtheta_n^*-\btheta_0\|+\|\boheta_n-\boeta_0\|>\eta\big\}
\subseteq
\big\{\|\bhtheta_n^*-\btheta_0\|>\tfrac{\eta}{2}\big\}
\;\cup\;
\big\{\|\boheta_n-\boeta_0\|>\tfrac{\eta}{2}\big\}.
\]
Hence,
\begin{align}
{\Pr}^*\big(\|\bhtheta_n^*-\btheta_0\|+\|\boheta_n-\boeta_0\|>\eta\big)
&\le
{\Pr}^*\big(\|\bhtheta_n^*-\btheta_0\|>\tfrac{\eta}{2}\big)
+
{\Pr}^*\big(\|\boheta_n-\boeta_0\|>\tfrac{\eta}{2}\big).
\label{eq:split1}
\end{align}
The second term is
\[
{\Pr}^*\big(\|\boheta_n-\boeta_0\|>\tfrac{\eta}{2}\big)
=
\bone\Big\{\|\boheta_n-\boeta_0\|>\tfrac{\eta}{2}\Big\}.
\]
Therefore, from \eqref{eq:split1},
\[
{\Pr}^*\big(\|\bhtheta_n^*-\btheta_0\|+\|\boheta_n-\boeta_0\|>\eta\big)
\le
{\Pr}^*\big(\|\bhtheta_n^*-\btheta_0\|>\tfrac{\eta}{2}\big)
+
\bone\Big\{\|\boheta_n-\boeta_0\|>\tfrac{\eta}{2}\Big\}.
\]
Consequently,
\begin{align}
&\Pr\Big(
  {\Pr}^*\big(\|\bhtheta_n^*-\btheta_0\|+\|\boheta_n-\boeta_0\|>\eta\big)>\delta
\Big)
\nonumber\\
&\qquad\le
\Pr\Big(
  {\Pr}^*\big(\|\bhtheta_n^*-\btheta_0\|>\tfrac{\eta}{2}\big)>\tfrac{\delta}{2}
\Big)
+
\Pr\Big(
  \bone\{\|\boheta_n-\boeta_0\|>\tfrac{\eta}{2}\}>\tfrac{\delta}{2}
\Big).
\label{eq:outer-split}
\end{align}
The first term in \eqref{eq:outer-split} converges to $0$ because
${\Pr}^*(\|\bhtheta_n^*-\btheta_0\|>\eta/2)\rightarrow_p 0$.
The second term is bounded by
\[
\Pr\Big(\|\boheta_n-\boeta_0\|>\tfrac{\eta}{2}\Big),
\]
which converges to $0$ since $\boheta_n-\boeta_0=o_p(1)$.
Hence,
\[
\Pr\Big(
  {\Pr}^*\big(\|\bhtheta_n^*-\btheta_0\|+\|\boheta_n-\boeta_0\|>\eta\big)>\delta
\Big)\rightarrow 0,
\]
and consequently, for all $\eps,\delta>0$,
\begin{equation}\label{eq:bootstrap-remainder-final}
\Pr\Big(
{\Pr}^*\!\Big(
\frac{\|\br_n^*\|}
{\|\bhtheta_n^*-\btheta_0\|+\|\boheta_n-\boeta_0\|}
>\eps
\Big)>\delta
\Big)\;\rightarrow\;0.
\end{equation}
where $\br_n^*:=\br(\bhtheta_n^*,\boheta_n)$.

Since $\bhtheta_n^*$ satisfies the bootstrap normal equations with the nuisance
estimate $\boheta_n$,
\[
\hpi^*_n(\boheta_n)-\bar\pi(\bhtheta_n^*,\boheta_n,n)=0,
\]
we have
\begin{align*}
\mathbf 0
&=\hpi^*_n(\boheta_n)-\bar\pi(\bhtheta_n^*,\boheta_n,n)\\
&=\hpi^*_n(\boheta_n)-\pi(\bhtheta_n^*,\boheta_n)-\zeta_n(\bhtheta_n^*,\boheta_n)\\
&=\hpi^*_n(\boheta_n)-\pi(\btheta_0,\boeta_0)
-\bA_0(\bhtheta_n^*-\btheta_0)
-\bB_0(\boheta_n-\boeta_0)
-\br_n^*
-\zeta_n(\bhtheta_n^*,\boheta_n),
\end{align*}
where the last equality follows from the first-order Taylor expansion of
$\pi(\btheta,\boeta)$ at $(\btheta_0,\boeta_0)$, evaluated at
$(\bhtheta_n^*,\boheta_n)$.
Thus,
\[
\bhtheta_n^*-\btheta_0
=
\bK_0\big(\hpi^*_n(\boheta_n)-\pi(\btheta_0,\boeta_0)-\bB_0(\boheta_n-\boeta_0)\big)
-\bK_0\br_n^*
-\bK_0\zeta_n(\bhtheta_n^*,\boheta_n).
\]
With $C:=\sigma_{\max}(\bK_0)<\infty$ and $B:=\sigma_{\max}(\bB_0)<\infty$, we have
\begin{align*}
\|\bhtheta_n^*-\btheta_0\|
&\le
C\|\hpi^*_n(\boheta_n)-\pi(\btheta_0,\boeta_0)\|
+CB\|\boheta_n-\boeta_0\|
+C\|\br_n^*\|
+C\|\zeta_n(\bhtheta_n^*,\boheta_n)\|.
\end{align*}
From \eqref{eq:bootstrap-remainder-final}, for each $\eps,\eta>0$,
\begin{equation}\label{eq:Anstar}
\Pr\Big(
{\Pr}^*\Big(
\|\br_n^*\|>\eps\big(\|\bhtheta_n^*-\btheta_0\|+\|\boheta_n-\boeta_0\|\big)
\Big)>\eta
\Big)\rightarrow 0.
\end{equation}
Define the event
\[
E_{n,\eps}^*
:=
\Big\{\|\br_n^*\|\leqslant \eps\big(\|\bhtheta_n^*-\btheta_0\|+\|\boheta_n-\boeta_0\|\big)\Big\}.
\]
On $E_{n,\eps}^*$ we obtain
\begin{align*}
\|\bhtheta_n^*-\btheta_0\|
&\le
C\|\hpi^*_n(\boheta_n)-\pi(\btheta_0,\boeta_0)\|
+CB\,\|\boheta_n-\boeta_0\|
+C\|\br_n^*\|
+C\|\zeta_n(\bhtheta_n^*,\boheta_n)\|\\
&\le
C\|\hpi^*_n(\boheta_n)-\pi(\btheta_0,\boeta_0)\|
+CB\,\|\boheta_n-\boeta_0\|
+C\eps\|\bhtheta_n^*-\btheta_0\|
+C\eps\|\boheta_n-\boeta_0\|
+C\|\zeta_n(\bhtheta_n^*,\boheta_n)\|.
\end{align*}
 Rearranging yields, on $E_{n,\eps}^*$,
\begin{equation}\label{eq:theta-star-rearranged}
(1-C\eps)\|\bhtheta_n^*-\btheta_0\|
\le
C\|\hpi^*_n(\boheta_n)-\pi(\btheta_0,\boeta_0)\|
+C(B+\eps)\|\boheta_n-\boeta_0\|
+C\|\zeta_n(\bhtheta_n^*,\boheta_n)\|.
\end{equation}
Choose $\eps>0$ such that $C\eps<1$. Dividing both sides of
\eqref{eq:theta-star-rearranged} by $(1-C\eps)$, we obtain on $E_{n,\eps}^*$
\begin{equation*}
\|\bhtheta_n^*-\btheta_0\|
\le
C_1\|\hpi^*_n(\boheta_n)-\pi(\btheta_0,\boeta_0)\|
+C_2\|\boheta_n-\boeta_0\|
+C_1\|\zeta_n(\bhtheta_n^*,\boheta_n)\|,
\end{equation*}
for finite constants $C_1,C_2>0$.
Therefore, for any $t>0$,
\[
\{\|\bhtheta_n^*-\btheta_0\|>t\}
\subseteq
(E_{n,\eps}^*)^c
\;\cup\;
\Bigl\{
C_1\|\hpi^*_n(\boheta_n)-\pi(\btheta_0,\boeta_0)\|
+
C_2\|\boheta_n-\boeta_0\|
+
C_1\|\zeta_n(\bhtheta_n^*,\boheta_n)\|
>t
\Bigr\}.
\]
Hence,
\begin{align*}
{\Pr}^*(\|\bhtheta_n^*-\btheta_0\|>t)
&\le
{\Pr}^*((E_{n,\eps}^*)^c)
+
{\Pr}^*\!\Bigl(
C_1\|\hpi^*_n(\boheta_n)-\pi(\btheta_0,\boeta_0)\|
+
C_2\|\boheta_n-\boeta_0\|
+
C_1\|\zeta_n(\bhtheta_n^*,\boheta_n)\|
>t
\Bigr).
\end{align*}
Taking $\Pr$ on both sides gives
\begin{align*}
\Pr\Big({\Pr}^*(\|\bhtheta_n^*-\btheta_0\|>t)>\eta\Big)
&\le
\Pr\Big({\Pr}^*((E_{n,\eps}^*)^c)>\tfrac{\eta}{2}\Big)\\
&\quad+
\Pr\Bigg(
{\Pr}^*\!\Bigl(
C_1\|\hpi^*_n(\boheta_n)-\pi(\btheta_0,\boeta_0)\|
+\\&\qquad
C_2\|\boheta_n-\boeta_0\|
+
C_1\|\zeta_n(\bhtheta_n^*,\boheta_n)\|
>t
\Bigr)>\tfrac{\eta}{2}
\Bigg).
\end{align*}
Now set $t=Mn^{-1/2}$ and note that, by the triangle inequality, if
\[
C_1\sqrt n\|\hpi^*_n(\boheta_n)-\pi(\btheta_0,\boeta_0)\|
+
C_2\sqrt n\|\boheta_n-\boeta_0\|
+
C_1\sqrt n\|\zeta_n(\bhtheta_n^*,\boheta_n)\|
>M,
\]
then at least one of the following events must occur:
\[
C_1\sqrt n\|\hpi^*_n(\boheta_n)-\pi(\btheta_0,\boeta_0)\|>\tfrac{M}{3},
\qquad
C_2\sqrt n\|\boheta_n-\boeta_0\|>\tfrac{M}{3},
\qquad
C_1\sqrt n\|\zeta_n(\bhtheta_n^*,\boheta_n)\|>\tfrac{M}{3}.
\]
Hence, for every $M>0$,
\begin{align*}
&\Pr\Bigg(
{\Pr}^*\!\Bigl(
C_1\sqrt n\|\hpi^*_n(\boheta_n)-\pi(\btheta_0,\boeta_0)\|
+
C_2\sqrt n\|\boheta_n-\boeta_0\|
+
C_1\sqrt n\|\zeta_n(\bhtheta_n^*,\boheta_n)\|
>M
\Bigr)>\tfrac{\eta}{2}
\Bigg)
\nonumber\\
&\qquad\le
\Pr\Bigg(
{\Pr}^*\Big(
C_1\sqrt n\|\hpi^*_n(\boheta_n)-\pi(\btheta_0,\boeta_0)\|>\tfrac{M}{3}
\Big)>\tfrac{\eta}{6}
\Bigg)
+
\Pr\Bigg(
{\Pr}^*\Big(
C_2\sqrt n\|\boheta_n-\boeta_0\|>\tfrac{M}{3}
\Big)>\tfrac{\eta}{6}
\Bigg)
\nonumber\\
&\qquad\quad+
\Pr\Bigg(
{\Pr}^*\Big(
C_1\sqrt n\|\zeta_n(\bhtheta_n^*,\boheta_n)\|>\tfrac{M}{3}
\Big)>\tfrac{\eta}{6}
\Bigg).
\end{align*}
Then by using \eqref{eq:Anstar}, we have  
\begin{align}
&\Pr\Big({\Pr}^*(\sqrt n\|\bhtheta_n^*-\btheta_0\|>M)>\eta\Big)
\nonumber\\
&\qquad\leqslant \Pr\Big({\Pr}^*((E_{n,\eps}^*)^c)>\tfrac{\eta}{2}\Big)+
\Pr\Bigg(
{\Pr}^*\Big(
C_1\sqrt n\|\hpi^*_n(\boheta_n)-\pi(\btheta_0,\boeta_0)\|>\tfrac{M}{3}
\Big)>\tfrac{\eta}{6}
\Bigg)\nonumber\\&\qquad
+
\Pr\Big(
C_2\sqrt n\|\boheta_n-\boeta_0\|>\tfrac{M}{3}
\Big)+
\Pr\Bigg(
{\Pr}^*\Big(
C_1\sqrt n\|\zeta_n(\bhtheta_n^*,\boheta_n)\|>\tfrac{M}{3}
\Big)>\tfrac{\eta}{6}
\Bigg).
\label{eq:outer-split-three}
\end{align}

Next, we obtain a bound for
$\sqrt{n}\,\|\hpi^*_n(\boheta_n)-\pi(\btheta_0,\boeta_0)\|$.
Fix $\eps>0$.
Assumption \eqref{eq:asslemma5_bootpointwise} implies that there exists an $L_1>0$ such that for large $n$
\[
\Pr\Big(
  {\Pr}^*\big(
    \sqrt{n}\,\|\hpi^*_n(\boheta_n)-\hpi_n(\boheta_n)\|>L_1
  \big)
  >\eps
\Big)\leq \eps.
\]
Moreover, since $\hpi_n(\boheta_n)-\pi(\btheta_0,\boeta_0)=O_p(n^{-1/2})$, there exists an
$L_2>0$ such that for large $n$,
\[
\Pr\Big(
  \sqrt{n}\,\|\hpi_n(\boheta_n)-\pi(\btheta_0,\boeta_0)\|>L_2
\Big)\le\eps.
\]
By the triangle inequality,
\[
\sqrt{n}\,\|\hpi^*_n(\boheta_n)-\pi(\btheta_0,\boeta_0)\|
\le
\sqrt{n}\,\|\hpi^*_n(\boheta_n)-\hpi_n(\boheta_n)\|
+
\sqrt{n}\,\|\hpi_n(\boheta_n)-\pi(\btheta_0,\boeta_0)\|.
\]
Hence,
\begin{align*}
&\Big\{
{\Pr}^*\big(
\sqrt{n}\,\|\hpi^*_n(\boheta_n)-\pi(\btheta_0,\boeta_0)\|>L_1+L_2
\big)
>\eps
\Big\}\\
&\qquad\subseteq
\Big\{
{\Pr}^*\big(
\sqrt{n}\,\|\hpi^*_n(\boheta_n)-\hpi_n(\boheta_n)\|>L_1
\big)
>\eps
\Big\}
\;\cup\;
\Big\{
\sqrt{n}\,\|\hpi_n(\boheta_n)-\pi(\btheta_0,\boeta_0)\|>L_2
\Big\}.
\end{align*}
Taking $\Pr$ of both sides and using the two bounds established earlier, for large $n$,
\[
\Pr\Big(
\Pr^{*}\big(
\sqrt{n}\,\|\hpi^*_n(\boheta_n)-\pi(\btheta_0,\boeta_0)\|>L_1+L_2
\big)
>\eps
\Big)
\le
2\eps.
\]
This impleis, for each $\eps>0$, there is an $L>0$ such that for large $n$,
\begin{equation}\label{ineq-pistar-final-joint}
\Pr\Big(
\Pr^{*}\big(
\sqrt{n}\,\|\hpi^*_n(\boheta_n)-\pi(\btheta_0,\boeta_0)\|>L
\big)
>\eps
\Big)
\leqslant \eps.
\end{equation}

In \eqref{eq:outer-split-three}, fix $\eta>0$ and choose $M$ large enough such that
$\tfrac{M}{3C_1}\geqslant L_\pi$ and $\tfrac{M}{3C_1}\geqslant L_\zeta$, where $L_\pi$ is the $L$  in
\eqref{ineq-pistar-final-joint} with $\eps=\eta/6$ and $L_\zeta$ is the $M$ in
\eqref{eq:bound-zeta-mixed} with $\delta=\eta/6$ and $\eps=\eta/6$.
Moreover, since $\|\boheta_n-\boeta_0\|=O_p(n^{-1/2})$, we can choose $M$ large
enough so that, for all large $n$,
\[
\Pr\Big(
\sqrt n\|\boheta_n-\boeta_0\|>\tfrac{M}{3C_2}
\Big)\leqslant \tfrac{\eta}{6}.
\]
From \eqref{eq:outer-split-three}, with this choice of $M$, for large $n$,
\[
\Pr\Big({\Pr}^*(\sqrt n\|\bhtheta_n^*-\btheta_0\|>M)>\eta\Big)
\leqslant \eps+\tfrac{\eta}{6}+\tfrac{\eta}{6}+\tfrac{\eta}{6}.
\]
for any fixed $\eps>0$, since
\[
\Pr\Big(
  {\Pr}^*\big((E_{n,\eps}^*)^c\big)>\tfrac{\eta}{2}
\Big)\;\rightarrow\;0,
\]
implies that for large $n$,
\[
\Pr\Big(
  {\Pr}^*\big((E_{n,\eps}^*)^c\big)>\tfrac{\eta}{2}
\Big)
\leqslant \eps.
\]
As $\eps$ is arbitrary fix $\eps=\tfrac{\eta}{2}$.
Thus, for every $\eta>0$, there exists an
$M>0$ such that, for large $n$,
\begin{equation*}
\Pr\Big(
  {\Pr}^*\big(
    \sqrt{n}\,\|\bhtheta_n^*-\btheta_0\|>M
  \big)
  >\eta
\Big)
\le\eta.
\end{equation*}
Moreover, fix arbitrary $\eps>0$ and $\delta>0$ and set
$\eta := \min(\eps,\delta)$. Let $M$ be the corresponding
constant. Then, for large $n$,
\begin{equation*}
\Pr\Big(
  {\Pr}^*\big(
    \sqrt{n}\,\|\bhtheta_n^*-\btheta_0\|>M
  \big)
  >\eps
\Big)
\le
\Pr\Big(
  {\Pr}^*\big(
    \sqrt{n}\,\|\bhtheta_n^*-\btheta_0\|>M
  \big)
  >\eta
\Big)
\le\eta
\le\delta.
\end{equation*}
Thus, for every $\eps>0$ and $\delta>0$ there exists an  $M>0$ such that,
for large $n$,
\begin{equation}\label{eq:boundboothat}
\Pr\Big(
  {\Pr}^*\big(
    \sqrt{n}\,\|\bhtheta_n^*-\btheta_0\|>M
  \big)
  >\eps
\Big)
\leqslant \delta.
\end{equation}

Recall that for all $\eps,\delta>0$,
\begin{equation}\label{eq:bootstrap-remainder-final-again}
\Pr\Big(
{\Pr}^*\!\Big(
\frac{\|\br_n^*\|}
{\|\bhtheta_n^*-\btheta_0\|+\|\boheta_n-\boeta_0\|}
>\eps
\Big)>\delta
\Big)\;\rightarrow\;0,
\end{equation}
 and note that
\[
\sqrt n\,\|\br_n^*\|
=
\sqrt n\big(\|\bhtheta_n^*-\btheta_0\|+\|\boheta_n-\boeta_0\|\big)
\frac{\|\br_n^*\|}{\|\bhtheta_n^*-\btheta_0\|+\|\boheta_n-\boeta_0\|}.
\]
Let $\eps,\eta>0$. Then
\begin{equation}\label{eq:event-inclusion-joint}
\big\{\sqrt n\,\|\br_n^*\|>\eps\big\}
\subseteq
\Big\{
\frac{\|\br_n^*\|}{\|\bhtheta_n^*-\btheta_0\|+\|\boheta_n-\boeta_0\|}>\eta
\Big\}
\;\cup\;
\Big\{
\sqrt n\big(\|\bhtheta_n^*-\btheta_0\|+\|\boheta_n-\boeta_0\|\big)>\frac{\eps}{\eta}
\Big\}.
\end{equation}
Indeed, if both
$\frac{\|\br_n^*\|}{\|\bhtheta_n^*-\btheta_0\|+\|\boheta_n-\boeta_0\|}\leqslant \eta$
and
$\sqrt n(\|\bhtheta_n^*-\btheta_0\|+\|\boheta_n-\boeta_0\|)\leqslant \eps/\eta$
held, then
\[
\sqrt n\,\|\br_n^*\|
=
\sqrt n(\|\bhtheta_n^*-\btheta_0\|+\|\boheta_n-\boeta_0\|)
\frac{\|\br_n^*\|}{\|\bhtheta_n^*-\btheta_0\|+\|\boheta_n-\boeta_0\|}
\le
\frac{\eps}{\eta}\eta
=
\eps,
\]
contradicting $\sqrt n\,\|\br_n^*\|>\eps$.
We further decompose the second event as
\[
\Big\{\sqrt n\big(\|\bhtheta_n^*-\btheta_0\|+\|\boheta_n-\boeta_0\|\big)>\frac{\eps}{\eta}\Big\}
\subseteq
\Big\{\sqrt n\|\bhtheta_n^*-\btheta_0\|>\frac{\eps}{2\eta}\Big\}
\ \cup\
\Big\{\sqrt n\|\boheta_n-\boeta_0\|>\frac{\eps}{2\eta}\Big\}.
\]
Hence,
\begin{align}
\big\{\sqrt n\,\|\br_n^*\|>\eps\big\}
\subseteq\;
&\Big\{
\frac{\|\br_n^*\|}{\|\bhtheta_n^*-\btheta_0\|+\|\boheta_n-\boeta_0\|}>\eta
\Big\}
\label{eq:event-inclusion-joint-split}\nonumber\\
&\cup\;
\Big\{
\sqrt n\|\bhtheta_n^*-\btheta_0\|>\frac{\eps}{2\eta}
\Big\}
\;\cup\;
\Big\{
\sqrt n\|\boheta_n-\boeta_0\|>\frac{\eps}{2\eta}
\Big\}.
\end{align}
Taking ${\Pr}^*$ of both sides yields
\begin{align}
{\Pr}^*(\sqrt n\,\|\br_n^*\|>\eps)
\le\;
&{\Pr}^*\Big(
\frac{\|\br_n^*\|}{\|\bhtheta_n^*-\btheta_0\|+\|\boheta_n-\boeta_0\|}>\eta
\Big)
\label{eq:star-decomp-joint-split}\nonumber\\
&+
{\Pr}^*\Big(
\sqrt n\|\bhtheta_n^*-\btheta_0\|>\frac{\eps}{2\eta}
\Big)
+
{\Pr}^*\Big(
\sqrt n\|\boheta_n-\boeta_0\|>\frac{\eps}{2\eta}
\Big).
\end{align}
Note that,
\[
{\Pr}^*\Big(
\sqrt n\|\boheta_n-\boeta_0\|>\frac{\eps}{2\eta}
\Big)
=
\bone\Big\{
\sqrt n\|\boheta_n-\boeta_0\|>\frac{\eps}{2\eta}
\Big\}.
\]
Therefore, for each $t>0$, using that if $a+b+c>t$ then at least one term exceeds $t/3$,
\begin{align}
\Big\{{\Pr}^*(\sqrt n\,\|\br_n^*\|>\eps)>t\Big\}
\subseteq\;
&\Big\{{\Pr}^*\Big(
\frac{\|\br_n^*\|}{\|\bhtheta_n^*-\btheta_0\|+\|\boheta_n-\boeta_0\|}>\eta
\Big)>\frac{t}{3}\Big\}
\label{eq:outer-split-3terms}\nonumber\\
&\cup
\Big\{{\Pr}^*\Big(
\sqrt n\|\bhtheta_n^*-\btheta_0\|>\frac{\eps}{2\eta}
\Big)>\frac{t}{3}\Big\}\nonumber\\
&
\cup
\Big\{\bone\Big\{
\sqrt n\|\boheta_n-\boeta_0\|>\frac{\eps}{2\eta}
\Big\}>\frac{t}{3}\Big\}.
\end{align}
Taking $\Pr$ on both sides gives
\begin{align}
\Pr\Big({\Pr}^*(\sqrt n\,\|\br_n^*\|>\eps)>t\Big)
\le\;
&\Pr\Bigg({\Pr}^*\Big(
\frac{\|\br_n^*\|}{\|\bhtheta_n^*-\btheta_0\|+\|\boheta_n-\boeta_0\|}>\eta
\Big)>\frac{t}{3}\Bigg)
\label{eq:outer-bound-3terms}\nonumber\\
&+
\Pr\Bigg({\Pr}^*\Big(
\sqrt n\|\bhtheta_n^*-\btheta_0\|>\frac{\eps}{2\eta}
\Big)>\frac{t}{3}\Bigg)\nonumber\\
&+
\Pr\Big(
\sqrt n\|\boheta_n-\boeta_0\|>\frac{\eps}{2\eta}
\Big).
\end{align}
Fix $\eps,t,\gamma>0$. 
First, by \eqref{eq:boundboothat} there exists an $M_1>0$ such that, for all large $n$,
\begin{equation*}
\Pr\Bigg({\Pr}^*\Big(
\sqrt n\|\bhtheta_n^*-\btheta_0\|>M_1
\Big)>\frac{t}{3}\Bigg)\leqslant \frac{\gamma}{3}.
\end{equation*}
Second, since $\|\boheta_n-\boeta_0\|=O_p(n^{-1/2})$, there exists an $M_2>0$ such that, for all large $n$,
\begin{equation*}
\Pr\Big(
\sqrt n\|\boheta_n-\boeta_0\|>M_2
\Big)\leqslant \frac{\gamma}{3}.
\end{equation*}
Let $M:=\max\{M_1,M_2\}$ and define $\eta:=\frac{\eps}{2M}$, so that
$\frac{\eps}{2\eta}=M$.
Then, for all large $n$,
\begin{align}
\Pr\Bigg({\Pr}^*\Big(
\sqrt n\|\bhtheta_n^*-\btheta_0\|>\frac{\eps}{2\eta}
\Big)>\frac{t}{3}\Bigg)
&=
\Pr\Bigg({\Pr}^*\Big(
\sqrt n\|\bhtheta_n^*-\btheta_0\|>M
\Big)>\frac{t}{3}\Bigg)
\leqslant \frac{\gamma}{3},
\label{eq:second-term-small}
\\
\Pr\Big(
\sqrt n\|\boheta_n-\boeta_0\|>\frac{\eps}{2\eta}
\Big)
&=
\Pr\Big(
\sqrt n\|\boheta_n-\boeta_0\|>M
\Big)
\leqslant \frac{\gamma}{3}.
\label{eq:third-term-small}
\end{align}
Finally, by \eqref{eq:bootstrap-remainder-final-again}, for the fixed $\eta>0$ chosen above,
\begin{equation*}
\Pr\Bigg({\Pr}^*\Big(
\frac{\|\br_n^*\|}{\|\bhtheta_n^*-\btheta_0\|+\|\boheta_n-\boeta_0\|}>\eta
\Big)>\frac{t}{3}\Bigg)\rightarrow 0,
\end{equation*}
and therefore for large $n$,
\[
\Pr\Bigg({\Pr}^*\Big(
\frac{\|\br_n^*\|}{\|\bhtheta_n^*-\btheta_0\|+\|\boheta_n-\boeta_0\|}>\eta
\Big)>\frac{t}{3}\Bigg)\leqslant \frac{\gamma}{3}.
\]
Combining \eqref{eq:outer-bound-3terms}, \eqref{eq:second-term-small},
\eqref{eq:third-term-small}, and the last display, we conclude that for large $n$
\[
\Pr\Big({\Pr}^*(\sqrt n\,\|\br_n^*\|>\eps)>t\Big)\leqslant \gamma.
\]
Since $\gamma>0$ is arbitrary, this shows that for every $\eps,t>0$,
\begin{equation}\label{eq:bootreminderzero2}
\Pr\Big({\Pr}^*(\sqrt n\,\|\br_n^*\|>\eps)>t\Big)\;\rightarrow\;0.
\end{equation}

Finally, rewrite the decomposition of $\bhtheta_n^*-\btheta_0$ as
\[
\bhtheta_n^*-\btheta_0
=
\bK_0\big(\hpi^*_n(\boheta_n)-\pi(\btheta_0,\boeta_0)-\bB_0(\boheta_n-\boeta_0)\big)
-\bK_0\br_n^*
-\bK_0\zeta_n(\bhtheta_n^*,\boheta_n).
\]
Add and subtract $\pi(\btheta_0,\boheta_n)$, then we have 
\[
\hpi^*_n(\boheta_n)-\pi(\btheta_0,\boeta_0)-\bB_0(\boheta_n-\boeta_0)
=
\hpi^*_n(\boheta_n)-\pi(\btheta_0,\boheta_n)
+\br_{\eta,n},
\]
where $\br_{\eta,n}=o_p(n^{-1/2})$ as shown at the end of the proof of Proposition~\ref{pro:II_WI}.
Thus
\[
\bhtheta_n^*-\btheta_0
=
\bK_0\big(\hpi^*_n(\boheta_n)-\pi(\btheta_0,\boheta_n)\big)
+\bK_0\br_{\eta,n}
-\bK_0\br_n^*
-\bK_0\zeta_n(\bhtheta_n^*,\boheta_n).
\]
Define the combined bootstrap remainder (including the nuisance-centering term)
\begin{equation*}
\tilde\br_n^*
:=
\bK_0\br_{\eta,n}
-\bK_0\br_n^*
-\bK_0\zeta_n(\bhtheta_n^*,\boheta_n).
\end{equation*}
Then the decomposition becomes
\begin{equation*}
\bhtheta_n^*-\btheta_0
=
\bK_0\big(\hpi^*_n(\boheta_n)-\pi(\btheta_0,\boheta_n)\big)
+\tilde\br_n^*.
\end{equation*}
Set $C:=\|\bK_0\|<\infty$. By the triangle inequality,
\begin{equation*}
\sqrt n\,\|\tilde\br_n^*\|
\le
C\sqrt n\,\|\br_{\eta,n}\|
+
C\sqrt n\,\|\br_n^*\|
+
C\sqrt n\,\|\zeta_n(\bhtheta_n^*,\boheta_n)\|.
\end{equation*}
Fix $\eps>0$. Then
\begin{equation*}
\big\{\sqrt n\,\|\tilde\br_n^*\|>\eps\big\}
\subseteq
\Big\{
\sqrt n\,\|\br_{\eta,n}\|>\tfrac{\eps}{3C}
\Big\}
\;\cup\;
\Big\{
\sqrt n\,\|\br_n^*\|>\tfrac{\eps}{3C}
\Big\}
\;\cup\;
\Big\{
\sqrt n\,\|\zeta_n(\bhtheta_n^*,\boheta_n)\|>\tfrac{\eps}{3C}
\Big\}.
\end{equation*}
Taking ${\Pr}^*$ on both sides yields
\begin{align*}
{\Pr}^*\big(\sqrt n\,\|\tilde\br_n^*\|>\eps\big)
\le\;
&
{\Pr}^*\Big(
\sqrt n\,\|\br_{\eta,n}\|>\tfrac{\eps}{3C}
\Big)
+
{\Pr}^*\Big(
\sqrt n\,\|\br_n^*\|>\tfrac{\eps}{3C}
\Big)
+
{\Pr}^*\Big(
\sqrt n\,\|\zeta_n(\bhtheta_n^*,\boheta_n)\|>\tfrac{\eps}{3C}
\Big).
\end{align*}
Since $\br_{\eta,n}$ is computed on the original sample, it is fixed in the bootstrap
world, hence
\[
{\Pr}^*\Big(
\sqrt n\,\|\br_{\eta,n}\|>\tfrac{\eps}{3C}
\Big)
=
\bone\Big\{
\sqrt n\,\|\br_{\eta,n}\|>\tfrac{\eps}{3C}
\Big\}.
\]
Now fix any $t>0$. Using that if $a+b+c>t$ with $a,b,c\geqslant 0$ then at least one term
exceeds $t/3$, we obtain
\begin{align*}
\Big\{
{\Pr}^*\big(\sqrt n\,\|\tilde\br_n^*\|>\eps\big)>t
\Big\}
\subseteq\;
&
\Big\{
\bone\{\sqrt n\,\|\br_{\eta,n}\|>\tfrac{\eps}{3C}\}>\tfrac{t}{3}
\Big\}
\\&\cup
\Big\{
{\Pr}^*\Big(
\sqrt n\,\|\br_n^*\|>\tfrac{\eps}{3C}
\Big)>\tfrac{t}{3}
\Big\}
\cup
\Big\{
{\Pr}^*\Big(
\sqrt n\,\|\zeta_n(\bhtheta_n^*,\boheta_n)\|>\tfrac{\eps}{3C}
\Big)>\tfrac{t}{3}
\Big\}.
\end{align*}
Taking $\Pr$ gives
\begin{align}
\Pr\Big(
{\Pr}^*(\sqrt n\,\|\tilde\br_n^*\|>\eps)>t
\Big)
\le\;
&
\Pr\Big(
\sqrt n\,\|\br_{\eta,n}\|>\tfrac{\eps}{3C}
\Big)
\nonumber\\
&+
\Pr\Big(
{\Pr}^*\Big(
\sqrt n\,\|\br_n^*\|>\tfrac{\eps}{3C}
\Big)>\tfrac{t}{3}
\Big)\nonumber
\\
&+
\Pr\Big(
{\Pr}^*\Big(
\sqrt n\,\|\zeta_n(\bhtheta_n^*,\boheta_n)\|>\tfrac{\eps}{3C}
\Big)>\tfrac{t}{3}
\Big).
\label{eq:tilde-rn-outer-bound-joint}
\end{align}

By assumption, $\br_{\eta,n}=o_p(n^{-1/2})$, hence
$\Pr(\sqrt n\,\|\br_{\eta,n}\|>\eps/(3C))\rightarrow 0$.
Moreover, by \eqref{eq:bootreminderzero2} and \eqref{eq:bound-zeta-mixed},
the last two terms on the right-hand side of \eqref{eq:tilde-rn-outer-bound-joint}
converge to $0$. Therefore, for any $\eps,t>0$,
\[
\Pr\Big(
{\Pr}^*(\sqrt n\,\|\tilde\br_n^*\|>\eps)>t
\Big)\;\rightarrow\;0,
\]
which is equivalent to ${\Pr}^*(\sqrt n\,\|\tilde\br_n^*\|>\eps)\rightarrow_p 0$.
\end{proof}

\begin{proposition}\label{pro:adphat}
Under the same assumptions as in Lemma~\ref{lem:ratio-uniform-eta}, let $\bx_1,\ldots,\bx_n$ denote the observed sample, assumed to be randomly drawn
from $F_{\btheta_0}$.
For each tuning parameters $\boeta\in\bH$, define
\begin{equation*}
\bar a_n(\boeta)
=
\frac{1}{n}\sum_{i=1}^n a(\bx_i,\boeta),
\qquad
\bar b_n(\boeta)
=
\frac{1}{n}\sum_{i=1}^n b(\bx_i,\boeta).
\end{equation*}
Denote the corresponding population targets by
$
\mu_a(\boeta):=\E[a(\bx,\boeta)]$, and $\mu_b(\boeta):=\E[b(\bx,\boeta)]$.
The estimator $\hP_n(\boeta)$ is defined componentwise, for 
$j=1,\dots,d+d(d+1)/2$, as
\[
\big(\hP_n(\boeta)\big)_j
:=
\begin{cases}
\dfrac{(\bar a_n)_j(\boeta)}
      {(\bar b_n)_j(\boeta)},
&  \big|(\bar b_n)_j(\boeta)\big|
\geqslant \tilde \delta, \\[6pt]
\dfrac{(\bar a_n)_j(\boeta)}
      {\tilde \delta},
&  \big|(\bar b_n)_j(\boeta)\big|
< \tilde \delta,
\end{cases}
\]
Moreover
$P(\boeta):=\mu_a(\boeta)\oslash \mu_b(\boeta)$.
Let $Z(\bx,\boeta):=(a(\bx,\boeta)^T,b(\bx,\boeta)^T)^T$,  $\bSigma_{ab}:=\Cov(Z(\bx,\boeta_0))$, and 
\begin{equation*}
\bD_0
:=
\Big[
\diag\big(\mu_b(\boeta_0)\big)^{-1},
\;
-\diag\!\left(
\mu_a(\boeta_0)\oslash \mu_b(\boeta_0)^2
\right)
\Big].
\end{equation*}

Then
\begin{equation*}
\sqrt n\Big(\hP_n(\boheta_n)-P(\boheta_n)\Big)\ \rightarrow_d N\!\Big(0,\ \bD_0\,\bSigma_{ab}\,\bD_0^T\Big).
\end{equation*}
\end{proposition}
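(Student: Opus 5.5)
We prove Proposition~\ref{pro:adphat} by a delta-method argument in three steps: remove the nuisance estimate $\boheta_n$ by stochastic equicontinuity, apply the multivariate CLT to the averages at $\boeta_0$, and pass to the ratio using the event on which $\hP_n$ is an exact ratio. The key point is that $\hP_n(\boheta_n)-P(\boheta_n)$ is centered at the \emph{random} target $P(\boheta_n)$. To first order the $\boeta$-dependence therefore cancels, and only the sampling fluctuation at $\boeta_0$ remains.

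\textbf{Step 1 (empirical process at $\boeta_0$).} Define the centered processes
\[
\mathbb G_n^a(\boeta)=\sqrt n\big(\bar a_n(\boeta)-\mu_a(\boeta)\big),\qquad
\mathbb G_n^b(\boeta)=\sqrt n\big(\bar b_n(\boeta)-\mu_b(\boeta)\big),
\]
indexed by $\boeta\in\bH$. The observed sample is i.i.d.\ from $F_{\btheta_0}$, and $a,b$ are bounded and Lipschitz in $\boeta$. Hence the class $\{a_j(\cdot,\boeta),b_j(\cdot,\boeta):\boeta\in\bH\}$ is Lipschitz-parametric over a compact set, and exactly as in the proof of Lemma~\ref{lem:ratio-uniform-eta} (Example~19.7 and Theorem~19.5 of \cite{van2000asymptotic}) it is $P$-Donsker. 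Donsker classes are asymptotically equicontinuous in the $L_2(P)$ seminorm. The Lipschitz bound $\|a(\bx,\boeta_1)-a(\bx,\boeta_2)\|\leqslant L_a\|\boeta_1-\boeta_2\|$ makes that seminorm dominated by $\|\boeta_1-\boeta_2\|$, and $\boheta_n\rightarrow_p\boeta_0$. It follows that
\[
\mathbb G_n^a(\boheta_n)-\mathbb G_n^a(\boeta_0)=o_p(1),\qquad \mathbb G_n^b(\boheta_n)-\mathbb G_n^b(\boeta_0)=o_p(1).
\]
By the multivariate CLT, $(\mathbb G_n^a(\boeta_0)^T,\mathbb G_n^b(\boeta_0)^T)^T\rightarrow_d N(0,\bSigma_{ab})$. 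This is justified because $Z(\bx,\boeta_0)$ is bounded, so its covariance exists.

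\textbf{Step 2 (remove the truncation).} I would work on the event $E_n$ from Lemma~\ref{lem:ratio-uniform-eta}, restricted to $\boeta$ and the observed sample. On $E_n$ every component satisfies $|(\bar b_n)_j(\boheta_n)|\geqslant c\geqslant\tilde\delta$, so $\hP_n(\boheta_n)=\bar a_n(\boheta_n)\oslash\bar b_n(\boheta_n)$ exactly. Since $\Pr(E_n^c)\rightarrow0$, it is enough to prove the limit law for this exact ratio.

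\textbf{Step 3 (delta method with moving centre).} Set $g(a,b)=a/b$ componentwise. On $\{|b|\geqslant c\}$ it has bounded first and second derivatives. A mean-value expansion about $(\mu_a(\boheta_n),\mu_b(\boheta_n))$ gives
\[
\sqrt n\big(\hP_n(\boheta_n)-P(\boheta_n)\big)
=\bD(\boheta_n)\begin{pmatrix}\mathbb G_n^a(\boheta_n)\\ \mathbb G_n^b(\boheta_n)\end{pmatrix}+R_n,
\]
where $\bD(\boeta)$ is $\bD_0$ with $\boeta_0$ replaced by $\boeta$. The remainder satisfies $\|R_n\|\leqslant C\sqrt n\big(\|\bar a_n-\mu_a\|^2+\|\bar b_n-\mu_b\|^2\big)$ evaluated at $\boheta_n$. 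By the uniform bound \eqref{eq:barAmuA}--\eqref{eq:barBmuB} (with $\btheta=\btheta_0$ fixed), this is $O_p(n^{-1/2})=o_p(1)$. The maps $\mu_a,\mu_b$ are continuous in $\boeta$ by bounded convergence and the Lipschitz property, and $|\mu_b|>c$, so $\bD(\boheta_n)\rightarrow_p\bD_0$. Combining with Step 1 through Slutsky's lemma yields the limit $N(0,\bD_0\bSigma_{ab}\bD_0^T)$.

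The main obstacle is Step 1, the asymptotic equicontinuity at the random index $\boheta_n$. Unlike in Lemma~\ref{lem:ratio-uniform-eta}, an $O_p(n^{-1/2})$ rate is not enough here; we need the difference to be $o_p(1)$ after $\sqrt n$ scaling. There is also a dependence issue: $\boheta_n$ is computed from the same sample, so we cannot condition on it. My plan is to invoke the uniform-in-$\boeta$ Donsker property together with Lemma~19.24 of \cite{van2000asymptotic}, which allows plugging in a random index converging in $L_2$ seminorm, precisely to avoid conditioning.
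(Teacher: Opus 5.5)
Your proposal is correct and follows essentially the same route as the paper's proof. Both arguments combine four ingredients: the Donsker property plus Lemma~19.24 of van der Vaart (2000) to show that $\sqrt n\bigl(\bar Z_n(\boheta_n)-\mu_Z(\boheta_n)\bigr)$ and $\sqrt n\bigl(\bar Z_n(\boeta_0)-\mu_Z(\boeta_0)\bigr)$ differ by $o_p(1)$; the multivariate CLT at $\boeta_0$; a Taylor expansion of $u\oslash v$ about the moving centre $\mu_Z(\boheta_n)$ on $E_n$, with $Dg(\mu_Z(\boheta_n))\rightarrow_p\bD_0$; and $\Pr(E_n^c)\rightarrow 0$ to discard the truncated branch. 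The only cosmetic difference is that you bound the remainder by an explicit second-order term, whereas the paper uses the little-$o$ remainder from differentiability.
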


\begin{proof}
Define 
$\bar Z_n(\boeta):=\frac1n\sum_{i=1}^nZ(\bx_i,\boeta)$, and 
$\mu_Z(\boeta):=\left(\mu_a(\boeta)^T, \mu_b(\boeta)^T\right)^T$,
and let
$\bar \Delta_n:=\bar Z_n(\boheta_n)-\bar Z_n(\boeta_0)$, and
$\bar\Delta:=\mu_Z(\boheta_n)-\mu_Z(\boeta_0)
$.

With these definitions, the decomposition
\begin{equation}
\sqrt n\big(\bar Z_n(\boheta_n)-\mu_Z(\boheta_n)\big)
=
\sqrt n\big(\bar Z_n(\boeta_0)-\mu_Z(\boeta_0)\big)
+\sqrt n(\bar\Delta_n-\bar\Delta),
\label{eq:decomp}
\end{equation}
holds.
By boundedness of $a(\bx,\boeta)$ and $b(\bx,\boeta)$, all coordinates of
$a(\bx,\boeta_0)$ and $b(\bx,\boeta_0)$ have finite second moments.   Moreover, since the observations
$\bx_1,\ldots,\bx_n$ are i.i.d., the vectors
$Z_1(\boeta_0),\ldots,Z_n(\boeta_0)$ are i.i.d.
Hence, by the multivariate central limit theorem \citep{van2000asymptotic},
\begin{equation*}
\sqrt n\big(\bar Z_n(\boeta_0)-\mu_Z(\boeta_0)\big)\ \rightarrow_{d}\ N(0,\bSigma_{ab}).
\end{equation*}

We now show that $\sqrt n(\bar\Delta_n-\bar\Delta)=o_p(1)$.
Let $Z_j(\bx,\boeta)$ indicates the $j$-th entry of $Z(\bx,\boeta)$.
Define the real-valued class
\[
\mathcal F_j:=\bigl\{\bx\mapsto Z_j(\bx,\boeta):\ \boeta\in\bH\bigr\},
\]
for $j\in\{1,\ldots,2(d+d(d+1)/2)\}$.
By uniform boundedness, there exists an $M<\infty$ such that
$|Z_j(\bx,\boeta)|\leqslant M$ for all $\bx$ and all $\boeta\in\bH$, hence
$\mathcal F_j$ admits a constant envelope.
Moreover, by Lipschitz continuity of $a$ and $b$ in $\boeta$ uniformly in $\bx$,
there exist $L_j<\infty$ such that for all $\boeta_1,\boeta_2\in\bH$ and all $\bx$,
\[
|Z_j(\bx,\boeta_1)-Z_j(\bx,\boeta_2)|\leqslant L_j\|\boeta_1-\boeta_2\|.
\]
Thus $\mathcal F_j$ is a bounded
Lipschitz class indexed by a compact, and thus bounded, subset of
$\mathbb R^{r}$. By Example~19.7 of \citet{van2000asymptotic}, the bracketing entropy integral is
finite, and therefore $\mathcal F_j$ is $F_{\btheta_0}$-Donsker by Theorem~19.5 of
\citet{van2000asymptotic}.
Define the random and limit functions
\[
f_{n,j}(\bx):=Z_j(\bx,\boheta_n),
\qquad
f_{0,j}(\bx):=Z_j(\bx,\boeta_0).
\]
Since $\boheta_n\in\bH$, then $f_{n,j}\in\mathcal F_j$. Moreover
\[
\E\!\left[f_{0,j}(X)^2\right]
=
\E\!\left[ Z_j(X,\boeta_0)^2 \right]
< \infty,
\]
by the uniform boundedness assumption on $a(\bx,\boeta)$ and $b(\bx,\boeta)$. Furthermore, by the Lipschitz bound,
\[
|f_{n,j}(\bx)-f_{0,j}(\bx)|
=|Z_j(\bx,\boheta_n)-Z_j(\bx,\boeta_0)|
\leqslant L_j\|\boheta_n-\boeta_0\|,
\]
hence
\[
\int(f_{n,j}(X)-f_{0,j}(X))^2dF_{\btheta_0}(X)
\leqslant L_j^2\|\boheta_n-\boeta_0\|^2\rightarrow_p 0,
\]
since $\boheta_n-\boeta_0=O_p(n^{-1/2})$.

By Lemma~19.24 of \citet{van2000asymptotic}, for each fixed $j$,
\[
\sqrt n(\bar\Delta_n-\bar\Delta)_j=\sqrt n(\frac{1}{ n}\sum_{i=1}^n
\Bigl[
\bigl(Z_j(\bx_i,\boheta_n)-Z_j(\bx_i,\boeta_0)\bigr)
-(
(\mu_Z(\boheta_n))_j-(\mu_Z(\boeta_0))_j)
\Bigr])
\ \rightarrow_p\ 0 .
\]
Thus $\sqrt n(\bar\Delta_n-\bar\Delta)=o_p(1)$.

Applying Slutsky’s theorem  \citep{van2000asymptotic} to the decomposition in \eqref{eq:decomp} yields
\begin{equation}
\sqrt n\big(\bar Z_n(\boheta_n)-\mu_Z(\boheta_n)\big)\ \rightarrow_{d}\ N(0,\bSigma_{ab}).
\label{eq:CLT-vector}
\end{equation}

Define $g:\mathbb R^{2(d+d(d+1)/2)}\rightarrow \mathbb R^{d+d(d+1)/2}$ by $g(z):=u\oslash v,$
with $ z=(u^T,v^T)^T$, $u,v\in\mathbb R^{d+d(d+1)/2}$.
Noting that $\mu_{b}(\boeta)=\mu_{b}(\btheta_0,\boeta)$, by assumption, for every $\boeta\in\bH$,
$\min_j|\mu_{b,j}(\boeta)|> c$ so $g$ is differentiable at $\mu_Z(\boheta_n)$.
On the event $E_n$ we have 
$\min_j |(\bar b_n)_j(\boheta_n)|\geqslant c$ since $(\bar b_n)_j(\boheta_n)$  has the same distribution as $(\bar b_n)_j(\btheta_0,\boheta_n)$,  it follows that $g$ is well-defined at $\bar Z_n(\boheta_n)$ on $E_n$.
Therefore, on the event $E_n$, a Taylor expansion of $g$ at $\mu_Z(\boheta_n)$ gives
\begin{equation}\label{eq:expansiong}
g(\bar Z_n(\boheta_n))
= g(\mu_Z(\boheta_n))
+ Dg(\mu_Z(\boheta_n))(\bar Z_n(\boheta_n)-\mu_Z(\boheta_n))
+ \br_{n,g},
\end{equation}
where
\begin{equation*}
Dg(\mu_Z(\boheta_n))
=\big[\,\diag(\mu_b(\boheta_n))^{-1},\ -\,\diag(\mu_a(\boheta_n)\oslash \mu_b(\boheta_n)^2)\,\big].
\end{equation*}
The remainder term satisfies for all $\eps>0$,
\begin{equation*}
\Pr\!\left(
E_n\ \cap\
\left\{
\frac{\|\br_{n,g}\|}{\|\bar Z_n(\boheta_n)-\mu_Z(\boheta_n)\|}
>\eps
\right\}
\right)
\ \rightarrow\ 0,
\end{equation*}
equivalently
\begin{equation*}
I_{E_n}\,\|\br_{n,g}\|
=
o_p\!\left(
\|\bar Z_n(\boheta_n)-\mu_Z(\boheta_n)\|
\right).
\end{equation*}
This holds because on $E_n$ the map $g$ is differentiable at $\mu_Z(\boheta_n)$ and
$\bar Z_n(\boheta_n)-\mu_Z(\boheta_n)\rightarrow_p0$, by using similar argument as in  Proposition~\ref{pro:II_WI}.
Since $\bar Z_n(\boheta_n)-\mu_Z(\boheta_n)=O_p(n^{-1/2})$, then
$I_{E_n}\br_{n,g} =o_p(n^{-1/2})$.

By assumption, there exists an $L<\infty$ such that for all
$\boeta_1,\boeta_2$ and all $\bx$,
\[
\|a(\bx,\boeta_1)-a(\bx,\boeta_2)\|
\leqslant L\|\boeta_1-\boeta_2\|.
\]
Thus, by Jensen's inequality
\[
\|\mu_a(\boeta_1)-\mu_a(\boeta_2)\|
=
\big\|\E[a(X,\boeta_1)-a(X,\boeta_2)]\big\|
\le
\E\|a(X,\boeta_1)-a(X,\boeta_2)\|
\le
L\|\boeta_1-\boeta_2\|.
\]
An analogous result holds for $\mu_b(\boeta)$.
Hence each coordinate of $\mu_a(\boeta)$ and $\mu_b(\boeta)$ is Lipschitz continuous in $\boeta$, and therefore continuous at $\boeta_0$.
Since $\boheta_n \rightarrow_p \boeta_0$, the continuous mapping theorem
\citep{van2000asymptotic} implies $\mu_Z(\boheta_n)\rightarrow_p \mu_Z(\boeta_0)$.
Moreover, as $Dg(\cdot)$ is continuous at $\mu_Z(\boeta_0)$,
we obtain $
Dg(\mu_Z(\boheta_n))\rightarrow_pDg(\mu_Z(\boeta_0))$.

Recall that $\hP_n(\boheta_n)=g(\bar Z_n(\boheta_n))$ on $E_n$, hence
\[
\sqrt n\big(\hP_n(\boheta_n)-P(\boheta_n)\big)
=
I_{E_n}\,\sqrt n\big(g(\bar Z_n(\boheta_n))-g(\mu_Z(\boheta_n))\big)
+
\sqrt n\,I_{E_n^c}\big(\hP_n(\boheta_n)-g(\mu_Z(\boheta_n))\big).
\]
Since $\|\hP_n(\boheta_n)\|\leqslant M/\tilde \delta$ as $\|\bar a_n(\boheta_n)\|\leqslant M$ and
$\|g(\mu_Z(\boheta_n))\|\leqslant M/c$, we have
\[
\Big\|
\hP_n(\boheta_n)-g(\mu_Z(\boheta_n))
\Big\|
\le
\|\hP_n(\boheta_n)\|+\|g(\mu_Z(\boheta_n))\|
\le
M\left(\frac{1}{\tilde \delta}+\frac{1}{c}\right).
\]
Hence
\[
\sqrt n\,I_{E_n^c}
\big(P_n(\boheta_n)-g(\mu_Z(\boheta_n))\big)
=
O_p\!\big(\sqrt n\,I_{E_n^c}\big).
\]
Since $\Pr(E_n^c)=O(n^{-\alpha})$ with $\alpha>1/2$, we have  that
\[
\sqrt n\,I_{E_n^c}
\big(P_n(\boheta_n)-g(\mu_Z(\boheta_n))\big)
=o_p(1).
\]
Moreover,  because $\Pr(E_n)\rightarrow 1$,  $I_{E_n}\rightarrow_p 1$, and therefore
\[
I_{E_n}\,\sqrt n\big(g(\bar Z_n(\boheta_n))-g(\mu_Z(\boheta_n))\big)
\ \rightarrow_d
N\!\Big(0,\ \bD_0\,\bSigma_{ab}\,\bD_0^T\Big),
\]
by \eqref{eq:CLT-vector}, \eqref{eq:expansiong}, and Slutsky’s theorem
\citep{van2000asymptotic}. Therefore,
\begin{equation}\label{eq:conphatetahat}
\sqrt n\big(\hP_n(\boheta_n)-P(\boheta_n)\big)
\ \rightarrow_d\ 
N\!\Big(0,\ \bD_0\,\bSigma_{ab}\,\bD_0^T\Big).
\end{equation}
\end{proof}

\begin{proposition}
\label{pro:converbootmean}
Under the same assuptions and notation of Proposition~\ref{pro:adphat}, let $\bx_1^*,\ldots,\bx_n^*$ be a  bootstrap sample from the empirical distribution function $F_n$ from the observed sample $\bx_1,\ldots,\bx_n$, and set
\begin{equation*}
\bar a_n^*(\boeta)=\frac{1}{n}\sum_{i=1}^n a(\bx_i^*,\boeta),\qquad
\bar b_n^*(\boeta)=\frac{1}{n}\sum_{i=1}^n b(\bx_i^*,\boeta).
\end{equation*}
The estimator  $\hP_n^*(\boeta)$ is defined componentwise, for $j=1,\dots,d+d(d+1)/2$, as
\[
\big(\hP_n^*(\boeta)\big)_j
:=
\begin{cases}
\dfrac{(\bar a_n^*)_j(\boeta)}
      {(\bar b_n^*)_j(\boeta)},
&  \big|(\bar b_n^*)_j(\boeta)\big|
\geqslant \tilde \delta, \\[6pt]
\dfrac{(\bar a_n^*)_j(\boeta)}
      {\tilde \delta},
&  \big|(\bar b_n^*)_j(\boeta)\big|
< \tilde \delta.
\end{cases}
\]
Then 
\begin{equation*}
\sqrt n\big(\hP_n^*(\boheta_n)-\hP_n(\boheta_n)\big)
\ \rightarrow_d\ N\!\big(0,\,\bD_0\,\bSigma_{ab}\,\bD_0^T\big),
\end{equation*}
in probability.
\end{proposition}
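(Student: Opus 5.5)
The plan mirrors the proof of Proposition~\ref{pro:adphat}, replacing the i.i.d.\ central limit theorem with the bootstrap CLT for sample means. Write $Z(\bx,\boeta)=(a(\bx,\boeta)^T,b(\bx,\boeta)^T)^T$ as there, with $\bar Z_n^*(\boeta)=\frac1n\sum_i Z(\bx_i^*,\boeta)$. Conditionally on the data, $\boheta_n$ is fixed, so the bootstrap draws $Z(\bx_i^*,\boheta_n)$ are i.i.d.\ from the empirical distribution of $\{Z(\bx_i,\boheta_n)\}_{i=1}^n$, which has mean $\bar Z_n(\boheta_n)$.

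The first step is to show that
\[
\sqrt n\big(\bar Z_n^*(\boheta_n)-\bar Z_n(\boheta_n)\big)\rightarrow_d N(0,\bSigma_{ab})
\]
in probability. Because $Z$ is uniformly bounded by $M$, I would apply the Lindeberg CLT to the triangular array conditionally on the data. The Lindeberg condition holds trivially, since each summand divided by $\sqrt n$ is bounded by $2M/\sqrt n$. It then remains to check that the bootstrap covariance $\Cov^*(Z(\bx_1^*,\boheta_n))$, which is the empirical covariance of $Z(\bx_i,\boheta_n)$, converges in probability to $\bSigma_{ab}$. For this I would split the difference into two parts:
\begin{itemize}[nosep]
\item[(i)] the change from $\boheta_n$ to $\boeta_0$ inside the empirical second moments, which is bounded by $2ML\|\boheta_n-\boeta_0\|\rightarrow_p 0$ using the Lipschitz property in $\boeta$;
\item[(ii)] the empirical covariance at the fixed $\boeta_0$, which converges to $\bSigma_{ab}$ by the weak law of large numbers.
\end{itemize}
Alternatively, one can cite the bootstrap-of-the-mean theorem (Theorem~23.4 of \citet{van2000asymptotic}), whose uniform-in-$\boeta$ version follows from the Donsker property of the bounded Lipschitz class $\mathcal F_j$ used earlier, via the bootstrap Donsker theorem (Theorem~23.7).

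The second step applies the delta method to $g(u,v)=u\oslash v$. The bootstrap event $E_n^*=\{\min_j|(\bar b_n^*)_j(\boheta_n)|\geqslant c'\}$ must have bootstrap probability tending to one in probability for some $c'\geqslant\tilde\delta$ (taking $c'=\tilde\delta\leqslant c$ suffices). This holds because:
\begin{itemize}[nosep]
\item $\bar b^*_n-\bar b_n=O_{p^*}(n^{-1/2})$ by step one;
\item $\bar b_n(\boheta_n)\rightarrow_p\mu_b(\boeta_0)$, whose entries exceed $c$ in absolute value.
\end{itemize}
I expect this to be the main technical nuisance, because the margin is only $|(\mu_b)_j|>c\geqslant\tilde\delta$. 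If the margin is zero I would instead use continuity of $\mu_b$ on the compact set $\bH$, so that the infimum is attained and exceeds $c$ strictly.

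On $E_n^*\cap E_n$ both $\hP_n^*$ and $\hP_n$ equal $g$ evaluated at the respective means. A Taylor expansion around $\bar Z_n(\boheta_n)$ gives
\[
\sqrt n(\hP_n^*-\hP_n)=Dg(\bar Z_n(\boheta_n))\sqrt n(\bar Z_n^*-\bar Z_n)+o_{p^*}(1),
\]
with the remainder controlled by the uniformly bounded Hessian on $\{|b|\geqslant \tilde\delta\}$ times $\|\bar Z_n^*-\bar Z_n\|^2=O_{p^*}(n^{-1})$. Since $\bar Z_n(\boheta_n)\rightarrow_p\mu_Z(\boeta_0)$ by the uniform law from Lemma~\ref{lem:ratio-uniform-eta} plus Lipschitz continuity of $\mu_Z$, the continuity of $Dg$ gives $Dg(\bar Z_n(\boheta_n))\rightarrow_p\bD_0$.

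The final step combines these with a conditional Slutsky argument, for instance by showing that the bounded-Lipschitz distance between the bootstrap law and $N(0,\bD_0\bSigma_{ab}\bD_0^T)$ tends to zero in probability. The complement event contributes $O(\sqrt n)$ times its bootstrap probability. So that this term is negligible in law, I would only need ${\Pr}^*(E_n^{*c})\rightarrow_p0$, because the distributional statement ignores events of vanishing probability.
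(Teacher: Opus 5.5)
Your argument is correct. Its skeleton matches the paper's: a bootstrap CLT for the stacked means, the delta method for $g(u,v)=u\oslash v$ on an event where the denominators stay away from zero, $Dg(\bar Z_n(\boheta_n))\rightarrow_p\bD_0$, and a conditional Slutsky step with bounded Lipschitz test functions. Two steps, however, are done differently.

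\emph{Bootstrap CLT.} The paper does not apply a CLT at $\boheta_n$ directly. It writes $\sqrt n(\bar Z_n^*(\boheta_n)-\bar Z_n(\boheta_n))$ as $\sqrt n(\bar Z_n^*(\boeta_0)-\bar Z_n(\boeta_0))$, which is handled by the Bickel--Freedman bootstrap CLT at the fixed parameter. The remainder has conditional second moment at most $4(L_a+L_b)^2\|\boheta_n-\boeta_0\|^2$. Your conditional Lindeberg route uses the same Lipschitz-in-$\boeta$ bound, but only on the bootstrap covariance. It avoids the external bootstrap theorem, at the price of the usual subsequence argument that turns convergence in probability of the conditional covariances into convergence in probability of the conditional laws.

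\emph{Denominators.} The paper works with the uniform event $\{\inf_{\boeta\in\bH}\min_j|(\bar b_n^*)_j(\boeta)|\geqslant c\}$. It shows its bootstrap probability tends to one through a buffered event, a finite net of $\bH$, and a conditional Hoeffding bound. Since $\boheta_n$ is fixed given the data, your pointwise event at $\boheta_n$ with threshold $\tilde\delta$ is all that is needed. It follows at once from $\E^*\|\bar b_n^*(\boheta_n)-\bar b_n(\boheta_n)\|^2\leqslant M^2/n$ and $\bar b_n(\boheta_n)\rightarrow_p\mu_b(\boeta_0)$, so your version is noticeably shorter.

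Your worry about a zero margin is unfounded. At the single point $\boeta_0$ there are finitely many coordinates, each with $|(\mu_b)_j(\boeta_0)|>c\geqslant\tilde\delta$, so the margin is automatically positive and no compactness argument is required.

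One small point, which the paper also glosses over: $\{|b|\geqslant\tilde\delta\}$ is not convex. To bound the Taylor remainder by the Hessian, you should work on the event that $\bar b_n(\boheta_n)$ and $\bar b_n^*(\boheta_n)$ both lie in a fixed convex neighbourhood of $\mu_b(\boeta_0)$ bounded away from zero. That event has bootstrap probability tending to one in probability, by the same two facts.
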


\begin{proof}
Define 
$\bar Z_n(\boeta):=\frac1n\sum_{i=1}^nZ(\bx_i,\boeta)$, and 
$\mu_Z(\boeta):=\left(\mu_a(\boeta)^T, \mu_b(\boeta)^T\right)^T$,
and let
$\bar \Delta_n:=\bar Z_n(\boheta_n)-\bar Z_n(\boeta_0)$, and
$\bar\Delta:=\mu_Z(\boheta_n)-\mu_Z(\boeta_0)
$, and $\Delta_i:=Z(\bx_i,\boheta_n)-Z(\bx_i,\boeta_0)$,

Define the bootstrap analogues $\bar Z_n^*(\boeta)$, $\bar\Delta_n^*$, and
$\Delta_i^*$ of $\bar Z_n(\boeta)$, $\bar\Delta_n$, and $\Delta_i$,
respectively, obtained by replacing each observation $\bx_i$ with its
bootstrap counterpart $\bx_i^*$.
Then
\begin{equation}
\sqrt n\big(\bar Z_n^*(\boheta_n)-\bar Z_n(\boheta_n)\big)
=\sqrt n\big(\bar Z_n^*(\boeta_0)-\bar Z_n(\boeta_0)\big)
+\sqrt n(\bar\Delta_n^*-\bar\Delta_n).
\label{eq:boot-decomp}
\end{equation}
By boundedness of $a(\bx,\boeta)$ and $b(\bx,\boeta)$, all coordinates of
$a(\bx,\boeta_0)$ and $b(\bx,\boeta_0)$ have finite second moments.  
Hence, by \cite{bickel1981some} Theorem~2.2(a), we have 
\begin{equation}\label{eq:bootCLT}
\mathcal \!\sqrt n(\bar Z_n^*(\boeta_0)-\bar Z_n(\boeta_0))
\rightarrow_d N(0,\bSigma_{ab}),
\end{equation}
holds almost surely.
Since 
$\Delta_i^*$ are i.i.d. from the law assigning
mass $1/n$ to each $\Delta_i$, then
\begin{equation*}
\E^*(\Delta_i^*)=\bar\Delta_n
\qquad\text{and}\qquad
\Cov^*(\Delta_i^*)=\frac{1}{n}\sum_{i=1}^n(\Delta_i-\bar\Delta_n)(\Delta_i-\bar\Delta_n)^T.
\end{equation*}
Hence,
\begin{equation*}
\Cov^*(\bar\Delta_n^*)
= \Cov^*\!\left(\frac{1}{n}\sum_{i=1}^n \Delta_i^*\right)
= \frac{1}{n^2}\sum_{i=1}^n \Cov^*(\Delta_i^*)
= \frac{1}{n}\Cov^*(\Delta_i^*).
\end{equation*}
Then, it follows that
\begin{equation*}
\Tr[ \Cov^*\!\big(\sqrt n(\bar\Delta_n^*-\bar\Delta_n)\big)]
= n\,\Tr[\Cov^*(\bar\Delta_n^*)]
= \Tr[\Cov^*(\Delta_i^*)]
= \frac{1}{n}\sum_{i=1}^n\|\Delta_i-\bar\Delta_n\|^2.
\end{equation*}
Moreover, using $\bar\Delta_n=\E^*\bar\Delta_n^*$, we obtain
\begin{align*}
\E^*\Big\|\sqrt n(\bar\Delta_n^*-\bar\Delta_n)\Big\|^2
&=
\Tr\!\Big(\Cov^*\big(\sqrt n(\bar\Delta_n^*-\bar\Delta_n)\big)\Big)
\\
&=
\frac{1}{n}\sum_{i=1}^n\|\Delta_i-\bar\Delta_n\|^2.
\end{align*}
Furthermore, by the inequality
\(
\|u-v\|^2\leqslant 2\|u\|^2+2\|v\|^2
\)
and Jensen's inequality,
\begin{align*}
\frac{1}{n}\sum_{i=1}^n\|\Delta_i-\bar\Delta_n\|^2
&\le
\frac{2}{n}\sum_{i=1}^n\|\Delta_i\|^2
+2\|\bar\Delta_n\|^2
\\
&\le
\frac{2}{n}\sum_{i=1}^n\|\Delta_i\|^2
+\frac{2}{n}\sum_{i=1}^n\|\Delta_i\|^2
=
\frac{4}{n}\sum_{i=1}^n\|\Delta_i\|^2.
\end{align*}
By the Lipschitz assumptions on $a$ and $b$, there exist constants $L_a,L_b<\infty$ such that, for every $i$,
\[
\|\Delta_i\|
\le
\|a(\bx_i,\boheta_n)-a(\bx_i,\boeta_0)\|
+
\|b(\bx_i,\boheta_n)-b(\bx_i,\boeta_0)\|
\le
(L_a+L_b)\|\boheta_n-\boeta_0\| .
\]
Hence,
\[
\frac{1}{n}\sum_{i=1}^n\|\Delta_i\|^2
\le
(L_a+L_b)^2\|\boheta_n-\boeta_0\|^2,
\]
and therefore
\begin{equation}\label{eq:expbootdiff}
\E^*\Big\|\sqrt n(\bar\Delta_n^*-\bar\Delta_n)\Big\|^2
\le
4(L_a+L_b)^2\|\boheta_n-\boeta_0\|^2
\rightarrow_p 0.
\end{equation}

Let 
\begin{equation*}
T_n^*:=\sqrt n\big(\bar Z_n^*(\boeta_0)-\bar Z_n(\boeta_0)\big),\qquad
R_n^*:=\sqrt n\big(\bar\Delta_n^*-\bar\Delta_n\big),
\end{equation*}
so by \eqref{eq:boot-decomp},
\begin{equation*}
\sqrt n\big(\bar Z_n^*(\boheta_n)-\bar Z_n(\boheta_n)\big)=T_n^*+R_n^*.
\end{equation*}
Fix $\varphi\in\bPhi$, where $\bPhi:=\lbrace\varphi:\mathbb{R}^{2(d+d(d+1)/2)}\rightarrow\mathbb{R}:|\varphi(x)-\varphi(y)|\leqslant L \|x-y\|, |\varphi(x)|\leqslant B, \quad x,y\in\mathbb{R}^{2(d+d(d+1)/2)}, L< \infty,B< \infty\rbrace$ is a set of all
 bounded $L$-Lipschitz functions, then
\begin{equation}\label{eq:split}
\Big|\E^*[\varphi(T_n^*+R_n^*)]-\E[\varphi(Z)]\Big| \le
\Big|\E^*[\varphi(T_n^*+R_n^*)]-\E^*[\varphi(T_n^*)]\Big|
+
\Big|\E^*[\varphi(T_n^*)]-\E[\varphi(Z)]\Big|,
\end{equation}
where $Z\sim N(0,\bSigma_{ab})$.
By linearity,
\begin{equation*}
\Big|\E^*[\varphi(T_n^*+R_n^*)]-\E^*[\varphi(T_n^*)]\Big|
=\Big|\E^*\!\big[\varphi(T_n^*+R_n^*)-\varphi(T_n^*)\big]\Big|.
\end{equation*}
 By Jensen's inequality,
\begin{equation*}
\Big|\E^*\!\big[\varphi(T_n^*+R_n^*)-\varphi(T_n^*)\big]\Big|
\leqslant \E^*\!\big|\varphi(T_n^*+R_n^*)-\varphi(T_n^*)\big|.
\end{equation*}
As, $\varphi$ is Lipschitz,
\begin{equation*}
\big|\varphi(T_n^*+R_n^*)-\varphi(T_n^*)\big|
\leqslant L\|R_n^*\|.
\end{equation*}
Taking conditional expectations, we have
\begin{equation*}
\E^*\!\big|\varphi(T_n^*+R_n^*)-\varphi(T_n^*)\big|
\leqslant L\E^*\|R_n^*\|.
\end{equation*}
Thus, 
\begin{align}\label{eq:lipschitz}
\Big|\E^*[\varphi(T_n^*+R_n^*)]-\E^*[\varphi(T_n^*)]\Big|
&=\Big|\E^*\big[\varphi(T_n^*+R_n^*)-\varphi(T_n^*)\big]\Big|
\leqslant \E^*\big|\varphi(T_n^*+R_n^*)-\varphi(T_n^*)\big|\nonumber
\\&\leqslant L\E^*\|R_n^*\|\leqslant L\sqrt{\E^*\|R_n^*\|^2},
\end{align}
where the last inequality follows from the Cauchy-Schwarz inequality.
By \eqref{eq:expbootdiff}, $\E^*\|R_n^*\|^2\rightarrow_p 0$, thus,
for all $\eps>0$,
\begin{equation}\label{eq:rem-prob}
\Pr\!\big(\E^*\|R_n^*\|^2>\eps\big)\ \rightarrow\ 0.
\end{equation}

Moreover, by  \eqref{eq:bootCLT}, $
T_n^*\ \rightarrow_{d}\ N(0,\bSigma_{ab}),
$
almost surely, and, thus, by the Portmanteau theorem \citep{van2000asymptotic}, as $\varphi$ is bounded and $L$-Lipschitz, we have
\begin{equation}\label{eq:clt-bl}
\E^*[\varphi(T_n^*)]\ \rightarrow_p\ \E[\varphi(Z)].
\end{equation}

Combining \eqref{eq:split} with the Boole’s inequality and using \eqref{eq:lipschitz}, \eqref{eq:rem-prob}, and \eqref{eq:clt-bl}, for any $\eps>0$,
\begin{align*}
\Pr(\Big|&\E^*[\varphi(T_n^*+R_n^*)]-\E[\varphi(Z)]\Big|>\eps)
\\&\le
\Pr\!\Big(\Big|\E^*[\varphi(T_n^*+R_n^*)]-\E^*[\varphi(T_n^*)]\Big|>\tfrac{\eps}{2}\Big)+\Pr\!\Big(\Big|\E^*[\varphi(T_n^*)]-\E[\varphi(Z)]\Big|>\tfrac{\eps}{2}\Big)
\\&\le
\Pr\!\Big(\E^*\|R_n^*\|^2>\left(\frac{\eps}{2L}\right)^2\Big)
+\Pr\!\Big(\Big|\E^*[\varphi(T_n^*)]-\E[\varphi(Z)]\Big|>\tfrac{\eps}{2}\Big)
\ \rightarrow\ 0.
\end{align*}
We implicitly assume $L>0$ as in the case $L=0$ the bound holds trivially.
Therefore, for every bounded $L$-Lipschitz function $\varphi\in \bPhi$, and any $\eps>0$
\begin{equation*}
\Pr\!\Big(\big|\E^*[\varphi(T_n^*+R_n^*)]-\E[\varphi(Z)]\big|>\eps\Big)\ \rightarrow\ 0,
\end{equation*}
which implies 
\begin{equation}\label{eq:adbarZ}
    \sqrt n\big(\bar Z_n^*(\boheta_n)-\bar Z_n(\boheta_n)\big)
\ \rightarrow_d\ N(0,\bSigma_{ab}),
\end{equation}
in probability.

Define
\[
E_n^*
:=
\Big\{
\inf_{\boeta\in\bH}\ \min_{j}\big|(\bar b_n^*)_j(\boeta)\big|\geqslant c
\Big\},
\]
Define
$G_n:=E_n\cap  E_n^*$.
On the event $G_n$, all denominators of $\bar Z_n(\boheta_n)$ and $\bar Z_n^*(\boheta_n)$
are bounded away from $0$ by $c$.
Define $g:\mathbb R^{2(d+d(d+1)/2)}\rightarrow \mathbb R^{d+d(d+1)/2}$ by
$g(z)=u\oslash v$, where $z=(u^T,v^T)^T$ and 
$u,v\in\mathbb R^{d+d(d+1)/2}$.
On $G_n$, the denominator blocks of both $\bar Z_n(\boheta_n)$ and
$\bar Z_n^*(\boheta_n)$ satisfy
$\min_j|(\bar b_n)_j(\boheta_n)|\geqslant c$, and $\min_j|(\bar b_n^*)_j(\boheta_n)|\geqslant c$,
so $g$ is well defined and differentiable at both
$\bar Z_n(\boheta_n)$ and $\bar Z_n^*(\boheta_n)$.
Hence, on $G_n$, a first-order Taylor expansion of $g$ at $\bar Z_n(\boheta_n)$ gives
\begin{equation}\label{eq:tylorexpansionZ*}
g(\bar Z_n^*(\boheta_n))- g(\bar Z_n(\boheta_n))
=
Dg(\bar Z_n(\boheta_n))
\big(\bar Z_n^*(\boheta_n)-\bar Z_n(\boheta_n)\big)
+ \br_{n,g}^*,
\end{equation}
where 
\begin{equation*}
\br_{n,g}^*
:=
g(\bar Z_n^*(\boheta_n))
-
g(\bar Z_n(\boheta_n))
-
Dg(\bar Z_n(\boheta_n))
\big(\bar Z_n^*(\boheta_n)-\bar Z_n(\boheta_n)\big).
\end{equation*}
For similar arguments as in the proof of Proposition~\ref{pro:II_WI_boot}
to obtain \eqref{eq:bootstrap-remainder-final}, for each $\eps,\delta>0$,
\begin{equation}\label{eq:ratereminder2}
\Pr\Big(
  {\Pr}^*\big( \Big\{ \frac{\|\br_{n,g}^*\|}
       {\|\bar Z_n^*(\boheta_n)-\bar Z_n(\boheta_n)\|} >\eps \Big\}\cap G_n\big)>\delta
\Big)\rightarrow 0.
\end{equation}

Moreover, as shown at the end of the proof of Proposition~\ref{pro:adphat},  $\mu_Z(\boheta_n)\rightarrow_p \mu_Z(\boeta_0)$ and $\bar Z_n(\boheta_n)-\mu_Z(\boheta_n)=O_p(n^{-1/2})$, thus, as $Dg(\cdot)$ is continuous at $\mu_Z(\boeta_0)$, the continuous mapping theorem
\citep{van2000asymptotic} implies 
\begin{equation}\label{eq:convDn}
Dg(\bar Z_n(\boheta_n)) \rightarrow_p Dg(\mu_Z(\boeta_0))=\bD_0.
\end{equation}

Write
$X_n^*:=\sqrt n(\bar Z_n^*(\boheta_n)-\bar Z_n(\boheta_n))$ and
$Y_n^*:=\sqrt n\,\br_{n,g}^*$, with
$\bD_n:=Dg(\bar Z_n(\boheta_n))$.
Fix a bounded $L$-Lipschitz function $\varphi\in\bPhi$.
By using the triangle inequality and splitting according to $G_n$,
\begin{align*}
&\Big|\E^*\!\big[\varphi(\sqrt n(\hP_n^*(\boheta_n)-\hP_n(\boheta_n)))\big]
-\E\big[\varphi(Z_P)\big]\Big|
\nonumber\\
&\le
\Big|\E^*\!\big[\varphi(\sqrt n(\hP_n^*(\boheta_n)-\hP_n(\boheta_n)))\,I_{G_n}\big]
-\E\big[\varphi(Z_P)\big]\Big|
+
\Big|\E^*\!\big[\varphi(\sqrt n(\hP_n^*(\boheta_n)-\hP_n(\boheta_n)))\,I_{G_n^c}\big]\Big|
\nonumber\\
&\le
\Big|\E^*\!\big[\varphi(\sqrt n(g(\bar Z_n^*(\boheta_n))-g(\bar Z_n(\boheta_n))))\,I_{G_n}\big]
-\E\big[\varphi(Z_P)\big]\Big|
+
C_{\varphi}\,{\Pr}^*(G_n^c),
\end{align*}
where $C_{\varphi}=\sup_{x\in\mathbb{R}^{d+d(d+1)/2}}\|\varphi(x)\|$, and $Z_P\sim N\!\big(0,\,\bD_0\bSigma_{ab}\bD_0^T\big)$.
By using that the Taylor expansion holds on $G_n$ and  adding and
subtracting $\E^*[\varphi(\bD_n X_n^*)I_{G_n}]$ yields
\begin{align}
&\Big|\E^*\!\big[\varphi(\sqrt n(g(\bar Z_n^*(\boheta_n))-g(\bar Z_n(\boheta_n))))I_{G_n}\big]
-\E\big[\varphi(Z_P)\big]\Big|
\nonumber\\
&\le
\Big|\E^*\!\big[\varphi(\bD_n X_n^*+Y_n^*)\,I_{G_n}\big]
-
\E^*\!\big[\varphi(\bD_n X_n^*)\,I_{G_n}\big]\Big|
\nonumber\\
&\quad+
\Big|\E^*\!\big[\varphi(\bD_n X_n^*)\,I_{G_n}\big]-\E\big[\varphi(Z_P)\big]\Big|.
\label{eq:split2}
\end{align}
In the last term in \eqref{eq:split2}, add and subtract
$\E^*[\varphi(\bD_0 X_n^*)]$
\begin{align}\label{eq:middleterm}
\Big|\E^*\!\big[\varphi(\bD_n X_n^*)\,I_{G_n}\big]
-\E\big[\varphi(Z_P)\big]\Big|
&\le
\Big|\E^*\!\big[(\varphi(\bD_n X_n^*)-\varphi(\bD_0 X_n^*))\,I_{G_n}\big]\Big|
\nonumber\\
&\quad+
\Big|\E^*\!\big[\varphi(\bD_0 X_n^*)\big]
-\E\big[\varphi(Z_P)\big]\Big|
\nonumber\\
&\quad+
C_{\varphi}{\Pr}^*(G_n^c)
\end{align}
Combining \eqref{eq:split2} and \eqref{eq:middleterm} gives
\begin{align}
\Big|\E^*\!\big[\varphi(\sqrt n(\hP_n^*(\boheta_n)-\hP_n(\boheta_n)))\big]
-\E\big[\varphi(Z_P)\big]\Big|
&\le
\Big|\E^*\!\big[\varphi(\bD_n X_n^*+Y_n^*)\,I_{G_n}\big]
-
\E^*\!\big[\varphi(\bD_n X_n^*)\,I_{G_n}\big]\Big|
\nonumber\\
&\quad+
\Big|\E^*\!\big[(\varphi(\bD_n X_n^*)-\varphi(\bD_0 X_n^*))\,I_{G_n}\big]\Big|
\nonumber\\
&\quad+
\Big|\E^*\!\big[\varphi(\bD_0 X_n^*)\big]-\E\big[\varphi(Z_P)\big]\Big|
+
2\,C_{\varphi}\,{\Pr}^*(G_n^c).
\label{eq:split-g-detailed}
\end{align}

From the definition of $L$-Lipschitz function and Jensen's inequality,
\begin{align}
\Big|\E^*\big[\varphi(\bD_n X_n^*+Y_n^*)\,I_{G_n}\big]
-\E^*\big[\varphi(\bD_n X_n^*)\,I_{G_n}\big]\Big|
&\le
\E^*\!\Big[\big|\varphi(\bD_n X_n^*+Y_n^*)-\varphi(\bD_n X_n^*)\big|\,I_{G_n}\Big]
\nonumber\\
&\le
L\,\E^*\!\big[\|Y_n^*\|\,I_{G_n}\big].
\label{eq:E*var}
\end{align}

From \eqref{eq:adbarZ},
for every bounded Lipschitz function $\varphi$,
\begin{equation}\label{eq:converexpectboot}
\E^*\!\left[\varphi(X_n^*)\right]
\;\rightarrow_p\;
\E\!\left[\varphi(Z)\right],
\end{equation}
where $Z\sim N(0,\bSigma_{ab})$.
Fix $\eps>0$.  
Since the Gaussian law is tight, there exists an $M<\infty$ such that
\begin{equation*}
\Pr(\|Z\|>M/2)<\eps/4.
\end{equation*}
Define the bounded Lipschitz function $\phi_M:\mathbb R^{d+d(d+1)/2}\rightarrow[0,1]$ by
\begin{equation*}
\phi_M(x)=
\begin{cases}
0, & \|x\|\leqslant M/2,\\[3pt]
\displaystyle \frac{\|x\|-M/2}{M/2}, & M/2<\|x\|<M,\\[6pt]
1, & \|x\|\geqslant M,
\end{cases}
\end{equation*}
so that $
I_{\{\|x\|>M\}} \;\le\; \phi_M(x)
\;\le\;
I_{\{\|x\|>M/2\}}$ .
From the definition of $\phi_M$ we have
$
I_{\{\|x\|>M\}} \;\le\; \phi_M(x)$, 
and therefore, taking conditional expectations,
\begin{equation}\label{eq:prstarTstarM}
{\Pr}^*\bigl(\|X_n^*\|>M\bigr)
= \E^*\!\left[I_{\{\|X_n^*\|>M\}}\right]
\;\le\;
\E^*\!\left[\phi_M(X_n^*)\right].
\end{equation}
As $
 \phi_M(x)
\;\le\; 
I_{\{\|x\|>M/2\}}$,
taking expectations with respect to $Z$ gives
\begin{equation*}
\E[\phi_M(Z)]
\;\le\;
\E\!\left[I_{\{\|Z\|>M/2\}}\right]
=
\Pr(\|Z\|>M/2).
\end{equation*}
By \eqref{eq:converexpectboot} applied to $\phi_M$,
\begin{equation*}
\E^*[\phi_M(X_n^*)] \rightarrow_p \E[\phi_M(Z)]
\leqslant \Pr(\|Z\|>M/2) < \eps/4.
\end{equation*}
Since $\E^*[\phi_M(X_n^*)]\to_p \E[\phi_M(Z)]<\eps/4<\eps$,
it follows that
\[
\Pr\!\left(\E^*[\phi_M(X_n^*)]>\eps\right)\rightarrow 0.
\]
Thus, by using \eqref{eq:prstarTstarM}, we have
\begin{equation*}
\Pr\!\left(
  {\Pr}^*(\|X_n^*\|>M)>\eps
\right)
\;\le\;
\Pr\!\left(
  \E^*[\phi_M(X_n^*)]>\eps
\right)
\;\rightarrow\;
0.
\end{equation*}
This shows that \eqref{eq:adbarZ} implies that for every $\eps>0$
there exists an $M<\infty$ such that
\begin{equation}\label{eq:boundsqrtnZ}
\Pr\!\left(
  {\Pr}^*(\|X_n^*\|>M)>\eps
\right)\;\rightarrow\;0.
\end{equation}

Define $\Delta_n^*:=\bar Z_n^*(\boheta_n)-\bar Z_n(\boheta_n)$.
Then, from \eqref{eq:ratereminder2}, for every $\eps,\delta>0$,
\begin{equation}\label{eq:ratereminder-repeat}
\Pr\!\Big(
  {\Pr}^*\!\Big(
  \Big\{ \tfrac{\|\br_{n,g}^*\|}{\|\Delta_n^*\|} > \eps \Big\}\cap G_n
  \Big) > \delta
\Big)
\;\rightarrow\;0 .
\end{equation}
Moreover, \eqref{eq:boundsqrtnZ} implies that for every $\eps,\delta>0$  
there exists an $L<\infty$ such that for large $n$
\begin{equation}\label{eq:Deltan-epsdelta}
\Pr\!\Big(
  {\Pr}^*\!\big( \sqrt n\,\|\Delta_n^*\|>L \big)>\eps
\Big)\;<\;\delta.
\end{equation}

Fix $\delta'>0$.  
For any $\eps'>0$, consider the events
\begin{equation*}
A_n := \Big\{\frac{\|\br_{n,g}^*\|}{\|\Delta_n^*\|}\leqslant \eps'\Big\}\cap G_n,
\qquad
A_n^c := \Big\{\frac{\|\br_{n,g}^*\|}{\|\Delta_n^*\|}>\eps'\Big\}\cap G_n.
\end{equation*}
On $A_n$ we have
\begin{equation*}
\|Y_n^*\|
=\sqrt n\,\|\br_{n,g}^*\|
\leqslant \eps'\sqrt n\,\|\Delta_n^*\|.
\end{equation*}
Therefore, whenever both $A_n$ and
$\{\sqrt n\,\|\Delta_n^*\|\le\delta'/\eps'\}$ occur, we obtain
\begin{equation*}
\|Y_n^*\|\leqslant \delta'.
\end{equation*}
Equivalently,
\begin{equation*}
\{\|Y_n^*\|>\delta'\}
\;\subseteq\;
A_n^c
\;\cup\;
\Big\{\sqrt n\,\|\Delta_n^*\|>\delta'/\eps'\Big\}.
\end{equation*}
Applying ${\Pr}^*$ and using the union bound gives
\begin{equation}\label{eq:Y-split}
{\Pr}^*\!\Big(
\Big\{\|Y_n^*\|>\delta'\Big\}\cap G_n
\Big)
\;\le\;
{\Pr}^*\!\Big(
\Big\{\tfrac{\|\br_{n,g}^*\|}{\|\Delta_n^*\|}>\eps'\Big\}\cap G_n
\Big)
\;+\;
{\Pr}^*\!\Big(
\sqrt n\,\|\Delta_n^*\|>\delta'/\eps'
\Big).
\end{equation}
From \eqref{eq:Y-split}, for $\eta>0$,
\begin{align*}
\Pr\!\Big(
  {\Pr}^*\!\Big(
    \Big\{\|Y_n^*\|>\delta'\Big\}\cap G_n
  \Big)>\eta
\Big)
&\le
\Pr\!\Big(
  {\Pr}^*\!\Big(
    \Big\{\tfrac{\|\br_{n,g}^*\|}{\|\Delta_n^*\|}>\eps'\Big\}\cap G_n
  \Big)>\eta/2
\Big)
\\
&\quad+
\Pr\!\Big(
  {\Pr}^*\!\Big(
    \sqrt n\,\|\Delta_n^*\|>\delta'/\eps'
  \Big)>\eta/2
\Big).
\end{align*}
The first probability tends to zero by \eqref{eq:ratereminder-repeat}.
For the second probability, choose $\eps'$ such that
$\delta'/\eps' > L$, where $L$ is from \eqref{eq:Deltan-epsdelta}.
Then by \eqref{eq:Deltan-epsdelta} the second probability also tends to zero.
Thus for any $\delta',\eta>0$,
\begin{equation*}
\Pr\!\Big(
  {\Pr}^*\!\Big(\big\{\|Y_n^*\|>\delta'\big\}\cap G_n\Big)>\eta
\Big)\;\rightarrow\;0,
\end{equation*}
that is,
\begin{equation}\label{eq:convY^*}
{\Pr}^*\!\Big(\big\{\|Y_n^*\|>\delta'\big\}\cap G_n\Big)\;\rightarrow_p\;0.
\end{equation}

Fix $\delta'>0$. Write
\[
\E^*\big[\|Y_n^*\| I_{G_n}\big]
=
\E^*\big[\|Y_n^*\| I_{\{\|Y_n^*\|\leqslant \delta'\}\cap G_n}\big]
+
\E^*\big[\|Y_n^*\| I_{\{\|Y_n^*\|>\delta'\}\cap G_n}\big].
\]
On $\{\|Y_n^*\|\leqslant \delta'\}\cap G_n$ we have $\|Y_n^*\|\leqslant \delta'$, hence
\[
\E^*\big[\|Y_n^*\| I_{\{\|Y_n^*\|\leqslant \delta'\}\cap G_n}\big]
\le
\delta'.
\]
Moreover, by Cauchy--Schwarz,
\[
\E^*\big[\|Y_n^*\| I_{\{\|Y_n^*\|>\delta'\}\cap G_n}\big]
\le
\sqrt{\E^*\big[\|Y_n^*\|^2 I_{G_n}\big]}\;
\sqrt{{\Pr}^*\big(\{\|Y_n^*\|>\delta'\}\cap G_n\big)}.
\]
Therefore,
\begin{equation}\label{eq:EY-Gn}
\E^*\big[\|Y_n^*\| I_{G_n}\big]
\le
\delta'
+
\sqrt{\E^*\big[\|Y_n^*\|^2 I_{G_n}\big]}\;
\sqrt{{\Pr}^*\big(\{\|Y_n^*\|>\delta'\}\cap G_n\big)}.
\end{equation}
By \eqref{eq:convY^*}, for every fixed $\delta'>0$,
\[
{\Pr}^*\big(\{\|Y_n^*\|>\delta'\}\cap G_n\big)\ \rightarrow_p\ 0,
\]
and hence
\[
\sqrt{{\Pr}^*\big(\{\|Y_n^*\|>\delta'\}\cap G_n\big)}\ \rightarrow_p\ 0.
\]
Moreover, on $G_n$ the map $g(u,v)=u\oslash v$ has uniformly bounded second
derivatives, so there
exists a constant $K<\infty$ such that on $G_n$,
\[
\|\br_{n,g}^*\|\leqslant K\|\Delta_n^*\|^2,
\qquad
\Delta_n^*:=\bar Z_n^*(\boheta_n)-\bar Z_n(\boheta_n).
\]
Therefore, on $G_n$,
\[
\|Y_n^*\|
=
\sqrt n\,\|\br_{n,g}^*\|
\le
K\sqrt n\,\|\Delta_n^*\|^2,
\]
and thus
\[
\|Y_n^*\|^2 I_{G_n}
\le
K^2\,n\,\|\Delta_n^*\|^4.
\]
Taking conditional expectations yields
\begin{equation}\label{eq:EY2-Gn}
\E^*\big[\|Y_n^*\|^2 I_{G_n}\big]
\le
K^2\,n\,\E^*\|\Delta_n^*\|^4.
\end{equation}
By assumption $\|\bar Z_n(\boheta_n)\|\leqslant M$ and $\|\bar Z_n^*(\boheta_n)\|\leqslant M$, and hence
\[
\|\Delta_n^*\|
=
\|\bar Z_n^*(\boheta_n)-\bar Z_n(\boheta_n)\|
\le
2M,
\qquad
\text{so}
\qquad
\|\Delta_n^*\|^4\leqslant (2M)^2\|\Delta_n^*\|^2.
\]
Therefore,
\begin{equation}\label{eq:Delta4-Delta2}
\E^*\|\Delta_n^*\|^4
\le
(2M)^2\,\E^*\|\Delta_n^*\|^2.
\end{equation}
Since $\Delta_i^*:=Z(\bx_i^*,\boheta_n)-\bar Z_n(\boheta_n)$ are i.i.d.\ under ${\Pr}^*$
with $\E^*[\Delta_i^*]=0$ and $\|\Delta_i^*\|\leqslant 2M$, we have
\[
\E^*\|\Delta_n^*\|^2
=
\E^*\Big\|\frac1n\sum_{i=1}^n \Delta_i^*\Big\|^2
=
\frac1{n^2}\E^*\Big\|\sum_{i=1}^n \Delta_i^*\Big\|^2.
\]
Expanding the square and using independence and $\E^*\Delta_i^*=0$ yields
\[
\E^*\Big\|\sum_{i=1}^n \Delta_i^*\Big\|^2
=
\sum_{i=1}^n \E^*\|\Delta_i^*\|^2
+
2\sum_{i<j}\E^*\big[(\Delta_i^*)^T\Delta_j^*\big]
=
n\,\E^*\|\Delta_1^*\|^2.
\]
Therefore,
\[
\E^*\|\Delta_n^*\|^2
=
\frac1n\,\E^*\|\Delta_1^*\|^2
\le
\frac1n(2M)^2.
\]
Combining this bound with \eqref{eq:Delta4-Delta2} yields
\[
\E^*\|\Delta_n^*\|^4
\le
(2M)^2\,\E^*\|\Delta_n^*\|^2
\le
(2M)^2\cdot \frac{(2M)^2}{n}
=
\frac{16M^4}{n},
\]
and hence
\[
n\,\E^*\|\Delta_n^*\|^4 \leqslant 16M^4.
\]
Substituting into \eqref{eq:EY2-Gn} gives
\[
\E^*\big[\|Y_n^*\|^2 I_{G_n}\big]
\le
K^2\,n\,\E^*\|\Delta_n^*\|^4
\le
16K^2M^4.
\]
In particular, $\E^*[\|Y_n^*\|^2 I_{G_n}]=O_p(1)$, and thus
\[
\sqrt{\E^*\big[\|Y_n^*\|^2 I_{G_n}\big]}\;
\sqrt{{\Pr}^*\big(\{\|Y_n^*\|>\delta'\}\cap G_n\big)}
\ \rightarrow_p\ 0.
\]
Substituting this and ${\Pr}^*(\{\|Y_n^*\|>\delta'\}\cap G_n)\to_p 0$ into
\eqref{eq:EY-Gn}, and letting $\delta'\downarrow 0$, we obtain
\[
\E^*\big[\|Y_n^*\| I_{G_n}\big]\ \rightarrow_p\ 0.
\]

Thus, from \eqref{eq:E*var}, we obtain
\begin{equation}\label{eq:cont1}
\Big|\E^*\big[\varphi(\bD_n X_n^*+Y_n^*)\,I_{G_n}\big]
-\E^*\big[\varphi(\bD_n X_n^*)\,I_{G_n}\big]\Big|
\ \rightarrow_p\ 0.
\end{equation}

It remains to control the second term in \eqref{eq:split-g-detailed}, namely
\[
\Big|\E^*\!\big[(\varphi(\bD_n X_n^*)-\varphi(\bD_0 X_n^*))\,I_{G_n}\big]\Big|.
\]
On $G_n$, for any $x$, by the $L$-Lipschitz property of $\varphi$,
\[
|\varphi(\bD_n x)-\varphi(\bD_0 x)|
\le
L\,\|(\bD_n-\bD_0)x\|
\le
L\,\|\bD_n-\bD_0\|\,\|x\|
\le
L\,\|\vect(\bD_n)-\vect(\bD_0)\|\,\|x\|.
\]
Therefore,
\begin{align}
\Big|\E^*\!\big[(\varphi(\bD_n X_n^*)-\varphi(\bD_0 X_n^*))\,I_{G_n}\big]\Big|
&\le
\E^*\!\Big[|\varphi(\bD_n X_n^*)-\varphi(\bD_0 X_n^*)|\,I_{G_n}\Big]
\nonumber\\
&\le
L\,\|\vect(\bD_n)-\vect(\bD_0)\|\;\E^*\!\big[\|X_n^*\|\,I_{G_n}\big].
\label{eq:Dndec}
\end{align}

Recall the decomposition
\[
X_n^* = T_n^* + R_n^*,
\qquad
T_n^* = \sqrt n\big(\bar Z_n^*(\boeta_0)-\bar Z_n(\boeta_0)\big),
\qquad
R_n^* = \sqrt n(\bar\Delta_n^*-\bar\Delta_n).
\]
By the triangle inequality,
\begin{equation}\label{eq:EX-decomp}
\E^*\|X_n^*\|
\le
\E^*\|T_n^*\|
+
\E^*\|R_n^*\|.
\end{equation}
Since $a(\bx,\boeta)$ and $b(\bx,\boeta)$ are uniformly bounded, there exists an $M>0$ such that
\[
\|Z(\bx,\boeta_0)\|
\le
\|a(\bx,\boeta_0)\|+\|b(\bx,\boeta_0)\|
\leqslant 2M .
\]
Conditionally on the data, $Z(\bx_1^*,\boeta_0),\ldots,Z(\bx_n^*,\boeta_0)$ are i.i.d. with
${\Pr}^*\big(Z(\bx_1^*,\boeta_0)=Z(\bx_i,\boeta_0)\big)=1/n$.
Hence $\E^*[T_n^*]=0$ and by similar reasoning as in the proof of \eqref{eq:expbootdiff}, we have
\begin{align*}
\E^*\|T_n^*\|^2
&=\Tr\!\big(\Cov^*(T_n^*)\big)
=\Tr\!\big(\Cov^*(Z(\bx_1^*,\boeta_0))\big) \\
&=\E^*\big\|Z(\bx_1^*,\boeta_0)-\E^*(Z(\bx_1^*,\boeta_0))\big\|^2 \\
&=\frac1n\sum_{i=1}^n \big\|Z(\bx_i,\boeta_0)-\bar Z_n(\boeta_0)\big\|^2 \\
&\leqslant \frac1n\sum_{i=1}^n \|Z(\bx_i,\boeta_0)\|^2\leqslant (2M)^2.
\end{align*}
Hence, by the Cauchy--Schwarz inequality,
\[
\E^*\|T_n^*\|
\le
\sqrt{\E^*\|T_n^*\|^2}
\le
2M .
\]
Moreover, by \eqref{eq:expbootdiff},
$\E^*\|R_n^*\|^2 \to_p 0$, and therefore
\[
\E^*\|R_n^*\|
\le
\sqrt{\E^*\|R_n^*\|^2}
\to_p 0 .
\]
Combining these bounds in \eqref{eq:EX-decomp} yields
\[
\E^*\|X_n^*\|
\le
2M + o_p(1),
\]
so that $\E^*\|X_n^*\| = O_p(1)$.
Moreover, since $I_{G_n}\leqslant 1$, we have
$\E^*\!\big[\|X_n^*\|\,I_{G_n}\big]\leqslant \E^*\|X_n^*\|$.
By \eqref{eq:convDn},
$\|\vect(\bD_n)-\vect(\bD_0)\|\to_p 0$.
Therefore, by \eqref{eq:Dndec},
\begin{equation}\label{eq:cont2}
\Big|\E^*\!\big[(\varphi(\bD_n X_n^*)-\varphi(\bD_0 X_n^*))\,I_{G_n}\big]\Big|
\ \rightarrow_p\ 0.
\end{equation}
By \eqref{eq:adbarZ} and using the same reasoning as in the proof of Proposition~\ref{pro:bootstrap-theta} to obtain \eqref{eq:D0convergence}, $\bD_0 X_n^*\rightarrow_dZ_P$ in  probability, and as the map $\varphi$ is bounded Lipschitz, we have
\begin{equation}
\label{eq:cont3}
\Big|\E^*[\varphi(\bD_0 X_n^*)]-\E[\varphi(Z_P)]\Big|\ \rightarrow_p\ 0.
\end{equation}

Recall that $G_n = E_n \cap  E_n^*$. Then
$G_n^c \subseteq E_n^c \cup ( E_n^*)^c$,
hence
\[
{\Pr}^*(G_n^c)
\le
{\Pr}^*(E_n^c)+{\Pr}^*(( E_n^*)^c).
\]
Since $E_n$ depends only on the original sample, $
{\Pr}^*(E_n^c)=I_{E_n^c}$.
By assumption, ${\Pr}(E_n^c)=O(n^{-\alpha})$ with $\alpha>1/2$, hence
$I_{E_n^c}\to_p 0$, and therefore
${\Pr}^*(E_n^c)\to_p 0$.

It remains to show that ${\Pr}^*(( E_n^*)^c)\to_p 0$.
Fix $\gamma>0$ and define the buffered event
\[
E_n^{(c+\gamma)}
:=
\Big\{
\inf_{\boeta\in\bH}\min_{j}\big|(\bar b_n)_j(\boeta)\big|\geqslant c+\gamma
\Big\}.
\]
On the event $E_n^{(c+\gamma)}$, if $( E_n^*)^c$ occurs, then there exist
$\tilde \boeta\in\bH$ and $j$ such that $|(\bar b_n^*)_j(\tilde\boeta)|<c$ and
$|(\bar b_n)_j(\tilde\boeta)|\geqslant c+\gamma$. Hence,
\[
\big|(\bar b_n^*)_j(\tilde\boeta)-(\bar b_n)_j(\tilde\boeta)\big|
\ge
\big|(\bar b_n)_j(\tilde\boeta)\big|-\big|(\bar b_n^*)_j(\tilde\boeta)\big|
>
(c+\gamma)-c
=
\gamma,
\]
and therefore
\[
( E_n^*)^c\cap E_n^{(c+\gamma)}
\subseteq
\Big\{
\sup_{\boeta\in\bH}\max_{j}
\big|(\bar b_n^*)_j(\boeta)-(\bar b_n)_j(\boeta)\big|
>\gamma
\Big\}.
\]
Consequently,
\begin{equation}\label{eq:tildeEnstar_split}
{\Pr}^*\!\big(( E_n^*)^c\cap E_n^{(c+\gamma)}\big)
\le
{\Pr}^*\!\Big(
\sup_{\boeta\in\bH}\max_{j}
\big|
(\bar b_n^*)_j(\boeta)
-
(\bar b_n)_j(\boeta)
\big|
>\gamma
\Big).
\end{equation}
Let $\rho:=\gamma/(4L_b)$. Since $\bH$ is compact,  hence totally bounded, there exists a finite $\rho$-net $\mathcal N_\rho\subset\bH$,
that is, for every $\boeta\in\bH$ there exists $\boeta^\rho\in\mathcal N_\rho$ with
$\|\boeta-\boeta^\rho\|\le\rho$, and $\#\mathcal N_\rho<\infty$.
Since $b(\bx,\boeta)$ is $L_b$-Lipschitz in $\boeta$, both
$\bar b_n(\boeta)$ and $\bar b_n^*(\boeta)$ inherit the same Lipschitz constant $L_b$, that is, for all $\boeta_1,\boeta_2\in\bH$,
\[
\max_j
\big|
(\bar b_n)_j(\boeta_1)
-
(\bar b_n)_j(\boeta_2)
\big|
\le
L_b \|\boeta_1-\boeta_2\|,
\]
and
\[
\max_j
\big|
(\bar b_n^*)_j(\boeta_1)
-
(\bar b_n^*)_j(\boeta_2)
\big|
\le
L_b \|\boeta_1-\boeta_2\|.
\]
Then, by the triangle inequality,
\begin{align*}
\max_j
\big|
(\bar b_n^*)_j(\boeta)
-
(\bar b_n)_j(\boeta)
\big|
&\le
\max_j
\big|
(\bar b_n^*)_j(\boeta)
-
(\bar b_n^*)_j(\boeta^\rho)
\big|
\\
&\quad+
\max_j
\big|
(\bar b_n^*)_j(\boeta^\rho)
-
(\bar b_n)_j(\boeta^\rho)
\big|
\\
&\quad+
\max_j
\big|
(\bar b_n)_j(\boeta^\rho)
-
(\bar b_n)_j(\boeta)
\big|.
\end{align*}
By Lipschitz continuity,
\[
\max_j
\big|
(\bar b_n^*)_j(\boeta)
-
(\bar b_n^*)_j(\boeta^\rho)
\big|
\leqslant L_b\rho,
\qquad
\max_j
\big|
(\bar b_n)_j(\boeta^\rho)
-
(\bar b_n)_j(\boeta)
\big|
\leqslant L_b\rho,
\]
hence
\[
\max_j
\big|
(\bar b_n^*)_j(\boeta)
-
(\bar b_n)_j(\boeta)
\big|
\le
\max_j
\big|
(\bar b_n^*)_j(\boeta^\rho)
-
(\bar b_n)_j(\boeta^\rho)
\big|
+
2L_b\rho.
\]
As $\rho=\gamma/(4L_b)$, we have $2L_b\rho=\gamma/2$, and therefore
\[
\max_j
\big|
(\bar b_n^*)_j(\boeta)
-
(\bar b_n)_j(\boeta)
\big|
\le
\max_j
\big|
(\bar b_n^*)_j(\boeta^\rho)
-
(\bar b_n)_j(\boeta^\rho)
\big|
+
\gamma/2.
\]
Hence,
\[
\Big\{
\sup_{\boeta\in\bH}\max_j
\big|
(\bar b_n^*)_j(\boeta)
-
(\bar b_n)_j(\boeta)
\big|
>\gamma
\Big\}
\subseteq
\Big\{
\sup_{\boeta\in\mathcal N_\rho}\max_j
\big|
(\bar b_n^*)_j(\boeta)
-
(\bar b_n)_j(\boeta)
\big|
>\gamma/2
\Big\}.
\]
Fix $\boeta\in\mathcal N_\rho$ and $j$. Conditionally on the data,
$b_j(\bx_1^*,\boeta),\dots,b_j(\bx_n^*,\boeta)$ are i.i.d.\ bounded in $[-M,M]$
with mean $(\bar b_n)_j(\boeta)$, and
$(\bar b_n^*)_j(\boeta)=n^{-1}\sum_{i=1}^n b_j(\bx_i^*,\boeta)$. Therefore, by conditional Hoeffding's inequality,
\[
{\Pr}^*\!\Big(
\big|(\bar b_n^*)_j(\boeta)-(\bar b_n)_j(\boeta)\big|>\gamma/2
\Big)
\le
2\exp\!\left(-\frac{n\gamma^2}{8M^2}\right).
\]
A union bound over $j$ and $\mathcal N_\rho$ yields
\[
{\Pr}^*\!\Big(
\sup_{\boeta\in\mathcal N_\rho}\max_j
\big|
(\bar b_n^*)_j(\boeta)
-
(\bar b_n)_j(\boeta)
\big|
>\gamma/2
\Big)
\le
2\big(d+d(d+1)/2\big)\,\#\mathcal N_\rho\,
\exp\!\left(-\frac{n\gamma^2}{8M^2}\right)
\rightarrow 0.
\]
Combining with \eqref{eq:tildeEnstar_split} gives
\[
{\Pr}^*\!\big(( E_n^*)^c\cap E_n^{(c+\gamma)}\big)\rightarrow 0,
\]
almost surely.

Define $c_0
:=
\inf_{\boeta\in\bH}\min_{j}|(\mu_b)_j(\boeta)|$.
By assumption, $c_0>c$. Fix $\gamma>0$ such that $c+\gamma<c_0$, and set
$\eps
:=
c_0-(c+\gamma)
>0$.
Recall that
$(\bar b_n)_j(\boeta)
=
(\bar b_n)_j(\btheta_0,\boeta)$, then by \eqref{eq:barBmuB}, we have
\[
\sup_{\boeta\in\bH}\max_{j}
\big|
(\bar b_n)_j(\boeta)-(\mu_b)_j(\boeta)
\big|
\to_p\;0.
\]
Hence,
\[
\Pr\!\left(
\sup_{\boeta\in\bH}\max_{j}
\big|
(\bar b_n)_j(\boeta)-(\mu_b)_j(\boeta)
\big|
< \eps
\right)
\rightarrow 1.
\]
On this event, for every $\boeta\in\bH$ and every $j$,
\[
|(\bar b_n)_j(\boeta)|
\ge
|(\mu_b)_j(\boeta)|-\eps
\ge
c_0-\eps
=
c+\gamma,
\]
and therefore
\[
\inf_{\boeta\in\bH}\min_{j}|(\bar b_n)_j(\boeta)|
\ge
c+\gamma,
\]
that is, $E_n^{(c+\gamma)}$ occurs. Consequently,
$\Pr(E_n^{(c+\gamma)})\rightarrow 1$.
Finally,
\[
{\Pr}^*(( E_n^*)^c)
\le
{\Pr}^*\!\big(( E_n^*)^c\cap E_n^{(c+\gamma)}\big)
+
I_{(E_n^{(c+\gamma)})^c}.
\]
Since $\Pr(E_n^{(c+\gamma)})\rightarrow 1$, we have $I_{(E_n^{(c+\gamma)})^c}\to_p 0$,
and therefore ${\Pr}^*(( E_n^*)^c)\to_p 0$.
Consequently,
\[
{\Pr}^*(G_n^c)
\le
{\Pr}^*(E_n^c)+{\Pr}^*(( E_n^*)^c)
\ \rightarrow_p\ 0.
\]

From \eqref{eq:split-g-detailed}, combining \eqref{eq:cont1}, \eqref{eq:cont2},
\eqref{eq:cont3}, and using Boole's inequality, for every $\eps>0$,
\begin{align*}
\Pr\Big(
\Big|\E^*\!\big[\varphi(\sqrt n(\hP_n^*(\boheta_n)-&\hP_n(\boheta_n)))\big]
-\E\big[\varphi(Z_P)\big]\Big|>\eps
\Big)
\\&\le
\Pr\Big(
\Big|\E^*\!\big[\varphi(\bD_n X_n^*+Y_n^*)\,I_{G_n}\big]
-\E^*\!\big[\varphi(\bD_n X_n^*)\,I_{G_n}\big]\Big|>\eps/4
\Big)
\\
&\quad+
\Pr\Big(
\Big|\E^*\!\big[(\varphi(\bD_n X_n^*)-\varphi(\bD_0 X_n^*))\,I_{G_n}\big]\Big|>\eps/4
\Big)
\\
&\quad+
\Pr\Big(
\Big|\E^*\!\big[\varphi(\bD_0 X_n^*)\big]-\E\big[\varphi(Z_P)\big]\Big|>\eps/4
\Big)
\\
&\quad+
\Pr\Big(2C_\varphi\,{\Pr}^*(G_n^c)>\eps/4\Big)
\ \rightarrow\ 0.
\end{align*}
Therefore, for every bounded $L$-Lipschitz $\varphi\in\bPhi$,
\begin{equation}\label{eq:convcondboot}
\Big|\E^*\!\big[\varphi(\sqrt n(\hP_n^*(\boheta_n)-\hP_n(\boheta_n)))\big]
-\E\big[\varphi(Z_P)\big]\Big|
\ \rightarrow_p\ 0,
\end{equation}
which is equivalent to
\begin{equation*}
\sqrt n\big(\hP_n^*(\boheta_n)-\hP_n(\boheta_n)\big)
\ \rightarrow_d\ N\!\big(0,\,\bD_0\,\bSigma_{ab}\,\bD_0^T\big),
\end{equation*}
in probability.
\end{proof}

\begin{proposition}
\label{pro:bootstrap-theta}
Let us assume
\begin{enumerate}[label=(C\arabic*)]
\item 
The map $\pi$ is  differentiable in $\btheta$ and $\boeta$ at
$(\btheta_0,\boeta_0)$ with full-rank Jacobian
$\bA_0
=
\left.
\frac{\partial\, \pi(\btheta,\boeta)}{\partial \btheta}
\right|_{(\btheta,\boeta)=(\btheta_0,\boeta_0)}$, and let
  $\bK_0:=\bA_0^{-1}$.
\item  The estimator $\bhtheta_n$ satisfies
  \begin{equation*}\bhtheta_n-\btheta_0
=\bK_0\big(\hpi_n(\boheta_n)-\pi(\btheta_0,\boheta_n)\big)
   +\tilde \br_n,
  \end{equation*}
  where $\tilde \br_n=o_p(n^{-1/2})$. 
\item For
  the bootstrap estimator $\bhtheta_n^*$ we have
  \begin{align*}
\bhtheta_n^*-\btheta_0
&=\bK_0\big(\hpi^*_n(\boheta_n)-\pi(\btheta_0,\boheta_n)\big)
   +\tilde\br_n^*.
\end{align*}
where ${\Pr}^*(\sqrt{n}\,\|\tilde\br_n^*\|>\eps)\rightarrow_p 0$.
\item  Consider $Z_\pi\sim N(\mathbf{0},\bSigma)$ such that
  \begin{equation*}
  \sqrt{n}\bigl(
    \hpi_n(\boheta_n)-\pi(\btheta_0,\boheta_n)
  \bigr)
  \rightarrow_d Z_\pi,
  \end{equation*}
  and 
  \begin{equation*}
  \sqrt{n}\bigl(
    \hpi^*_n(\boheta_n)-\hpi_n(\boheta_n)
  \bigr)
  \rightarrow_d Z_\pi
  \quad\text{in probability.}
  \end{equation*}
\end{enumerate}
Then the bootstrap distribution of
$\sqrt{n}(\bhtheta_n^*-\bhtheta_n)$ is consistent for the distribution of
$\sqrt{n}(\bhtheta_n-\btheta_0)$ in the Kolmogorov-Smirnov distance, i.e.,
\begin{equation*}
\sup_{x\in\mathbb{R}^{d+d(d+1)/2}}
\Big|
  \Pr\bigl(
    \sqrt{n}(\bhtheta_n-\btheta_0)\leqslant x
  \bigr)
  -
  {\Pr}^*\bigl(
    \sqrt{n}(\bhtheta_n^*-\bhtheta_n)\leqslant x \,
  \bigr)
\Big|
\rightarrow_p0.
\end{equation*}
\end{proposition}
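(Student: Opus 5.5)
Subtracting the linearization in (C2) from the one in (C3) gives
\[
\sqrt n(\bhtheta_n^*-\bhtheta_n)=\bK_0\sqrt n\bigl(\hpi^*_n(\boheta_n)-\hpi_n(\boheta_n)\bigr)+\sqrt n(\tilde\br_n^*-\tilde\br_n),
\]
because the centering term $\pi(\btheta_0,\boheta_n)$ cancels. Here $\sqrt n\,\tilde\br_n=o_p(1)$ by (C2); it is fixed under ${\Pr}^*$, so ${\Pr}^*(\sqrt n\|\tilde\br_n\|>\eps)=I\{\sqrt n\|\tilde\br_n\|>\eps\}\rightarrow_p0$. Also ${\Pr}^*(\sqrt n\|\tilde\br_n^*\|>\eps)\rightarrow_p0$ by (C3). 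Similarly, (C2) gives $\sqrt n(\bhtheta_n-\btheta_0)=\bK_0\sqrt n(\hpi_n(\boheta_n)-\pi(\btheta_0,\boheta_n))+o_p(1)$.

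For the original-sample side, Slutsky's theorem and (C4) give $\sqrt n(\bhtheta_n-\btheta_0)\rightarrow_d\bK_0Z_\pi\sim N(\bzero,\bK_0\bSigma\bK_0^T)$. For the bootstrap side I would show that $\bK_0\sqrt n(\hpi^*_n-\hpi_n)\rightarrow_d\bK_0Z_\pi$ in probability. Weak convergence in probability can be expressed through bounded Lipschitz $\varphi$: if $\varphi$ is bounded and $L$-Lipschitz, then $\varphi\circ\bK_0$ is bounded and $L\|\bK_0\|$-Lipschitz, so the claim follows directly from (C4). This is the step labelled \eqref{eq:D0convergence}, which is cited earlier. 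The remainder is then added with the same Lipschitz-splitting argument used in Proposition~\ref{pro:converbootmean}. Fix $\delta>0$, and split $\E^*[\varphi(U^*+R^*)]-\E^*[\varphi(U^*)]$ on $\{\|R^*\|\leqslant\delta\}$ and its complement. This bounds the difference by $L\delta+2B\,{\Pr}^*(\|R^*\|>\delta)$, and the second term tends to $0$ in probability. Hence $\sqrt n(\bhtheta_n^*-\bhtheta_n)\rightarrow_d\bK_0Z_\pi$ in probability.

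The last step, which I expect to be the main obstacle, is to move from these two weak convergences to uniform Kolmogorov--Smirnov closeness over $x\in\mathbb R^{d+d(d+1)/2}$. I would use the multivariate Pólya theorem. The limit $\bK_0Z_\pi$ is Gaussian. If it is nondegenerate, its distribution function is continuous, and weak convergence to it upgrades to uniform convergence of the multivariate distribution functions over orthants $\{\le x\}$.

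This gives $\sup_x|\Pr(\sqrt n(\bhtheta_n-\btheta_0)\leqslant x)-\Pr(\bK_0Z_\pi\leqslant x)|\rightarrow0$ directly. The bootstrap side needs more care because its convergence holds only in probability. I would use the subsequence characterization: convergence in probability holds if every subsequence has a further subsequence converging almost surely. Since the bounded Lipschitz class is separable, along such a sub-subsequence the bootstrap laws converge weakly almost surely. Pólya's theorem then applies pathwise, and the supremum tends to $0$ almost surely along it, hence in probability overall. The triangle inequality then finishes the proof.

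The delicate points are the implicit need for $\bSigma$ to be nonsingular, or more generally for the limit distribution function to be continuous. I would state this as following from (C4) together with full rank of $\bK_0$, or else note it as an implicit requirement. The other delicate point is the measurability and separability bookkeeping in the subsequence argument.
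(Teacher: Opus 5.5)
Your proposal is correct and follows the paper's proof essentially step by step: the same subtraction of (C2) from (C3), the same $\psi\circ\bK_0$ bounded-Lipschitz argument for \eqref{eq:D0convergence}, the same $\delta$-splitting to absorb $\sqrt n(\tilde\br_n^*-\tilde\br_n)$, and P\'olya's theorem on the original-sample side, except that you prove the conditional P\'olya step by a sub-subsequence argument where the paper cites Lemma~10.11(i) of Kosorok (there you need a countable convergence-determining subfamily of bounded Lipschitz functions, since the full class is not separable in sup norm). Your caveat is well placed: the paper asserts the limit distribution function is continuous ``by construction'', but this needs every diagonal entry of $\bK_0\bSigma\bK_0^T$ to be positive, which neither (C4) nor the invertibility of $\bK_0$ guarantees, so it is an implicit extra assumption in both arguments.
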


\begin{proof}

Assumption (C2) gives the linear expansion
\begin{equation*}
\bhtheta_n-\btheta_0
=
\bK_0\big(\hpi_n(\boheta_n)-\pi(\btheta_0,\boheta_n)\big)
+\tilde\br_n,
\qquad
\tilde\br_n=o_p(n^{-1/2}).
\end{equation*}
Thus,
\begin{equation*}
\sqrt{n}\,(\bhtheta_n-\btheta_0)
=
\bK_0
\sqrt{n}\big(\hpi_n(\boheta_n)-\pi(\btheta_0,\boheta_n)\big)
+
\sqrt{n}\,\tilde\br_n,
\end{equation*}
and, $\sqrt{n}\,\tilde\br_n\rightarrow_p 0$. By (C4), and the   continuous mapping theorem \citep{van2000asymptotic},
\begin{equation*}
\sqrt{n}\bK_0\big(\hpi_n(\boheta_n)-\pi(\btheta_0,\boheta_n)\big)
\rightarrow_d \tilde Z_\pi:=\bK_0 Z_\pi
\sim N\bigl(\mathbf 0,\bK_0\bSigma\bK_0^T\bigr),
\end{equation*}
and by Slutsky’s theorem \citep{van2000asymptotic},
\begin{equation}
\label{eq:theta-limit}
\sqrt{n}\,(\bhtheta_n-\btheta_0)\rightarrow_d \tilde Z_\pi.
\end{equation}

From (C3),
\begin{equation*}
\bhtheta_n^*-\btheta_0
=
\bK_0\big(\hpi^*_n(\boheta_n)-\pi(\btheta_0,\boheta_n)\big)
+\tilde\br_n^*,
\end{equation*}
and subtracting the (C2) expansion for $\bhtheta_n-\btheta_0$ gives
\begin{equation*}
\bhtheta_n^*-\bhtheta_n
=
\bK_0\big(\hpi^*_n(\boheta_n)-\hpi_n(\boheta_n)\big)
+(\tilde\br_n^*-\tilde\br_n).
\end{equation*}
Thus
\begin{equation}
\label{eq:theta-star-expansion}
\sqrt{n}\,(\bhtheta_n^*-\bhtheta_n)
=
\sqrt{n}\bK_0\big(\hpi^*_n(\boheta_n)-\hpi_n(\boheta_n)\big)
+
\sqrt{n}\,(\tilde\br_n^*-\tilde\br_n).
\end{equation}
 By the triangle
inequality,
\begin{equation*}
\sqrt{n}\,\|\tilde\br_n^*-\tilde\br_n\|
\le
\sqrt{n}\,\|\tilde\br_n\|+\sqrt{n}\,\|\tilde\br_n^*\|.
\end{equation*}
Therefore,
\begin{equation*}
\Bigl\{\sqrt{n}\,\|\tilde\br_n^*-\tilde\br_n\|>\eps\Bigr\}
\subseteq
\Bigl\{\sqrt{n}\,\|\tilde\br_n\|>\eps/2\Bigr\}
\ \cup\
\Bigl\{\sqrt{n}\,\|\tilde\br_n^*\|>\eps/2\Bigr\}.
\end{equation*}
Taking conditional probabilities ${\Pr}^*$ and using that $\tilde\br_n$ does not
depend on the bootstrap resample, we obtain
\begin{equation*}
{\Pr}^*\Bigl(\sqrt{n}\,\|\tilde\br_n^*-\tilde\br_n\|>\eps\Bigr)
\le
I_{\{\sqrt{n}\,\|\tilde\br_n\|>\eps/2\}}
+
{\Pr}^*\Bigl(\sqrt{n}\,\|\tilde\br_n^*\|>\eps/2\Bigr).
\end{equation*}
Since $\sqrt{n}\,\| \tilde \br_n\|\rightarrow_p0$, the indicator term converges in
probability to $0$. Moreover, by assumption,
\begin{equation*}
{\Pr}^*\Bigl(\sqrt{n}\,\|\tilde\br_n^*\|>\eps/2\Bigr)\rightarrow_p0.
\end{equation*}
Thus, for every $\eps>0$
\begin{equation}\label{eq:diffreminder}
{\Pr}^*\bigl(\sqrt{n}\,\|\tilde\br_n^*-\tilde\br_n\|>\eps\bigr)
\;\rightarrow_p\;0.
\end{equation}

By (C4),
\begin{equation*}
\sqrt{n}\big(\hpi^*_n(\boheta_n)-\hpi_n(\boheta_n)\big)
\rightarrow_d Z_\pi
\quad\text{in probability}.
\end{equation*}
This means that for every bounded $L$-Lipschitz function
$\varphi:\mathbb{R}^{d+d(d+1)/2}\rightarrow\mathbb{R}$ we have
\begin{equation}\label{eq:BL-for-pi}
\big|
  \E^*[\varphi(\sqrt{n}\big(\hpi^*_n(\boheta_n)-\hpi_n(\boheta_n)\big))]
  - \E[\varphi(Z_\pi)]
\big|
\;\rightarrow_p\;0.
\end{equation}
Let $\psi:\mathbb{R}^{d+d(d+1)/2}\rightarrow\mathbb{R}$ be any bounded $L$-Lipschitz function, $L<\infty$,
and define $\varphi:\mathbb{R}^{d+d(d+1)/2}\rightarrow\mathbb{R}$ by $\varphi(x) := \psi(\bK_0 x)$, 
for $ x\in\mathbb{R}^{d+d(d+1)/2}$.
Since $\psi:\mathbb{R}^{d+d(d+1)/2}\rightarrow\mathbb{R}$ is bounded, say $|\psi(u)|\leqslant M$, $M<\infty$, for all
$u\in\mathbb{R}^{d+d(d+1)/2}$, we immediately have
\begin{equation*}
|\varphi(x)| = |\psi(\bK_0 x)| \leqslant M,
\end{equation*}
so $\varphi$ is bounded.
As $\psi$ is $L$-Lipschitz, for $x,y\in\mathbb{R}^{d+d(d+1)/2}$,
\begin{equation*}
|\psi(x)-\psi(y)| \leqslant L\|x-y\|.
\end{equation*}
Thus, for any $x,y\in\mathbb{R}^{d+d(d+1)/2}$,
\begin{equation*}
|\varphi(x)-\varphi(y)|
= |\psi(\bK_0 x) - \psi(\bK_0 y)|
\leqslant L\,\|\bK_0(x-y)\|\leqslant L\,\|\vect(\bK_0)\|\,\|x-y\|.
\end{equation*}
Thus $\varphi$ is Lipschitz with Lipschitz constant 
$L\,\|\vect(\bK_0)\| < \infty$.
Hence $\varphi$ is a bounded $L$-Lipschitz function.
So, by using \eqref{eq:BL-for-pi}, we have 
\begin{equation*}
\big|
  \E^*[\psi(\sqrt{n}\bK_0(\hpi^*_n(\boheta_n)-\hpi_n(\boheta_n)))]
  - \E[\psi(\tilde Z_\pi)]
\big|
=
\big|
  \E^*[\varphi(\sqrt{n}\big(\hpi^*_n(\boheta_n)-\hpi_n(\boheta_n)\big))]
  - \E[\varphi(Z_\pi)]
\big|
\;\rightarrow_p\;0.
\end{equation*}
Since this holds for every bounded Lipschitz $\psi$, we conclude that
\begin{equation}\label{eq:D0convergence}
\sqrt{n}\bK_0\big(\hpi^*_n(\boheta_n)-\hpi_n(\boheta_n)\big)
\rightarrow_d \tilde Z_\pi,
\end{equation}
in probability.

Define
\begin{equation*}
U_n^*
:=
\sqrt{n}\bK_0\big(\hpi^*_n(\boheta_n)-\hpi_n(\boheta_n)\big),
\qquad
V_n^* :=
\sqrt{n}\,(\tilde\br_n^*-\tilde\br_n).
\end{equation*}
We now prove that $U_n^*+V_n^* \rightarrow_d \tilde Z_\pi$ in probability.
Let $\psi:\mathbb{R}^{d+d(d+1)/2}\rightarrow\mathbb{R}$ be bounded and $L$-Lipschitz, with
$|\psi(x)|\leqslant M$, $M<\infty$.
Then
\begin{align}\label{eq:boundUV}
\big|
  \E^*[\psi(U_n^*+V_n^*)] - \E[\psi(\tilde Z_\pi)]
\big|
&\le
\big|
  \E^*[\psi(U_n^*+V_n^*)] - \E^*[\psi(U_n^*)]
\big|
\nonumber\\
&\quad+
\big|
  \E^*[\psi(U_n^*)] - \E[\psi(\tilde Z_\pi)]
\big|.
\end{align}
The second term converges to $0$ in probability by the convergence of
$U_n^*$ to $\tilde Z_\pi$ in \eqref{eq:D0convergence}. For the first term, using the Jensen inequalities, we have
\begin{equation*}
\begin{aligned}
\big|\E^*\!\left[\psi(U_n^*+V_n^*)\right]
      - \E^*\!\left[\psi(U_n^*)\right]\big|
&
= \big|\E^*\!\left[\psi(U_n^*+V_n^*) - \psi(U_n^*)\right]\big|
\leqslant \E^*\!\left[\big|\psi(U_n^*+V_n^*) - \psi(U_n^*)\big|\right].
\end{aligned}
\end{equation*}
Now fix $\delta>0$ and split the expectation according to the law of total expectation applied to the partition 
$\{\|V_n^*\|\le\delta\}$ and $\{\|V_n^*\|>\delta\}$,
\begin{align*}
\E^*\!\left[\big|\psi(U_n^*+V_n^*)-\psi(U_n^*)\big|\right]
&=
\E^*\!\left[
  \big|\psi(U_n^*+V_n^*)-\psi(U_n^*)\big|
  \,\big|\, \|V_n^*\|\le\delta
\right]{\Pr}^*(\|V_n^*\|\le\delta)
\\[4pt]
&\quad+
\E^*\!\left[
  \big|\psi(U_n^*+V_n^*)-\psi(U_n^*)\big|
  \,\big|\, \|V_n^*\|>\delta
\right]{\Pr}^*(\|V_n^*\|>\delta).
\end{align*}
When $\{\|V_n^*\|\le\delta\}$, by the Lipschitz property of $\psi$,
\begin{equation*}
|\psi(U_n^*+V_n^*)-\psi(U_n^*)|
\leqslant L\,\|V_n^*\|
\leqslant L\delta.
\end{equation*}
Hence
\begin{equation*}
\E^*\!\left[
  \big|\psi(U_n^*+V_n^*)-\psi(U_n^*)\big|
  \,\big|\, \|V_n^*\|\le\delta
\right]{\Pr}^*(\|V_n^*\|\le\delta)
\leqslant L\delta\,{\Pr}^*(\|V_n^*\|\le\delta)
\leqslant L\delta,
\end{equation*}
since ${\Pr}^*(\|V_n^*\|\le\delta)\leqslant 1$.
When $\{\|V_n^*\|>\delta\}$, by using the boundedness of $\psi$,
\begin{equation*}
|\psi(U_n^*+V_n^*)-\psi(U_n^*)|
\leqslant |\psi(U_n^*+V_n^*)| + |\psi(U_n^*)|
\leqslant 2M.
\end{equation*}
Thus
\begin{equation*}
\E^*\!\left[
  \big|\psi(U_n^*+V_n^*)-\psi(U_n^*)\big|
  \,\big|\, \|V_n^*\|>\delta
\right]{\Pr}^*(\|V_n^*\|>\delta)
\le
2M\,
{\Pr}^*\!\left(\|V_n^*\|>\delta\right).
\end{equation*}
Combining the bounds above, we have
\begin{equation*}
\big|\E^*\!\left[\psi(U_n^*+V_n^*)\right]
      - \E^*\!\left[\psi(U_n^*)\right]\big|
\le
L\delta
+
2M\,{\Pr}^*\!\left(\|V_n^*\|>\delta\right).
\end{equation*}
By \eqref{eq:diffreminder}, for any 
$\eps>0$
\begin{equation*}
  \Pr\!\left( \, {\Pr}^*\!\bigl(\|V_n^*\|>\delta\bigr) > \eps \, \right) \rightarrow 0 .
\end{equation*} 
 The event
\begin{equation*}
\Big\{
  \big|
    \E^*[\psi(U_n^*+V_n^*)] - \E^*[\psi(U_n^*)]
  \big|
  > L\delta + 2M\,\eps
\Big\},
\end{equation*}
can only occur if 
\begin{equation*}
\Big\{
  L\delta
+
2M\,{\Pr}^*\!\left(\|V_n^*\|>\delta\right)
  > L\delta + 2M\,\eps
\Big\},
\end{equation*}
which simplifies to the event $
\{
{\Pr}^*\!\left(\|V_n^*\|>\delta\right)
  > \eps
\}
$.
Thus,
\begin{equation*}
\Pr\Big(
  \big|
    \E^*[\psi(U_n^*+V_n^*)] - \E^*[\psi(U_n^*)]
  \big|
  > L\delta + 2M\,\eps
\Big)
\;\le\;
\Pr\Big(
  {\Pr}^*(\|V_n^*\|>\delta) > \eps
\Big)\rightarrow 0.
\end{equation*}
Since $\delta, \eps>0$ are arbitrary, we obtain
\begin{equation}\label{eq:convergence_expextation_kros}
\big|
  \E^*[\psi(U_n^*+V_n^*)] - \E^*[\psi(U_n^*)]
\big|
\;\rightarrow_p\; 0.
\end{equation}
This implies from \eqref{eq:boundUV} that
\begin{equation}\label{eq:distfinalboot}
\sqrt{n}(\bhtheta_n^*-\bhtheta_n)
= U_n^*+V_n^*
\rightarrow_d \tilde Z_\pi,
\end{equation}
in probability.

Let us define for $x\in\mathbb{R}^{d+d(d+1)/2}$
\begin{equation*}
F_n(x)
:=
\Pr\bigl(\sqrt{n}\,(\bhtheta_n-\btheta_0)\leqslant x\bigr),\qquad
G_n(x)
:=
{\Pr}^*\bigl(\sqrt{n}\,(\bhtheta_n^*-\bhtheta_n)\leqslant x\bigr).
\end{equation*}

Let us define
$\Phi$ as the cdf of $  \tilde Z_\pi$ that by construction is everywhere continuous on
$\mathbb R^{d+d(d+1)/2}$.  Hence the set $\mathcal S$ of continuity points of
$\Phi$ coincides with $\mathbb R^{d+d(d+1)/2}$.  
By using \eqref{eq:convergence_expextation_kros} to apply Lemma~10.11(i) of \citep{kosorok2008introduction} where
$\mathcal Y_n$ denotes the observed sample, $X_n=\sqrt{n}\,(\bhtheta_n^*-\bhtheta_n)$, $X=\tilde Z_\pi$ and with the
closed set $A=\mathcal S=\mathbb R^{d+d(d+1)/2}$, we obtain
\begin{equation*}
\sup_{x\in\mathbb{R}^{d+d(d+1)/2}}
\bigl|G_n(x)-\Phi(x)\bigr|
\ \rightarrow_p\ 0.
\end{equation*}
In addition from Lemma~2.11 of \cite{van2000asymptotic}, \eqref{eq:theta-limit} implies
\begin{equation*}
\sup_{x\in\mathbb{R}^{d+d(d+1)/2}}
\bigl|F_n(x)-\Phi(x)\bigr|
\ \rightarrow\ 0,
\end{equation*}
Thus  
\begin{equation*}
\sup_{x\in\mathbb{R}^{d+d(d+1)/2}} |F_n(x)-G_n(x)| \leqslant \sup_{x\in\mathbb{R}^{d+d(d+1)/2}}
\bigl|F_n(x)-\Phi(x)\bigr|+\sup_{x\in\mathbb{R}^{d+d(d+1)/2}}
\bigl|G_n(x)-\Phi(x)\bigr|.
\end{equation*}
and  
\begin{equation*}
\sup_{x\in\mathbb{R}^{d+d(d+1)/2}} |F_n(x)-G_n(x)| \rightarrow_p 0.
\end{equation*}
\end{proof}

\newcounter{oldtheo}
    \setcounter{oldtheo}{\value{theorem}}

\setcounter{theorem}{0}

\begin{theorem}[Restated]
Assume the following conditions hold.
\begin{enumerate}[label=(D\arabic*)]
\item Let $\{F_{\btheta} : \btheta \in \bTheta\}$ be a $d$-dimensional
parametric family. That is, for each
$\btheta = (\bmu^T,\vecth_s(\bSigma)^T)^T \in \bTheta$,
a random sample
$\bx_{1,\btheta},\ldots,\bx_{n,\btheta}$ from $F_{\btheta}$
admits the representation
$\bx_{i,\btheta}
=
G(\btheta,\bu_i)$, $i=1,\ldots,n$,
where $\bu_1,\ldots,\bu_n$ are i.i.d.\ random vectors with
known common distribution $P_U$ that does not depend on $\btheta$,
and $\E\|\bu_1\|^2<\infty$.
Assume moreover, that there exists a measurable function
$m:\mathcal U\rightarrow[0,\infty)$ with $\E[m(\bu_1)^2]<\infty$
such that for all $\btheta_1,\btheta_2\in\bTheta$ and all
$\bu\in\mathcal U$,
\[
\|G(\btheta_1,\bu)-G(\btheta_2,\bu)\|
\le
m(\bu)\,\|\btheta_1-\btheta_2\|.
\]
The observed sample $\bx_1,\ldots,\bx_n$ consists of i.i.d.\
observations drawn from $F_{\btheta_0}$, where
$\btheta_0\in\interior(\bTheta)$.
\item Let $\bH\subseteq \mathbb R^{r}$ be compact, and let the tuning parameters $\boeta\in\bH$. Moreover, the tuning parameter estimator $\boheta_n$, computed from the observed sample,
satisfies
$
\boheta_n-\boeta_0 = O_p(n^{-1/2})
$.
\item 
There exists a constant $M<\infty$ such that for all $\bx\in\mathbb R^d$ and all
$\boeta\in\bH$,
$
\|a_F(\bx,\boeta)\|\leqslant M$, and
$\|b_F(\bx,\boeta)\|\leqslant M$.

\item
The function $a$ is Lipschitz continuous, that is, there exists an $L_a<\infty$ such that
for any $\bx_1,\bx_2\in\mathbb R^d$ and any $\boeta_1,\boeta_2\in\bH$,
$
\|a_F(\bx_1,\boeta_1)-a_F(\bx_2,\boeta_2)\|
\le
L_a\bigl(\|\bx_1-\bx_2\|+\|\boeta_1-\boeta_2\|\bigr).
$
An analogous Lipschitz condition holds for $b_F(\bx,\boeta)$.

\item 
There exists a constant $c>0$ such that for all $(\btheta,\boeta)\in\bTheta\times\bH$ , $\min_{ j}|(\mu_b)_j(\btheta,\boeta)| > c$.
Define
$E_n
:=
\Big\{
\inf_{(\btheta,\boeta)\in\bTheta\times\bH}
\ \min_{ j}
\big|(\bar b_n)_j(\btheta,\boeta)\big|
\geqslant c
\Big\}$, and assume that there exists $\alpha>1/2$ such that $\Pr(E_n^c)=O(n^{-\alpha})$
 as $n\rightarrow\infty$, and $\tilde \delta\leqslant c$.

\item  
The number $H$ of simulated datasets is selected such that  $\sqrt{\frac{\log H}{H}} = o(n^{-1/2})$  as $n\rightarrow\infty$.

\item 
 For any $\eps>0$,
\[
\inf_{\btheta\in\bTheta:\ \|\btheta-\btheta_0\|\ge\eps}
\|P_F(\btheta,\boeta_0)-P_F(\btheta_0,\boeta_0)\|>0.
\]
Moreover, $P_F$ is  differentiable in $\btheta$ and $\boeta$ at
$(\btheta_0,\boeta_0)$ with full-rank Jacobian
$\bA_0
=
\left.
\frac{\partial\, P_F(\btheta,\boeta)}{\partial \btheta}
\right|_{(\btheta,\boeta)=(\btheta_0,\boeta_0)}$, with $\bK_0:=\bA_0^{-1}$.
\end{enumerate}
Then the bootstrap distribution of
$\sqrt{n}(\bhtheta_n^*-\bhtheta_n)$ is consistent for the distribution of
$\sqrt{n}(\bhtheta_n-\btheta_0)$ in the Kolmogorov-Smirnov distance, i.e.,
\begin{equation*}
\sup_{x\in\mathbb{R}^{d+d(d+1)/2}}
\Big|
  \Pr\bigl(
    \sqrt{n}(\bhtheta_n-\btheta_0)\leqslant x
  \bigr)
  -
  {\Pr}^* \bigl(
    \sqrt{n}(\bhtheta_n^*-\bhtheta_n)\leqslant x \,
  \bigr)
\Big|
\rightarrow_p0,
\end{equation*}
where ${\Pr}^*$ denotes probability computed under the bootstrap distribution, conditional on the observed data.
\end{theorem}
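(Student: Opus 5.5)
The plan is to reduce the theorem to Proposition~\ref{pro:bootstrap-theta}. That is, we verify its conditions (C1)--(C4) with $\pi=P_F$, $\hpi(\btheta,\boeta,n)=\hP_F(\btheta,\boeta,n)$, $\hpi_n(\boheta_n)=\hP_n(\boheta_n)$ computed on the observed sample, and $\hpi_n^*(\boheta_n)=\hP_n^*(\boheta_n)$ computed on the bootstrap sample. Condition (C1) is exactly the differentiability and full-rank part of (D7), so no work is needed there. The remaining work is to show that (D1)--(D7) imply the hypotheses of Propositions~\ref{pro:II_WI} and \ref{pro:II_WI_boot}, which give (C2) and (C3). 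Then Propositions~\ref{pro:adphat} and \ref{pro:converbootmean} give (C4).

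\textbf{Uniform rates via Lemma~\ref{lem:ratio-uniform-eta}.} Assumptions (D1)--(D6) are precisely the hypotheses of Lemma~\ref{lem:ratio-uniform-eta} with $a=a_F$ and $b=b_F$. Its conclusions are
\[
\sup_{\btheta}\|\hP_F(\btheta,\boheta_n,n)-P_F(\btheta,\boeta_0)\|=O_p(n^{-1/2}),
\qquad
\sup_{\btheta}\|\bar P(\btheta,\boheta_n,n)-P_F(\btheta,\boheta_n)\|=o_p(n^{-1/2}).
\]
These are \eqref{eq:asslemma5_eta} and \eqref{eq:simerr_eta}. They also yield (A2) of Proposition~\ref{prop:2}, and (A1) follows from (D2).

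Continuity of $P_F$ on $\bTheta\times\bH$ is needed for (A3) and for Proposition~\ref{pro:II_WI}. It follows because $\mu_{a,F}$ and $\mu_{b,F}$ are Lipschitz: by (D4) and (D1), $\|\mu_a(\btheta_1,\boeta_1)-\mu_a(\btheta_2,\boeta_2)\|\le L_a(\E m(\bu)\|\btheta_1-\btheta_2\|+\|\boeta_1-\boeta_2\|)$. The denominators are bounded below by $c$ by (D5), so the ratio is continuous.

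The identification condition comes from (D7). With $\btheta_0\in\interior(\bTheta)$ from (D1), Proposition~\ref{pro:II_WI} then gives (C2). One point to check is that $\boeta_0\in\interior(\bH)$, as that proposition requires; I would either add this to (D2) or note that it is used only for the Taylor expansion in $\boeta$.

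\textbf{Bootstrap CLT and tightness.} Proposition~\ref{pro:adphat} gives $\sqrt n(\hP_n(\boheta_n)-P_F(\btheta_0,\boheta_n))\rightarrow_d N(0,\bD_0\bSigma_{ab}\bD_0^T)$, using that $\bx_i\sim F_{\btheta_0}$ so that $P(\boeta)=P_F(\btheta_0,\boeta)$. Proposition~\ref{pro:converbootmean} gives the same limit in probability for $\sqrt n(\hP_n^*(\boheta_n)-\hP_n(\boheta_n))$. Together these are (C4) with $\bSigma=\bD_0\bSigma_{ab}\bD_0^T$.

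The tightness condition \eqref{eq:asslemma5_bootpointwise} (equivalently \eqref{eq:bootstrap-uniform-pi-hat}), which is needed for Propositions~\ref{prop:bootstrap-consistency-theta} and \ref{pro:II_WI_boot}, follows from conditional weak convergence to a tight Gaussian. I would reuse the $\phi_M$ argument from the proof of Proposition~\ref{pro:converbootmean} that led to \eqref{eq:boundsqrtnZ}, now applied to $\sqrt n(\hP_n^*-\hP_n)$ instead of $X_n^*$. Proposition~\ref{pro:II_WI_boot} then yields (C3), and Proposition~\ref{pro:bootstrap-theta} delivers the Kolmogorov--Smirnov statement.

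\textbf{Main obstacle.} I expect the hardest part to be the bookkeeping that makes the notions match across the cited results, rather than any new estimate. First, the bootstrap II estimator $\bhtheta_n^*$ must be an exact zero of the bootstrap equation, or at least satisfy it to $o_p(n^{-1/2})$ in ${\Pr}^*$-probability. Second, $\bhtheta_n$ must satisfy the condition in Proposition~\ref{prop:2}. If exact zeros fail, I would invoke Proposition~\ref{pro:proj-fp-contraction} together with Lemma~\ref{prop:proj-implies-zero-btheta}, using the fact that $\bhtheta_n\to_p\btheta_0\in\interior(\bTheta)$, to show that the projected fixed point is a genuine zero with probability tending to one. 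Third, I would confirm that the event $E_n$ in (D5), together with its bootstrap analogue, is controlled as in Proposition~\ref{pro:converbootmean}.
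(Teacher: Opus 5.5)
Your proposal is correct and follows the paper's proof essentially step for step. You reduce to Proposition~\ref{pro:bootstrap-theta}, get the uniform rates from Lemma~\ref{lem:ratio-uniform-eta} and continuity of $P_F$ from the Lipschitz/Jensen bound, obtain (C2) and (C3) from Propositions~\ref{pro:II_WI} and~\ref{pro:II_WI_boot} (with bootstrap tightness via the $\phi_M$ argument behind \eqref{eq:boundsqrtnZ}), and obtain (C4) from Propositions~\ref{pro:adphat} and~\ref{pro:converbootmean}. The paper also silently treats $\bhtheta_n$ and $\bhtheta_n^*$ as exact zeros and omits $\boeta_0\in\interior(\bH)$ from (D1)--(D7), so your caveats are not gaps relative to it; note, however, that your fallback via Proposition~\ref{pro:proj-fp-contraction} would additionally require (B1).
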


\setcounter{theorem}{\value{oldtheo}}

\begin{proof}
Note that $\bTheta$ is compact as it is shown in the proof of Proposition~\ref{prop:2}.
Note that, we have the identity
$\hpi(\btheta,\boheta_n,n) = \hP_F(\btheta,\boheta_n,n)$, for $\btheta\in\bTheta$,
and $\hP_F$  evaluated using $\boheta_n$ on the observed sample concides with $\hpi_n(\boheta_n)$.
Moreover,  $\pi(\btheta,\boeta)=P_F(\btheta,\boeta)$, for $(\btheta,\boeta)\in\bTheta\times\bH$.

The proof consists in verifying the assumptions of
Proposition~\ref{pro:bootstrap-theta}.  
Assumption (C1) is satisfied.
To verify (C4), we apply Proposition~\ref{pro:adphat} with $a=a_F$ and
$b=b_F$, and use the identities $\hP_n(\boheta_n)=\hpi_n(\boheta_n)$ and
$P(\boheta_n)=\pi(\btheta_0,\boheta_n)$. This gives
\begin{equation}\label{eq:asynormalest}
  \sqrt{n}\bigl(
    \hpi_n(\boheta_n) - \pi(\btheta_0,\boheta_n)
  \bigr)
  \rightarrow_d Z_\pi,
\end{equation}
where $Z_\pi \sim N(\bzero,\bD_0\,\bSigma_{ab}\,\bD_0^T)$.

Moreover, Proposition~\ref{pro:converbootmean} with $a=a_F$ and $b=b_F$,
together with the identity $\hP_n^*(\boheta_n)=\hpi_n^*(\boheta_n)$, yields
\begin{equation}\label{eq:asynormalestboot}
  \sqrt{n}\bigl(
    \hpi_n^*(\boheta_n) - \hpi_n(\boheta_n)
  \bigr)
  \rightarrow_d Z_\pi,
\end{equation}
in probability.
Using Lemma~\ref{lem:ratio-uniform-eta} with $a=a_F$ and $b=b_F$, and noting that
$\hP(\btheta,\boheta_n,n)=\hpi(\btheta,\boheta_n,n) $, 
$P(\btheta,\boeta_0)=\pi(\btheta,\boeta_0)$, and $\bar P(\btheta,\boheta_n,n)=\bar\pi(\btheta,\boheta_n,n)$, we obtain
\begin{equation}\label{eq:pihatuniform}
\sup_{\btheta\in\bTheta}
\big\|
\hpi(\btheta,\boheta_n,n) -\pi(\btheta,\boeta_0)
\big\|
= O_p(n^{-1/2}),
\end{equation}
and
\begin{equation}\label{eq:pihatuniform2}
\sup_{\btheta\in\bTheta}
\big\|
\bar\pi(\btheta,\boheta_n,n) -\pi(\btheta,\boheta_n)
\big\|
= o_p(n^{-1/2}),
\end{equation}

Fix $(\btheta_1,\boeta_1),(\btheta_2,\boeta_2)\in\bTheta\times\bH$. Using the Lipschitz property in (D4) and Jensen's inequality,
\begin{align*}
\big\|\mu_{a,F}(\btheta_1,\boeta_1)-\mu_{a,F}(\btheta_2,\boeta_2)\big\|
&=
\Big\|\E\!\big[a_F(\bx_{\btheta_1},\boeta_1)-a_F(\bx_{\btheta_2},\boeta_2)\big]\Big\|\\
&\le
\E\big\|a_F(\bx_{\btheta_1},\boeta_1)-a_F(\bx_{\btheta_2},\boeta_2)\big\|\\
&\le
L_a\Big(\E\|\bx_{\btheta_1}-\bx_{\btheta_2}\|+\|\boeta_1-\boeta_2\|\Big).
\end{align*}
Under (D1), $\bx_{\btheta}=G(\btheta,\bu)$, and, for $\bu\in\mathcal U$,
\[
\|\bx_{\btheta_1}-\bx_{\btheta_2}\|
=\|G(\btheta_1,\bu)-G(\btheta_2,\bu)\|
\le
m(\bu)\,\|\btheta_1-\btheta_2\|.
\]
Taking expectations and using Cauchy--Schwarz, for $U\sim P_U$, 
\[
\E\|\bx_{\btheta_1}-\bx_{\btheta_2}\|
\le
\E[m(U)]\,\|\btheta_1-\btheta_2\|
\le
\sqrt{\E[m(U)^2]}\,\|\btheta_1-\btheta_2\|.
\]
Since $\E[m(U)^2]<\infty$,  it follows that
$\E\|\bx_{\btheta_1}-\bx_{\btheta_2}\|\rightarrow 0$ whenever $\btheta_1\rightarrow\btheta_2$.
Hence $\mu_{a,F}(\btheta,\boeta)$ is continuous on $\bTheta\times\bH$. The same argument  yields continuity of $\mu_{b,F}(\btheta,\boeta)$ on $\bTheta\times\bH$.
Finally, by (D5) we have $\min_j |(\mu_{b,F})_j(\btheta,\boeta)|> c>0$ for all $(\btheta,\boeta)\in\bTheta\times\bH$.
Componentwise division by a function bounded away from zero is continuous; therefore
$P_F(\btheta,\boeta)=\mu_{a,F}(\btheta,\boeta)\oslash \mu_{b,F}(\btheta,\boeta)$ is continuous on $\bTheta\times\bH$.
Since $\pi(\btheta,\boeta)=P_F(\btheta,\boeta)$, this proves that $\pi(\btheta,\boeta)$ is continuous on $\bTheta\times\bH$.
Hence, the conditions of Proposition~\ref{pro:II_WI} are satisfied,
and we conclude that
\begin{equation*}
\bhtheta_n - \btheta_0
= \bK_0\big(\hpi_n(\boheta_n) - \pi(\btheta_0,\boheta_n)\big)
  + \tilde\br_n,
\qquad
\tilde\br_n = o_p(n^{-1/2}).
\end{equation*}
This establishes assumption (C2).

Note that \eqref{eq:asynormalestboot}, combined with the same argument used in the
proof of Proposition~\ref{pro:converbootmean} to obtain
\eqref{eq:boundsqrtnZ}, implies that for each $\eps>0$ there exists
an $L>0$ such that
\begin{equation}
\Pr\Big( {\Pr}^*\big( \sqrt{n}\,\|\hpi_n^*(\boheta_n) - \hpi_n(\boheta_n)\| > L \big)
        > \eps \Big)
\rightarrow 0.
\end{equation}
Together with \eqref{eq:pihatuniform}, \eqref{eq:pihatuniform2}, the
conditions of Proposition~\ref{pro:II_WI_boot} are
satisfied. 
Hence,
\begin{equation*}
\bhtheta_n^* - \btheta_0
= \bK_0\big(\hpi_n^*(\boheta_n) - \pi(\btheta_0,\boheta_n)\big)
  + \tilde\br_n^*,
\end{equation*}
where ${\Pr}^*(\sqrt{n}\,\|\tilde\br_n^*\|>\eps)\rightarrow_p 0$.
This guarantees assumption (C3).
\end{proof}

For cellMR regression we now add the condition\\
\noindent \textit{(D8) $\lambda_n=o(n^{-1/2})$.}\\
This condition is quite natural because we will apply
Corollary~\ref{prop:bootstrap-theta-linear-reg}
to the II estimator, that by its definition
satisfies the constraint
$c \leqslant \lambda_{\min}(\bhSigma)\leqslant 
\lambda_{\max}(\bhSigma)\leqslant C$ from which
the same property follows for its submatrix
$\bhSigma_x$\,, that is,
$c \leqslant \lambda_{\min}(\bhSigma_x)\leqslant 
\lambda_{\max}(\bhSigma_x)\leqslant C$.
So if the imposed $c$ and $C$ are chosen such
that $C/c$ is small enough, we could even put 
$\lambda_n = 0$.

\begin{proof}[\textbf{Proof of Corollary~\ref{prop:bootstrap-theta-linear-reg}}]
For $(\bmu,\vecth_s(\bSigma))\in\bTheta$, define the  selection matrices $\bS_x\in\mathbb R^{p\times d}$ and
$\bS_y\in\mathbb R^{q\times d}$ such that
$\bmu_x=\bS_x\bmu$, $\bmu_y=\bS_y\bmu$ and
$\bSigma_x=\bS_x\bSigma \bS_x^T$, $\bSigma_{xy}=\bS_x\bSigma \bS_y^T$.
Define for $\lambda\geqslant0$,
\begin{equation*}
\bB(\bmu,\bSigma,\lambda):=\bigl(\bS_x\bSigma \bS_x^T+\lambda\bI_p\bigr)^{-1}(\bS_x\bSigma \bS_y^T),
\qquad
\bb(\bmu,\bSigma,\lambda):=\bS_y\bmu-\bB(\bmu,\bSigma,\lambda)^T(\bS_x\bmu),
\end{equation*}
and set
\begin{equation*}
\psi(\bmu,\bSigma,\lambda)
:=
\bigl((\bb(\bmu,\bSigma,\lambda))^T,\ \vect(\bB(\bmu,\bSigma,\lambda))^T\bigr)^T.
\end{equation*}
 Let $\varphi(\bmu,\bSigma,\lambda):=\ba^T\psi(\bmu,\bSigma,\lambda)$, then by construction, $
\htheta_n=\varphi(\bhtheta_n,\lambda_n)$, $
\htheta_n^*=\varphi(\bhtheta_n^*,\lambda_n)$,
 and $\theta_0=\varphi(\btheta_0,0)$.

Since $(\bmu,\vecth_s(\bSigma))\in\bTheta$, we have
$\lambda_{\min}(\bSigma)\geqslant c>0$. As $\bS_x$ selects the
$x$-coordinates, $\bSigma_x=\bS_x\bSigma\bS_x^T$ is a principal submatrix
of $\bSigma$. Hence, by the Cauchy interlacing theorem,
$\lambda_{\min}(\bSigma_x)\geqslant \lambda_{\min}(\bSigma)\geqslant c$.
Therefore, for every $\lambda\geqslant 0$,
$
\lambda_{\min}(\bSigma_x+\lambda\bI_p)
\geqslant c+\lambda
\geqslant c$,
so that $\bSigma_x+\lambda\bI_p$ is symmetric positive definite and
invertible. The map
$
(A,C,\lambda)\mapsto (A+\lambda\bI_p)^{-1}C
$
is continuously differentiable on the open set where $A+\lambda\bI_p$ is
invertible, because it is the composition of matrix inversion and matrix
multiplication. Moreover,
$
(\bmu,\bSigma)\mapsto
\bigl(\bS_x\bSigma\bS_x^T,\ \bS_x\bSigma\bS_y^T\bigr)
$
is linear in $\bSigma$ and hence continuously differentiable. It follows by
the chain rule that
$
(\bmu,\bSigma,\lambda)\mapsto \bB(\bmu,\bSigma,\lambda)
$
is continuously differentiable in a neighbourhood of $(\bmu_0,\bSigma_0,0)$.
An analogous argument applies to $\bb(\bmu,\bSigma,\lambda)$, and therefore
$\psi(\bmu,\bSigma,\lambda)$ is continuously differentiable at
$(\bmu_0,\bSigma_0,0)$. Hence
$\varphi(\btheta,\lambda)=\ba^T\psi(\btheta,\lambda)$ is differentiable at
$(\btheta_0,0)$.

Let
\[
\bJ_0
:=
\left.
\frac{\partial\,\varphi(\btheta,\lambda)}{\partial\btheta}
\right|_{(\btheta,\lambda)=(\btheta_0,0)},
\qquad
\dot\varphi_{\lambda,0}
:=
\left.
\frac{\partial\,\varphi(\btheta,\lambda)}{\partial\lambda}
\right|_{(\btheta,\lambda)=(\btheta_0,0)}.
\]
By a first-order Taylor expansion around $(\btheta_0,0)$,
\begin{align*}
\htheta_n-\theta_0
&=
\varphi(\bhtheta_n,\lambda_n)-\varphi(\btheta_0,0)  \\
&=
\bJ_0^T(\bhtheta_n-\btheta_0)
+
\dot\varphi_{\lambda,0}\lambda_n
+
\br_n,
\\[0.4em]
\htheta_n^*-\theta_0
&=
\varphi(\bhtheta_n^*,\lambda_n)-\varphi(\btheta_0,0)  \\
&=
\bJ_0^T(\bhtheta_n^*-\btheta_0)
+
\dot\varphi_{\lambda,0}\lambda_n
+
\br_n^*,
\end{align*}
where $\br_n
=
o_p\bigl(\|\bhtheta_n-\btheta_0\|+\lambda_n\bigr)$.
Since $\bhtheta_n\rightarrow_p\btheta_0$ and
$\lambda_n=o(n^{-1/2})$, in particular $\lambda_n\rightarrow 0$, this
remainder bound follows by the same arguments used to obtain
\eqref{eq:reminaderop_joint_sum} in the proof of
Proposition~\ref{pro:II_WI}. Moreover, since
$\|\bhtheta_n-\btheta_0\|=O_p(n^{-1/2})$ from \eqref{eq:ratehtheta} and
$\lambda_n=o(n^{-1/2})$, we have $\br_n=o_p(n^{-1/2})$.
By using the fact that, for any
$\eps>0$,
\[
{\Pr}^*(\|\bhtheta_n^*-\btheta_0\|>\eps)\rightarrow_p 0,
\]
and the same arguments used to obtain \eqref{eq:bootreminderzero2} in the
proof of Proposition~\ref{pro:II_WI_boot}, for every $\eps,\delta>0$,
\[
\Pr\Big(
  {\Pr}^*\bigl(
    \sqrt{n}\,\|\br_n^*\|>\eps
  \bigr)>\delta
\Big)
\rightarrow 0,
\]
as $\lambda_n=o(n^{-1/2})$. 

Subtracting the two expansions yields
\begin{align*}
\htheta_n^*-\htheta_n
&=
\bJ_0^T(\bhtheta_n^*-\bhtheta_n)
+
\dot\varphi_{\lambda,0}\lambda_n
-
\dot\varphi_{\lambda,0}\lambda_n
+
(\br_n^*-\br_n) \\
&=
\bJ_0^T(\bhtheta_n^*-\bhtheta_n)+(\br_n^*-\br_n),
\end{align*}
where the ridge terms cancel because the same tuning parameter $\lambda_n$ is
used in the original and bootstrap estimators. Hence
\begin{equation*}
\sqrt{n}\bigl(\htheta_n^*-\htheta_n\bigr)
=
\bJ_0^T\sqrt{n}(\bhtheta_n^*-\bhtheta_n)
+
\sqrt{n}(\br_n^*-\br_n).
\end{equation*}
Using the same arguments leading to \eqref{eq:diffreminder} in the proof of
Proposition~\ref{pro:bootstrap-theta}, for every $\eps,\eta>0$,
\begin{equation}\label{eq:diffreminder2}
\Pr\Big(
{\Pr}^*\bigl(\sqrt{n}\,\|\br_n^*-\br_n\|>\eps\bigr)>\eta
\Big)\rightarrow 0.
\end{equation}

From the proof of Proposition~\ref{pro:bootstrap-theta},
\begin{equation*}
\sqrt{n}\,(\bhtheta_n-\btheta_0)\rightarrow_d \tilde Z_\pi,
\end{equation*}
where $\tilde Z_\pi
\sim N\bigl(0,\bK_0\bD_0\,\bSigma_{ab}\,\bD_0^T\bK_0^T\bigr)$.
By the continuous mapping theorem \citep{van2000asymptotic},
\begin{equation*}
\bJ_0^T\sqrt{n}(\bhtheta_n-\btheta_0)
\rightarrow_d
\bJ_0^T\tilde Z_\pi.
\end{equation*}
Moreover, since $\lambda_n=o(n^{-1/2})$,
\[
\sqrt n\,\lambda_n\,\dot\varphi_{\lambda,0}=o(1).
\]
Therefore, by Slutsky's lemma \citep{van2000asymptotic},
\begin{align*}
\sqrt{n}(\htheta_n-\theta_0)
&=
\bJ_0^T\sqrt{n}(\bhtheta_n-\btheta_0)
+
\sqrt n\,\lambda_n\,\dot\varphi_{\lambda,0}
+
\sqrt n\,\br_n \\
&\rightarrow_d
\bJ_0^T\tilde Z_\pi .
\end{align*}
Moreover, again from Proposition~\ref{pro:bootstrap-theta},
\begin{equation*}
\sqrt{n}(\bhtheta_n^*-\bhtheta_n)
\rightarrow_d \tilde Z_\pi,
\end{equation*}
in probability.
Applying the same arguments used to obtain \eqref{eq:D0convergence}, we have
\begin{equation*}
\bJ_0^T\sqrt{n}(\bhtheta_n^*-\bhtheta_n)
\;\rightarrow_d\;
\bJ_0^T\tilde Z_\pi,
\end{equation*}
in probability. Together with \eqref{eq:diffreminder2}, this implies that
\begin{equation*}
\sqrt{n}(\htheta_n^*-\htheta_n)
\;\rightarrow_d\;
\bJ_0^T\tilde Z_\pi,
\end{equation*}
in probability, by using the same arguments to obtain
\eqref{eq:distfinalboot}.

Using the same argument based on Lemma~10.11(i) of
\citep{kosorok2008introduction} as in the final step of the proof of
Proposition~\ref{pro:bootstrap-theta}, we conclude that
\begin{equation*}
\sup_{x\in\mathbb R}
\Big|
  \Pr\bigl(
    \sqrt{n}(\htheta_n-\theta_0)\leqslant x
  \bigr)
  -
  {\Pr}^*\bigl(
    \sqrt{n}(\htheta_n^*-\htheta_n)\leqslant x
  \bigr)
\Big|
\rightarrow_p 0.
\end{equation*}
\end{proof}

\begin{proof}[\textbf{Proof of Corollary~\ref{cor:efron-percentile-exact}}]

Let $F_n^*$ denote the conditional distribution function of
$\sqrt n(\htheta_n^{\,*}-\htheta_n)$, i.e., $
F_n^*(x):={\Pr}^*\!\big(\sqrt n(\htheta_n^{\,*}-\htheta_n)\leqslant x\big)$.
From the last part of the proof of Corollary~\ref{prop:bootstrap-theta-linear-reg}, we know that 
$\sup_{x\in\mathbb R}|F_n^*(x)-\Phi_{\bJ}(x)|\rightarrow_p 0$, where $\Phi_{\bJ}$ is the cumulative distribution of normal distribution $N(0,\sigma_{\theta}^2)$, with $\sigma_{\theta}^2=\bJ_0^T\bK_0\bD_0\,\bSigma_{ab}\,\bD_0^T\bK_0^T\bJ_0$ . We also know that $\sqrt n(\htheta_n-\theta_0)\rightarrow_d Z_{\theta}$, where $Z_{\theta}\sim N(0,\sigma_{\theta}^2)$.
 Define $\hq_n^*(\gamma)$ denote the $\gamma$-quantile of the conditional distribution of $\sqrt n(\htheta_n^{\,*}-\htheta_n)$, $
\hq_n^*(\gamma)
:=
\inf\{x\in\mathbb R:\;F_n^*(x)\geqslant \gamma\}$,
for $\gamma\in(0,1)$, and  let $q_\gamma:=\Phi_{\bJ}^{-1}(\gamma)=
\inf\{x\in\mathbb R:\,\Phi_{\bJ}(x)\geqslant \gamma\}$.

Fix $\gamma\in(0,1)$ and let
$0<\eps<\min\{\gamma,1-\gamma\}$.
Suppose that
\[
\sup_{x\in\mathbb R}
\bigl|F_n^*(x)-\Phi_{\bJ}(x)\bigr|
< \eps,
\]
that is,
\begin{equation}\label{eq:uniform-bound}
\Phi_{\bJ}(x)-\eps
<
F_n^*(x)
<
\Phi_{\bJ}(x)+\eps
\qquad
\text{for all }x\in\mathbb R.
\end{equation}
Let $x<\Phi_{\bJ}^{-1}(\gamma-\eps)$.
By definition of the generalized inverse,
$\Phi_{\bJ}(x)<\gamma-\eps$.
Using \eqref{eq:uniform-bound},
\[
F_n^*(x)
<
\Phi_{\bJ}(x)+\eps
<
(\gamma-\eps)+\eps
=
\gamma.
\]
Hence no such $x$ can satisfy $F_n^*(x)\ge\gamma$, and therefore
\[
\hq_n^*(\gamma)
\ge
\Phi_{\bJ}^{-1}(\gamma-\eps).
\]
Let $x_1:=\Phi_{\bJ}^{-1}(\gamma+\eps)$.
Then
\[
\Phi_{\bJ}(x_1)\ge\gamma+\eps.
\]
Again using \eqref{eq:uniform-bound},
\[
F_n^*(x_1)
>
\Phi_{\bJ}(x_1)-\eps
\ge
(\gamma+\eps)-\eps
=
\gamma.
\]
Thus $x_1\in\{x:\,F_n^*(x)\ge\gamma\}$, and by definition of the infimum,
\[
\hq_n^*(\gamma)
\le
x_1
=
\Phi_{\bJ}^{-1}(\gamma+\eps).
\]
Combining the two inequalities yields
\begin{equation}\label{eq:ineqq}
\Phi_{\bJ}^{-1}(\gamma-\eps)
\;\le\;
\hq_n^*(\gamma)
\;\le\;
\Phi_{\bJ}^{-1}(\gamma+\eps).
\end{equation}

Since $\Phi_{\bJ}^{-1}$ is continuous at $\gamma\in(0,1)$, for every
$\eta>0$ there exists $\eps_0>0$ such that, for every
$u\in(0,1)$ with $|u-\gamma|<\eps_0$,
$\bigl|\Phi_{\bJ}^{-1}(u)-q_\gamma\bigr|<\eta$.
Let
$0<\eps<\min\{\eps_0,\gamma,1-\gamma\}$, then $\gamma\pm\eps\in(0,1)$ and
$|\gamma\pm\eps-\gamma|=\eps<\eps_0$, and hence
\[
\bigl|\Phi_{\bJ}^{-1}(\gamma\pm\eps)-q_\gamma\bigr|
<\eta.
\]
This implies
\begin{equation}
    \label{eq:ineqpsi2}
q_\gamma-\eta<\Phi_{\bJ}^{-1}(\gamma-\eps)\le\Phi_{\bJ}^{-1}(\gamma+\eps)
<q_\gamma+\eta.
\end{equation}
Consider the event
\[
\left\{
\sup_{x\in\mathbb R}
\bigl|F_n^*(x)-\Phi_{\bJ}(x)\bigr|
< \eps
\right\},
\]
by applying \eqref{eq:ineqq}, 
\[
\Phi_{\bJ}^{-1}(\gamma-\eps)
\;\le\;
\hq_n^*(\gamma)
\;\le\;
\Phi_{\bJ}^{-1}(\gamma+\eps),
\]
and from \eqref{eq:ineqpsi2}
\[
q_\gamma-\eta
<
\Phi_{\bJ}^{-1}(\gamma-\eps)
\le
\hq_n^*(\gamma)
\le
\Phi_{\bJ}^{-1}(\gamma+\eps)
<
q_\gamma+\eta,
\]
and hence
\begin{equation*}
\bigl|\hq_n^*(\gamma)-q_\gamma\bigr|<\eta.
\end{equation*}
Therefore,
\[
\Pr\!\left(
|\hq_n^*(\gamma)-q_\gamma|>\eta
\right)
\le
\Pr\!\left(
\sup_{x\in\mathbb R}
\bigl|F_n^*(x)-\Phi_{\bJ}(x)\bigr|
\ge\eps
\right).
\]
Since
\[
\sup_{x\in\mathbb R}
\bigl|F_n^*(x)-\Phi_{\bJ}(x)\bigr|
\;\rightarrow_p\;0,
\]
 we conclude that
\[
\hq_n^*(\gamma)\;\rightarrow_p\; q_\gamma\qquad\text{for every }\gamma\in(0,1).
\]

Noting that for $\gamma\in(0,1)$, $\hc_n^*(\gamma)$ is defined as
\[
\hc_n^*(\gamma)
:=
\inf\Big\{x\in\mathbb R:\;{\Pr}^*(\htheta_n^{\,*}\leqslant x)\geqslant \gamma\Big\},
\]
and because $\htheta_n^{\,*}=\htheta_n+n^{-1/2}\sqrt n(\htheta_n^{\,*}-\htheta_n)$,
we have
\begin{equation*}
\hc_n^*(\gamma)=\htheta_n+\frac{1}{\sqrt n}\,\hq_n^*(\gamma).
\end{equation*}
Therefore,
\[
\{\theta_0\in C_n\}
=
\Big\{
\hq_n^*(\alpha/2)\leqslant \sqrt n(\theta_0-\htheta_n)\leqslant \hq_n^*(1-\alpha/2)
\Big\}
=
\Big\{
-\hq_n^*(1-\alpha/2)\leqslant \sqrt n(\htheta_n-\theta_0)\leqslant -\hq_n^*(\alpha/2)
\Big\}.
\]
Hence
\[
\Pr\{\theta_0\in C_n\}
=
\Pr\!\left(\sqrt n(\htheta_n-\theta_0)\leqslant -\hq_n^*(\alpha/2)\right)
-
\Pr\!\left(\sqrt n(\htheta_n-\theta_0)< -\hq_n^*(1-\alpha/2)\right).
\]
By Slutsky's theorem,
\[
\sqrt n(\htheta_n-\theta_0)+\hq_n^*(\alpha/2)\to_d Z_{\theta}+q_{\alpha/2}.
\]
Since $Z_{\theta}+q_{\alpha/2}$ is Gaussian, its distribution function is continuous
everywhere, in particular at $0$. Hence, by the definition of convergence in
distribution (convergence of cdfs at continuity points),
\begin{align*}
\Pr\!\left(\sqrt n(\htheta_n-\theta_0)\leqslant -\hq_n^*(\alpha/2)\right)
=&
\Pr\!\left(\sqrt n(\htheta_n-\theta_0)+\hq_n^*(\alpha/2)\leqslant 0\right)
\rightarrow\\
&\Pr(Z_{\theta}+q_{\alpha/2}\leqslant 0)
=
\Pr(Z_{\theta}\leqslant -q_{\alpha/2}).
\end{align*}
Similarly, using $\hq_n^*(1-\alpha/2)\rightarrow_p q_{1-\alpha/2}$,
\[
\sqrt n(\htheta_n-\theta_0)+\hq_n^*(1-\alpha/2)\to_d Z_{\theta}+q_{1-\alpha/2},
\]
and 
\begin{equation*}
\Pr\!\left(\sqrt n(\htheta_n-\theta_0)\leqslant -\hq_n^*(1-\alpha/2)\right)
\rightarrow
\Pr(Z_{\theta}\leqslant -q_{1-\alpha/2}).
\end{equation*}
Because $Z_{\theta}$ has a continuous distribution, we also have
$
\Pr(Z_{\theta}< -q_{1-\alpha/2})=\Pr(Z_{\theta}\leqslant -q_{1-\alpha/2})$,
and therefore by combining the two limits gives
\begin{align*}
\Pr\{\theta_0\in C_n\}
&\rightarrow
\Pr(Z_{\theta}\leqslant -q_{\alpha/2})-\Pr(Z_{\theta}\leqslant -q_{1-\alpha/2})
\\
&=
\Pr\big(-q_{1-\alpha/2}\leqslant Z_{\theta}\leqslant -q_{\alpha/2}\big).
\end{align*}
As $Z_{\theta}$ is symmetric about $0$, then $-q_{1-u}=q_{u}$ for $u\in(0,1)$,
so the last probability equals
\[
\Pr\Big\{q_{\alpha/2}\leqslant Z_{\theta}\leqslant q_{1-\alpha/2}\Big\}=1-\alpha,
\]
which proves \eqref{eq:reg-coverage}.
\end{proof}

\section{\large Proofs: Influence Functions of cellBoot}
\label{sec:derivationIF_cellboot}

We consider the contamination model \eqref{eq:cont_reg_mixed2} and the same setting as introduced in Section~\ref{app:cellMR} of the Supplementary Material, with $H_0:=F_{\btheta_0}$.
Given  a $d$-dimensional vector $\bmu$ and a $d\times d$ matrix $\bSigma$,
define the
selection matrices $\bS_x\in\mathbb R^{p\times d}$ and
$\bS_y\in\mathbb R^{q\times d}$ that extract the first $p$ and the last $q$ components of $\bmu$ such that
$\bmu_x=\bS_x\bmu$, $\bmu_y=\bS_y\bmu$ and
$\bSigma_x=\bS_x\bSigma \bS_x^T$, $\bSigma_{xy}=\bS_x\bSigma \bS_y^T$, represent the usual partitions of $\bmu$ and $\bSigma$.
Recall that $\htheta_n=\bigl(\bhb^T,\ \vect(\bhB)^T\bigr)\ba$, where
\[
\bhB
=
\bigl(\bS_x\bhSigma_{II}\bS_x^T\bigr)^{-1}
(\bS_x\bhSigma_{II}\bS_y^T),
\qquad
\bhb
=
\bS_y\bhmu_{II}-\bhB^T(\bS_x\bhmu_{II}),
\]
and $\bhtheta_n=\bigl((\bhmu_{II})^T,\vecth_s(\bhSigma_{II})^T\bigr)^T$ denotes the II
estimator obtained by applying to the observed sample the indirect inference procedure with
FastCellCov as auxiliary estimators.
The functional version of $\htheta_n$ is then defined, for a generic distribution $H$ as
\[
T(H)
:=\bigl(\bb(H)^T,\ \vect(\bB(H))^T\bigr)\ba,
\]
where
\[
\bB(H)
=
\bigl(\bS_x\bSigma_{II}(H)\bS_x^T\bigr)^{-1}
(\bS_x\bSigma_{II}(H)\bS_y^T),
\qquad
\bb(H)
=
\bS_y\bmu_{II}(H)-\bB(H)^T(\bS_x\bmu_{II}(H)),
\]
and $\bmu_{II}(H)$ and $\bSigma_{II}(H)$ are the functionals corresponding to the estimators $\bhmu_{II}$ and $\bhSigma_{II}$, respectively.
Thus, $T(H)=\varphi(\btheta(H))$, where
\[
\btheta(H)=\bigl((\bmu_{II}(H))^T,\ \vecth_s(\bSigma_{II}(H))^T\bigr)^T,
\]
and $\varphi$ denotes the mapping that transforms the pair 
$(\bmu,\bSigma)$ into the regression parameters 
$(\bb,\bB)$ through the above formulas. 

Let us consider $P_F(H,\boeta(H))$ the functional associated with the auxiliary estimator $\hP_F$ defined in Section~\ref{app:proofs} of the Supplementary Material, where  $\boeta(H)$ represents the functional corresponding to the tuning parameter estimator $\hat \boeta$.
Specifically for a vector $\boeta\in \mathbb R^r$,
\[
P_F(H,\boeta)=\mu_{a,F}(H,\boeta)\oslash \mu_{b,F}(H,\boeta),
\]
with
\begin{equation*}
\mu_{a,F}(H,\boeta)=\E_H[a_F(X,\boeta)],
\qquad
\mu_{b,F}(H,\boeta)=\E_H[b_F(X,\boeta)],
\end{equation*}
where $a_F$ and $b_F$ are defined in Section~\ref{app:proofs} of the Supplementary Material. 
Further introduce
$Z_F(X,\boeta)
:=\left(a_F(X,\boeta)^T,b_F(X,\boeta)^T\right)^T$, 
and
$\mu_{Z_F}(H,\boeta):=\E_H[Z_F(X,\boeta)]=\left(
\mu_{a,F}(H,\boeta)^T,
\mu_{b,F}(H,\boeta)^T\right)^T.$
Then, $\btheta(H)$ satisfies
\begin{equation}\label{eq:popuII}
P_F(H,\boeta(H))-\pi\bigl(\btheta(H),\boeta(H)\bigr)=\bzero.
\end{equation}

Moreover, 
\[
\pi(\btheta,\boeta)
=
\mu_{a,F}(\btheta,\boeta)\oslash \mu_{b,F}(\btheta,\boeta),
\]
where $\mu_{Z_F}(\btheta,\boeta):=\E_{F_{\btheta}}[Z_F(X,\boeta)]=\left(
\mu_{a,F}(\btheta,\boeta)^T,
\mu_{b,F}(\btheta,\boeta)^T\right)^T$, with
\[
\mu_{a,F}(\btheta,\boeta):=\E_{F_{\btheta}}\!\big[a_F(X,\boeta)\big],
\qquad
\mu_{b,F}(\btheta,\boeta):=\E_{F_{\btheta}}\!\big[b_F(X,\boeta)\big].
\]

Under the appropriate regularity conditions, following the same arguments used in the proofs of Proposition~\ref{pro:bootstrap-theta},
Theorem~\ref{the:bootstrap-theta}, and Corollary~\ref{prop:bootstrap-theta-linear-reg},
under $H_0$ we have
\begin{equation*}
\sqrt{n}(\htheta_n-\theta_0)\rightarrow_d Z_\theta ,
\end{equation*}
where $Z_\theta\sim N(0,\sigma_{\theta,0}^2)$ and
the asymptotic variance is given by
\[
\sigma_{\theta,0}^2
=
\bJ_0^T\,\bK_0\,\bD_0\,\bSigma_{ab,0}\,\bD_0^T\,\bK_0^T\,\bJ_0.
\]
Here
\[
\bD_0
:=
\Big[
\diag\big(\mu_{b,F}(\btheta_0,\boeta_0)\big)^{-1},
\;
-\diag\!\left(
\mu_{a,F}(\btheta_0,\boeta_0)\oslash \mu_{b,F}(\btheta_0,\boeta_0)^{ 2}
\right)
\Big],
\]
\[
\bJ_0
:=
\left.
\frac{\partial\,\varphi(\btheta)}{\partial\btheta}
\right|_{\btheta=\btheta_0},
\qquad
\bK_0
:=\left(
\left.
\frac{\partial\,\pi(\btheta,\boeta_0)}{\partial\btheta}
\right|_{\btheta=\btheta_0}\right)^{-1},
\]
and
\[
\bSigma_{ab,0}
:=
\Cov_{H_0}\bigl(Z_F(X,\boeta_0)\bigr),
\]
where $\btheta_0:=\btheta(H_0)$ and $\boeta_0:=\boeta(H_0)$.

Consider the functional  
\[
\sigma_{\theta}(H)^2
:=
\bJ(H)^T\,\bK(H)\,\bD(H)\,\bSigma_{ab}(H)\,\bD(H)^T\,\bK(H)^T\,\bJ(H).
\]
with $\sigma_{\theta}(H_0)=\sigma_{\theta,0}$.

Here
\[
\bJ(H)
:=
\left.
\frac{\partial\,\varphi(\btheta)}{\partial\btheta}
\right|_{\btheta=\btheta(H)},
\qquad
\bK(H)
=
\left(\left.
\frac{\partial\,\pi(\btheta,\boeta(H)}{\partial\btheta}
\right|_{\btheta=\btheta(H)}\right)^{-1},
\]
\[
\bD(H)
:=
\Big[
\diag\bigl(\mu_{b,F}(H,\boeta(H))\bigr)^{-1},
\;
-\diag\!\left(
\mu_{a,F}(H,\boeta(H))\oslash \mu_{b,F} (H,\boeta(H))^2
\right)
\Big],
\]
and
\[
\bSigma_{ab}(H)
:=
\Cov_{H}\bigl(Z_F(X,\boeta(H))\bigr).
\]

\setcounter{oldprop}{\value{proposition}}

\setcounter{proposition}{\getrefnumber{lem:IF-length-final-weakA2}-1}
\begin{proposition}[Restated]
Assume that the following conditions hold.
\begin{enumerate}[label=(E\arabic*)]
\item Under $H_0$,
\[
\sqrt{n}(\htheta_n-\theta_0)
\rightarrow_d
N(0,\sigma_{\theta,0}^2),
\]
with $\sigma_{\theta,0}^2>0$.
\item 
For $\gamma\in\{\alpha/2,\,1-\alpha/2\}$,
\begin{equation}\label{eq:weak-A2}
\frac{Q_{H_\eps}(\gamma)}{\sigma_{\theta}(H_\eps)}
=
\frac{Q_{H_0}(\gamma)}{\sigma_{\theta}(H_0)}
+
o(\eps),
\qquad \eps\downarrow 0,
\end{equation}
and $\sigma_{\theta}(H_\eps)=O(1)$ as $\eps\downarrow 0$.
\end{enumerate}
Then

\begin{equation*}
\IFu_{\case}(\bc,m,H_0)=\bmed_c^TZ_F(\bc,\boeta_0),
\end{equation*}
\begin{equation*}
\IFu_{\cell}(\bc,m,H_0)= d\bmed_c^T\left(\sum_{j=1}^d \E_{H(j,\bc)}[Z_F(X,\boeta_0)]\right),
\end{equation*}
and 

\begin{align*}
\IFu_{\case}(\bc,\tl_{\alpha},H_0)=2z_{1-\alpha/2}\Big[m_{s,1}&
+\bmed_{s,2}^T\,\IFu_{\case}(\bc,\boeta,H_0)\\
&+\bmed_{s,3}^T Z_F(\bc,\boeta_0)
+\bmed_{s,4}^T\!\bigl(
Z_F(\bc,\boeta_0)\otimes Z_F(\bc,\boeta_0)
\bigr)\Big],
\end{align*}
\begin{align*}
\IFu_{\cell}(\bc,\tl_{\alpha},H_0)=2z_{1-\alpha/2}\Big[dm_{s,1}
+&\bmed_{s,2}^T\,\IFu_{\cell}(\bc,\boeta,H_0)
+d\bmed_{s,3}^T\left( \sum_{j=1}^{d}\E_{H(j,\bc)}\!\big[Z_F(X,\boeta_0)\big]\right)
\\
&+d\bmed_{s,4}^T\left(\sum_{j=1}^{d}
\E_{H(j,\bc)}\!\bigl[
Z_F(X,\boeta_0)\otimes Z_F(X,\boeta_0)
\bigr]\right)\Big].
\end{align*}
where  $\otimes$ denotes the Kronecker product,
$\boeta_0:=\boeta(H_0)$ with $\boeta(\cdot)$ the functional corresponding to the tuning parameters estimator $\boheta_n$ in the auxiliary estimator,  $H(j,\bc)$ is the distribution of 
$X \sim H_0$ but with its $j$-th component fixed 
at the constant $c_j$\,, and $z_\gamma$ denotes the $\gamma$-quantile of the standard normal
distribution.
The quantities $\bmed_c$, $m_{s,1}$, $\bmed_{s,2}$, $\bmed_{s,3}$ and $\bmed_{s,4}$ are 
defined in  the proof, and 
$\IFu_{\case}(\bc,\boeta,H_0)$ and $\IFu_{\cell}(\bc,\boeta,H_0)$ 
are the casewise and cellwise influence functions of 
$\boeta$.
\end{proposition}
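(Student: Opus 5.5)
The plan is to handle the center and the scaled length separately, reducing each to the influence functions of the indirect-inference functional $\btheta(H)$. The constants $\bmed_c$, $m_{s,1}$, $\bmed_{s,2}$, $\bmed_{s,3}$ and $\bmed_{s,4}$ will be defined along the way.

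\emph{Center.} Since $m(H)=T(H)=\varphi(\btheta(H))$, the chain rule gives $\IFu(\bc,m,H_0)=\bJ_0^T\IFu(\bc,\btheta,H_0)$. To obtain $\IFu(\bc,\btheta,H_0)$, I differentiate the population II equation \eqref{eq:popuII} along $H_\eps$ at $\eps=0$. Because $\pi(\btheta,\boeta)=P_F(F_{\btheta},\boeta)$ and $P_F(H_0,\cdot)=\pi(\btheta_0,\cdot)$ (recall $H_0=F_{\btheta_0}$), the derivatives in the tuning direction, namely $\partial_{\boeta}P_F\,\IFu(\bc,\boeta)$ and $\partial_{\boeta}\pi\,\IFu(\bc,\boeta)$, coincide and cancel. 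What remains is
\[
\frac{\partial}{\partial\eps}P_F(H_\eps,\boeta_0)\Big|_{\eps=0}=\bA_0\,\IFu(\bc,\btheta,H_0).
\]
Since $P_F(H,\boeta_0)=g(\mu_{Z_F}(H,\boeta_0))$ with $g(u,v)=u\oslash v$, the left-hand side equals $\bD_0$ times the derivative of $\E_{H_\eps}[Z_F(X,\boeta_0)]$.

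\emph{Casewise versus cellwise contamination.} Under the FDCM, that derivative is $Z_F(\bc,\boeta_0)-\mu_{Z_F,0}$. Under the FICM, I follow the $\delta_\ell(\eps)$ expansion used for Proposition~\ref{prop1}: only $\ell=1$ contributes at first order, yielding $d\sum_j\E_{H(j,\bc)}[Z_F]$ minus a constant. The constant term is absorbed because $\bD_0\mu_{Z_F,0}=\bzero$, since $g$ is homogeneous of degree zero. This gives $\bmed_c:=\bD_0^T\bK_0^T\bJ_0$, with the stated factor $d$ in the cellwise case.

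\emph{Length.} By (E1), $Q_{H_0}(\gamma)=\sigma_\theta(H_0)z_\gamma$. Assumption (E2) then shows that $Q_{H_\eps}(\gamma)=z_\gamma\sigma_\theta(H_\eps)+o(\eps)$. Hence $\tl_\alpha(H_\eps)=2z_{1-\alpha/2}\sigma_\theta(H_\eps)+o(\eps)$, and $\IFu(\bc,\tl_\alpha,H_0)=2z_{1-\alpha/2}\,\IFu(\bc,\sigma_\theta,H_0)$. I then differentiate
\[
\sigma_\theta(H)=\bigl(\bJ^T\bK\bD\bSigma_{ab}\bD^T\bK^T\bJ\bigr)^{1/2},
\]
which is $1/(2\sigma_{\theta,0})$ times the derivative of the quadratic form. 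The factors $\bJ(H)$, $\bK(H)$ and $\bD(H)$ depend on $H$ only through $\btheta(H)$, $\boeta(H)$ and $\mu_{Z_F}(H,\boeta(H))$. The key observation is that $\IFu(\bc,\btheta)$ and the derivative of $\mu_{Z_F}$ are themselves affine in $Z_F(\bc,\boeta_0)$ (respectively in $\sum_j\E_{H(j,\bc)}[Z_F]$), so all of these contributions can be collected into $\bmed_{s,2}^T\IFu(\bc,\boeta)+\bmed_{s,3}^T Z_F+\text{const}$.

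The remaining factor, $\bSigma_{ab}(H)=\E_H[Z_FZ_F^T]-\mu_{Z_F}\mu_{Z_F}^T$, has a derivative whose vectorization contains $Z_F(\bc)\otimes Z_F(\bc)$ (respectively $d\sum_j\E_{H(j,\bc)}[Z_F\otimes Z_F]$), plus terms linear in $Z_F$ and in $\IFu(\bc,\boeta)$. Vectorizing with $\vect(\bM\bX\bN)=(\bN^T\otimes\bM)\vect\bX$ produces $\bmed_{s,4}$. The constant terms $-\vect(\bSigma_{ab,0})$ and similar ones collect into $m_{s,1}$; in the cellwise case these carry the factor $d$, because $\sum_\ell\ell\,\delta_\ell'(0)=d$.

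\emph{Main obstacle.} I expect the hardest part to be the bookkeeping in the length term, especially differentiating $\bK(H)$. This requires second derivatives of $\pi$ with respect to $\btheta$ and $\boeta$, which I would assume exist as part of the "appropriate regularity" conditions. My first attempt would be to write $\partial\bK=-\bK_0(\partial\bA)\bK_0$ and express $\partial\bA$ through $\IFu(\bc,\btheta)$ and $\IFu(\bc,\boeta)$, so that it folds into $\bmed_{s,2}$ and $\bmed_{s,3}$.
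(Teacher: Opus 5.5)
Your proposal is correct and follows the paper's proof essentially step by step. The shared steps are: the cancellation of the $\boeta$-terms when differentiating the population II equation; $\bD_0\mu_{Z_F,0}=\bzero$ (which the paper checks by direct computation rather than by degree-zero homogeneity of $u\oslash v$); the reduction $\IFu(\bc,\tl_\alpha,H_0)=2z_{1-\alpha/2}\IFu(\bc,\sigma_\theta,H_0)$ from (E1)--(E2); and the term-by-term differentiation of $\bJ^T\bK\bD\bSigma_{ab}\bD^T\bK^T\bJ$ with $\partial\bK=-\bK_0(\partial\bA)\bK_0$.

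One small correction concerns the factor $d$ on $m_{s,1}$ in the cellwise case. It comes from $\delta_0'(0)=-d$, the coefficient of the $H_0$-expectations, and not from $\sum_\ell \ell\,\delta_\ell'(0)=d$. That sum also equals $d$ under the FDCM, where the constant correctly carries factor $1$.
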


\begin{proof}[Proof of Proposition~\ref{lem:IF-length-final-weakA2}]
Assumption (E1) implies
$\sqrt n\{T(\hat H_0)-T(H_0)\}\rightarrow_d N(0,\sigma_{\theta,0}^2)$, where $\hat H_0$  is the empirical
distribution obtained from an i.i.d.\ sample of size $n$ drawn from $H_0$.
This implies
\[
\frac{Q_{H_0}(\gamma)}{\sigma_{\theta}(H_0)}=z_\gamma,
\qquad \gamma\in\{\alpha/2,1-\alpha/2\}.
\]
Combining this with \eqref{eq:weak-A2} yields for $\gamma\in\{\alpha/2,1-\alpha/2\}$,
\[
\frac{Q_{H_\eps}(\gamma)}{\sigma_{\theta}(H_\eps)}
=
z_\gamma+o(\eps), \qquad  \eps\downarrow 0.
\]
As $\sigma_{\theta}(H_\eps)=O(1)$ as $\eps\downarrow 0$, it follows that
\[
Q_{H_\eps}(\gamma)
=
\sigma_{\theta}(H_\eps)\,z_\gamma
+
o(\eps), \qquad  \eps\downarrow 0.
\]
Subtracting the expansions for $\gamma\in\{\alpha/2,1-\alpha/2\}$ yields
\[
\tl_{\alpha}(H_\eps)
=
\sigma_{\theta}(H_\eps)\{z_{1-\alpha/2}-z_{\alpha/2}\}
+
o(\eps), \qquad  \eps\downarrow 0.
\]
Since $z_{1-\alpha/2}=-z_{\alpha/2}$, the bracket simplifies to $2z_{1-\alpha/2}$, and hence
\[
\tl_{\alpha}(H_\eps)
=
2z_{1-\alpha/2}\,\sigma_{\theta}(H_\eps)
+
o(\eps), \qquad  \eps\downarrow 0.
\]
When the distribution $H_\eps$ is of the form $H(G_\eps^{D},\bc)$,
differentiating with respect to $\eps$ at $\eps=0$ yields
\begin{equation}\label{eq:IF_lproof}
    \IFu_{\case}(\bc,\tl_{\alpha},H_0)
=
2z_{1-\alpha/2}\,\IFu_{\case}(\bc,\sigma_{\theta},H_0).
\end{equation}
Similarly, differentiating the definition of $m$ gives
\begin{equation}\label{eq:IF_mproof}
\IFu_{\case}(\bc,m,H_0)
=
\IFu_{\case}(\bc,T,H_0).
\end{equation}
In the cellwise contamination setting, where $H_\eps$ is denoted by
$H(G_\eps^{I},\bc)$, an analogous result follows, that is 
\begin{equation}\label{eq:IF_lproof_cell}
    \IFu_{\cell}(\bc,\tl_{\alpha},H_0)
=
2z_{1-\alpha/2}\,\IFu_{\cell}(\bc,\sigma_{\theta},H_0).
\end{equation}
\begin{equation}\label{eq:IF_mproof_cell}
\IFu_{\cell}(\bc,m,H_0)
=
\IFu_{\cell}(\bc,T,H_0).
\end{equation}

Let consider the functional $T$ and  the contaminated distribution $H_{\eps}$.
Then
\[
\left.\frac{\partial}{\partial\eps}
T\bigl(H_{\eps}\bigr)\right|_{\eps=0}
=
\bJ_0^T
\left.\frac{\partial}{\partial\eps}
\btheta(H_{\eps})\right|_{\eps=0}.
\]
Differentiating \eqref{eq:popuII}, we have
\[
\left.\frac{\partial}{\partial\eps}\left(
P_F(H_{\eps},\boeta(H_{\eps}))
-\pi\bigl(\btheta(H_{\eps}),\boeta(H_{\eps})\bigr)
\right)\right|_{\eps=0}
=\bzero,
\]
and thus
\begin{align*}
\left.\frac{\partial}{\partial\eps}\btheta(H_{\eps})\right|_{\eps=0}
&=
\bK_0
\Bigg[
\left.\frac{\partial}{\partial\eps}P_F(H_{\eps},\boeta(H_{\eps}))\right|_{\eps=0}\\
&\hspace{4cm}-
\left.\frac{\partial \pi(\btheta_0,\boeta)}{\partial\boeta}
\right|_{\boeta=\boeta_0}
\left.\frac{\partial}{\partial\eps}\boeta(H_{\eps})\right|_{\eps=0}
\Bigg].
\end{align*}
Note that
\[
\left.\frac{\partial \pi(\btheta,\boeta_0)}{\partial\btheta}\right|_{\btheta=\btheta_0}
=
\bD_0\,
\left.\frac{\partial\mu_{Z_F}(\btheta,\boeta_0)}{\partial\btheta}\right|_{\btheta=\btheta_0}.
\]

Note that $H_0=F_{\btheta_0}$, thus 
\[
\pi(\btheta_0,\boeta)=P_F(H_0,\boeta).
\]
Then
\[
\left.\frac{\partial \pi(\btheta_0,\boeta)}{\partial\boeta}\right|_{\boeta=\boeta_0}
=
\left.\frac{\partial P_F(H_0,\boeta)}{\partial\boeta}\right|_{\boeta=\boeta_0}.
\]
Moreover, by the chain rule,
\[
\left.\frac{\partial}{\partial\eps}P_F(H_{\eps},\boeta(H_{\eps}))\right|_{\eps=0}
=
\left.\frac{\partial}{\partial\eps}P_F(H_{\eps},\boeta_0)\right|_{\eps=0}
+
\left.\frac{\partial P_F(H_0,\boeta)}{\partial\boeta}\right|_{\boeta=\boeta_0}
\left.\frac{\partial}{\partial\eps}\boeta(H_{\eps})\right|_{\eps=0}.
\]
Substituting into the differentiated population condition, the $\boeta$-terms cancel, and hence
\begin{equation}\label{eq:dertheta_eps}
\left.\frac{\partial}{\partial\eps}\btheta(H_{\eps})\right|_{\eps=0}
=
\bK_0
\left.\frac{\partial}{\partial\eps}P_F(H_{\eps},\boeta_0)\right|_{\eps=0}.
\end{equation}
Therefore,
\begin{equation}\label{eq:IFTpart}
\left.\frac{\partial}{\partial\eps}T(H_{\eps})\right|_{\eps=0}
=
\bJ_0^T
\bK_0
\left.\frac{\partial}{\partial\eps}P_F(H_{\eps},\boeta_0)\right|_{\eps=0}.
\end{equation}

When $H_{\eps}$ is either the dependent or independent 
contamination models, we have, for any vector-valued function $g$,
\begin{equation}\label{eq:meandecomp}
\E_{H_{\eps}}[g(X)]
=
\sum_{j=0}^d\delta_j(\eps)\sum_{I\in\ell_j}\,\E_{H(I,\bc)}[g(X)],
\end{equation}
where
\[
\ell_j=\{I=\{i_1,\ldots,i_j\}: 1\leqslant i_1<\cdots<i_j\leqslant d,\ 1\leqslant j\leqslant d\},
\]
and where $H(I,\bc)$ is the distribution of a random vector $X=(X_{1},\ldots,X_{d})$ such that
\[
X_{i}=c_i \ \text{if } i\in I,
\qquad
X_{i}=Z_i \ \text{if } i\notin I,
\]
with $Z\sim H_0$.
In particular, $\ell_0=\{\emptyset\}$ and $H(\emptyset,\bc)=H_0$, while
$\ell_d=\{1,2,\ldots,d\}$ and $H(\{1,2,\ldots,d\},\bc)=\Delta_{\bc}$, the point-mass
distribution at $\bc$.

Under the FDCM we have 
$\delta_0(\eps)=(1-\eps),\; \delta_1(\eps)=\cdots=\delta_{d-1}(\eps)=0$, and 
$\delta_d(\eps)=\eps$.

Thus,
\[
\E_{H(G_\eps^D,\bc)}[g(X)]
=
(1-\eps)\E_{H_0}[g(X)]
+\eps\,g(\bc).
\]

Applying this to $g(X)=a_F(X,\boeta_0)$ and $g(X)=b_F(X,\boeta_0)$ yields
\[
\mu_{Z_F}(H(G_\eps^D,\bc),\boeta_0)
=
(1-\eps)\mu_{Z_F}(H_0,\boeta_0)
+\eps\,Z_F(\bc,\boeta_0).\]
Therefore,
\begin{equation}\label{eq:IFmu_FDCM}
\left.\frac{\partial}{\partial\eps}
\mu_{Z_F}(H(G_\eps^D,\bc),\boeta_0)\right|_{\eps=0}
=
Z_F(\bc,\boeta_0)-\mu_{Z_F}(H_0,\boeta_0).
\end{equation}

Using the componentwise quotient rule for
$P_F(H(G_\eps^D,\bc),\boeta_0)
=\mu_{a,F}(H(G_\eps^D,\bc),\boeta_0)\oslash
\mu_{b,F}(H(G_\eps^D,\bc),\boeta_0)$,
we obtain
\begin{align*}
\left.\frac{\partial}{\partial\eps}
P_F(H(G_\eps^D,\bc),\boeta_0)\right|_{\eps=0}
=&\;
\diag\!\big(\mu_{b,F}(H_0,\boeta_0)\big)^{-1}
\left.\frac{\partial}{\partial\eps}
\mu_{a,F}(H(G_\eps^D,\bc),\boeta_0)\right|_{\eps=0}\\
&-
\diag\!\Big(
\mu_{a,F}(H_0,\boeta_0)
\oslash
\mu_{b,F}(H_0,\boeta_0)^{2}
\Big)
\left.\frac{\partial}{\partial\eps}
\mu_{b,F}(H(G_\eps^D,\bc),\boeta_0)\right|_{\eps=0}\\
=&\;
\diag\!\big(\mu_{b,F}(H_0,\boeta_0)\big)^{-1}
\big(a_F(\bc,\boeta_0)-\mu_{a,F}(H_0,\boeta_0)\big)\\
&-
\diag\!\Big(
\mu_{a,F}(H_0,\boeta_0)
\oslash
\mu_{b,F}(H_0,\boeta_0)^{2}
\Big)
\big(b_F(\bc,\boeta_0)-\mu_{b,F}(H_0,\boeta_0)\big)\\
=&\diag\!\big(\mu_{b,F}(H_0,\boeta_0)\big)^{-1} a_F(\bc,\boeta_0)
\\
&-
\diag\!\Big(
\mu_{a,F}(H_0,\boeta_0)\oslash\mu_{b,F}(H_0,\boeta_0)^{2}
\Big)\, b_F(\bc,\boeta_0).
\end{align*}
Thus, we can write
\begin{equation}\label{eq:IFP_F}
\left.\frac{\partial}{\partial\eps}
P_F(H(G_\eps^D,\bc),\boeta_0)\right|_{\eps=0}
=
\bD_0Z_F(\bc,\boeta_0).
\end{equation}
Then from \eqref{eq:IFTpart}, we have
\begin{equation*}
\IFu_{\case}(\bc,T,H_0)
=\bmed_c^TZ_F(\bc,\boeta_0),
\end{equation*}
with $\bmed_c^T=
\bJ_0^T
\bK_0\bD_0$.

Under the FICM we have
\[
\delta_0(\eps)=(1-\eps)^d,\qquad
\delta_1(\eps)=d(1-\eps)^{d-1}\eps,
\]
so that $\delta_0(0)=1$, $\delta_1(0)=0$, $\delta_1'(0)=d$, and
$\delta_j(0)=\delta_j'(0)=0$ for all $j\ge2$. 
By separating the first two terms in \eqref{eq:meandecomp}, we have
\begin{align*}
\E_{H(G_\eps^I,\bc)}[g(X)]
=
(1-\eps)^d\,\E_{H_0}[g(X)]
+
d(1-\eps)^{d-1}\eps&
\sum_{j=1}^d \E_{H(j,\bc)}[g(X)]\\&
+\sum_{k=2}^d \delta_k(\eps)\sum_{I\in\ell_k}\E_{H(I,\bc)}[g(X)],
\end{align*}
where $H(j,\bc):=H(\{j\},\bc)$. This yields
\begin{align*}
\left.\frac{\partial}{\partial\eps}\E_{H(G_\eps^I,\bc)}[g(X)]\right|_{\eps=0}
=
-d\E_{H_0}[g(X)]
+d\sum_{j=1}^d \E_{H(j,\bc)}[g(X)].
\end{align*}
Applying this identity to $g(X)=a_F(X,\boeta_0)$ and $g(X)=b_F(X,\boeta_0)$ yields
\[
\left.\frac{\partial}{\partial\eps}
\mu_{Z_F}(H(G_\eps^I,\bc),\boeta_0)\right|_{\eps=0}
=
-d\,\mu_{Z_F}(H_0,\boeta_0)
+
d\sum_{j=1}^d \E_{H(j,\bc)}\!\big[Z_F(X,\boeta_0)\big].\]
Using the componentwise quotient rule for
$P=\mu_{a,F}\oslash\mu_{b,F}$ evaluated at $\eps=0$, we obtain
\begin{align*}
\left.\frac{\partial}{\partial\eps}
P_F(H(G_\eps^I,\bc),\boeta_0)\right|_{\eps=0}
=&\;
\diag\!\big(\mu_{b,F}(H_0,\boeta_0)\big)^{-1}
\left.\frac{\partial}{\partial\eps}
\mu_{a,F}(H(G_\eps^I,\bc),\boeta_0)\right|_{\eps=0}\\
&-
\diag\!\left(
\mu_{a,F}(H_0,\boeta_0)\oslash \mu_{b,F}(H_0,\boeta_0)^{ 2}
\right)
\left.\frac{\partial}{\partial\eps}
\mu_{b,F}(H(G_\eps^I,\bc),\boeta_0)\right|_{\eps=0}.
\end{align*}
And thus, by using $\bD_0$, this can be written compactly as
\[
\left.\frac{\partial}{\partial\eps}
P_F(H(G_\eps^I,\bc),\boeta_0)\right|_{\eps=0}
=
\bD_0
d\sum_{j=1}^d \E_{H(j,\bc)}[Z_F(X,\boeta_0)].
\]
Then from \eqref{eq:IFTpart}, we have
\begin{equation*}
\IFu_{\cell}(\bc,T,H_0)
=d\bmed_c^T\left(\sum_{j=1}^d \E_{H(j,\bc)}[Z_F(X,\boeta_0)]\right).
\end{equation*}

Let us consider the functional  $\sigma_{\theta}(\cdot)^2$ evaluated in $H_{\eps}$
\[
\sigma_{\theta}(H_{\eps})^2
=
\bJ_{\eps}^T\,\bK_{\eps}\,\bD_{\eps}\,\bSigma_{ab,\eps}\,\bD_{\eps}^T\,\bK_{\eps}^T\,\bJ_{\eps}.
\]
where
$\bJ_{\eps}
:=\bJ(H_{\eps})$, $
\bK_{\eps}
:=\bK(H_{\eps})$, $\bD_{\eps}
:=\bD(H_{\eps})$, and $\bSigma_{ab,\eps}:=\bSigma_{ab}(H_{\eps})$.
Let
\[
\bM_\eps
:=
\bK_\eps \bD_\eps \bSigma_{ab,\eps}
\bD_\eps^T \bK_\eps^T,
\qquad
\sigma_{\theta}(H_\eps)^2=\bJ_{\eps}^T\bM_\eps\bJ_{\eps}.
\]
Then
\begin{align*}
\left.\frac{\partial}{\partial\eps}\sigma_{\theta}(H_\eps)^2\right|_{\eps=0}
&=
\bJ_0^T
\left.\frac{\partial\bM_\eps}{\partial\eps}\right|_{\eps=0}
\bJ_0
+
2\bJ_0^T \bM_0
\left.\frac{\partial\bJ_{\eps}}{\partial\eps}\right|_{\eps=0}\\&=(\bJ_0^T\otimes\bJ_0^T)\left.\frac{\partial\vect(\bM_\eps)}{\partial\eps}\right|_{\eps=0}+
2\bJ_0^T \bM_0
\left.\frac{\partial\bJ_{\eps}}{\partial\eps}\right|_{\eps=0}.
\end{align*}
The chain rule gives
\[
\left.\frac{\partial\bJ_{\eps}}{\partial\eps}\right|_{\eps=0}
=
\bJ_{2,0}
\left.
\frac{\partial\btheta(H_\eps)}{\partial\eps}
\right|_{\eps=0},
\]
where
\[\bJ_{2,0}:=
\left.
\frac{\partial^2\varphi(\btheta)}
{\partial\btheta\,\partial\btheta^T}
\right|_{\btheta=\btheta_0},\]
and, from \eqref{eq:dertheta_eps},
\[
\left.
\frac{\partial\btheta(H_\eps)}{\partial\eps}
\right|_{\eps=0}
=
\bK_0
\left.
\frac{\partial}{\partial\eps}
P_F(H_\eps,\boeta_0)
\right|_{\eps=0}.
\]

Differentiating $\vect(\bM_\eps)$
yields
\begin{align*}
\left.\frac{\partial\vect(\bM_\eps)}{\partial\eps}\right|_{\eps=0}
=&\;
(\bK_0\bD_0 \bSigma_{ab,0} \bD_0^T \otimes \bI_{d+d(d+1)/2})\,
\left.\frac{\partial\vect(\bK_\eps)}{\partial\eps}\right|_{\eps=0}
\\
&+
(\bK_0 \bD_0 \bSigma_{ab,0} \otimes \bK_0)\,
\left.\frac{\partial\vect(\bD_\eps)}{\partial\eps}\right|_{\eps=0}
\\
&+
(\bK_0 \bD_0 \otimes \bK_0 \bD_0)\,
\left.\frac{\partial\vect(\bSigma_{ab,\eps})}{\partial\eps}\right|_{\eps=0}
\\
&+
(\bK_0 \otimes \bK_0 \bD_0 \bSigma_{ab,0})\,
\bP_{D}\,
\left.\frac{\partial\vect(\bD_\eps)}{\partial\eps}\right|_{\eps=0}
\\
&+
(\bI_{d+d(d+1)/2} \otimes \bK_0 \bD_0 \bSigma_{ab,0} \bD_0^T)\,
\bP_{K}\,
\left.\frac{\partial\vect(\bK_\eps)}{\partial\eps}\right|_{\eps=0},
\end{align*}
where $\bP_{D}$ and $\bP_{K}$ are the permutation matrices satisfying $\vect(\bD_{\eps}^T)=\bP_{D}\vect(\bD_{\eps})$ and $\vect(\bK_{\eps}^T)=\bP_{K}\vect(\bK_{\eps})$.
Let
\[
\bK_\eps
=
\left(
\left.
\frac{\partial \pi(\btheta,\boeta)}{\partial\btheta}
\right|_{(\btheta,\boeta)=(\btheta(H_\eps),\boeta(H_\eps))}
\right)^{-1}
=: \bG_\eps^{-1}.
\]
Using the identity $\partial\bG^{-1} = -\bG^{-1}(\partial\bG)\bG^{-1}$, we obtain
\[
\left.\frac{\partial\vect(\bK_\eps)}{\partial\eps}\right|_{\eps=0}
=
-(\bK_0^T\otimes \bK_0)
\left.\frac{\partial\vect(\bG_\eps)}{\partial\eps}\right|_{\eps=0}.
\]
For each $j=1,\dots,d+d(d+1)/2$, define
\[
\bA_j
:=
\left.\frac{\partial}{\partial\btheta^T}\Big(\frac{\partial \pi(\btheta,\boeta)}{\partial \theta_j}\Big)\right|_{(\btheta,\boeta)=(\btheta_0,\boeta_0)},
\qquad
(\bA_j)_{ik}
=
\left.\frac{\partial^2\pi_i(\btheta,\boeta)}{\partial \theta_j\,\partial\theta_k}\right|_{(\btheta,\boeta)=(\btheta_0,\boeta_0)},
\]
\[
\bB_j
:=
\left.\frac{\partial}{\partial\boeta^T}\Big(\frac{\partial \pi(\btheta,\boeta)}{\partial \theta_j}\Big)\right|_{(\btheta,\boeta)=(\btheta_0,\boeta_0)}
\in\mathbb R^{(d+d(d+1)/2)\times r},
\qquad
(\bB_j)_{i\ell}
=
\left.\frac{\partial^2\pi_i(\btheta,\boeta)}{\partial \theta_j\,\partial\eta_\ell}\right|_{(\btheta,\boeta)=(\btheta_0,\boeta_0)}.
\]
Introduce the stacked matrices
\[
\bH_{\theta\theta}
:=
\begin{pmatrix}
\bA_1\\
\vdots\\
\bA_{d+d(d+1)/2}
\end{pmatrix},
\qquad
\bH_{\theta\eta}
:=
\begin{pmatrix}
\bB_1\\
\vdots\\
\bB_{d+d(d+1)/2}
\end{pmatrix}.
\]
 Then, stacking the derivatives by rows and using vectorization,
\[
\left.\frac{\partial\vect(\bG_\eps)}{\partial\eps}\right|_{\eps=0}
=
\bH_{\theta\theta}
\left.\frac{\partial\btheta(H_\eps)}{\partial\eps}\right|_{\eps=0}
+
\bH_{\theta\eta}
\left.\frac{\partial\boeta(H_\eps)}{\partial\eps}\right|_{\eps=0}.
\]
Thus,
\[
\left.\frac{\partial\vect(\bK_\eps)}{\partial\eps}\right|_{\eps=0}
=
-(\bK_0^T\otimes \bK_0)\bH_{\theta\theta}
\left.\frac{\partial\btheta(H_\eps)}{\partial\eps}\right|_{\eps=0}
-(\bK_0^T\otimes \bK_0)\bH_{\theta\eta}
\left.\frac{\partial\boeta(H_\eps)}{\partial\eps}\right|_{\eps=0}.
\]

Write
\begin{align*}
\bD_\eps
=&
\Big[
\diag(\mu_{b,F}(H_\eps,\boeta(H_\eps)))^{-1},
\;
-\diag\!\big(
\mu_{a,F}(H_\eps,\boeta(H_\eps))
\oslash \mu_{b,F}(H_\eps,\boeta(H_\eps))^2
\big)
\Big]\\&
=:
[\bD^{(1)}_\eps,\bD^{(2)}_\eps],
\end{align*}
and hence
\[
\vect(\bD_\eps)
=
\begin{pmatrix}
\vect(\bD^{(1)}_\eps)\\
\vect(\bD^{(2)}_\eps)
\end{pmatrix}.
\]
Using $d(A^{-1})=-A^{-1}(dA)A^{-1}$, we obtain
\begin{align*}
\left.\frac{\partial \bD^{(1)}_\eps}{\partial \eps}\right|_{\eps=0}
=&\;
-\diag(\mu_{b,F}(H_0,\boeta_0))^{-1}
\\
&\times
\diag\!\left(
\left.
\frac{\partial}{\partial \eps}
\mu_{b,F}(H_\eps,\boeta(H_\eps))
\right|_{\eps=0}
\right)
\diag(\mu_{b,F}(H_0,\boeta_0))^{-1}.
\end{align*}
Vectorizing yields
\begin{align*}
\left.\frac{\partial \vect(\bD^{(1)}_\eps)}{\partial \eps}\right|_{\eps=0}
&=
-\big(
\diag(\mu_{b,F}(H_0,\boeta_0))^{-1}
\otimes
\diag(\mu_{b,F}(H_0,\boeta_0))^{-1}
\big)
\\&\hspace{6cm}\times\bL_{d+d(d+1)/2}
\left.
\frac{\partial}{\partial \eps}
\mu_{b,F}(H_\eps,\boeta(H_\eps))
\right|_{\eps=0}\\
&=-\bB^D_1\left.
\frac{\partial}{\partial \eps}
\mu_{b,F}(H_\eps,\boeta(H_\eps))
\right|_{\eps=0}.
\end{align*}
with $\bB^D_1:=\big(
\diag(\mu_{b,F}(H_0,\boeta_0))^{-1}
\otimes
\diag(\mu_{b,F}(H_0,\boeta_0))^{-1}
\big)
\bL_{d+d(d+1)/2}$ and
where for $v\in\mathbb R^{q}$ and  $\diag(v)\in\mathbb R^{q\times q}$, $\bL_q$ is defined 
so that
$\vect(\diag(v))=\bL_q\,v$.
Similarly,
\begin{align*}
\left.\frac{\partial \bD^{(2)}_\eps}{\partial \eps}\right|_{\eps=0}
=&\;
-\diag\!\big(\mu_{b,F}(H_0,\boeta_0)^{-2}\big)\,
\diag\!\left(\left.
\frac{\partial}{\partial \eps}
\mu_{a,F}(H_\eps,\boeta(H_\eps))
\right|_{\eps=0}\right)
\\
&\qquad
+\,2\,\diag\!\big(\mu_{a,F}(H_0,\boeta_0)\oslash\mu_{b,F}(H_0,\boeta_0)^3\big)\,
\diag\!\left(\left.
\frac{\partial}{\partial \eps}
\mu_{b,F}(H_\eps,\boeta(H_\eps))
\right|_{\eps=0}
\right),
\end{align*}
and
\begin{align*}
\left.\frac{\partial \vect(\bD^{(2)}_\eps)}{\partial \eps}\right|_{\eps=0}
=&\;
-\bigl(\bI \otimes \diag(\mu_{b,F}(H_0,\boeta_0)^{-2})\bigr)\,
\left.\frac{\partial}{\partial \eps}\vect\left(
\diag(\mu_{a,F}(H_\eps,\boeta(H_\eps)))
\right)\right|_{\eps=0}
\\
&\qquad
+\,2\bigl(\bI \otimes \diag(\mu_{a,F}(H_0,\boeta_0)\oslash\mu_{b,F}(H_0,\boeta_0)^3)\bigr)\,\\&\hspace{4cm}\times
\left.\frac{\partial}{\partial \eps}\vect\left(
\diag(\mu_{b,F}(H_\eps,\boeta(H_\eps)))
\right)\right|_{\eps=0}
\\
=&\;
-\bB^D_2
\left.\frac{\partial}{\partial \eps}
\mu_{a,F}(H_\eps,\boeta(H_\eps))
\right|_{\eps=0}
+\bB^D_3
\left.\frac{\partial}{\partial \eps}
\mu_{b,F}(H_\eps,\boeta(H_\eps))
\right|_{\eps=0},
\end{align*}
where $\bB^D_2:=\bigl(\bI \otimes \diag(\mu_{b,F}(H_0,\boeta_0)^{-2})\bigr)\,
\bL_{d+d(d+1)/2}\,$ and $\bB^D_3:=\,2\bigl(\bI \otimes \diag(\mu_{a,F}(H_0,\boeta_0)\oslash\mu_{b,F}(H_0,\boeta_0)^3)\bigr)\,
\bL_{d+d(d+1)/2}\,$.
By the chain rule,
\begin{align*}
\left.
\frac{\partial}{\partial \eps}
\mu_{Z_F}\bigl(H_\eps,\boeta(H_\eps)\bigr)
\right|_{\eps=0}
&=
\left.
\frac{\partial}{\partial \eps}
\mu_{Z_F}(H_\eps,\boeta_0)
\right|_{\eps=0}
+\bD_{\mu,\eta}
\left.
\frac{\partial \boeta(H_\eps)}{\partial \eps}
\right|_{\eps=0}
\end{align*}
with  $ \bD_{\mu,\eta}:=
\left.
\frac{\partial \mu_{Z_F}(H_0,\boeta)}{\partial \boeta}
\right|_{\boeta=\boeta_0}$.
So \[\left.\frac{\partial \vect(\bD_\eps)}{\partial \eps}\right|_{\eps=0}
=\bB^D\left.
\frac{\partial}{\partial \eps}
\mu_{Z_F}(H_\eps,\boeta_0)
\right|_{\eps=0}
+\bB^D\bD_{\mu,\eta}
\left.
\frac{\partial \boeta(H_\eps)}{\partial \eps}
\right|_{\eps=0},\]
where \[
\bB^D=\begin{pmatrix}
0&-\bB_1^D\\
-\bB_2^D&\bB_3^D
\end{pmatrix}.
\]

Recall that
\begin{align*}
\bSigma_{ab,\eps}
&=
\Cov_{H_\eps}\!\bigl(
Z_F(X,\boeta(H_\eps))
\bigr) \\
&=
\E_{H_\eps}\!\Big[
Z_F(X,\boeta(H_\eps))
Z_F(X,\boeta(H_\eps))^T
\Big] \\
&\quad
-
\mu_{Z_F}(H_\eps,\boeta(H_\eps))
\,\mu_{Z_F}(H_\eps,\boeta(H_\eps))^T.
\end{align*}
Under the FDCM or the FICM,  differentiating with respect
to $\eps$ at $\eps=0$ yields
\begin{align*}
\left.
\frac{\partial\,\vect(\bSigma_{ab,\eps})}{\partial\eps}
\right|_{\eps=0}
=&\;
\left.
\frac{\partial}{\partial\eps}
\vect\Big(
\E_{H_\eps}\!\bigl[ Z_F(X,\boeta_0) Z_F(X,\boeta_0)^T \bigr]
\Big)
\right|_{\eps=0}
\\
&+
\left.
\frac{\partial}{\partial\eps}\,
\vect\Big(
\E_{H_0}\!\big[ Z_F(X,\boeta(H_\eps))\, Z_F(X,\boeta(H_\eps))^T \big]
\Big)\right|_{\eps=0}
\\
&-
\left.
\frac{\partial}{\partial\eps}\,
\vect\Big(
\mu_{Z_F}(H_\eps,\boeta(H_\eps))\,
\mu_{Z_F}(H_0,\boeta_0)^T
\Big)
\right|_{\eps=0}
\\
&-
\left.
\frac{\partial}{\partial\eps}\,
\vect\Big(
\mu_{Z_F}(H_0,\boeta_0)\,
\mu_{Z_F}(H_\eps,\boeta(H_\eps))^T
\Big)
\right|_{\eps=0}.
\end{align*}
Using $\vect(uv^T)=v\otimes u$, the last two terms become
\begin{align*}
\left.
\frac{\partial\,\vect(\bSigma_{ab,\eps})}{\partial\eps}
\right|_{\eps=0}
=&\;
\left.
\frac{\partial}{\partial\eps}
\vect\Big(
\E_{H_\eps}\!\bigl[ Z_F(X,\boeta_0) Z_F(X,\boeta_0)^T \bigr]
\Big)
\right|_{\eps=0}
\\
&+
\left.
\frac{\partial}{\partial\eps}\,
\vect\Big(
\E_{H_0}\!\big[ Z_F(X,\boeta(H_\eps))\, Z_F(X,\boeta(H_\eps))^T \big]
\Big)\right|_{\eps=0}
\\
&-\left(
\mu_{Z_F}(H_0,\boeta_0)\otimes \bI+
\bI\otimes \mu_{Z_F}(H_0,\boeta_0)\right)
\left.
\frac{\partial\,\mu_{Z_F}(H_\eps,\boeta(H_\eps))}{\partial\eps}
\right|_{\eps=0}.
\end{align*}
Since the expectation is taken with respect to the fixed distribution $H_0$, the only
$\eps$--dependence is through $\boeta(H_\eps)$. Hence, by the product rule,
\begin{align*}
\left.\frac{\partial}{\partial\eps}
\Big(
Z_F(X,\boeta(H_\eps))Z_F(X,\boeta(H_\eps))^T
\Big)\right|_{\eps=0}
=&\;
\left.\frac{\partial Z_F(X,\boeta(H_\eps))}{\partial\eps}\right|_{\eps=0}
Z_F(X,\boeta_0)^T
\\
&\quad+
Z_F(X,\boeta_0)
\left.\frac{\partial Z_F(X,\boeta(H_\eps))^T}{\partial\eps}\right|_{\eps=0}.
\end{align*}
Moreover, by the chain rule,
\[
\left.\frac{\partial Z_F(X,\boeta(H_\eps))}{\partial\eps}\right|_{\eps=0}
=
\left.\frac{\partial Z_F(X,\boeta)}{\partial\boeta}\right|_{\boeta=\boeta_0}
\left.\frac{\partial \boeta(H_\eps)}{\partial\eps}\right|_{\eps=0}.
\]
Therefore,
\begin{align*}
\frac{\partial}{\partial\eps}\,
&\vect\Big(
\E_{H_0}\!\big[
Z_F(X,\boeta(H_\eps))\,
Z_F(X,\boeta(H_\eps))^T
\big]
\Big)\Big|_{\eps=0}
\\&=\;
\E_{H_0}\!\Bigg[
\vect\Big(
\left.\frac{\partial Z_F(X,\boeta(H_\eps))}{\partial\eps}\right|_{\eps=0}
Z_F(X,\boeta_0)^T
+
Z_F(X,\boeta_0)
\left.\frac{\partial Z_F(X,\boeta(H_\eps))^T}{\partial\eps}\right|_{\eps=0}
\Big)
\Bigg]
\\
&=\;
\E_{H_0}\!\Bigg[
\big(
Z_F(X,\boeta_0)\otimes \bI
+
\bI\otimes Z_F(X,\boeta_0)
\big)
\left.\frac{\partial Z_F(X,\boeta)}{\partial\boeta}\right|_{\boeta=\boeta_0}\Bigg]
\left.\frac{\partial \boeta(H_\eps)}{\partial\eps}\right|_{\eps=0}
\\&=\;
\bB^Z_1
\left.\frac{\partial \boeta(H_\eps)}{\partial\eps}\right|_{\eps=0}
.
\end{align*}
with $\bB^Z_1:=\E_{H_0}\!\Bigg[
\big(
Z_F(X,\boeta_0)\otimes \bI
+
\bI\otimes Z_F(X,\boeta_0)
\big)
\left.\frac{\partial Z_F(X,\boeta)}{\partial\boeta}\right|_{\boeta=\boeta_0}\Bigg]$.
Thus 

\begin{align*}
\left.
\frac{\partial\,\vect(\bSigma_{ab,\eps})}{\partial\eps}
\right|_{\eps=0}
=&\;
\left.
\frac{\partial}{\partial\eps}
\E_{H_\eps}\!\bigl[ Z_F(X,\boeta_0) \otimes Z_F(X,\boeta_0) \bigr]
\right|_{\eps=0}
\\
&+
\left(\bB^Z_1-\bB^Z_2\bD_{\mu,\eta}\right)
\left.
\frac{\partial \boeta(H_\eps)}{\partial \eps}
\right|_{\eps=0}
-\bB^Z_2\left.
\frac{\partial}{\partial \eps}
\mu_{Z_F}(H_\eps,\boeta_0)
\right|_{\eps=0},
\end{align*}
with $\bB^Z_2:=\left(
\mu_{Z_F}(H_0,\boeta_0)\otimes \bI+
\bI\otimes \mu_{Z_F}(H_0,\boeta_0)\right)$.

Then,
\begin{align}
\left.\frac{\partial}{\partial\eps}\sigma_{\theta}(H_\eps)^2\right|_{\eps=0}
&=
(\bJ_0^T\otimes\bJ_0^T)
\left.\frac{\partial\vect(\bM_\eps)}{\partial\eps}\right|_{\eps=0}
+
2\bJ_0^T \bM_0
\left.\frac{\partial\bJ_{\eps}}{\partial\eps}\right|_{\eps=0}
\nonumber\\
&=
(\bJ_0^T\otimes\bJ_0^T)\,\Big\{
\bA_K\left.\frac{\partial\vect(\bK_\eps)}{\partial\eps}\right|_{\eps=0}
+
\bA_D\left.\frac{\partial\vect(\bD_\eps)}{\partial\eps}\right|_{\eps=0}\nonumber\\&\hspace{3cm}
+
\bA_\Sigma\left.\frac{\partial\vect(\bSigma_{ab,\eps})}{\partial\eps}\right|_{\eps=0}
\Big\}
+ 2\bJ_0^T \bM_0 \bJ_{2,0}
\left.\frac{\partial\btheta(H_\eps)}{\partial\eps}\right|_{\eps=0},
\label{eq:dsig2_split}
\end{align}
where we have set
\begin{align*}
\bA_K
&:=
(\bK_0\bD_0 \bSigma_{ab,0} \bD_0^T \otimes \bI_{d+d(d+1)/2})
+
(\bI \otimes \bK_0 \bD_0 \bSigma_{ab,0} \bD_0^T)\,\bP_{K},
\\
\bA_D
&:=
(\bK_0 \bD_0 \bSigma_{ab,0} \otimes \bK_0)
+
(\bK_0 \otimes \bK_0 \bD_0 \bSigma_{ab,0})\,\bP_{D},
\\
\bA_\Sigma
&:=
(\bK_0 \bD_0 \otimes \bK_0 \bD_0).
\end{align*}
Moreover,
\begin{align}
\left.\frac{\partial\vect(\bK_\eps)}{\partial\eps}\right|_{\eps=0}
&=
-(\bK_0^T\otimes \bK_0)\bH_{\theta\theta}
\left.\frac{\partial\btheta(H_\eps)}{\partial\eps}\right|_{\eps=0}
-(\bK_0^T\otimes \bK_0)\bH_{\theta\eta}
\left.\frac{\partial\boeta(H_\eps)}{\partial\eps}\right|_{\eps=0},
\label{eq:dK_vec}\\[1ex]
\left.\frac{\partial \vect(\bD_\eps)}{\partial \eps}\right|_{\eps=0}
&=
\bB^D\left.
\frac{\partial}{\partial \eps}
\mu_{Z_F}(H_\eps,\boeta_0)
\right|_{\eps=0}
+\bB^D\bD_{\mu,\eta}
\left.
\frac{\partial \boeta(H_\eps)}{\partial \eps}
\right|_{\eps=0},
\label{eq:dD_vec}\\[1ex]
\left.
\frac{\partial\,\vect(\bSigma_{ab,\eps})}{\partial\eps}
\right|_{\eps=0}
&=
\left.
\frac{\partial}{\partial\eps}
\E_{H_\eps}\!\bigl[ Z_F(X,\boeta_0) \otimes Z_F(X,\boeta_0) \bigr]
\right|_{\eps=0}
+
\left(\bB^Z_1-\bB^Z_2\bD_{\mu,\eta}\right)
\left.
\frac{\partial \boeta(H_\eps)}{\partial \eps}
\right|_{\eps=0}
\nonumber\\
&\qquad
-\bB^Z_2\left.
\frac{\partial}{\partial \eps}
\mu_{Z_F}(H_\eps,\boeta_0)
\right|_{\eps=0}.
\label{eq:dSigma_vec}
\end{align}
Substituting \eqref{eq:dK_vec}--\eqref{eq:dSigma_vec} into \eqref{eq:dsig2_split} and
collecting terms yields
\begin{align}
\left.\frac{\partial}{\partial\eps}\sigma_{\theta}(H_\eps)^2\right|_{\eps=0}
&=
\bmed_\theta^T\bK_0
\left.
\frac{\partial}{\partial\eps}
P_F(H_\eps,\boeta_0)
\right|_{\eps=0}
+
\bmed_\eta^T
\left.\frac{\partial\boeta(H_\eps)}{\partial\eps}\right|_{\eps=0}
\nonumber\\&+
\bmed_\mu^T
\left.\frac{\partial}{\partial\eps}
\mu_{Z_F}(H_\eps,\boeta_0)\right|_{\eps=0}
+
\bmed_{\mathrm{dir}}^T\left.
\frac{\partial}{\partial\eps}
\E_{H_\eps}\!\bigl[ Z_F(X,\boeta_0) \otimes Z_F(X,\boeta_0) \bigr]
\right|_{\eps=0},
\label{eq:dsig2_collected2}
\end{align}
where
\begin{align*}
\bmed_\theta^T
&:=
-(\bJ_0^T\otimes\bJ_0^T)\,\bA_K\,(\bK_0^T\otimes \bK_0)\,\bH_{\theta\theta}
+2\bJ_0^T \bM_0 \bJ_{2,0},
\\
\bmed_\eta^T
&:=
-(\bJ_0^T\otimes\bJ_0^T)\,\bA_K\,(\bK_0^T\otimes \bK_0)\,\bH_{\theta\eta}
+(\bJ_0^T\otimes\bJ_0^T)\,\bA_D\,\bB^D\bD_{\mu,\eta}\\
&\qquad+(\bJ_0^T\otimes\bJ_0^T)\,\bA_\Sigma\,(\bB^Z_1-\bB^Z_2\bD_{\mu,\eta}),
\\
\bmed_\mu^T
&:=
(\bJ_0^T\otimes\bJ_0^T)\,\bA_D\,\bB^D
-(\bJ_0^T\otimes\bJ_0^T)\,\bA_\Sigma\,\bB^Z_2,
\\
\bmed_{\mathrm{dir}}^T
&:=
(\bJ_0^T\otimes\bJ_0^T)\,\bA_\Sigma\,
.
\end{align*}
We also used
\[
\left.\frac{\partial\btheta(H_\eps)}{\partial\eps}\right|_{\eps=0}
=
\bK_0
\left.
\frac{\partial}{\partial\eps}
P_F(H_\eps,\boeta_0)
\right|_{\eps=0}.
\]

Let
\[
g(X):= Z_F(X,\boeta_0)\otimes Z_F(X,\boeta_0).
\]
By \eqref{eq:meandecomp},
\[
\E_{H_{\eps}}[g(X)]
=
\sum_{j=0}^d\delta_j(\eps)\sum_{I\in\ell_j}\E_{H(I,\bc)}[g(X)].
\]

Under the FDCM we have $H_{\eps}=H(G_\eps^D,\bc)$, and 
$\delta_0(\eps)=(1-\eps),\; \delta_1(\eps)=\cdots=\delta_{d-1}(\eps)=0$, and 
$\delta_d(\eps)=\eps$.
Thus,
\[
\E_{H(G_\eps^D,\bc)}[g(X)]
=
(1-\eps)\E_{H_0}[g(X)]
+\eps\,g(\bc).
\]
and
\begin{equation*}
\left.
\frac{\partial}{\partial\eps}
\E_{H(G_\eps^D,\bc)}\!\bigl[ Z_F(X,\boeta_0)\otimes Z_F(X,\boeta_0)\bigr]
\right|_{\eps=0}
=
Z_F(\bc,\boeta_0)\otimes Z_F(\bc,\boeta_0)
-
\E_{H_0}\!\bigl[ Z_F(X,\boeta_0)\otimes Z_F(X,\boeta_0)\bigr].
\end{equation*}
Moreover from \eqref{eq:IFP_F}, we have
\begin{equation*}
\left.\frac{\partial}{\partial\eps}
P_F(H(G_\eps^D,\bc),\boeta_0)\right|_{\eps=0}
=
\bD_0Z_F(\bc,\boeta_0).
\end{equation*} and moreover from \eqref{eq:IFmu_FDCM}
\begin{equation*}
\left.\frac{\partial}{\partial\eps}
\mu_{Z_F}(H(G_\eps^D,\bc),\boeta_0)\right|_{\eps=0}
=
Z_F(\bc,\boeta_0)-\mu_{Z_F}(H_0,\boeta_0).
\end{equation*}
Thus, from \eqref{eq:dsig2_collected2}, we have 
\begin{align}
\left.\frac{\partial}{\partial\eps}\sigma_{\theta}^2(H(G_\eps^D,\bc))\right|_{\eps=0}
=&\;
\bmed_\eta^T
\left.\frac{\partial\boeta(H(G_\eps^D,\bc))}{\partial\eps}\right|_{\eps=0}
-\bmed_\mu^T\,\mu_{Z_F}(H_0,\boeta_0)\\&\quad
-\bmed_{\mathrm{dir}}^T\,
\E_{H_0}\!\bigl[ Z_F(X,\boeta_0)\otimes Z_F(X,\boeta_0)\bigr]
\nonumber\\
&\quad
+(\bmed_\theta^T\bK_0\,\bD_0\,
+\bmed_\mu^T)\,Z_F(\bc,\boeta_0)
+\bmed_{\mathrm{dir}}^T\!\bigl(
Z_F(\bc,\boeta_0)\otimes Z_F(\bc,\boeta_0)
\bigr).
\end{align}
Thus
\begin{align*}
\IFu_{\case}(\bc,\sigma_{\theta},H_0)
&=
\frac{1}{2\,\sigma_{\theta,0}}
\left.\frac{\partial}{\partial\eps}\sigma_{\theta}^2(H(G_\eps^D,\bc))\right|_{\eps=0}
\\
&=\;
m_{s,1}
+\bmed_{s,2}^T\,\IFu_{\case}(\bc,\boeta,H_0)
+\bmed_{s,3}^T Z_F(\bc,\boeta_0)
+\bmed_{s,4}^T\!\bigl(
Z_F(\bc,\boeta_0)\otimes Z_F(\bc,\boeta_0)
\bigr).
\end{align*}
where \[
m_{s,1}:=
-\frac{1}{2\sigma_{\theta,0}}\bmed_{\mu}^T\,\mu_{Z_F}(H_0,\boeta_0)
-\frac{1}{2\sigma_{\theta,0}}\bmed_{\mathrm{dir}}^T\,
\E_{H_0}\!\bigl[ Z_F(X,\boeta_0)\otimes Z_F(X,\boeta_0)\bigr],
\]
\[
\bmed_{s,2}^T:=\frac{1}{2\sigma_{\theta,0}}\bmed_\eta^T\qquad\bmed_{s,3}^T:=\frac{1}{2\sigma_{\theta,0}}\Big(\bmed_\theta^T\bK_0\,\bD_0+\bmed_\mu^T\Big),
\qquad
\bmed_{s,4}^T:=\frac{1}{2\sigma_{\theta,0}}\bmed_{\mathrm{dir}}^T.
\]

Under the FICM we have
\[
\delta_0(\eps)=(1-\eps)^d,\qquad
\delta_1(\eps)=d(1-\eps)^{d-1}\eps,
\]
so that $\delta_0(0)=1$, $\delta'_0(0)=-d$, $\delta_1(0)=0$, $\delta_1'(0)=d$, and
$\delta_j(0)=\delta_j'(0)=0$ for all $j\ge2$. 
Thus
\begin{align*}
\left.
\frac{\partial}{\partial\eps}
\E_{H_\eps}\!\bigl[
Z_F(X,\boeta_0)\otimes Z_F(X,\boeta_0)
\bigr]
\right|_{\eps=0}
=&\;
-d\,\E_{H_0}\!\bigl[
Z_F(X,\boeta_0)\otimes Z_F(X,\boeta_0)
\bigr]
\\
&\quad
+
d\sum_{j=1}^{d}
\E_{H(j,\bc)}\!\bigl[
Z_F(X,\boeta_0)\otimes Z_F(X,\boeta_0)
\bigr],
\end{align*}
where $H(j,\bc):=H(\{j\},\bc)$.
Using this together with
\[
\left.\frac{\partial}{\partial\eps}
P_F(H(G_\eps^I,\bc),\boeta_0)\right|_{\eps=0}
=
\bD_0
d\sum_{j=1}^d \E_{H(j,\bc)}[Z_F(X,\boeta_0)]
\]
and
\[
\left.\frac{\partial}{\partial\eps}
\mu_{Z_F}(H(G_\eps^I,\bc),\boeta_0)\right|_{\eps=0}
=
-d\,\mu_{Z_F}(H_0,\boeta_0)
+
d\sum_{j=1}^{d}\E_{H(j,\bc)}\!\big[Z_F(X,\boeta_0)\big],
\]
we obtain from \eqref{eq:dsig2_collected2} the expanded FICM expression
\begin{align*}
\left.\frac{\partial}{\partial\eps}\sigma_{\theta}^2(H(G_\eps^I,\bc))\right|_{\eps=0}
=&\;
\bmed_\theta^T\bK_0\bD_0
d
\sum_{j=1}^{d}\E_{H(j,\bc)}\!\big[Z_F(X,\boeta_0)\big]
+\bmed_\eta^T
\left.\frac{\partial\boeta(H(G_\eps^I,\bc))}{\partial\eps}\right|_{\eps=0}
\nonumber\\
&\quad
+\bmed_\mu^T
\left(
-d\,\mu_{Z_F}(H_0,\boeta_0)
+
d\sum_{j=1}^{d}\E_{H(j,\bc)}\!\big[Z_F(X,\boeta_0)\big]
\right)
\nonumber\\
&\quad
-d\,\bmed_{\mathrm{dir}}^T\E_{H_0}\!\bigl[
Z_F(X,\boeta_0)\otimes Z_F(X,\boeta_0)
\bigr]\\
&\quad
+
d\,\bmed_{\mathrm{dir}}^T\left(\sum_{j=1}^{d}
\E_{H(j,\bc)}\!\bigl[
Z_F(X,\boeta_0)\otimes Z_F(X,\boeta_0)
\bigr]\right)
.
\end{align*}
Thus
\begin{align*}
\IFu_{\cell}(\bc,\sigma_{\theta},H_0)
&=
\frac{1}{2\,\sigma_{\theta,0}}
\left.\frac{\partial}{\partial\eps}\sigma_{\theta}^2(H(G_\eps^I,\bc))\right|_{\eps=0}
\\
&=\;
dm_{s,1}
+\bmed_{s,2}^T\,\IFu_{\cell}(\bc,\boeta,H_0)
+d\bmed_{s,3}^T\left( \sum_{j=1}^{d}\E_{H(j,\bc)}\!\big[Z_F(X,\boeta_0)\big]\right)
\\
&+d\bmed_{s,4}^T\left(\sum_{j=1}^{d}
\E_{H(j,\bc)}\!\bigl[
Z_F(X,\boeta_0)\otimes Z_F(X,\boeta_0)
\bigr]\right).
\end{align*}

Combining these results with \eqref{eq:IF_lproof}, 
\eqref{eq:IF_mproof}, \eqref{eq:IF_lproof_cell}, 
and \eqref{eq:IF_mproof_cell} completes the proof of the proposition.
\end{proof}

\newpage
\section{\large Additional simulation results}
\label{app:addsim}

Figure~\ref{fig:results_NA1_perout2_reg}  in the 
main text shows the average $MSE$  attained by 
RIDGE, SEST, PENSE, CRM, REGCELL, SHOOT, and 
cellMR in the presence of cellwise outliers, 
casewise outliers, or both, without missing data.
When we also set 10\% of randomly selected cells
to NA we obtain 
Figure~\ref{fig:results_NA2_perout2_reg}.
Surprisingly, the curves of cellMR are almost the
same as in Figure~\ref{fig:results_NA1_perout2_reg}.
Of the other methods, only RIDGE is able to handle
missing values. Its curves are similar to before
in the setting $p = q = 10$, but worsen in the 
higher dimensions.

\begin{figure}[H]
\centering
\begin{tabular}{M{0.0005\textwidth}M{0.29\textwidth}M{0.29\textwidth}M{0.32\textwidth}}
   &\large \textbf{Cellwise}  & \large \textbf{Casewise} &\large{\textbf{Casewise \& Cellwise}} \\
[-4mm]

 \rotatebox{90}{\textbf{\footnotesize{$p=q=10$}}}&\includegraphics[width=.31\textwidth]
 {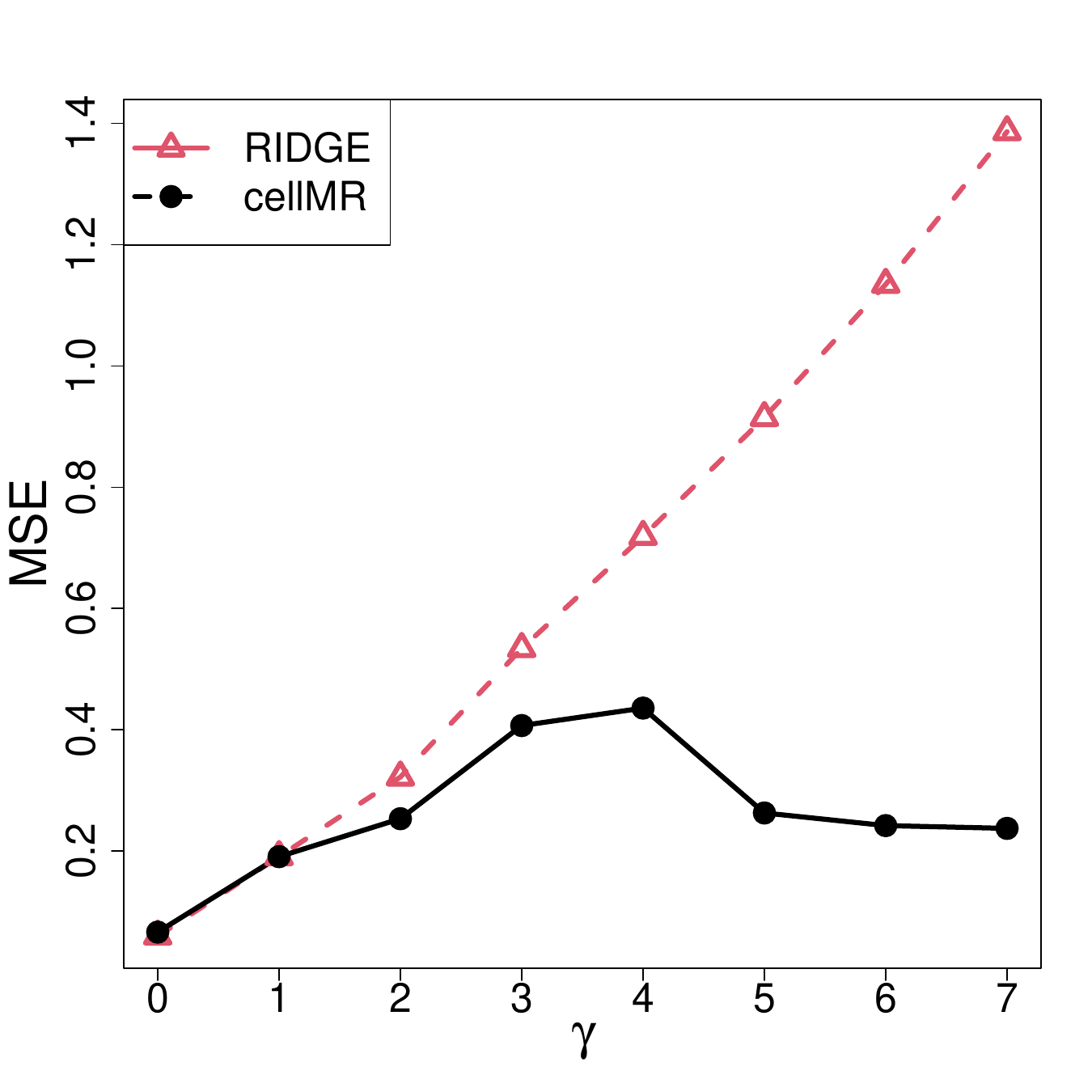} &\includegraphics[width=.31\textwidth]{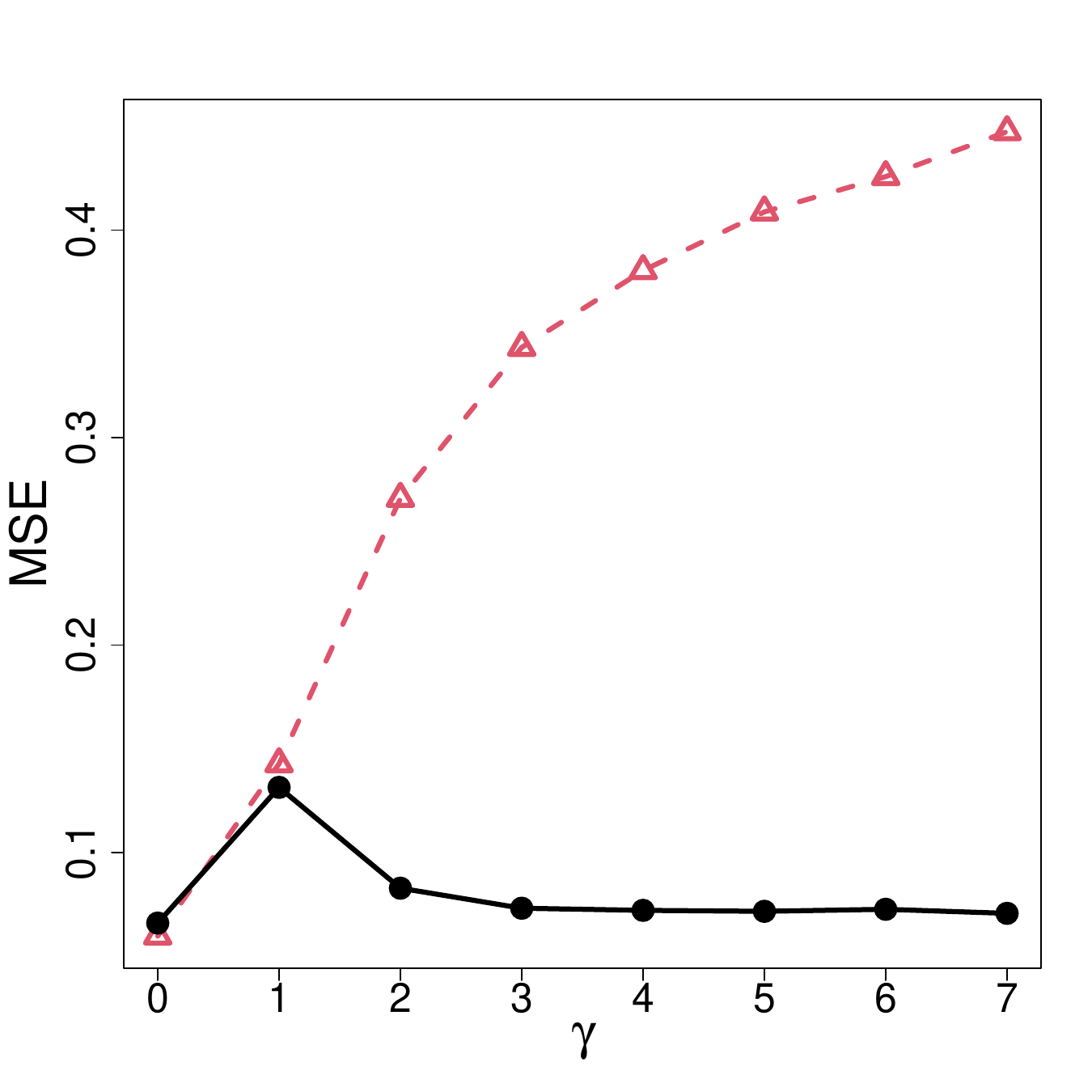} &\includegraphics[width=.31\textwidth]{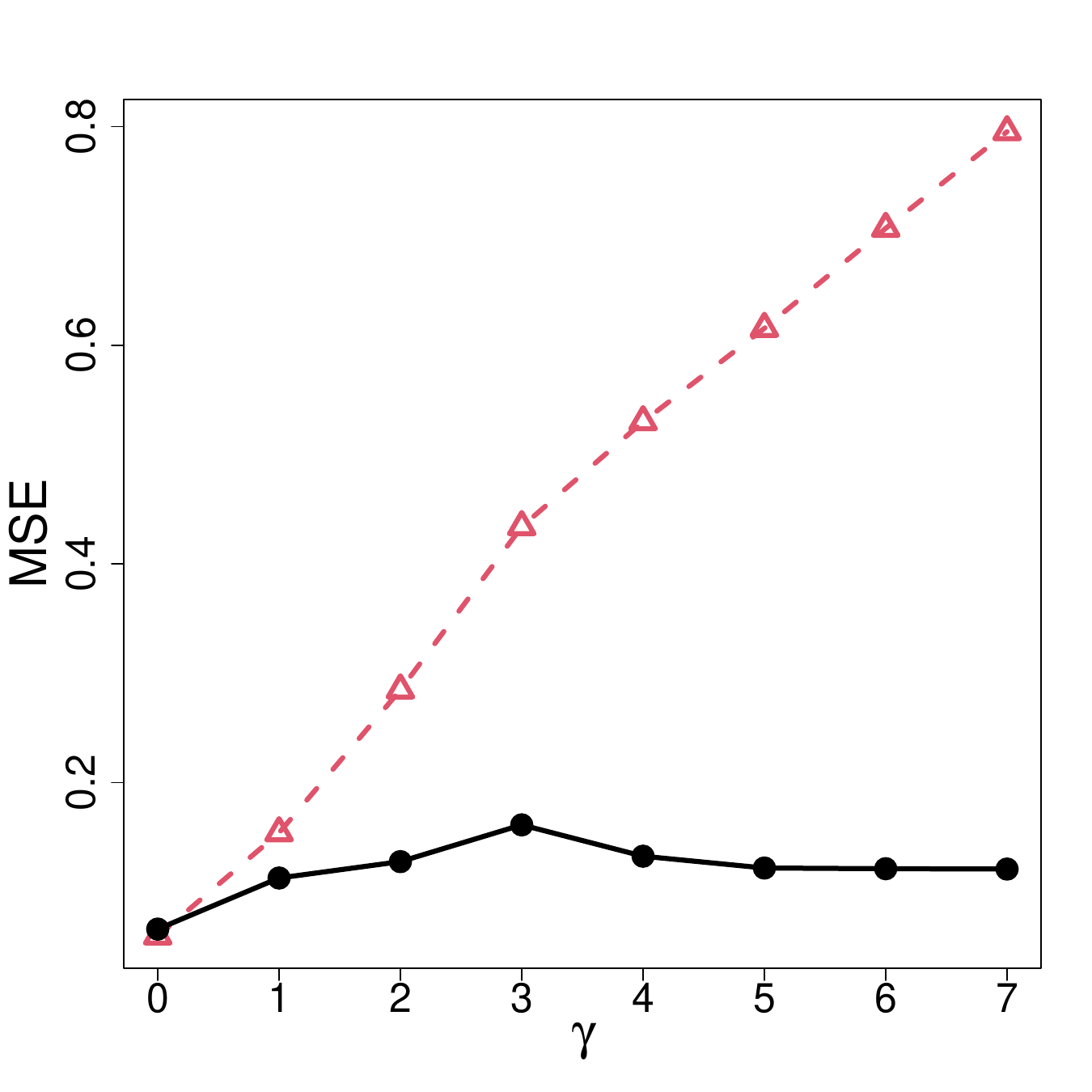} \\
 
 \rotatebox{90}{\textbf{\footnotesize{$p=q=25$}}}&\includegraphics[width=.31\textwidth]{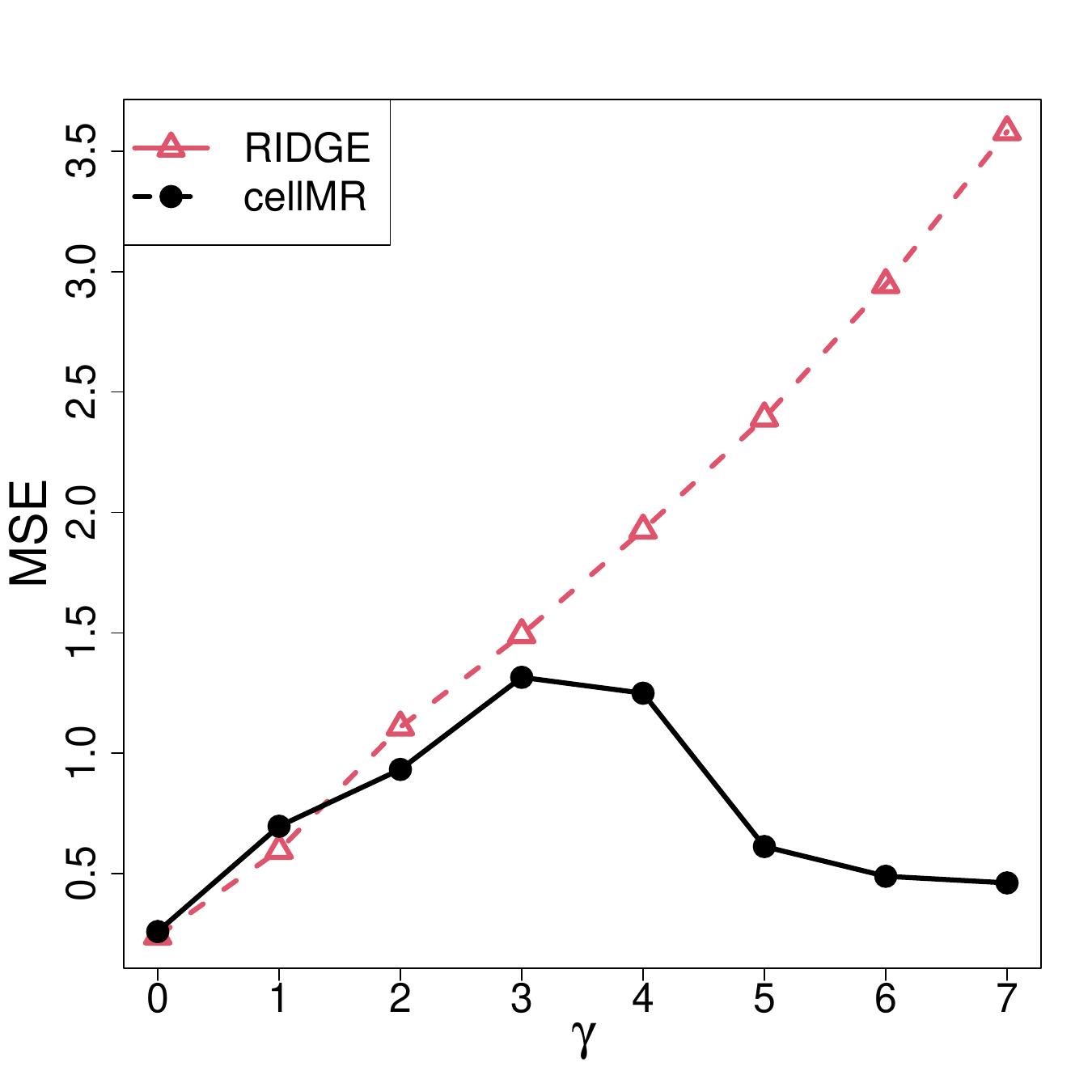} &\includegraphics[width=.31\textwidth]{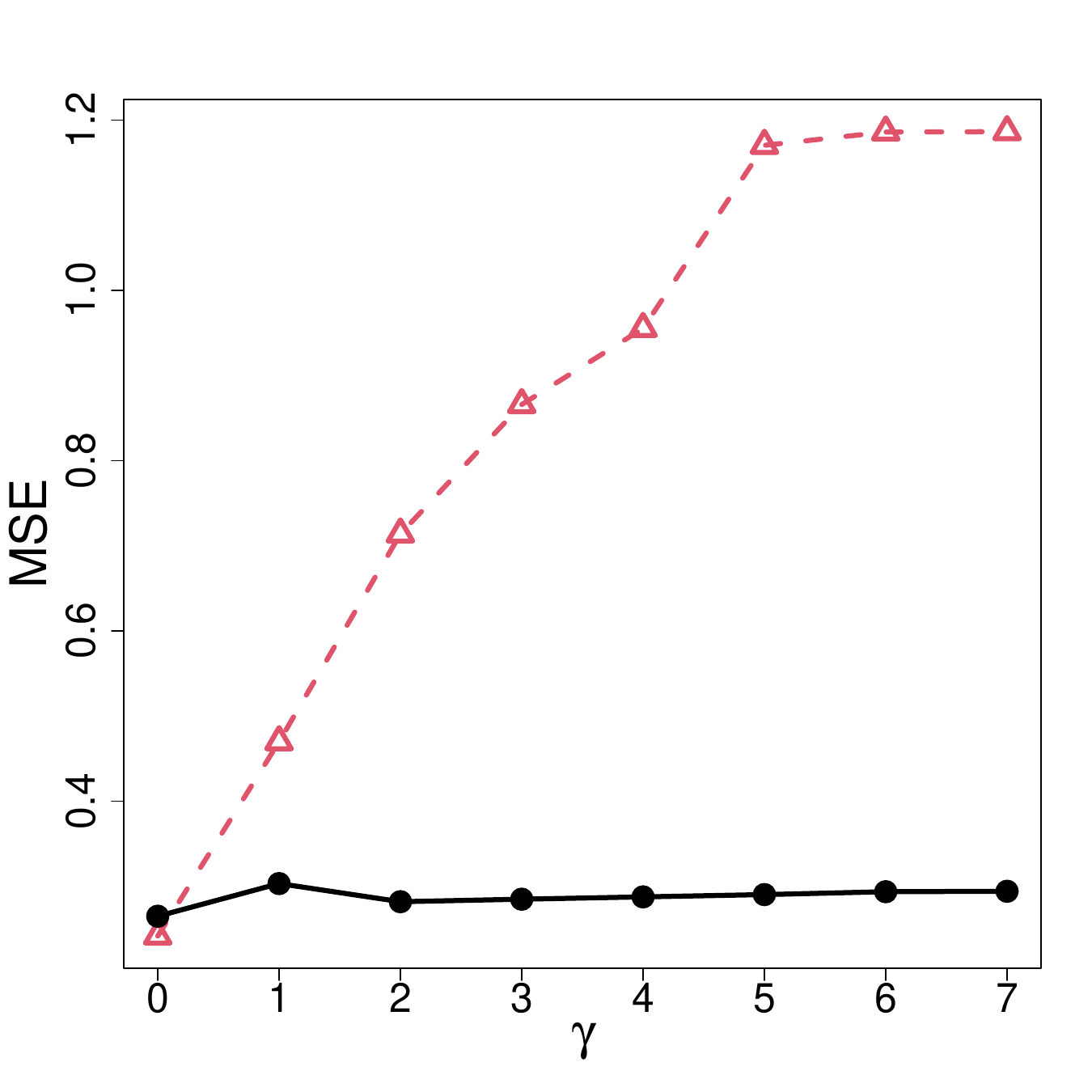} &\includegraphics[width=.31\textwidth]{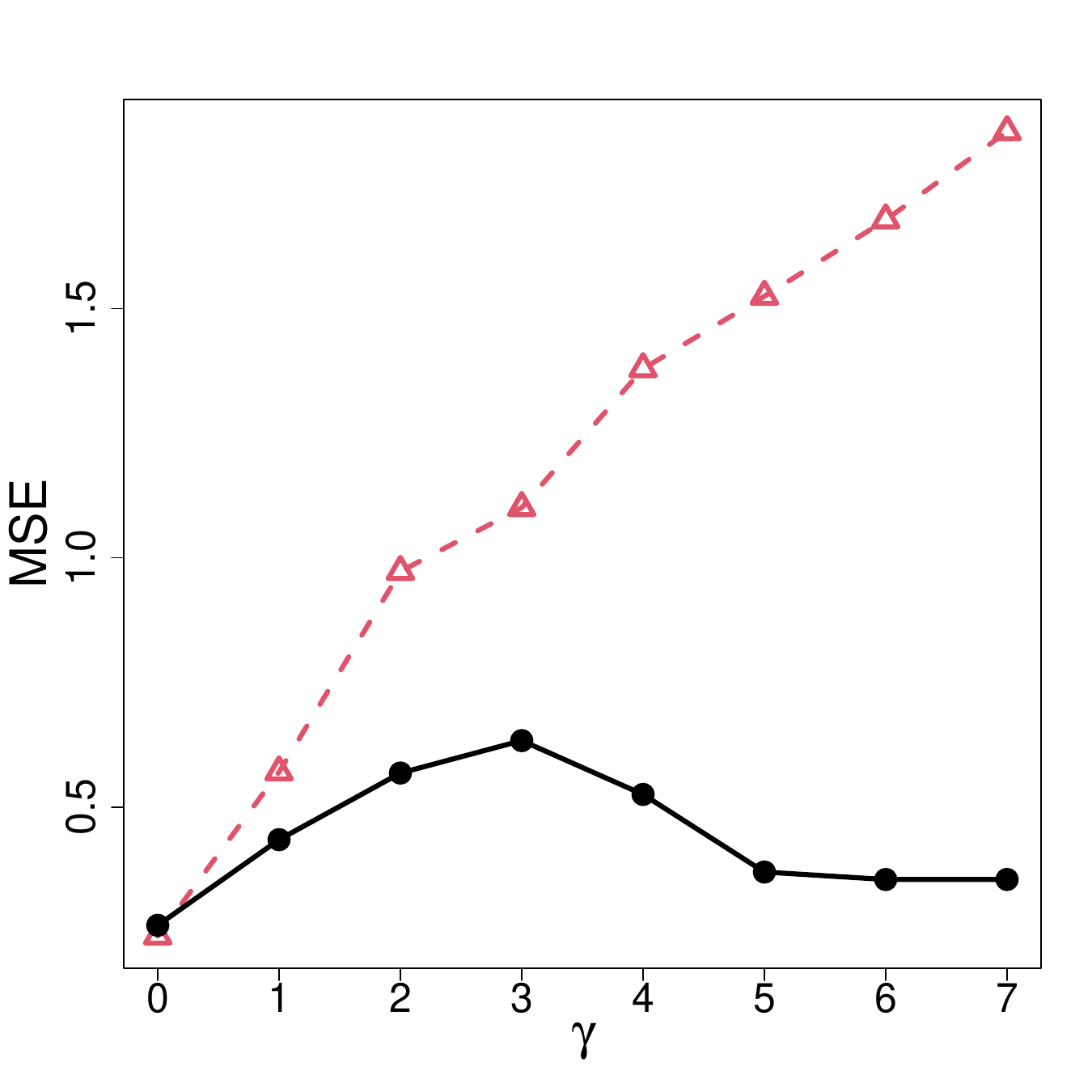}\\
 \rotatebox{90}{\textbf{\footnotesize{$p=q=50$}}}&\includegraphics[width=.31\textwidth]{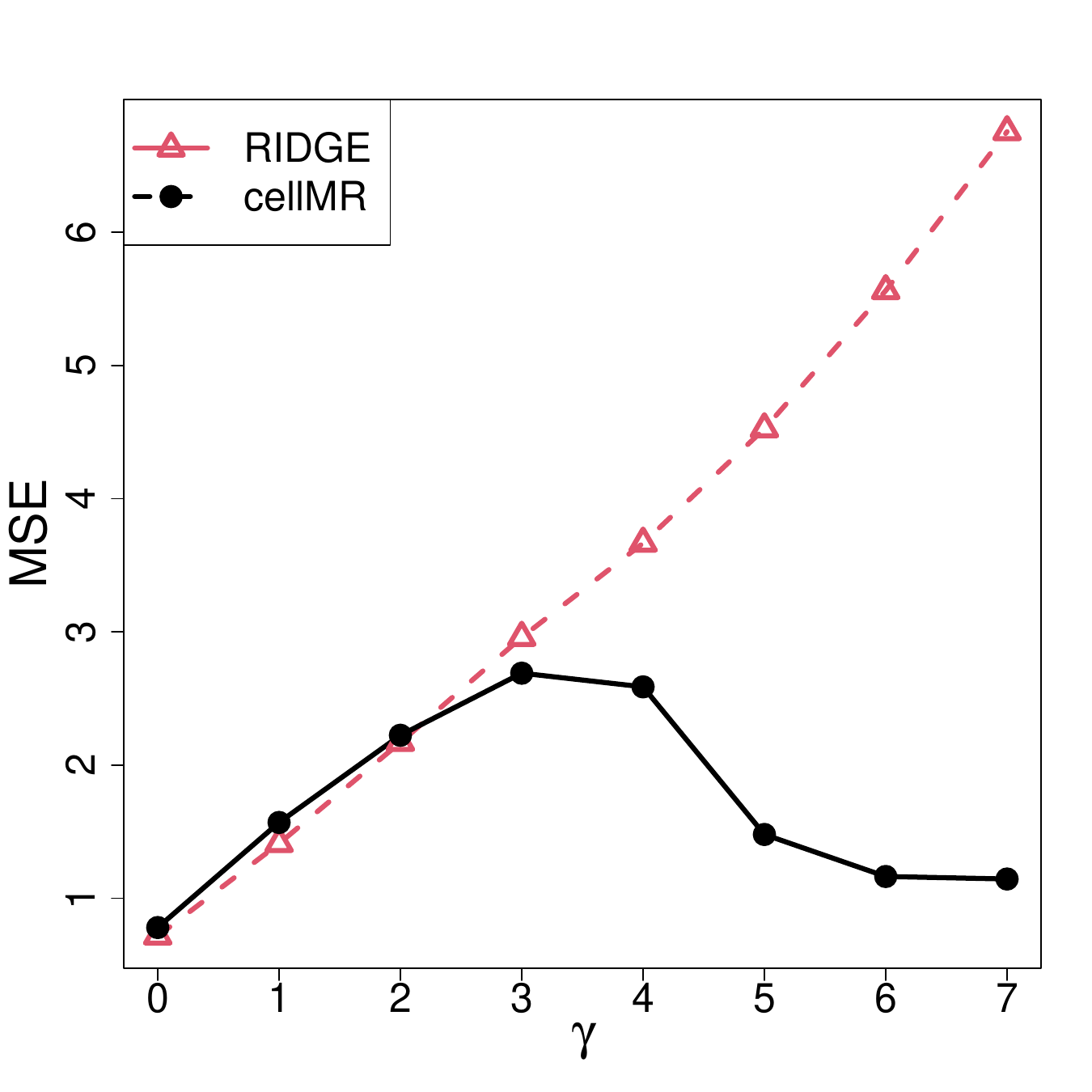} &\includegraphics[width=.31\textwidth]{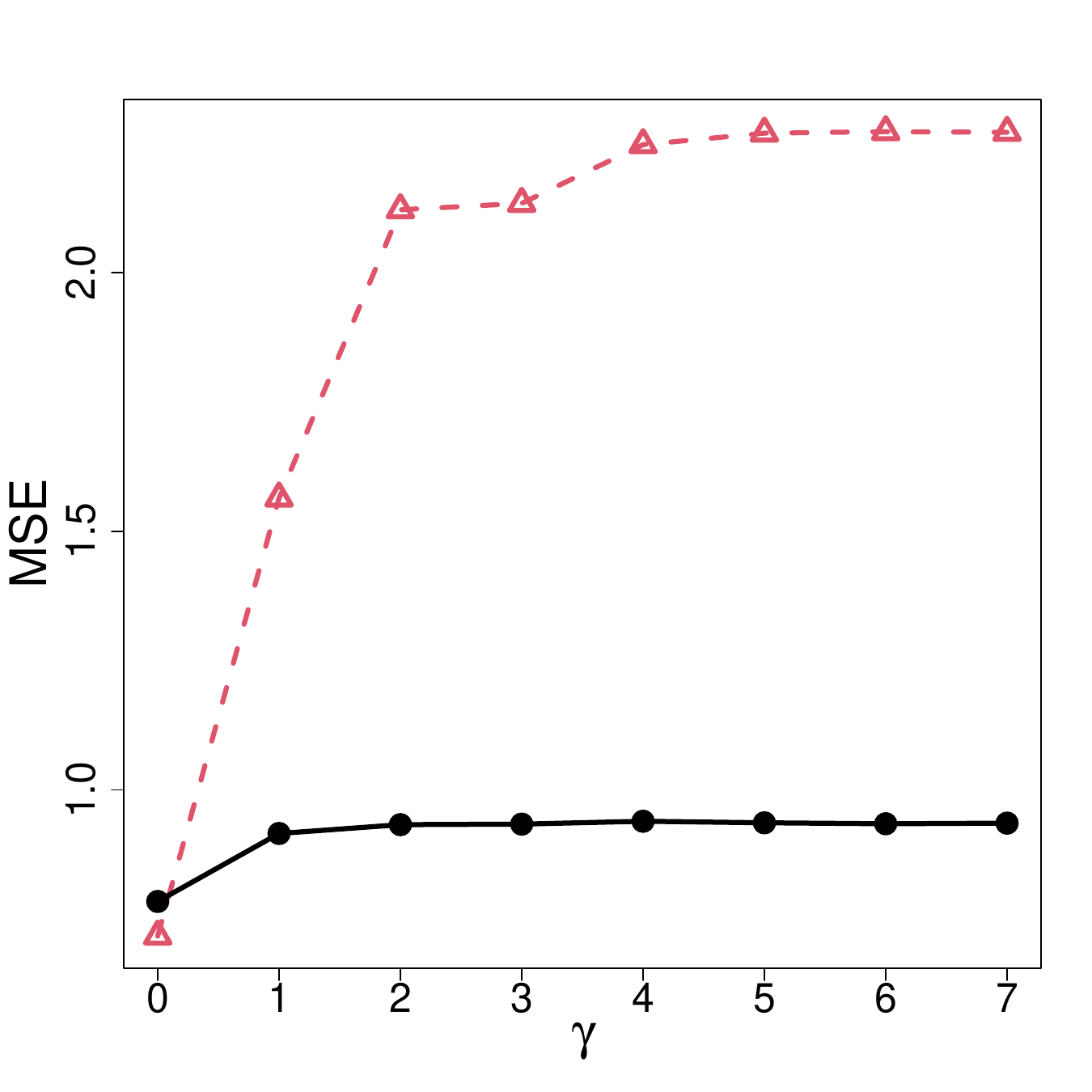} &\includegraphics[width=.31\textwidth]{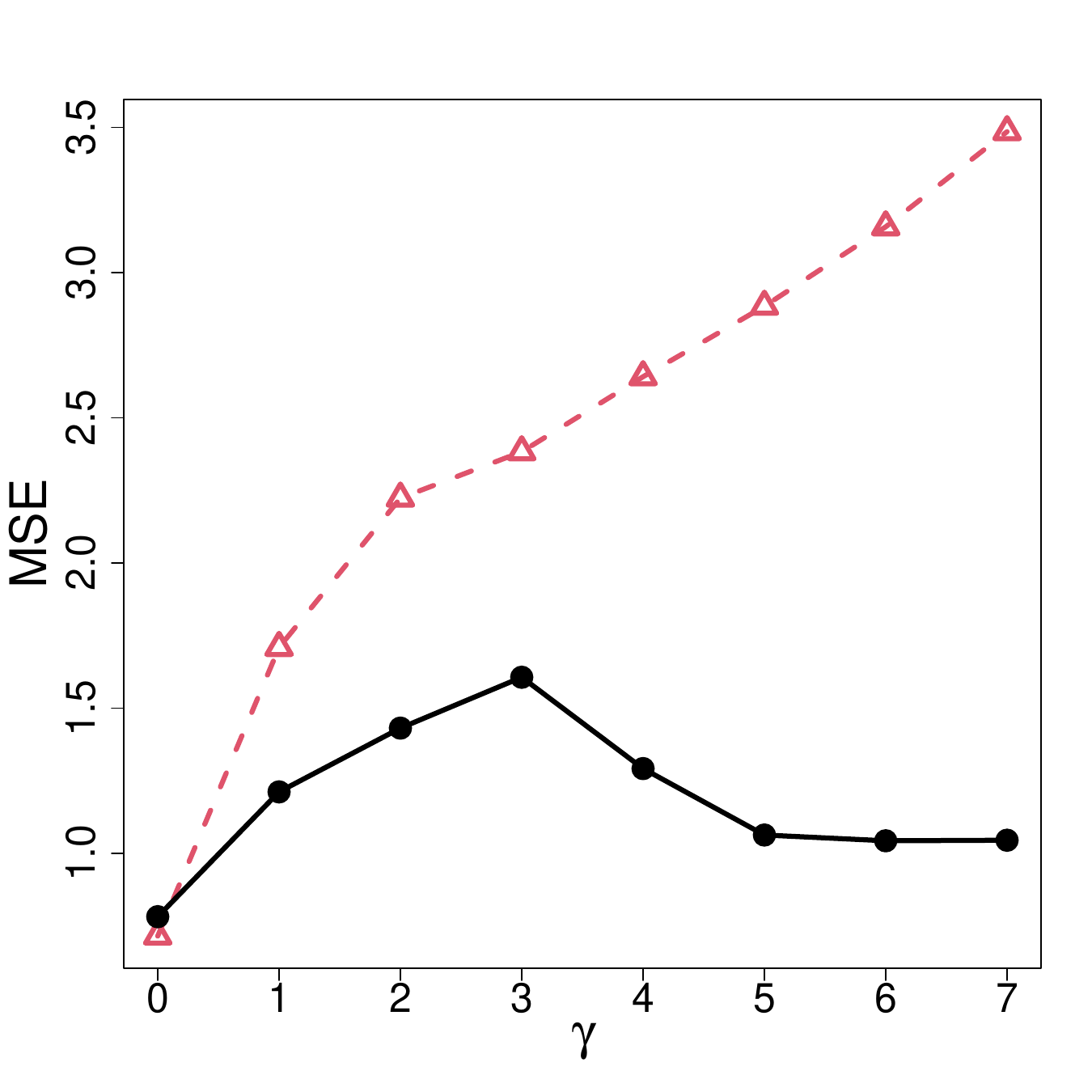}\\
\end{tabular}

\vspace{-5mm}
\caption{Average $MSE$ attained by RIDGE and cellMR
in the presence of cellwise outliers, casewise 
outliers, or both, with 10\% of missing cells.}
\label{fig:results_NA2_perout2_reg}
\end{figure}

Figure~\ref{fig:results_NA2_perout1_coverage}
shows the empirical coverage of the nominally
90\% coverage intervals of the regression
coefficients, as obtained by OLS, FRB, and 
cellBoot, for $\eps = 10\%$ and 10\% of 
missing cells. Also here the resulting curves 
closely resemble those obtained without missing 
data, that were shown in 
Section~\ref{sec:siminf}.

\begin{figure}[H]
\centering
\begin{tabular}{M{0.0005\textwidth}M{0.29\textwidth}M{0.29\textwidth}M{0.32\textwidth}}
   &\large \textbf{Cellwise}  & \large \textbf{Casewise} &\large{\textbf{Casewise \& Cellwise}} \\
[-4mm]

 \rotatebox{90}{\textbf{\footnotesize{$p=q=10$}}}&\includegraphics[width=.31\textwidth]{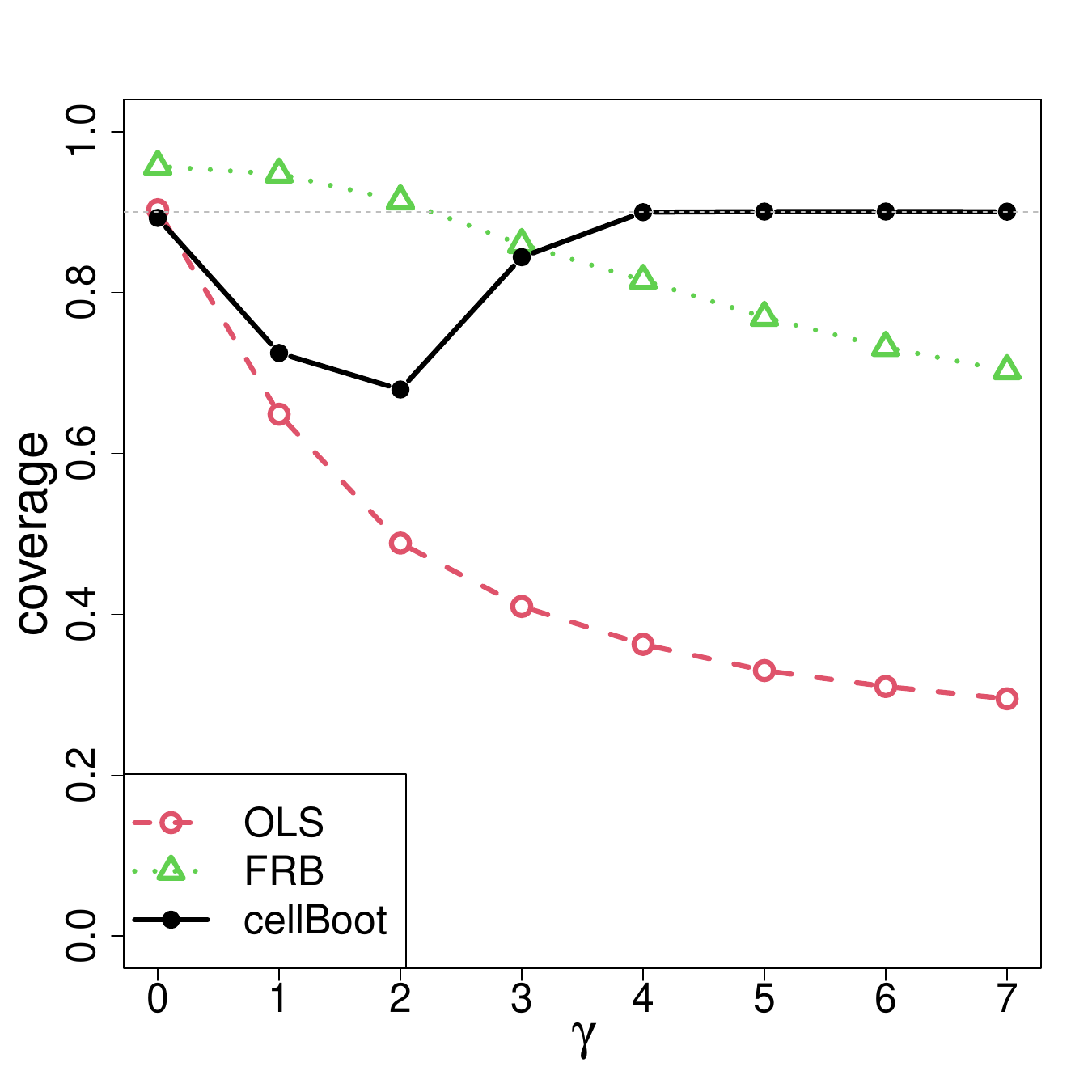} &\includegraphics[width=.31\textwidth]{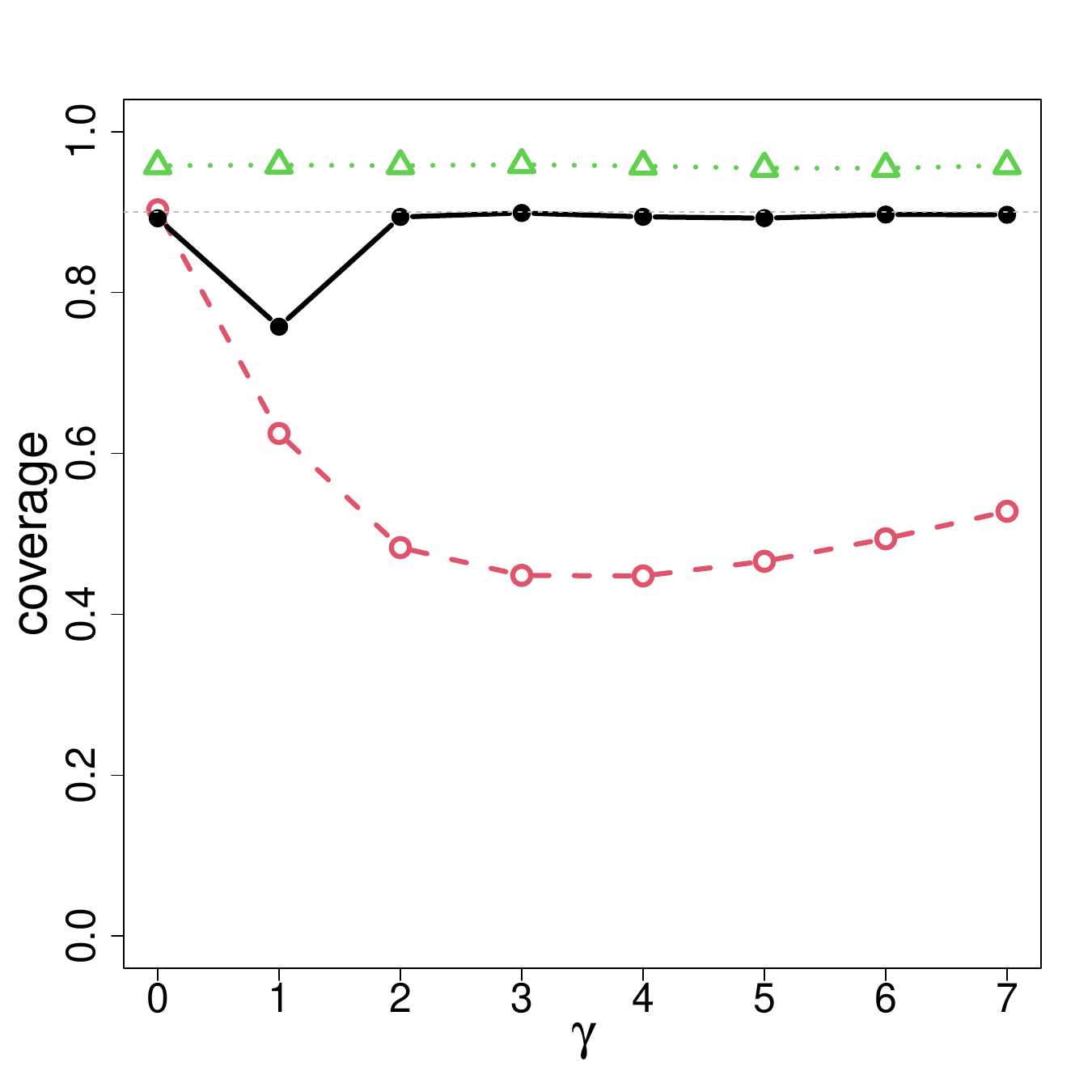} &\includegraphics[width=.31\textwidth]{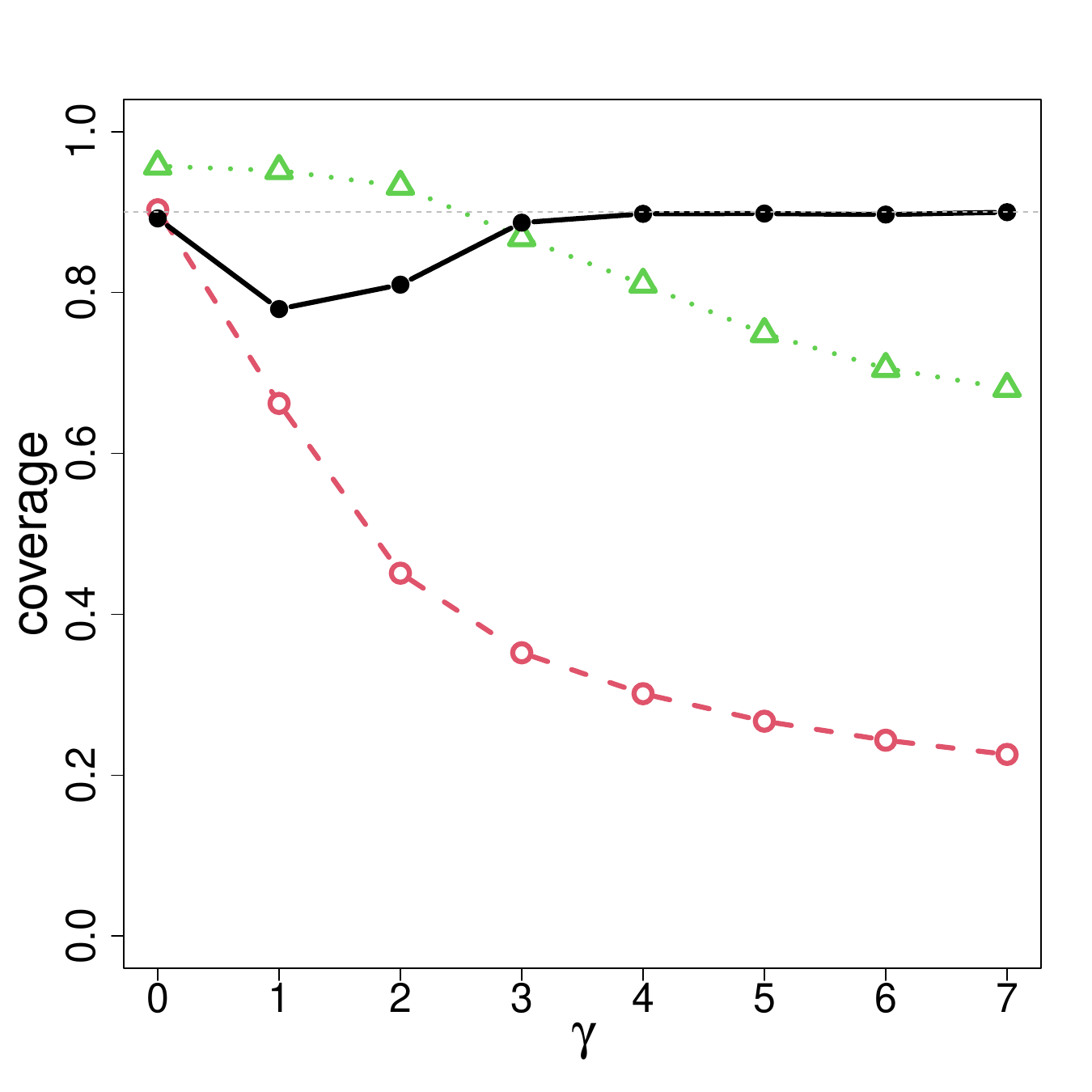} \\

 \rotatebox{90}{\textbf{\footnotesize{$p=q=30$}}}&\includegraphics[width=.31\textwidth]{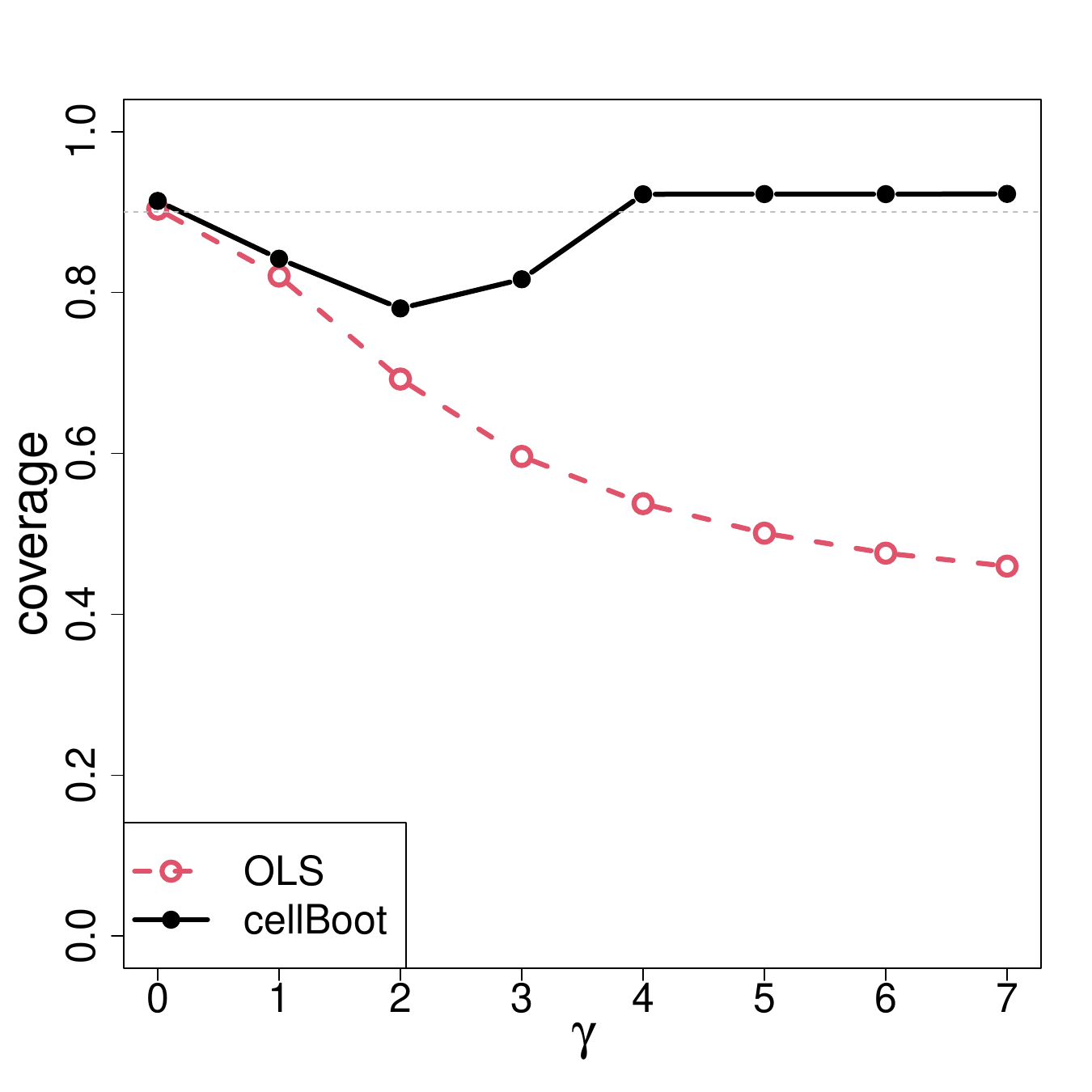} &\includegraphics[width=.31\textwidth]{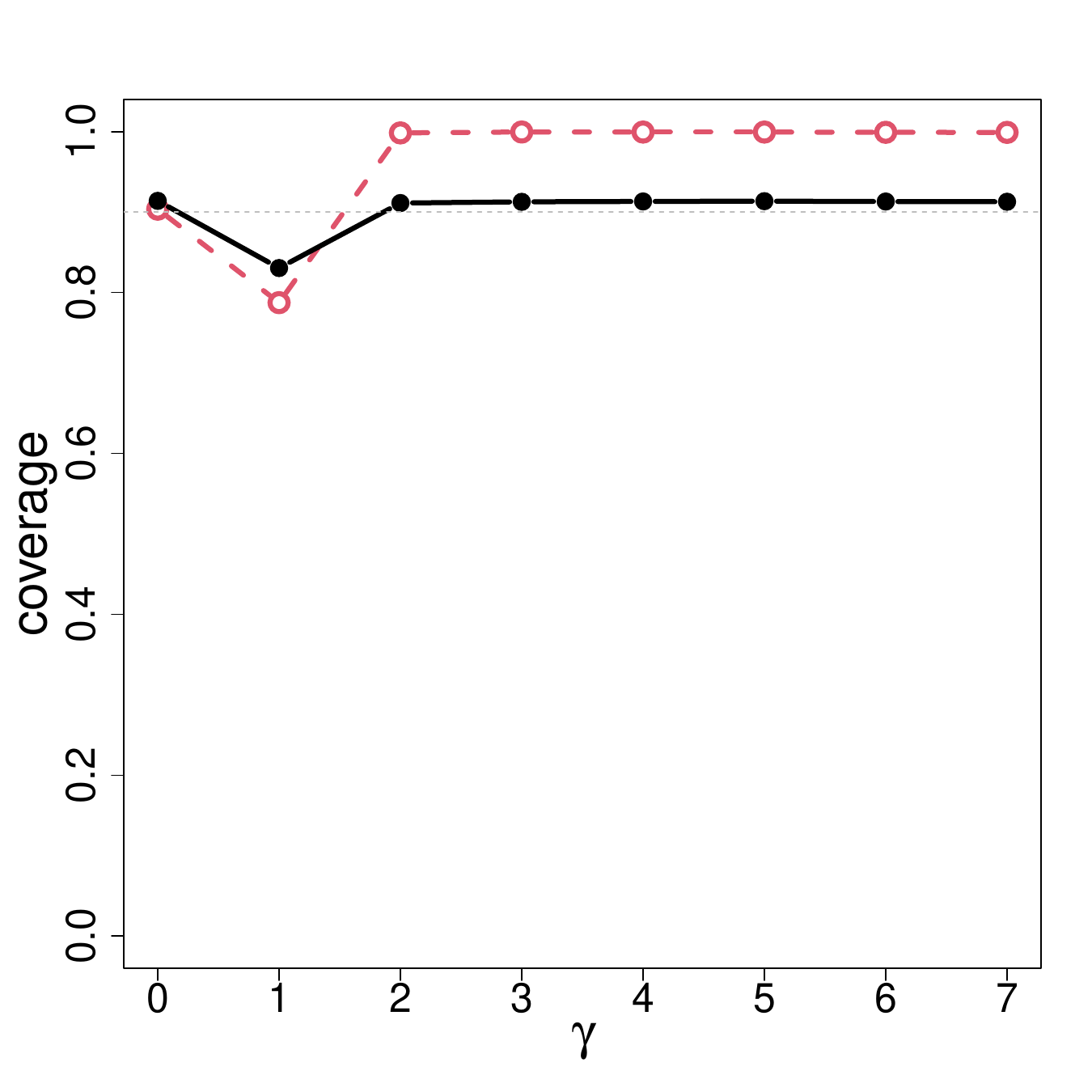} &\includegraphics[width=.31\textwidth]{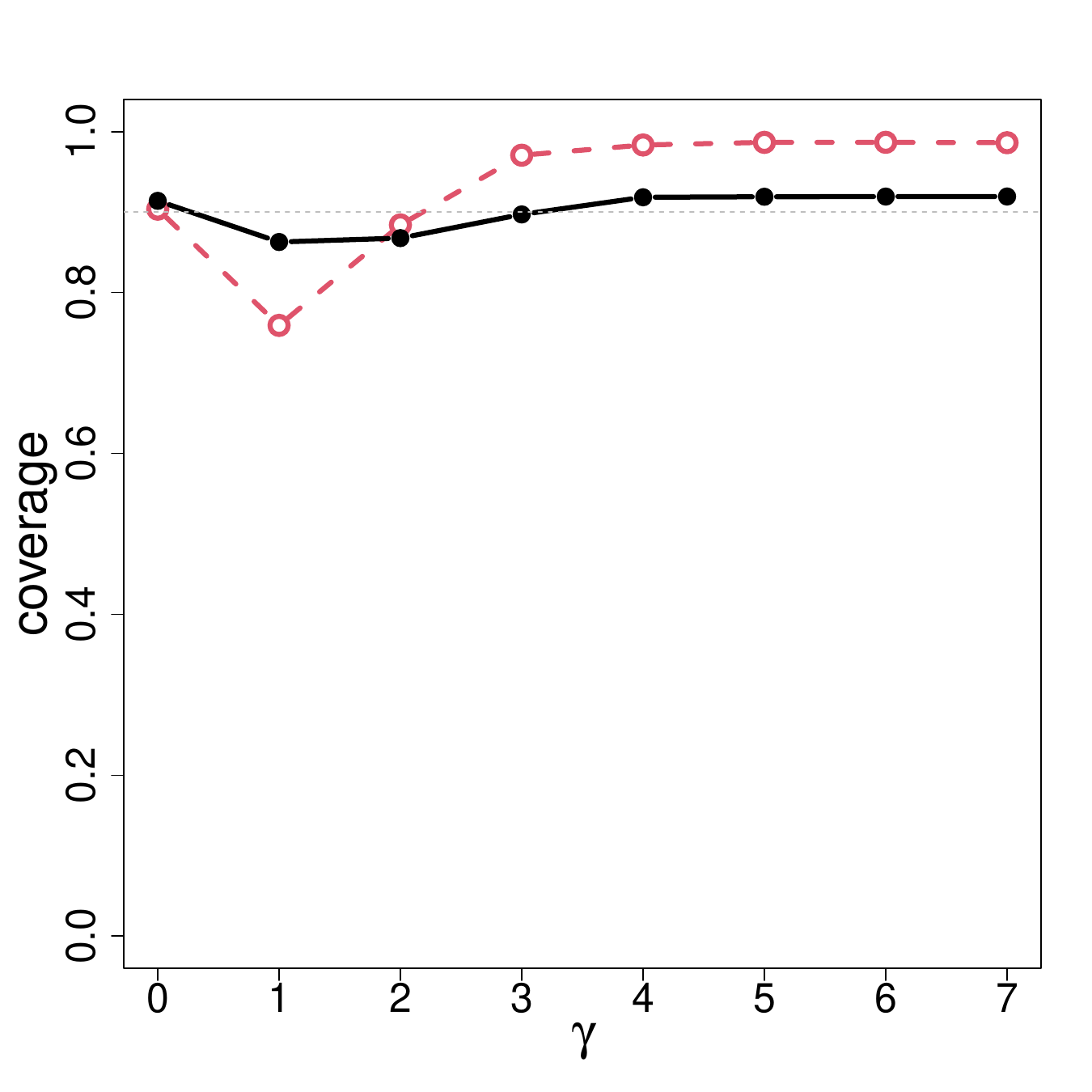}\\
 \rotatebox{90}{\textbf{\footnotesize{$p=q=60$}}}&\includegraphics[width=.31\textwidth]{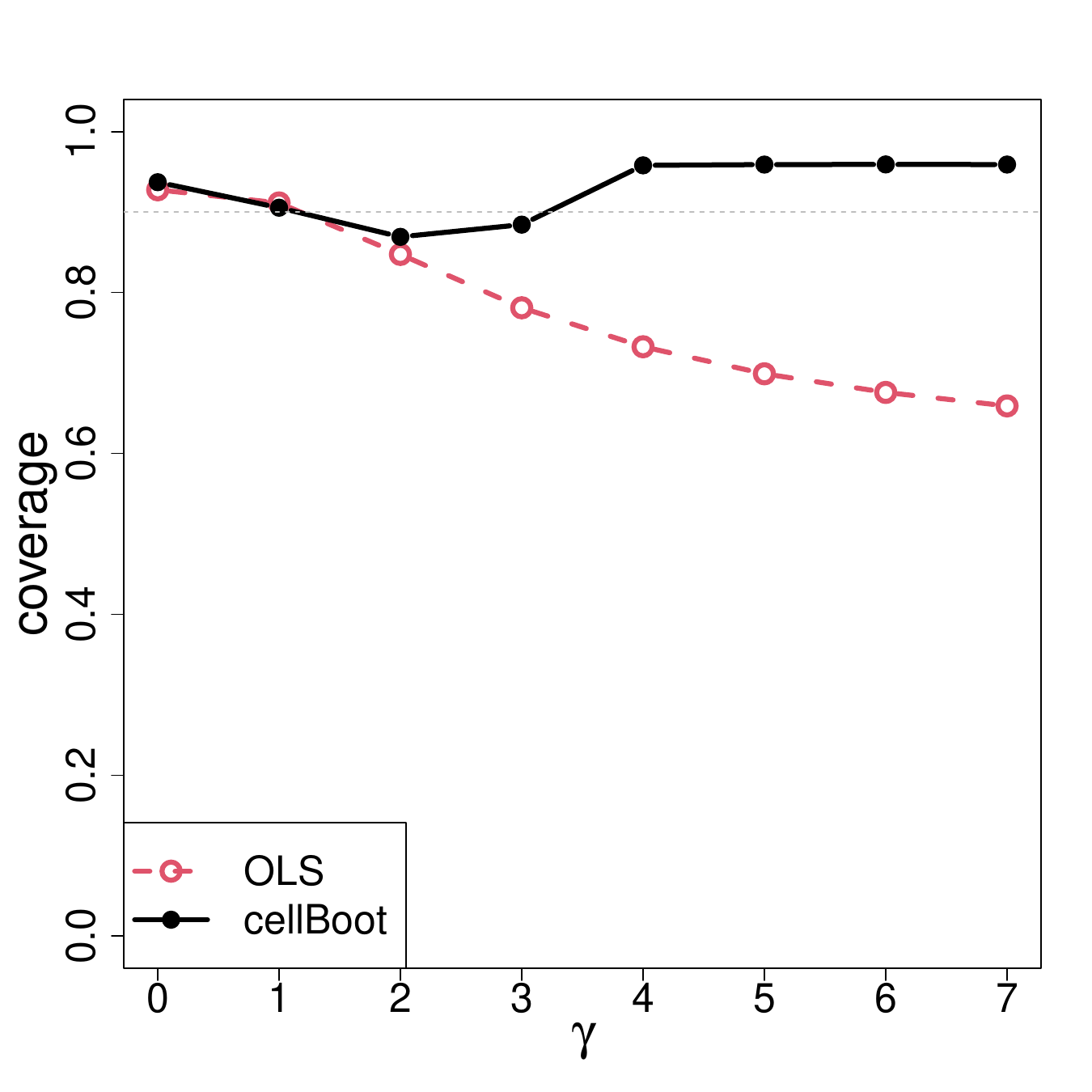} &\includegraphics[width=.31\textwidth]{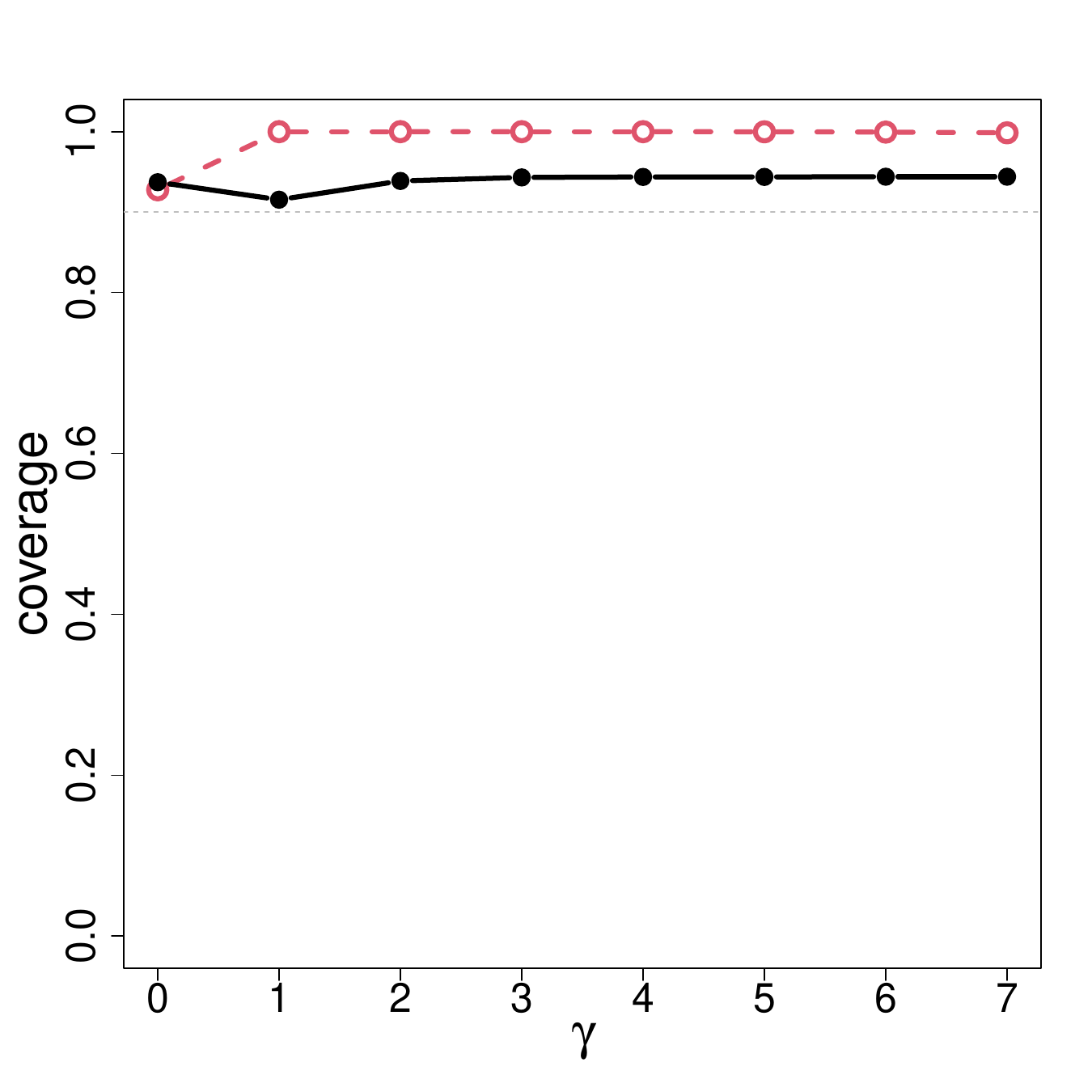} &\includegraphics[width=.31\textwidth]{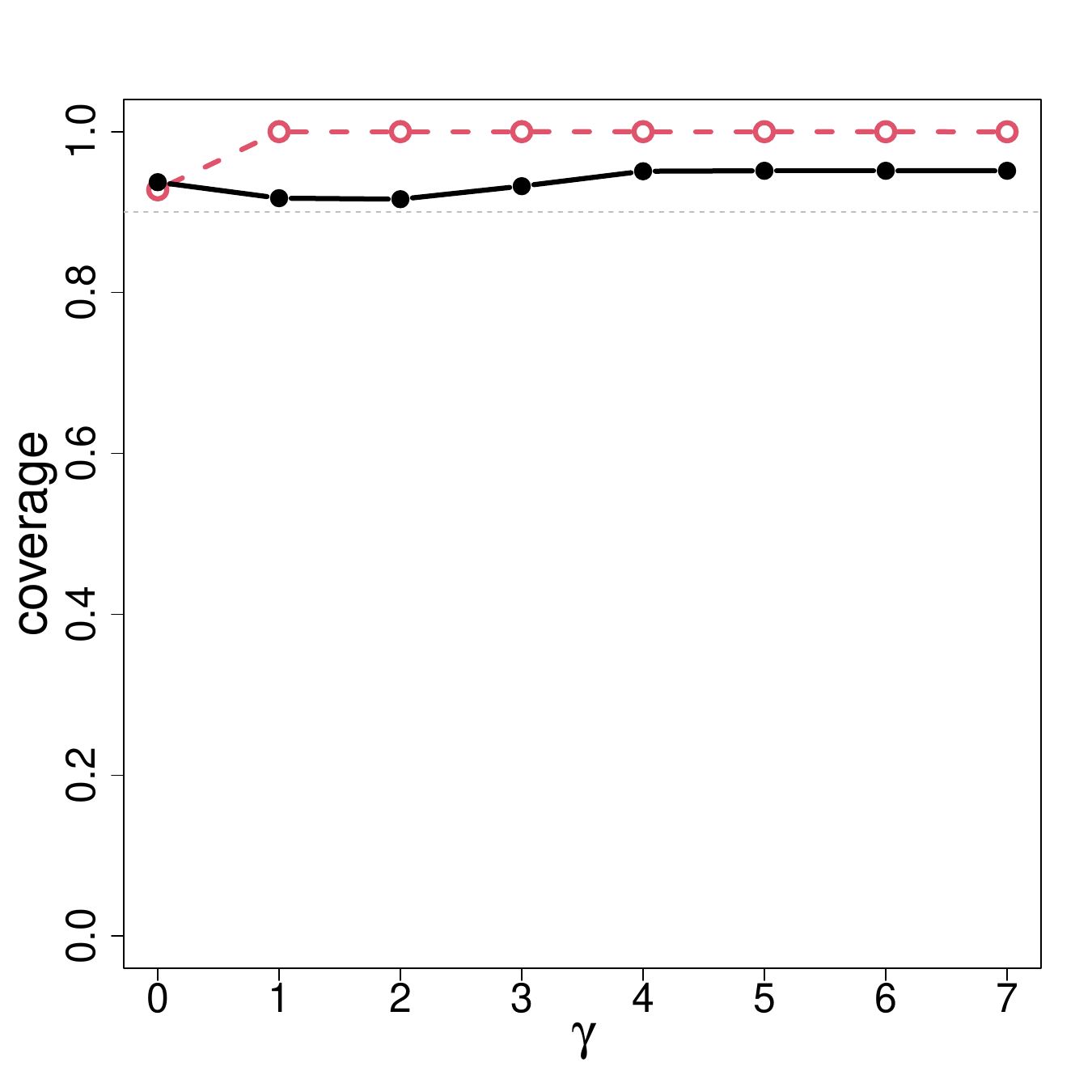}\\
\end{tabular}

\vspace{-5mm}
\caption{Average coverage  attained by OLS, 
FRB, and cellBoot for the $0.9$-level confidence intervals of the regression coefficients in the presence 
of cellwise outliers, casewise outliers, or 
both  with 10\% of missing cells.}
\label{fig:results_NA2_perout1_coverage}
\end{figure}

\end{document}